\newcommand{\pointsize}{11pt}
	\renewcommand{\headrulewidth}{0pt} 
	\renewcommand{\headrulewidth}{0.4pt}
	\renewcommand{\headrulewidth}{0.4pt}
	\renewcommand{\headrulewidth}{0.4pt}
\numberwithin{figure}{chapter} 
\numberwithin{table}{chapter}
\numberwithin{equation}{chapter}
\numberwithin{section}{chapter}
\newtheorem{thm}{Theorem}[section]
\newtheorem{cor}[thm]{Corollary}
\newtheorem{lemma}[thm]{Lemma}
\newtheorem{prop}[thm]{Proposition}
\newtheorem{defn}[thm]{Definition}
\newtheorem{assumption}[thm]{Assumption}
\newtheorem{rem}[thm]{Remark}
\newtheorem{ex}{Example}
\newcommand{\ra}{\rightarrow}
\newcommand{\abs}[1]{\left\lvert #1 \right\rvert}
\newcommand{\ket}[1]{\lvert #1 \rangle}
\newcommand{\bra}[1]{\langle #1 \rvert}
\newcommand{\wslim}{{\rm w}^*\mbox{-}\lim}
\newcommand{\wlim}{{\rm w}\mbox{-}\lim}
\newcommand{\conv}[1]{{\rm Conv}\left( #1 \right)}
\newcommand{\RR}{\mathbb R}
\newcommand{\CC}{\mathbb C}
\newcommand{\ZZ}{\mathbb Z}
\newcommand{\NN}{\mathbb N}
\newcommand{\calA}{\mathcal A}
\newcommand{\calB}{\mathcal B}
\newcommand{\calG}{\mathcal G}
\newcommand{\calH}{\mathcal H}
\DeclareMathOperator{\Tr}{Tr}
\begin{document}
   \frontmatter

   \pagestyle{prelim}
   
   %
   \fancypagestyle{plain}{%
      \fancyhf{}
      \cfoot{-\thepage-}
   }%
\begin{center}
	\null\vfill
	{%
		{\huge  Topologically Ordered States \\ in Infinite Quantum Spin Systems}
	}%
	\\
	\bigskip
	By \\
	\bigskip
	Matthew M. Cha \\
	\bigskip
	B.S. (University of California, San Diego) 2011 \\
	\bigskip
	DISSERTATION \\
	\bigskip
	Submitted in partial satisfaction of the requirements for the
	degree of \\
	\bigskip
	DOCTOR OF PHILOSOPHY \\
	\bigskip
	in \\
	\bigskip
	MATHEMATICS \\
	\bigskip
	in the \\
	\bigskip
	OFFICE OF GRADUATE STUDIES \\
	\bigskip        
	of the \\
	\bigskip
	UNIVERSITY OF CALIFORNIA \\
	\bigskip
	DAVIS \\
	\bigskip
	Approved: \\
	\bigskip
	\bigskip
	\makebox[3in]{\hrulefill} \\
	Bruno Nachtergaele (Chair) \\
	\bigskip
	\bigskip
	\makebox[3in]{\hrulefill} \\
	Greg Kuperberg\\
	\bigskip
	\bigskip
	\makebox[3in]{\hrulefill} \\
	Craig Tracy \\
	\bigskip
	Committee in Charge \\
	\bigskip
	2017 \\
	\vfill
\end{center}

   \newpage
	
	 \thispagestyle{empty}
	 \begin{titlepage}
	 	\begin{center}{ \Large
	 		To my parents,\\
	 		\vspace{1ex}
	 		 Nkias Pov Tsab and Mim Faaj, \\
	 		 \vspace{1ex}
	 		for your unconditional love.}
	 	\end{center}
	 \vspace*{45em}
	 \begin{center}
		 \copyright \ Matthew M.\ Cha, 2017.  All rights reserved.  
	 \end{center}
	 \end{titlepage}
	 \newpage
	 \stepcounter{page}
	
	 \thispagestyle{empty}

	 \newpage
   
   %
   \doublespacing
   
   \tableofcontents
   \newpage
   
   {\singlespacing
   	\begin{flushright}
   		Matthew M. Cha \\
   		June 2017 \\
   		Mathematics \\
   	\end{flushright}
   }
   
   \bigskip
   
   \begin{center}
   	Topologically Ordered States in Infinite Quantum Spin Systems \\
   \end{center}
   
   \section*{Abstract}
   
   This dissertation discusses some properties of topologically ordered states as they appear in the setting of infinite quantum spin systems. 
   We will focus attention on the quasi-particle charges that may arise as elementary excitations in these models.  
   In planar systems, one indication of topological order is that the quasi-particle have braided statistics.
   We call these \emph{anyons}, that is, a process of braiding one quasi-particle around another may result in a factor of \emph{any} phase or even a unitary operation on the initial quantum state.
   
   The exposition naturally splits into two parts; preliminaries and  results.
   The preliminary part consists of Chapters 2 and 3 while the results are contained in Chapters 4 and 5.
   In Chapter 2, we begin by giving a brief review of infinite quantum spin systems as $C^*$-dynamical systems.
   The dynamics are determined by an interaction map and its corresponding local Hamiltonians.
   A key technical tool for studying the dynamics is the Lieb-Robinson bound \cite{LiebR, NachtergaeleSLR, HastingsKLR}.
   This gives an estimate for the speed at which the support of a local observable may grow up to exponentially small errors.
   The Lieb-Robinson bound may be thought of playing a role analogous to the speed of light in relativistic quantum  theories, and 
   has been foundational to many modern results in quantum spin systems, such as the automorphic equivalence in gapped phases \cite{BachmannMNS} and stability of the spectral gap in frustration-free Hamiltonians\cite{BravyiHM,MichalakisZ,NachSYffs}.
   
   The primary example in our analysis of topological order is a planar quantum spin system introduced by Kitaev \cite{KitaevQD}.
   In Chapter 3, we define the Kitaev quantum double models on the bond set of the planar square lattice 
   and compute the ground state degeneracy in the finite volume.
   Elementary excitation arise by application of ribbon operators to a ground state.  
   The mutual statistics of these excitations are braided and completely described by the representation theory of the quantum double for a finite group $G$, $\operatorname{Rep}(\mathcal{D}(G))$.
   Although the quantum double models are exactly solvable in the finite volume, 
   there are relatively few results regarding the thermodynamic limit \cite{AlickiFH, Naaijkens11, FiedlerN}.

   In the second part, we discuss how elementary excitations appear in infinite quantum spin systems.
   In Chapter 4, we study the set of infinite volume ground states for Kitaev's abelian quantum double models
   and summarize the results of \cite{ChaNN}.
   We show that the single excitation states as constructed in \cite{Naaijkens11} 
   are infinite volume ground states, that is, local perturbations cannot remove the charge.
   The single excitations states, which are inequivalent for distinct charges, give a complete characterization of the sector theory for the set of ground states.  Furthermore, any pure ground state is equivalent to some single excitation ground state.
   
   In the infinite system, quasi-particle excitations are thought to be classified by 
   certain representations of the algebra of observables.
   Equivalence classes of representations form different charged superselection sectors of the system.
   In Chapter 5, we introduce a new superselection criterion selecting almost localized and transportable $*$-endomorphisms with respect to a vacuum state
   and summarize the results in preparation of \cite{ChaNN2}.
   We show that if the vacuum state satisfies certain locality conditions then the superselection structure will be a braided tensor $C^*$-category.
   Further, this superselection structure is stable up to deformations by a quasi-local dynamics.
   This result is then applied to show that the anyon structure of the abelian quantum double models is stable under local perturbations.

   \newpage
   
   \section*{Acknowledgments}

  I'd like to start by thanking my wife, Sandra Thao.
  Your dedication and unwavering support throughout this uncertain time of graduate school has been my backbone.
  I love you.

  Thanks to my advisor Bruno Nachtergaele for sharing with me your enthusiasm and endless wealth of knowledge of quantum spin systems.
  It was by your encouragement and careful guidance that I have somehow managed to finish my graduate studies.
  Thanks to my unofficial supervisor Pieter Naaijkens.  
  The result of this thesis would not have been possible without your original insights for studying superselection sectors of infinite quantum spin systems.
  Thanks to Craig Tracy and Greg Kuperberg for serving on my dissertation committee and the helpful comments on my thesis.
  Thanks to the UC Davis Mathematics staff, especially Tina and Sarah, for your part in making my time at Davis worry-free.
  Thanks to Yasuyuki Kawahigashi for supporting my time at the Universiy of Tokyo.
  A special thanks to all current and past members of our Friends of Anyons research group, especially
  Sven Bachmann, Jogia B.,  Michael Bishop,  Alvin Moon, and Amanda Young.

  To my family, Jenn, Emma, Hannah, April, Steph, Joe, Elliot, Tim, Mom, Dad and Sandra's family, thanks for believing in me and the constant support!
  Thanks to Detroit Life Church for letting me be a part of your family here in Sac. 
  To my grandma Tais, I will always remember your smile.
  
  This work was supported in part by the National Science Foundation (NSF) under Grant DMS-1515850,
  the NSF East Asian and Pacific Science Initiative (EAPSI) in collaboration with the JSPS Summer Program Grant OISE-1515557,
  and the Graduate Assistance in Areas of National Need (GAANN) Fellowship.

   \mainmatter
   
   \pagestyle{maintext}
   
   %
   \fancypagestyle{plain}{%
      \renewcommand{\headrulewidth}{0pt}
      \fancyhf{}
      \cfoot{\thepage}
   }%
   
   \chapter{Introduction}
   \label{ch:intro}

	A family of quantum states with qualitatively similar properties are generally said to be in the same quantum phase.
	Quantum phases with local order parameters have been successfully analyzed in the Landau theory of symmetry breaking and phase transitions, see \cite{Toledano} for a review of Landau theory.
	In the 1980s, physicists discovered phases of matter with no local order parameters called \emph{topological phases}.
	For their foundational work on topological order, David J. Thouless, F. Duncan M. Haldane, and J. Michael Kosterlitz received the 2016 Nobel Prize in Physics.
	
	Some defining features of topological order include: there are no local order parameters, the ground state degeneracy depends on the topology of the underlying space, the area law for entanglement entropy is satisfied with a  correction term  called the topological entanglement entropy,
	and the elementary excitations have braid statistics.
	Although, topological order has been extensively studied in both concrete models and theoretically,
	a mathematically rigorous formulation for topological order in quantum many-body systems has yet to be developed. 
	The aim of this thesis is to provide a mathematical framework for the analysis of 
	certain properties  of topological order in the setting of infinite quantum spin systems.
	
	Elementary quasi-particle excitations with braid statistics are called \emph{anyons}~\cite{Wilczek}.  
	A operation braiding one anyon around another may result in a factor of any phase or even a unitary operation to the initial state.
	Perhaps the most well known model for anyons are as the quasi-particle excitations of fractional quantum 
	hall systems~\cite{ArovasSW, MooreR}. 
	Braid statistics have been studied in the context of local quantum physics~\cite{FredRS,FrohlichG}
	and gauge theories~\cite{BaisDW}. The fusion rules and braiding of anyons is encoded
	algebraically as a unitary modular tensor category~\cite{BakalovK}.
	In particular, the case of the representation theory of the quantum double for a finite group 
	has been well studied~\cite{BaisDW,RochePD,KitaevQD,SzlachV}.
	
	Kitaev \cite{KitaevQD} introduced a family of quantum double models  demonstrating
	the existence of quantum spin models with short-range interactions that have ground states exhibiting 
	topological order and anyonic excitations. 
	Kitaev's models are very special in the sense that 
	they are exactly solvable in the finite volume, 
	have frustration-free ground states, and that the interaction terms in the Hamiltonian are all mutually commuting.
	From the properties above, it may seem as if the models may not be very interesting.
	However, a remarkable feature is that the ground space degeneracy 
	depends on the genus of the surface on which the model is defined. 
	Trivially, these models have a non-vanishing spectral gap above the ground state in the thermodynamic limit.
	In fact, the Hamiltonian of the frustration-free ground state has an entirely discrete spectrum.
	The spectral gap is an important feature in the classification of topologically ordered ground states into a  
	topological phase, and has important implications, such as exponential decay 
	of correlations~\cite{HastingsKLR,NachtergaeleSLR} and entanglement area laws for spin chains~\cite{HastingsAL}.
	It was therefore important to show that this gap does not close for sufficiently small uniform perturbations of these 
	models~\cite{BravyiHM}.
	
	The past two decades have witnessed a rising interest in topologically ordered states, mainly due to the 
	realization that their properties could be useful for fault tolerant quantum computation~\cite{Freedman,KitaevQD}.  
	A topological quantum computation is executed in a topologically ordered state in  three steps.
	First, a quantum state is prepared by creating  anyon particle-antiparticle pairs from the vacuum.
	This can be thought of as the state initialization of a quantum algorithm.
	Next, the anyons are braided along set paths resulting in a  unitary transformation of the initial state.
	Finally, neighboring anyons are fused and any annihilation which takes place is recorded.
	The output of the computation is the final state and the record of annihilation.
	If $G = S_5$ the anyon theory of Kitaev's quantum double model can efficiently simulate any quantum circuit,
	that is, topological quantum computation is universal \cite{KitaevQD}.
	A foundational physical property of topologically ordered states  is that the anyon structure is stable against local perturbations
	and, therefore, topological quantum computation provides robustness to errors on the physical level.
	
	The stability of the anyon structure is closely related to the classification of topological phases of matter. 
	One approach to classifying a phase is to construct a complete set of invariants. 
	By definition, an invariant is a quantity that is constant within a phase. 
	Consequently, if an invariant is computed for two systems and is found to take different values, the systems must be in different phases. 
	From the physical point-of-view, the invariance of the structure of anyon quasi-particles is usually taken as fact.
	However, there are few mathematically rigorous results in this direction \cite{Haah}.
	In the literature, a topological phase is often defined as an open region
	in a space of Hamiltonians where there is a non-vanishing gap above the ground state \cite{ChenGW}. 
	Therefore, the construction of invariants can be expected to rely on the existence of a spectral gap.
	
	The main results of this dissertation are as follows.
	We give a complete characterization of the set of infinite volume ground states for Kitaev's abelian quantum double models. 
	This is joint work of the author in with Bruno Nachtergaele and Pieter Naaijkens \cite{ChaNN}.
	Next, we study a new superselection criterion for infinite quantum spin systems.  
	In the case where the ground state vacuum satisfies certain locality conditions, we show the statistics can be computed exactly and the superselection structure is stable against quasi-local deformations.
	The author plans to continuing working on and publish this result in collaboration with Bruno Nachtergaele and Pieter Naaijkens \cite{ChaNN2}.

	\section{Summary of Main Results}
	
	\subsection{Ground state for Kitaev's abelian quantum double models}
	
	Ground states of quantum lattice models are a well-studied subject. Knowing the set of ground states is essential
	for understanding the properties of quantum many-body systems at sufficiently low temperatures. 
	In the mathematical analysis of certain statistical mechanics phenomena,  such as equilibrium states, phase transitions, 
	superselection sectors and phase classification, it is often necessary or convenient to take the infinite volume limit 
	(or thermodynamic limit)~\cite{BachmannO,BratteliR2,Naaijkens11}. General existence 
	and decomposition properties of the set of infinite volume ground states have been mastered for some time~\cite{BratteliKR,BratteliR2}. 
	The problem of finding the complete the set of  ground states for a given model and proving that it is the complete set, however, has been solved only in a few cases.
	
	We study quantum double models for abelian groups, in their implementation as quantum spin
	Hamiltonians with short-range interactions as defined by Kitaev~\cite{KitaevQD}.
	The simplest example is the toric code model, which corresponds to the choice $G = \mathbb{Z}_2$.
	The abelian quantum double model is particularly interesting because it has all of the characteristic features 
	of topologically ordered systems, while at the same time being simple enough to be tackled directly.
	The main features of the model are: it is exactly solvable in the sense that the Hamiltonian can be explicitly 
	diagonalized; the dimension of the space of ground states of the models defined
	on a compact orientable surface is a topological invariant and corresponds to the number of flat $G$-connections 
	on the lattice (up to conjugation); there is a spectral gap above the ground state;
	the elementary excitations correspond to quasi-particles with braid statistics.
	
	Although the quantum double models are exactly solvable in finite volume,
	much less is known about the thermodynamic limit. 
	The first results in this direction are due to Alicki, Fannes and Horodecki~\cite{AlickiFH}.
	They showed that in the case $G =\ZZ_2$,
	there is a unique frustration-free ground state, which coincides with the translation invariant ground state.
	This uniqueness property is not general \cite{GottsteinW}, but is related to topological order in the ground state.
	The difficulty of solving the full ground state problem can be understood as follows.
	If $\delta$ is the derivation generating the dynamics, one has to find \emph{all} states $\omega$ on the quasi-local algebra $\calA$ of observables that satisfy $\omega(A^*\delta(A)) \geq 0$ for all $A$ in the domain of $\delta$.
	It is possible to construct ground states as weak$^*$ limits of finite volume ground states,
	but even though the boundary goes to infinity in a sense, 
	the resulting state strongly depends on the boundary conditions chosen in the finite volume.
	
	The main result is a complete classification of the set of infinite volume ground states
	for Kitaev's quantum double model for finite abelian groups.
	We find that the set of ground states can be decomposed into $\abs{G}^2$ sectors.
	There is a one-to-one correspondence between the ground state sectors and
	the superselection sectors defined in \cite{FiedlerN}.
	In particular, each sector corresponds to a different anyon type.
	The strategy of the proof is to reduce the infinite volume calculation to 
	a finite volume calculation.
	In particular, we find a boundary term for every finite box 
	such that the restriction of any infinite volume ground state to the box 
	is a ground state of the finite volume Hamiltonian plus the boundary term.
	This strategy is motivated by the fact that infinite volume ground states minimize energy in a local region
	among all states that are equivalent in the complement of that region  \cite{BratteliKR},
	and resembles the classical Dobrushin-Lanford-Ruelle theory of boundary conditions
	for the restriction of infinite volume equilibrium state \cite{FannesW}.
	Although the results of this chapter specialize to the case of abelian groups,
	we believe that similar results hold in the case of non-abelian groups.
	The main technical challenge is that the quantum double $\mathcal{D}(G)$ has higher dimensional irreducible representations.
	In physical terms, this manifests itself in the presence of \emph{non-abelian} anyons.
	
	It is often surprisingly difficult to classify the full set of ground states in the thermodynamic limit.
	To our knowledge, the complete ground state problem has only been solved for the one-dimensional 
	$XY$-model by Araki and Matsui \cite{ArakiM}, for the one-dimensional $XXZ$-models by Matsui \cite{MatsuiXXZ} 
	and Koma and Nachtergaele \cite{KomaN}, and for the finite-range spin chains with a unique frustration free 
	matrix product ground state by Ogata~\cite{Ogata3}.
	To make progress, one typically has to pair the ground state problem with 
	model specific notions; in the $XY$-model it was the Jordan-Wigner transformation to fermions
	and in the $XXZ$-model and the frustration-free spin chains it was a connection to zero-energy states \cite{FNWFCS,GottsteinW}.
	As far as we are aware, our result is the first solution to the ground state problem for a quantum model in two dimensions.

	\subsection{Stability of charges}

	The quasi-local algebra of observables for an infinite quantum spin system $\calA$ has many inequivalent representations.
	For example, consider the ground states of the  Ising spin chain $\omega^{+}$ and $\omega^{-}$ described, 
	respectively, by an infinite tensor product of all plus one eigenvector of $\sigma^3$ and 
	an infinite tensor product of all minus one eigenvectors of $\sigma^3$.
	There are no local operators that will map one state to the other.
	One can show these states are inequivalent.
	
	Most representations do not have any physical relevance (for example, because the energy is unbounded),
	so it is important to restrict the class of representations of interest.
	For example, a theory may have different, inequivalent particle types, like the excitations in the quantum double.
	Another example would be of states with different values of the electric charge.
	Here we will use the term ``charge'' in a generalized sense, as a label of the different particle types.
	Once we can identify different classes of representations with charges, it is reasonable to impose additional constraints.
	In particular, we can impose certain locality conditions, and demand that we are able to move the localization regions around.
	A superselection criterion is a rule that tells us precisely which representations we select.
	A superselection sector is an equivalence class of representations that are all unitarily equivalent and satisfy a superselection criterion.
	
	The Doplicher-Haag-Roberts (DHR) analysis in algebraic quantum field theory showed that 
	starting from a vacuum state and a physically motivated superselection criterion, 
	one could recover a family of superselection sectors corresponding to the global gauge group \cite{DHR1,DHR2}.
	This allows one to recover all physically relevant properties of the charges, such as their particle statistics.
	A similar analysis has been done for the quantum double models, 
	producing the single excitation ground states as the 
	irreducible objects in each superselection sector \cite{FiedlerN,Naaijkens11}.
	The role of the vacuum is played by the translation invariant frustration-free ground state.
	
	We introduce a new superselection criterion for charges in infinite quantum spin systems
	selecting representations that arise from the vacuum by composition with a $*$-endomorphisms that are  \emph{almost localized} and transportable.
	The almost localized property will be defined with respect to a  localization region, which in our case will be an infinite cone, 
	and a rapidly decaying function.
	The main  result is: if the ground state vacuum of the theory has certain locality conditions
	then the superselection structure is that of a braided tensor $C^*$-category, and further,  
	is stable against deformations by a quasi-local dynamics.
	The locality conditions we assume on the vacuum state are Haag duality  and approximate split property for cones \cite{NaaijkensKL}.
	The main technical tools used in the proofs are the notions of asymptotic abelianness and bi-asymptopias introduced in \cite{BuchholzAA}
	and the Lieb-Robinson bounds for quasi-local dynamics \cite{NachOS}.
	We apply our results to prove the stability of anyons in the Kitaev's abelian quantum double models.

   \part{Preliminaries}

   \chapter[%
      Infinite quantum spin systems
   ]{%
      Infinite quantum spin systems
   }%
   \label{ch:qss}
   In this chapter we introduce the mathematical framework for infinite quantum spin systems.
   The standard textbook references are Bratteli and Robinson \cite{BratteliR1, BratteliR2} and Simon \cite{SimonLattice}.
   For more recent references, we recommend Nachtergaele and Sims \cite{NachSQSS} and Naaijkens \cite{NaaijkensQSS}.
   
   Let $(\Gamma,d)$ be a metric graph.  
   We will mainly consider the case $\Gamma = \ZZ^\nu$, the square lattice in $\RR^\nu$ with metric $d(x,y) = \abs{x-y}$, 
   although the techniques we develop are general.
   To each $x\in \Gamma$ assign a finite dimensional Hilbert space $ \calH_x$.
   Let $\mathcal{P}_0(\Gamma)$ denote the set of finite subsets of $\Gamma$.
   For $ \Lambda\in \mathcal{P}_0(\Gamma)$  we
   define the Hilbert space of states for the composite system as the tensor product space $ \calH_\Lambda := \bigotimes_{x\in \Lambda} \calH_x$
   with the corresponding algebra of observables $\calA_\Lambda = \calB(\calH_\Lambda)$.
   If $ \Lambda_1 \subset \Lambda_2$ there is a natural inclusion of 
   $ \calA_{\Lambda_1} \hookrightarrow \calA_{\Lambda_2}$ via the map $ A \mapsto A\otimes I_{\Lambda_2\backslash \Lambda_1}$.
   This gives a local net of $C^*$-algebras and allows us to define the algebra of 
   \emph{local observables} as 
   \[ \calA_{loc} = \bigcup_{\Lambda \subset \Gamma} \calA_\Lambda \]
   and the $C^*$-algebra of {quasi-local observables} as the norm completion
   \[
   \calA = \overline{\calA_{loc}}^{\| \cdot \|}.
   \]
   An observable $A$ is said to be \emph{localized} in a set $X$ if $ A \in \calA_X$. 
   The \emph{support} of $A$ is defined as the smallest set such that $A \in \calA_X$.
   
   Let $X \subset \Gamma$ be a potentially infinite set.
   We define the quasi-local algebra of observables in $X$ as the $C^*$-subalgebra 
   \[
   \calA_X \equiv \overline{ \bigcup_{\Lambda\in \mathcal{P}_0(X)} \calA_{ \Lambda}}^{\| \cdot \|} \subset \calA.
   \]
   Notice that when $X \in \mathcal{P}_0(\Gamma)$ then we recover $\calA_X = \calB(\calH_X)$.
   
   An \emph{interaction} is a map $\Phi: \mathcal{P}_0(\Gamma) \ra \calA_{loc}$ such that
   $\Phi(X) \in \calA_X $ and $ \Phi(X)^* = \Phi(X)$.
   The local Hamiltonians corresponding to $\Phi$ are 
   
   \begin{equation*} 
   H_\Lambda = \sum_{X\subset \Lambda} \Phi(X) 
   \end{equation*}
   and the Heisenberg dynamics are given by the one-parameter group of automorphisms on $\calA_\Lambda$,
   \begin{equation*}
   \tau^\Lambda_t (A) = e^{i t H_\Lambda} A e^{-i t H_\Lambda}.
   \end{equation*}
   
   The term quantum spin system describes a physical system where $\calH_x$ 
   holds a representation of some spin-$j$ particle fixed at $x$
   and $\Phi$ describes the interactions.
   For instance if we consider a chain of spin-$\frac{1}{2}$ particles,
   the ferromagnetic XXZ Heisenberg model given by
   \begin{equation*}
   H_{[a,b]} = -\sum_{i=a}^{b-1} \sigma_x^1 \sigma_{x+1}^1 + \sigma_{x}^2 \sigma_{x+1}^2 + \Delta\sigma_x^3\sigma_{x+1}^3,
   \end{equation*}
   models magnetism in a idealized low-temperature condensed matter system on the line.
   The interaction terms $\vec{\sigma} = ( \sigma^1, \sigma^2, \sigma^3)$ are the 
   Pauli matrices representing the components of spin
   obeying the angular momentum commutation relation:
   $[ \sigma^i, \sigma^j] = 2 i \epsilon_{ijk} \sigma^k.$
   
   \section{Dynamics}\label{sec:dynamics}
   In this section, we establish sufficient decay conditions on an interaction  such that 
   there exists a corresponding quasi-local infinite volume dynamics. 
   Decay conditions on $\Phi$ are given with respect to locality and regularity of the metric graph $(\Gamma,d)$.
   \begin{defn}\label{def:Ffunc}
   	A function $F: \RR^{\geq 0} \ra \RR^{> 0}$ is called an $\mathcal{F}$-function for $\Gamma$ if it is monotone decreasing and satisfies:
   	\begin{equation}\label{eqn:unifint}
   	\|F \|_0 = \sup_{x \in \Gamma} \sum_{y \in \Gamma} F(d(x,y)) < \infty  \quad \quad \text{ (uniform integrability)}
   	\end{equation}
   	\begin{equation*}
   	C_F  = \sup_{x,y\in \Gamma} \sum_{z\in \Gamma} \frac{ F(d(x,z)) F(d(z,y))}{F(d(x,y))} < \infty \qquad \text{ (convolution inequality)}.
   	\end{equation*}
   \end{defn}
   For $\Gamma = \ZZ^\nu$ the function 
   $F(r) = \frac{1}{(1+r)^{\nu+\epsilon}}$ is an $\mathcal{F}-$function for all $\epsilon>0$.
   
   Let $ b>0$, if $F$ is an $\mathcal{F}$-function then
   \begin{equation*}
   F_{bg}(r)\equiv e^{-bg(r)} F(r)
   \end{equation*}
   is an $\mathcal{F}$-function if
   $g:\RR^{\geq 0} \ra \RR^{\geq 0}$ is uniformly continuous, non-decreasing and sub-additive, that is, $g(x+y) \leq g(x) + g(y)$.
   These properties are satisfied by the following functions:
   \begin{align}
   g(r) &= r^\alpha  \qquad \quad \mbox{ for } 0<\alpha\leq 1, \\
   g(r) &= \left\{ \begin{array}{ll}
   \frac{r}{\ln^p(r)} & \mbox{ if }  x> e^p\\
   \left( \frac{e}{p}\right)^p & \mbox{ if } x\leq e^p
   \end{array} \right.\label{eqn:subexpg}
   \end{align}
   for $p \in \NN$.
   
   When the context is clear, we will abuse notation and denote 
   \begin{equation*}
   F_{b}(r) \equiv e^{-b r }F(r).
   \end{equation*} 
   
   \begin{defn}
   	An interaction $\Phi$ satisfies a finite $F$-norm if 
   	\begin{equation*}
   	\| \Phi\|_F \equiv \sup_{x,y \in \Gamma} \frac{1}{F(d(x,y))} \sum_{\substack{ X \subset \Gamma: \\x,y \in X}} \| \Phi(X) \| < \infty.
   	\end{equation*}
   \end{defn}
   
   One measure of locality in quantum spin system is by commutator bounds.  
   Let $X,Y \in \mathcal{P}_0(\Gamma)$.
   If $d(X,Y) >0$ then $[A,B] = 0$ for all $A\in\calA_X$ and $ B \in\calA_Y$.  
   To measure the diffusion or spreading of a dynamics $\tau_t^\Lambda$, 
   Lieb and Robinson \cite{LiebR} considered bounding the commutator $[ \tau_t^\Lambda(A), B]$.
   Since the Heisenberg dynamics are non-relativistic, 
   it is generally expected that the commutator norm is non-zero at any finite time $t>0$.
   In the relativistic case,  the commutator is non-zero only if $A$ and $B$ are not spacelike localized.

   \begin{thm}\cite{NachOS}
   	Let $\Phi$ be an interaction with a finite $F$-norm and $X, Y \subset \Lambda \in \mathcal{P}_0(\Gamma)$.
   	Then, for any $A \in \calA_X$ and $B \in \calA_Y$  with $d(X,Y) >0$, 
   	\begin{equation}\label{eqn:ExpLRb}
   	\| [ \tau^\Lambda_t (A) ,B ] \| \leq \frac{2 \|A \| \| B\|}{C_F} (e^{ v_\Phi \abs{t}} - 1) \sum_{x\in X}\sum_{y\in Y}F(d(x,y))
   	\end{equation}
   	for any $t \in \RR$, where $v_\Phi = 2 \|\Phi\|_{F} C_{F}$. 
   \end{thm}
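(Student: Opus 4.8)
The plan is to convert the commutator into an $\calA_\Lambda$-valued function of $t$, derive a closed integral recursion for it, iterate that recursion into a Dyson-type series, and then control the resulting spatial sums with the $\mathcal{F}$-function. Fix $A\in\calA_X$ and $B\in\calA_Y$ and set $f(t)=[\tau^\Lambda_t(A),B]$. Since $\Lambda$ is finite, $H_\Lambda$ is a bounded self-adjoint operator, so $t\mapsto\tau^\Lambda_t(A)$ is differentiable in norm with $\frac{d}{dt}\tau^\Lambda_t(A)=i\,\tau^\Lambda_t([H_\Lambda,A])$, and $f$ is smooth. Because $d(X,Y)>0$ the supports of $A$ and $B$ are disjoint, so $f(0)=[A,B]=0$. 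Writing $H_\Lambda=\sum_{Z\subset\Lambda}\Phi(Z)$ and noting that $[\Phi(Z),A]=0$ whenever $Z\cap X=\emptyset$, we have $[H_\Lambda,A]=\sum_{Z\cap X\neq\emptyset}[\Phi(Z),A]$, and integrating from $0$ gives
\[
f(t)=i\int_0^t\sum_{Z\cap X\neq\emptyset}\big[\tau^\Lambda_s([\Phi(Z),A]),B\big]\,ds .
\]

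Next I would introduce, for finite $W\subset\Lambda$, the quantity $C_B(W;t)=\sup\{\|[\tau^\Lambda_t(A'),B]\|:A'\in\calA_W,\ \|A'\|\le 1\}$, which is monotone in $W$, bounded by $2\|B\|$, and vanishes at $t=0$ unless $W\cap Y\neq\emptyset$. Using $\|[\Phi(Z),A]\|\le 2\|\Phi(Z)\|\,\|A\|$ together with $[\Phi(Z),A]\in\calA_{X\cup Z}$, the identity above yields the recursion
\[
C_B(X;t)\le C_B(X;0)+2\int_0^{|t|}\sum_{Z\cap X\neq\emptyset}\|\Phi(Z)\|\,C_B(X\cup Z;s)\,ds ,
\]
valid for every finite $X$. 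Iterating it — terminating each branch with the trivial bound at the step where the accumulated support first meets $Y$, and using $C_B\le 2\|B\|$ on the tails — produces an upper bound that is a sum over ``chains'' $Z_1,\dots,Z_n$ of interaction terms, with $Z_1$ meeting $X$, each $Z_{k+1}$ meeting $X\cup Z_1\cup\dots\cup Z_k$, and $X\cup Z_1\cup\dots\cup Z_n$ meeting $Y$; the $n$ nested time integrals contribute the factor $|t|^n/n!$. (One can also bypass the explicit series by verifying a Gr\"onwall-type ansatz of the form $C_B(W;t)\le 2\|B\|\min\{1,\,C_F^{-1}(e^{v_\Phi|t|}-1)\sum_{w\in W,\,y\in Y}F(d(w,y))\}$ directly in the recursion.)

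The remaining step, and the one I expect to be the main obstacle, is the geometric estimate of the chain sums. Bounding $\sum_{Z:\,u,v\in Z}\|\Phi(Z)\|\le\|\Phi\|_F\,F(d(u,v))$ replaces each interaction term by $\|\Phi\|_F$ times an $F$ evaluated on a consecutive pair of sites along the chain, reducing the $n$-th term to sums of the type $\sum_{z_1,\dots,z_{n-1}}F(d(x,z_1))F(d(z_1,z_2))\cdots F(d(z_{n-1},y))$, and the convolution inequality applied $n-1$ times telescopes this to $C_F^{\,n-1}F(d(x,y))$. What requires care is the bookkeeping of how the chains branch when a later term attaches to an earlier one rather than to the current front, so that the overcounting contributes exactly the factors needed for each link to carry one power of $2\|\Phi\|_F C_F$. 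Done correctly, this yields
\[
\|[\tau^\Lambda_t(A),B]\|\le\frac{2\|A\|\,\|B\|}{C_F}\sum_{x\in X}\sum_{y\in Y}F(d(x,y))\sum_{n\ge 1}\frac{(v_\Phi|t|)^n}{n!},
\]
and summing the series gives the factor $e^{v_\Phi|t|}-1$, which is the claimed bound. All analytic manipulations (differentiating under the integral, convergence of the Dyson series) are routine because $\Lambda$ is finite and $H_\Lambda$ is bounded; the $\mathcal{F}$-function machinery is used only to make the estimate uniform in $\Lambda$ and to exhibit the spatial decay.
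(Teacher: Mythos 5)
The paper itself gives no proof of this theorem (it is quoted from \cite{NachOS}), so your proposal has to be judged against the standard argument there. Your overall architecture --- Duhamel integration, a recursion for $C_B(W;t)=\sup\{\|[\tau_t^\Lambda(A'),B]\|: A'\in\calA_W,\ \|A'\|\leq 1\}$, iteration into a series, and telescoping the spatial sums with $\|\Phi\|_F$ and the convolution constant $C_F$ --- is the right one, but the specific recursion you derive is not the one that makes this work, and the step you defer (``bookkeeping of how the chains branch'') is exactly where the argument breaks. By integrating $f'(t)=i\sum_{Z\cap X\neq\emptyset}[\tau_t^\Lambda([\Phi(Z),A]),B]$ directly you are forced to evaluate $C_B$ on the \emph{enlarged} support $X\cup Z$, so the iterated series runs over sequences $(Z_1,\dots,Z_n)$ in which each $Z_{k+1}$ may attach to $X$ itself or to any earlier $Z_j$, not just to the current front. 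The off-path attachments are controlled only by $\sum_{Z\cap W\neq\emptyset}\|\Phi(Z)\|\lesssim |W|\,\|\Phi\|_F\|F\|_0$, with $|W|\geq |X|$, so they do not telescope through $C_F$; the velocity you extract this way grows with $\abs{X}$ (and with $\|F\|_0$) instead of being the uniform $v_\Phi=2\|\Phi\|_F C_F$. The same defect kills the parenthetical Gr\"onwall shortcut: plugging the ansatz $C_B(W;t)\leq 2\|B\|C_F^{-1}(e^{v_\Phi|t|}-1)\sum_{w\in W,y\in Y}F(d(w,y))$ into \emph{your} recursion produces, from the $w\in X$ part of the sum over $X\cup Z$, a term proportional to $\sum_{Z\cap X\neq\emptyset}\|\Phi(Z)\|\cdot\sum_{x\in X,y\in Y}F(d(x,y))$, and closing the induction would require $2\sum_{Z\cap X\neq\emptyset}\|\Phi(Z)\|\leq v_\Phi$, which fails since the left side is of order $\abs{X}\,\|\Phi\|_F\|F\|_0$.

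The missing idea is the Jacobi-identity (norm-preservation) step that reduces the support in the recursion from $X\cup Z$ to $Z$. Writing $f'(t)=i[[\tau_t^\Lambda(H_{\bar X}),\tau_t^\Lambda(A)],B]$ with $H_{\bar X}=\sum_{Z\cap X\neq\emptyset}\Phi(Z)$ and using the Jacobi identity, one splits off $i[\tau_t^\Lambda(H_{\bar X}),f(t)]$, which generates a unitary conjugation of $f$ and hence does not change its norm, leaving the source term $-i[\tau_t^\Lambda(A),[\tau_t^\Lambda(H_{\bar X}),B]]$. This gives
\begin{equation*}
C_B(X;t)\ \leq\ C_B(X;0)\ +\ 2\int_0^{\abs{t}}\sum_{Z\cap X\neq\emptyset}\|\Phi(Z)\|\,C_B(Z;s)\,ds ,
\end{equation*}
with only $Z$ on the right. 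Iteration then produces genuinely linear chains ($Z_1\cap X\neq\emptyset$, $Z_{k+1}\cap Z_k\neq\emptyset$, $Z_n\cap Y\neq\emptyset$), each link is bounded via $\sum_{Z\ni u,v}\|\Phi(Z)\|\leq\|\Phi\|_F F(d(u,v))$, and $n-1$ applications of the convolution inequality collapse the spatial sum to $C_F^{\,n-1}\|\Phi\|_F^{\,n}\sum_{x\in X,y\in Y}F(d(x,y))$, yielding exactly the prefactor $2\|A\|\|B\|/C_F$ and the series summing to $e^{v_\Phi\abs{t}}-1$. Without this reduction of supports, the constants in the statement cannot be reached by your scheme.
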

   
   If there is a $b>0$ such that $\Phi$ satisfies an finite $F_{bg}$-norm then \eqref{eqn:ExpLRb} becomes 
   \begin{equation}\label{eqn:fvLRb}
   \| [ \tau^\Lambda_t (A) ,B ] \| \leq \frac{2 \|A \| \| B\|}{C_{F_{bg}}} \|F\|_0\min\{\abs{X}, \abs{Y}\} e^{ v_{bg} \abs{t} - b g(d(X,Y))},
   \end{equation}
   where $ v_{bg} = 2 \|\Phi\|_{F_{bg}} C_{F_{bg}}$.
   Bounds of the form \eqref{eqn:ExpLRb} and \eqref{eqn:fvLRb} are generally referred to as \emph{Lieb-Robinson bounds}.
   
   The existence of an infinite volume dynamics $\{\tau_t\}_{t \in \RR}$ follows from the Lieb-Robinson bounds. 
   We say that a sequence $\Lambda_n \in \mathcal{P}_0(\Gamma)$ is increasing and exhausting in $\Gamma$
   if $\Lambda_n \subset \Lambda_{n+1}$ and $\Gamma = \bigcup_{n} \Lambda_n$.
   
   \begin{thm}\label{thm:infvoldyn}\cite{NachOS}
   	Let $\Phi$ be an interaction with a finite $F$-norm.
   	Then, for any increasing and exhausting sequence $\Lambda_n$ the norm limit 
   	\begin{equation*}
   	\tau_t(A) \equiv \lim_{n \ra \infty} \tau_t^{\Lambda_n}(A) 
   	\end{equation*}
   	exists for all $t\in \RR$ and $A \in \calA$.
   	The limiting dynamics $\tau_t$ defines a strongly continuous, one-parameter group of automorphisms on $\calA$.
   	The convergence is uniform for $t$ in compact sets and 
   	is independent of the sequence $\Lambda_n$.
   \end{thm}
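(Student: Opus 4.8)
The plan is to prove the convergence first on the dense $*$-subalgebra $\calA_{loc}$ and then bootstrap to all of $\calA$, exploiting that each $\tau_t^{\Lambda_n}$ is a $*$-automorphism, hence norm contractive. Fix $A \in \calA_Z$ with $Z \in \mathcal{P}_0(\Gamma)$, and take $m \le n$ with $Z \subset \Lambda_m$. I would interpolate between the two finite-volume dynamics via $g(s) = \tau_s^{\Lambda_m}\bigl(\tau_{t-s}^{\Lambda_n}(A)\bigr)$ for $s \in [0,t]$, whose endpoints are $g(0) = \tau_t^{\Lambda_n}(A)$ and $g(t) = \tau_t^{\Lambda_m}(A)$. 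Differentiating and using $H_{\Lambda_n} - H_{\Lambda_m} = \sum_{X \subset \Lambda_n,\ X \not\subset \Lambda_m} \Phi(X)$ gives
\[
\tau_t^{\Lambda_n}(A) - \tau_t^{\Lambda_m}(A) \;=\; i\int_0^t \tau_s^{\Lambda_m}\!\Bigl( \bigl[\, H_{\Lambda_n} - H_{\Lambda_m},\ \tau_{t-s}^{\Lambda_n}(A)\,\bigr]\Bigr)\, ds,
\]
and since $\tau_s^{\Lambda_m}$ is isometric,
\[
\bigl\| \tau_t^{\Lambda_n}(A) - \tau_t^{\Lambda_m}(A) \bigr\| \;\le\; \int_0^{\abs t} \sum_{\substack{X \subset \Lambda_n \\ X \not\subset \Lambda_m}} \bigl\| [\, \tau_{t-s}^{\Lambda_n}(A),\, \Phi(X)\,] \bigr\|\, ds.
\]

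The heart of the proof is to bound this commutator sum uniformly in $n$. I would split the sum according to whether $d(Z,X)=0$ or $d(Z,X)>0$. In the first case use the trivial bound $\|[\tau_{t-s}^{\Lambda_n}(A),\Phi(X)]\| \le 2\|A\|\,\|\Phi(X)\|$; in the second apply the Lieb--Robinson bound \eqref{eqn:ExpLRb} with $\abs{t-s} \le \abs t$, giving $\|[\tau_{t-s}^{\Lambda_n}(A),\Phi(X)]\| \le \tfrac{2\|A\|}{C_F}\bigl(e^{v_\Phi \abs t}-1\bigr)\,\|\Phi(X)\|\sum_{z\in Z}\sum_{x\in X}F(d(z,x))$. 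In both cases every $X$ occurring in the sum contains a site $y \notin \Lambda_m$, so for each fixed site $x$ one has $\sum_{X \ni x,\ X \not\subset \Lambda_m}\|\Phi(X)\| \le \sum_{y \notin \Lambda_m}\sum_{X \ni x,y}\|\Phi(X)\| \le \|\Phi\|_F \sum_{y \notin \Lambda_m}F(d(x,y))$, and then the convolution inequality collapses $\sum_x F(d(z,x))F(d(x,y)) \le C_F\,F(d(z,y))$. Collecting everything, the right-hand side is bounded by $\abs t\, C \sum_{z\in Z}\sum_{y \notin \Lambda_m} F(d(z,y))$ with $C$ depending only on $\|A\|$, $\|\Phi\|_F$ and $e^{v_\Phi \abs t}$. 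Since $\Lambda_m$ is exhausting, $d(z,\Gamma\setminus\Lambda_m)\to\infty$ for each $z\in Z$, and uniform integrability ($\|F\|_0 < \infty$) forces the tails $\sum_{y\notin\Lambda_m}F(d(z,y)) \to 0$ as $m\to\infty$. Hence $(\tau_t^{\Lambda_n}(A))_n$ is Cauchy in $\calA$, with a rate that is uniform for $t$ in compact sets and independent of the exhausting sequence.

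The remainder is soft functional analysis. Because $\|\tau_t^{\Lambda_n}\| = 1$ and $\calA_{loc}$ is dense in $\calA$, an $\varepsilon/3$-argument extends $\tau_t(A) := \lim_n \tau_t^{\Lambda_n}(A)$ to all $A \in \calA$, with $\|\tau_t(A)\| \le \|A\|$. Applying the same Cauchy estimate to the volumes $\Lambda_m$, $\Lambda_m'$, and $\Lambda_m \cup \Lambda_m'$ for two exhausting sequences shows the limit is independent of the sequence. The limit $\tau_t$ is then a $*$-homomorphism with $\tau_0 = \mathrm{id}$ (it is a pointwise norm limit of $*$-homomorphisms); passing the finite-volume identities $\tau_t^{\Lambda_n}\circ\tau_s^{\Lambda_n} = \tau_{t+s}^{\Lambda_n}$ to the limit --- legitimate since contractivity lets one exchange the composition with the limit --- yields $\tau_t \circ \tau_s = \tau_{t+s}$, so each $\tau_t$ is an automorphism with inverse $\tau_{-t}$. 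Finally, each map $t \mapsto \tau_t^{\Lambda_n}(A)$ is norm continuous and the convergence is uniform on compact $t$-intervals, so $t \mapsto \tau_t(A)$ is norm continuous, i.e.\ $\{\tau_t\}$ is strongly continuous.

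The step I expect to be the real obstacle is precisely the uniform control of the commutator sum: marrying the interpolation identity with the Lieb--Robinson bound and carrying out the $\mathcal F$-function bookkeeping so that the error is simultaneously uniform in $n$ and locally uniform in $t$. Once that single estimate is in place, the group law, strong continuity, and independence of the sequence all follow immediately.
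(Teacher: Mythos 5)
Your proof is correct and follows essentially the same route as the cited source \cite{NachOS} for Theorem \ref{thm:infvoldyn} (the dissertation itself states the result without proof): the interpolation identity between $\tau_t^{\Lambda_m}$ and $\tau_t^{\Lambda_n}$, the Lieb--Robinson bound combined with the finite $F$-norm and the convolution inequality to produce a Cauchy estimate controlled by the tails $\sum_{z\in Z}\sum_{y\notin\Lambda_m}F(d(z,y))$, and then density, contractivity, the group law, and uniform convergence on compact time intervals for strong continuity. Nothing further is needed.
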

   
   Let  $\Phi$ have a finite $F$-norm and $A \in \calA_{ \{x\}}$.
   For an increasing and exhausting sequence $\Lambda_n$ in $\Gamma$ such that $ x \in \Lambda_1$ we have 
   that for $m<n$ 
   \begin{align}\label{eqn:dercauchy}
   \| [H_{\Lambda_n} - H_{\Lambda_m}, A] \| &\leq 2\|A\| \sum_{y\in \Lambda_n\setminus\Lambda_m} \sum_{\substack{X \subset \Gamma \\ x,y\in X}} \|\Phi(X)\| \\
   &\leq 2 \|A\| \| \Phi\|_F \sum_{y \in \Lambda_n\setminus\Lambda_m} F(d(x,y)),
   \end{align}
   where the second inequality comes from the assumption that $\Phi$ has a finite $F$-norm.
   The right hand side can be made arbitrarily small due to the uniform integrability condition \eqref{eqn:unifint}.
   By a similar argument,  the norm limit exists for all $A\in \calA_{loc}$
   \begin{equation*}
   \delta(A) \equiv\lim_{n \ra \infty} [ H_{\Lambda_n}, A] \quad \text{ for all } \quad A\in\calA_{loc}.
   \end{equation*}
   Therefore, $\delta$ is a norm densely defined derivation and is norm-closable 
   with $\calA_{loc}$ as a core (\cite{BratteliR2}, Proposition 6.2.3).
   It can be checked that
   
   \begin{align*}
   \frac{d}{dt} \tau_t(A)  = i \tau_t(\delta(A) ) \quad \text{ for all} \quad A\in\calA_{loc}.
   \end{align*}
   We call $\delta$ the generator of the dynamics $\tau_t$ and  write $\tau_t = e^{it \delta}$.

   The Lieb-Robinson bound \eqref{eqn:ExpLRb} was stated for a local dynamics $\tau_t^\Lambda$, $\Lambda \in \mathcal{P}_0(\Gamma)$.
   However, \eqref{eqn:ExpLRb} is indeed independent of the finite volume $\Lambda$ and therefore can be extended to the infinite system.
   In particular, \eqref{eqn:ExpLRb} holds when both $A$ and $B$ have infinite support. 
   
   \begin{prop}
   	Let $\Phi$ be an interaction with a finite $F$-norm.
   	Then, for any $A \in \calA_{X}$ and $B \in \calA_{Y}$ with $d(X,Y) >0$,
   	\begin{equation}\label{eqn:LRbound}
   	\| [ \tau_t (A) ,B ] \| \leq \frac{2 \|A \| \| B\|}{C_F} (e^{ v_\Phi \abs{t}} - 1) \sum_{x\in X}\sum_{y\in Y}F(d(x,y)).
   	\end{equation}
   \end{prop}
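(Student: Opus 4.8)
The plan is to derive \eqref{eqn:LRbound} from the finite-volume estimate \eqref{eqn:ExpLRb} by a two-step approximation: first pass from the local dynamics $\tau_t^\Lambda$ to the infinite-volume dynamics $\tau_t$ for observables of finite support, and then pass from finite to arbitrary (possibly infinite) support for $A$ and $B$. As noted before the statement, the point is simply that the right-hand side of \eqref{eqn:ExpLRb} does not depend on the finite volume $\Lambda$, so the bound survives in the limit.

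\emph{Step 1 (finite support).} Suppose first $X, Y \in \mathcal{P}_0(\Gamma)$ and $A \in \calA_X$, $B \in \calA_Y$. Fix an increasing and exhausting sequence $\Lambda_n \in \mathcal{P}_0(\Gamma)$ with $X \cup Y \subset \Lambda_1$. For each $n$ the finite-volume Lieb-Robinson bound \eqref{eqn:ExpLRb} applies with $\Lambda = \Lambda_n$, and its right-hand side, namely $\tfrac{2\|A\|\|B\|}{C_F}(e^{v_\Phi \abs{t}}-1)\sum_{x\in X}\sum_{y\in Y}F(d(x,y))$, is independent of $n$. By Theorem~\ref{thm:infvoldyn}, $\tau_t^{\Lambda_n}(A) \ra \tau_t(A)$ in norm, hence $[\tau_t^{\Lambda_n}(A), B] \ra [\tau_t(A), B]$ in norm and the norms converge. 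Letting $n \ra \infty$ in \eqref{eqn:ExpLRb} yields \eqref{eqn:LRbound} in this case.

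\emph{Step 2 (arbitrary support).} Now let $X, Y \subset \Gamma$ be arbitrary with $d(X,Y) > 0$, and $A \in \calA_X$, $B \in \calA_Y$. By definition of $\calA_X$ as the norm closure of $\bigcup_{\Lambda \in \mathcal{P}_0(X)} \calA_\Lambda$, there are finite sets $X_k \in \mathcal{P}_0(X)$ and $A_k \in \calA_{X_k}$ with $\|A_k - A\| \ra 0$, and similarly $Y_k \in \mathcal{P}_0(Y)$, $B_k \in \calA_{Y_k}$ with $\|B_k - B\| \ra 0$. Since $X_k \subset X$ and $Y_k \subset Y$ we have $d(X_k, Y_k) \geq d(X,Y) > 0$, so Step 1 gives
\[
\| [\tau_t(A_k), B_k] \| \leq \frac{2\|A_k\|\|B_k\|}{C_F}(e^{v_\Phi \abs{t}}-1)\sum_{x\in X_k}\sum_{y\in Y_k}F(d(x,y)).
\]
Because $F > 0$ and $X_k \subset X$, $Y_k \subset Y$, the double sum on the right is bounded by $\sum_{x\in X}\sum_{y\in Y}F(d(x,y))$ (possibly $+\infty$, in which case \eqref{eqn:LRbound} is trivial). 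Finally $\tau_t$ is an isometric automorphism of $\calA$, so $\tau_t(A_k) \ra \tau_t(A)$ in norm; thus $[\tau_t(A_k), B_k] \ra [\tau_t(A), B]$ in norm as a product of norm-convergent bounded sequences, the norms converge, and $\|A_k\|\|B_k\| \ra \|A\|\|B\|$. Passing to the limit $k \ra \infty$ gives \eqref{eqn:LRbound}.

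There is no real obstacle here; the content is entirely an interchange of limits with the continuous norm and the bounded commutator map. The two points that deserve a line of care are that the right-hand side of \eqref{eqn:ExpLRb} is genuinely independent of the finite volume $\Lambda$ (clear from its explicit form, and guaranteed finite in Step 1 by the uniform integrability condition \eqref{eqn:unifint}), and that the approximating supports retain a uniform positive separation — which is automatic since $X_k \subseteq X$ and $Y_k \subseteq Y$. The monotonicity of the geometric sum under shrinking supports uses only the positivity of the $\mathcal{F}$-function $F$.
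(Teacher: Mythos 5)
Your proof is correct and follows essentially the same route as the paper: first pass from $\tau_t^{\Lambda_n}$ to $\tau_t$ for finitely supported observables using the volume-independence of the bound in \eqref{eqn:ExpLRb}, then approximate observables with infinite support by local ones and use positivity of $F$ to dominate the shrunken double sums. The only cosmetic difference is that you approximate $A$ and $B$ simultaneously, whereas the paper treats first $A$ (with $Y$ finite) and then remarks that the fully infinite case is similar.
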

   
   \begin{proof}
   	Suppose $X,Y \in \mathcal{P}_0(\Gamma)$ are finite subsets with $d(X,Y) >0$
   	and let $A \in \calA_X$ and $ B \in \calA_Y$.
   	Let $\Lambda_n$ be an increasing and exhausting sequence in $\Gamma$.
   	Then, $X, Y \subset \Lambda_n$ for $n$ large enough.
   	Therefore, \eqref{eqn:ExpLRb} implies 
   	\begin{align*}
   	\| [\tau_t(A) , B] \| & \leq 2 \| \tau_t (A) - \tau_t^{\Lambda_n}\| \|B\| + \| [\tau_t^{\Lambda_n}(A), B]\|\\
   	&  \leq 2 \| \tau_t (A) - \tau_t^{\Lambda_n}\| \|B\|+ \frac{2 \|A \| \| B\|}{C_F} (e^{ v_\Phi \abs{t}} - 1) \sum_{x\in X}\sum_{y\in Y}F(d(x,y))
   	\end{align*}	
   	Taking the limit as $ n\ra \infty$ gives the result for local observables.
   	
   	Now suppose $X,Y \subset \Gamma$ be not necessarily finite subsets with $d(X,Y) >0$
   	and let $A \in \calA_X$ and $B \in\calA_Y$.
   	First, let us assume $Y$ is finite.
   	Let $A_n\in\calA_X\cap \calA_{loc}$ be a sequence of local observables such that $\| A - A_n \| \ra 0$.
   	Let $X_n$ denote the support of $A_n$.
   	Then, 
   	\begin{equation*}
   	\sum_{x\in X_n}\sum_{y\in Y}F(d(x,y)) \leq  \sum_{x\in X}\sum_{y\in Y}F(d(x,y)).
   	\end{equation*}
   	It follows that 
   	\begin{align*}
   	\| [\tau_t(A) , B] \| & \leq 2 \| \tau_t(A -A_n)\| \| B\| + \| [\tau_t(A_n), B]\| \\
   	&\leq 2 \| A -A_n\| \| B\| +\frac{2 \|A_n \| \| B\|}{C_F} (e^{ v_\Phi \abs{t}} - 1) \sum_{x\in X_n}\sum_{y\in Y}F(d(x,y))\\
   	&\leq 2 \| A -A_n\| \| B\| +\frac{2 \|A_n \| \| B\|}{C_F} (e^{ v_\Phi \abs{t}} - 1) \sum_{x\in X}\sum_{y\in Y}F(d(x,y))\\
   	&\ra \frac{2 \|A \| \| B\|}{C_F} (e^{ v_\Phi \abs{t}} - 1) \sum_{x\in X}\sum_{y\in Y}F(d(x,y)).
   	\end{align*}
   	A similar argument shows the result in the case where both $X$ and $Y$ are infinite subsets.
   \end{proof}

   A complementary viewpoint of quasi-locality in the dynamics $\tau_t$ is 
   to bound the growth of support for an observable evolved in time.
   The following lemma gives a precise relation for commutator bounds and local approximations,
   which allows us to use the Lieb-Robinson bounds to exactly this.
   
   \begin{lemma}\cite{NachSW}
   	Let $\calH_1$ and $\calH_2$ be Hilbert spaces.
   	Then, there is a completely positive linear map $\mathbb{E}: \calB(\calH_1 \otimes \calH_2) \ra \calB(\calH_1)$ with the following properties:
   	\begin{enumerate}
   		\item For all $A \in \calB(\calH_1), \mathbb(E)( A \otimes I) = A$,
   		\item If $A \in \calB(\calH_1 \otimes \calH_2)$ satisfies the commutator bound 
   		\begin{equation*}
   		\| [A,I \otimes B] \| < \epsilon \|A \| \|B\| \quad \text{ for all } \quad B \in \calB(\calH_2), 
   		\end{equation*}
   		then 
   		\begin{equation*}
   		\|\mathbb{E}(A) \otimes I - A \| < \epsilon \| A\|,
   		\end{equation*} 
   		\item For all $C,D \in \calB(H_1)$ and $ A \in \calB(H_1 \otimes \calH_2)$
   		\begin{equation*}
   		\mathbb{E}( C A B) = C \mathbb{E}(A)B.
   		\end{equation*}
   	\end{enumerate}
   \end{lemma}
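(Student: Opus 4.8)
The plan is to realize $\mathbb{E}$ as the \emph{twirl} over the unitary group of $\calH_2$. Write $\mathrm{U}(\calH_2)$ for the group of unitaries on $\calH_2$; when $\calH_2$ is finite dimensional this is a compact group carrying a Haar probability measure $dU$. For each $U$ the conjugation $\Phi_U(A) := (I \ot U^*)\,A\,(I \ot U)$ is a unital $*$-automorphism of $\calB(\calH_1 \ot \calH_2)$, hence completely positive; moreover $U \mapsto \Phi_U(A)$ is norm continuous and norm bounded by $\|A\|$. I would therefore set
\begin{equation*}
\mathbb{E}_0(A) \;:=\; \int_{\mathrm{U}(\calH_2)} \Phi_U(A)\, dU \;\in\; \calB(\calH_1 \ot \calH_2),
\end{equation*}
a Bochner integral, and first record that $\mathbb{E}_0$ is linear, unital, contractive, and completely positive---the last because it is an average of the completely positive maps $\Phi_U$ and the positive cone of $M_n \ot \calB(\calH_1 \ot \calH_2)$ is closed and convex.

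The key structural step is to show that $\mathbb{E}_0$ takes values in $\calB(\calH_1) \ot I$. By invariance of Haar measure, $\Phi_V\bigl(\mathbb{E}_0(A)\bigr) = \int \Phi_{UV}(A)\, dU = \mathbb{E}_0(A)$ for every unitary $V$, so $\mathbb{E}_0(A)$ commutes with $I \ot V$ for all $V$; since the unitaries span $\calB(\calH_2)$, it commutes with all of $I \ot \calB(\calH_2)$, and an elementary block-matrix argument (equivalently, the commutation theorem $(\CC I \ot \calB(\calH_2))' = \calB(\calH_1) \ot \CC I$) shows any such operator equals $S \ot I$ for a unique $S \in \calB(\calH_1)$. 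Defining $\mathbb{E}(A) := S$, i.e.\ $\mathbb{E}_0(A) = \mathbb{E}(A) \ot I$, yields a linear completely positive map $\mathbb{E}: \calB(\calH_1 \ot \calH_2) \ra \calB(\calH_1)$.

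It then remains to check the three properties. For (1), $\Phi_U(A \ot I) = A \ot I$ for every $U$, so $\mathbb{E}(A \ot I) = A$. For (3), $C \ot I$ and $D \ot I$ commute with each $I \ot U$, hence $\Phi_U(CAD) = C\,\Phi_U(A)\,D$; integrating and pulling the fixed bounded factors outside the Bochner integral gives $\mathbb{E}(CAD) = C\,\mathbb{E}(A)\,D$. For (2), suppose $\|[A, I \ot B]\| < \epsilon \|A\| \|B\|$ for all $B \in \calB(\calH_2)$; taking $B = U$ unitary gives $\|\Phi_U(A) - A\| = \|(I \ot U^*)[A, I \ot U]\| = \|[A, I \ot U]\| < \epsilon \|A\|$ for every $U$, and therefore
\begin{equation*}
\|\mathbb{E}(A) \ot I - A\| \;=\; \Bigl\| \int_{\mathrm{U}(\calH_2)} \bigl(\Phi_U(A) - A\bigr)\, dU \Bigr\| \;\le\; \sup_{U} \|\Phi_U(A) - A\| \;<\; \epsilon \|A\|,
\end{equation*}
the last inequality being strict because $U \mapsto \|[A, I \ot U]\|$ is continuous on the compact group $\mathrm{U}(\calH_2)$, so its supremum is attained.

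I expect the one genuinely non-routine ingredient to be the identification of the twirl's range with $\calB(\calH_1) \ot I$ (together with making the operator-valued average rigorous); everything else is a direct computation once $\mathbb{E}_0$ is in hand. For infinite-dimensional $\calH_2$ one replaces Haar measure by an invariant mean on $\mathrm{U}(\calH_2)$ in the strong operator topology, which preserves all of the argument except that the strict inequality in (2) weakens to $\le$; since in the quantum spin system applications $\calH_2$ is finite dimensional, the compact-group construction above is what is needed.
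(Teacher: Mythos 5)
Your construction is correct and is essentially the same approach as the paper's: the lemma is cited from \cite{NachSW} and the paper immediately afterwards realizes the map in practice as the Haar average $\langle A \rangle_{X^c} = \int U^* A U\, \mu(dU)$ over the unitary group of the complementary factor, which is exactly your twirl $\mathbb{E}_0$. Your identification of the range with $\calB(\calH_1) \otimes \CC I$ via Haar invariance and the commutation theorem, the resulting verification of properties (1)--(3), and the passage to an invariant mean for infinite-dimensional $\calH_2$ all match the argument of the cited reference.
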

   
   Let $\epsilon >0$ be given 
   and $X  \in \mathcal{P}_0(\Gamma)$. 
   Suppose there is a $b>0$ such that $\Phi$ has a finite $F_{b}$-norm and let $\tau_t$ be the corresponding infinite dynamics.
   Let $X_t(l) = \{ x \in \Gamma: d(x , X)< v_{\Phi} \abs{t} + l\}$ denote the set $X$ fattened by $v_\Phi \abs{t} + l$.
   Choose $l>0$ such that
   \[ \frac{2 }{C_{F_{b}}} \|F\|_0 \abs{X} e^{ - l} <\epsilon. \]
   Then by \eqref{eqn:fvLRb}, it follows that for all $A \in \calA_X$ and $B \in \calA_{ X_t(l)^c}$
   \begin{align*}
   \| [\tau_t(A), B] \| \leq \epsilon \|A\| \|B\|.
   \end{align*}
   Therefore, there is an observable $A' \in \calA_{X_t(l)}$ such that 
   \begin{equation*}
   \| A - A'\| < \epsilon \|A\|.
   \end{equation*}

   In practice, $A'$ is calculated using the partial trace.
   Let $X \subset \Lambda \in \mathcal{P}_0(\Gamma)$ 
   and define the partial trace of $A$ in $X$ with respect to $\Lambda$ by
   \begin{equation*}
   \langle A \rangle_{X^c} = \int_{ \mathcal{U}_\Lambda(X^c) } U^* A U \mu(dU),
   \end{equation*}
   where $\mathcal{U}_\Lambda(X^c)$ is the unitary group in $\calA_{X^c\cap\Lambda}$ 
   and $ \mu$ is the associated normalized Haar measure.
   Then, for a large enough $\Lambda$, $A$ is well approximated by
   \begin{equation*}
   A' =  \langle A \rangle_{X_t(l)^c}.
   \end{equation*}

   \section{Ground states}
   A \emph{state} on $\calA$ is a linear functional $\omega: \calA \ra \CC$ such that 
   $\omega(A) \geq 0$ if $A\geq 0$  and $ \omega(I) = 1$.  
   The set of all states is denoted $ \calA_{+,1}^*$
   and is a compact convex set, its extremal points are called pure states.
   Since $\calA$ is unital, the set of states is compact in the weak$^*$ topology 
   by the Banach-Alaoglu theorem.
   
   Let $\tau_t = e^{it \delta}$ be a strongly continuous one-parameter group of automorphisms on $\calA$
   and $\delta$ be its generator with dense domain $dom(\delta)$.
   
   \begin{defn}\label{defn:gs}
   	A state $\omega$ is a $\tau$-ground state if
   	\begin{equation}\label{eqn:gs}
   	\omega( A^* \delta(A)) \geq 0 \quad \quad \text{ for all } \quad A \in dom(\delta).
   	\end{equation} 
   \end{defn}
   
   Recall, if $\tau_t$ is a quasi-local dynamics corresponding to an interaction $\Phi$ with finite $F$-norm
   then $\calA_{loc}$ is a core for $\delta$, thus we can replace $dom(\delta)$ with $\calA_{loc}$ in the definition above.
   When $\omega$ is a $\tau$-ground state for an infinite quantum spin system, we will typically call it an \emph{infinite volume ground state}.
   
   This definition can be interpreted as an infinite volume variational principle expressing
   that local perturbations do not decrease the energy of a ground state.
   At finite temperature $T = 1/\beta$, equilibrium states are defined by the KMS-condition \cite{HaagHW, BratteliR2}
   \begin{equation*}
   \omega( A \tau_{i \beta}(B)) = \omega(BA)
   \end{equation*}
   for all $A, B$ in a norm dense, $\tau$-invariant $*$-subalgebra.
   The ground state condition can be obtained as the zero-temperature limit 
   $T\ra 0$ of the KMS-condition (see Theorem 5.3.15, \cite{BratteliR2}).
   
   Let $K_\tau = \{ \omega \in \calA_{+,1}^* \mid \forall A\in\calA_{loc}; \omega(A^*\delta(A))\geq 0 \}$ 
   be the set of ground states.
   Then, $K_\tau$ is compact and closed in the weak$^*$ topology and is a face in $\calA_{+,1}^*$ 
   (Theorem 5.3.37, \cite{BratteliR2}).
   In a quantum spin system, the space of ground states is not empty. 
   Indeed, one may construct an infinite volume ground state by taking a weak$^*$ limit of finite volume ground states as follows.
   Let $\Lambda_n$ be an increasing and exhausting sequence of finite subsets in $\Gamma$
   and define $\partial \Lambda_n = \Lambda_n \setminus \Lambda_{n-1}$.
   Let $\Omega_n \in \calH_{\Lambda_n}$ be any normalized ground state for $\widetilde{H}_{\Lambda_n} = H_{\Lambda_n}+W_{\partial\Lambda_n}$
   where $W_{\partial \Lambda_n} \in \calA_{ \partial\Lambda_n}$ is any self-adjoint boundary operator.
   From the variational principle, the ground state energy $E_0 \equiv \langle \Omega_n, \widetilde{H}_{\Lambda_n} \Omega_n \rangle$ satisfies
   \begin{equation*}
   E_0 = \inf_{\psi \in \calH_{\Lambda_n}} \frac{\langle \psi, \widetilde{H}_{\Lambda_n} \psi\rangle}{\langle \psi, \psi\rangle}. 
   \end{equation*}
   Therefore, for all $A \in \calA_{ \Lambda_n}$
   \begin{align*}
   \langle \Omega_n, A^* [\widetilde{H}_{\Lambda_n}, A] \Omega_n \rangle & = \langle \Omega_n, \left(  A^* \widetilde{H}_{\Lambda_n} A - A^*A \widetilde{H}_{\Lambda_n} \right) \Omega_n\rangle
   = \langle \Omega_n, A^* \left( \widetilde{H}_{\Lambda_n} -E_0   \right) A\Omega_n\rangle
   \geq 0
   \end{align*}
   
   Let  $\omega_n$  be an extension of the state  $\langle \Omega_n,\ \cdot \ \Omega_n\rangle $ on $\calA_{\Lambda_n}$ to the quasi-local algebra $\mathcal{A}$.
   For instance, fix a state $ \omega$ on $\calA$ and define $\omega_n \equiv \langle \Omega_n,\  \cdot\  \Omega_n\rangle  \otimes \omega|_{\calA_{ [a,b]}\Lambda_n^c}$.
   By Banach-Alaoglu, $\calA_{+,1}^*$ is weakly compact and thus there is a converging subsequence we also denote $\omega_n$.
   Let $A \in \calA_{loc}$ be supported on $X$.
   There exists $N>0$ such that if $n>N$ then $X \subset \Lambda_n$.
   It follows that 
   \begin{align*}
   \omega(A^* \delta(A)) &= \lim_{n \ra \infty} \langle \Omega_n, A^* [\widetilde{H}_{\Lambda_n},A]\Omega_n\rangle \geq 0.
   \end{align*}
   Therefore,  $ \omega$ is an infinite volume ground state.
   The existence of KMS states for quantum spin systems follow from similar arguments, see~\cite{PowersS}.
   
   As the above construction shows, the limiting ground state $\omega$ may be highly sensitive to the choice of boundary conditions.
   However, all boundary conditions are allowed in the sense that for an arbitrary boundary term $W_{\partial \Lambda_n}$  we have that for all $ A\in \calA_{loc}$ that
   \begin{equation}\label{eqn:derivationlimit}
   \delta(A) = \lim_{n \ra \infty}  [ H_{\Lambda_n}, A] = \lim_{n \ra \infty} [H_{\Lambda_n} + W_{\partial \Lambda_n}, A].
   \end{equation}
   That is, the infinite dynamics $\tau_t$ is independent of boundary conditions.
   In the following result, we will see an alternative classification for infinite volume ground states clarifying 
   the statement that infinite volume ground states are indeed bulk ground states.
   \begin{thm}\cite{BratteliKR}
   	Let $\Phi$ be an interaction satisfying a finite $F$-norm.
   	For each $\Lambda \in \mathcal{P}_0(\Gamma)$, let $\widetilde{H}_\Lambda \in \calA$ be a self-adjoint observable such that 
   	$ \delta(A) = i [\widetilde{H}_\Lambda, A]$ for all $ A\in \calA_{ \Lambda}$.
   	A state $\omega$ is an infinite volume ground state iff $\omega$ satisfies
   	\begin{equation*}
   	\omega(\widetilde{H}_\Lambda) = \inf \left\{  \omega'(\widetilde{H}_\Lambda): \omega' \in C_\Lambda^\omega \right\}
   	\end{equation*}
   	where 
   	$	C^\omega_\Lambda := \left\{ \omega' \in \calA^*_{1,+} : \omega'|_{\calA_{\Lambda^c}} = \omega|_{\calA_{\Lambda^c}} \right\}.$
   \end{thm}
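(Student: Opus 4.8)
The statement is the Bratteli--Kishimoto--Robinson characterization of infinite volume ground states as local energy minimizers; I would prove the two implications separately. I use throughout that $\calA_{loc}$ is a core for $\delta$ (so \eqref{eqn:gs} need only be tested on $\calA_{loc}$), that $\delta(A)=i[\widetilde H_\Lambda,A]$ for $A\in\calA_\Lambda$, and three elementary facts about $C^\omega_\Lambda$: it is weak$^*$-compact and convex, the map $\omega'\mapsto\omega'(\widetilde H_\Lambda)$ is weak$^*$-continuous and affine on it (so the infimum is attained), and $\omega\in C^\omega_\Lambda$ (so the infimum is automatically $\le\omega(\widetilde H_\Lambda)$). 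Finally, since $\calA_\Lambda=\calB(\calH_\Lambda)$ is finite-dimensional we have $\calA=\calA_\Lambda\otimes\calA_{\Lambda^c}$; note that $\widetilde H_\Lambda$ is then determined only up to addition of an element of $\calA_{\Lambda^c}$, which shifts every member of $C^\omega_\Lambda$ by one and the same constant, so the asserted equivalence does not depend on the choice of $\widetilde H_\Lambda$.

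\emph{Sufficiency.} For a unitary $U\in\calA_\Lambda$, the state $\omega\circ\mathrm{Ad}(U^*)$ agrees with $\omega$ on $\calA_{\Lambda^c}$ because $U$ commutes with $\calA_{\Lambda^c}$, hence it lies in $C^\omega_\Lambda$; the hypothesis then gives $\omega(U^*\widetilde H_\Lambda U)\ge\omega(\widetilde H_\Lambda)$ for every unitary $U\in\calA_{loc}$, which is precisely \eqref{eqn:gs} evaluated at unitaries (via $\delta|_{\calA_\Lambda}=i[\widetilde H_\Lambda,\cdot]$). Since $\calA_{loc}$ contains a core of unitaries for $\delta$, the standard reduction of the ground state condition to unitaries (\cite{BratteliR2}) shows $\omega$ is a $\tau$-ground state. (Alternatively, for self-adjoint $A\in\calA_\Lambda$ one expands $s\mapsto\omega(e^{-isA}\widetilde H_\Lambda e^{isA})$ to second order at its minimum $s=0$: stationarity gives $\omega(\delta(A))=0$ and, combined with the second-order term, recovers $\omega(A^*\delta(A))\ge0$, which extends to all of $\calA_\Lambda$ by polarization.)

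\emph{Necessity.} Pass to the GNS representation $(\calH_\omega,\pi_\omega,\Omega_\omega)$. A $\tau$-ground state is $\tau$-invariant, so the dynamics is implemented by a strongly continuous unitary group $U_t$ with $U_t\Omega_\omega=\Omega_\omega$; its generator $H_\omega$ is positive, annihilates $\Omega_\omega$, and --- the dynamics being approximately inner --- is affiliated to $\pi_\omega(\calA)''$ (\cite{BratteliR2}). Differentiating $\pi_\omega(\tau_t(A))=U_t\pi_\omega(A)U_t^*$ at $t=0$ and using $\delta|_{\calA_\Lambda}=i[\widetilde H_\Lambda,\cdot]$ shows that $Z:=H_\omega-\pi_\omega(\widetilde H_\Lambda)$ commutes with $\pi_\omega(\calA_\Lambda)$; as $Z$ is also affiliated to $\pi_\omega(\calA)''$ and $\calA_\Lambda$ is finite-dimensional, the commutation theorem $\pi_\omega(\calA_\Lambda)'\cap\pi_\omega(\calA)''=\pi_\omega(\calA_{\Lambda^c})''$ places $Z$ (affiliated) in $\pi_\omega(\calA_{\Lambda^c})''$. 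Now fix $\omega'\in C^\omega_\Lambda$. Because $\omega'$ agrees with $\omega$ on $\calA_{\Lambda^c}$ and $\calA_\Lambda$ is finite-dimensional, the functionals $b\mapsto\omega'(e_{ij}\otimes b)$ on $\calA_{\Lambda^c}$ (for matrix units $e_{ij}\in\calA_\Lambda$) are dominated by, hence normal with respect to, $\omega|_{\calA_{\Lambda^c}}$; thus $\omega'$ is normal with respect to $\pi_\omega$, i.e.\ $\omega'=\mathrm{tr}(D'\pi_\omega(\cdot))$ for a density matrix $D'$ on $\calH_\omega$, and the normal extensions to $\pi_\omega(\calA)''$ of $\omega'$ and of the vector state of $\Omega_\omega$ agree on $\pi_\omega(\calA_{\Lambda^c})''$. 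Since $\pi_\omega(\widetilde H_\Lambda)=H_\omega-Z\ge -Z$ and $D'\ge0$,
\begin{align*}
\omega'(\widetilde H_\Lambda)=\mathrm{tr}\bigl(D'\pi_\omega(\widetilde H_\Lambda)\bigr)
&\ge \mathrm{tr}\bigl(D'(-Z)\bigr)=-\langle\Omega_\omega,Z\Omega_\omega\rangle\\
&=\langle\Omega_\omega,\pi_\omega(\widetilde H_\Lambda)\Omega_\omega\rangle=\omega(\widetilde H_\Lambda),
\end{align*}
where $\mathrm{tr}(D'(-Z))=-\langle\Omega_\omega,Z\Omega_\omega\rangle$ follows by approximating $-Z\le\pi_\omega(\widetilde H_\Lambda)$ from below by bounded spectral truncations in $\pi_\omega(\calA_{\Lambda^c})''$, on which the two normal states coincide, and passing to the (finite) monotone limit. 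Taking the infimum over $C^\omega_\Lambda$ and using $\omega\in C^\omega_\Lambda$ gives the claimed equality.

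The crux is the necessity direction: (i) showing that \emph{every} competitor $\omega'\in C^\omega_\Lambda$ is realized as a normal state in the one fixed representation $\pi_\omega$ --- this is exactly where finite-dimensionality of $\calA_\Lambda$ and the fact that $\omega$ and $\omega'$ agree on all of $\calA_{\Lambda^c}$ are used --- and (ii) locating $Z$, which a priori only lies in $\pi_\omega(\calA_\Lambda)'$, inside $\pi_\omega(\calA_{\Lambda^c})''$, the algebra on which $\omega'$ and $\omega$ provably coincide; this rests on $H_\omega$ being affiliated to $\pi_\omega(\calA)''$. The remaining manipulations, including the spectral truncations that control the unbounded $H_\omega$, are routine.
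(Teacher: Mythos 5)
The paper states this result without proof, quoting it from Bratteli--Kishimoto--Robinson, so your argument has to stand on its own; the necessity half is broadly in the right spirit, but the sufficiency half has a genuine gap. In that direction you only ever use the competitors $\omega\circ\mathrm{Ad}(U)$ with $U\in\calA_\Lambda$ unitary. What those give is exactly first-order stationarity, $\omega(\delta(C))=0$ for $C=C^*\in\calA_\Lambda$, and second-order positivity of the energy Hessian along \emph{self-adjoint} directions, i.e.\ $\mathrm{Re}\,\omega(A\,\delta(A))\ge 0$ for $A=A^*$. This is a passivity-type condition, not the ground state condition: there is no ``standard reduction of the ground state condition to unitaries'' in \cite{BratteliR2} (testing on unitaries characterizes passive states, and passive states -- e.g.\ Gibbs states -- need not be ground states), and the final ``extends to all of $\calA_\Lambda$ by polarization'' is false, because for $B=A_1+iA_2$ the cross term $i\big[\omega(A_1\delta(A_2))-\omega(A_2\delta(A_1))\big]$ is real but is not controlled by positivity along self-adjoint directions. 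Concretely, take the on-site interaction $\Phi(\{x\})=\mathrm{diag}(0,1)$ and $\omega$ the infinite-temperature product state: its restriction to every finite $\Lambda$ is the normalized trace, so every inequality your argument extracts from unitary competitors holds (with equality, for every $\Lambda$), yet with $B=\ket{0}\bra{1}$ at one site one has $\omega(B^*\delta(B))=-\tfrac12<0$, so $\omega$ is not a ground state. (Of course this $\omega$ violates the full hypothesis; the point is that your proof never uses more of the hypothesis than the unitary competitors provide.) The repair is to use non-unitary elements of $C^\omega_\Lambda$: for $B\in\calA_\Lambda$ with $\|B\|\le1$ set $K=(1-t^2B^*B)^{1/2}\in\calA_\Lambda$ and $\omega'_t(X)=t^2\omega(B^*XB)+\omega(KXK)$; since $t^2B^*B+K^2=1$ and both operators lie in $\calA_\Lambda$, $\omega'_t\in C^\omega_\Lambda$, and expanding $\omega'_t(\widetilde H_\Lambda)\ge\omega(\widetilde H_\Lambda)$ to order $t^2$ gives $\mathrm{Re}\,\omega(B^*\delta(B))\ge0$ for \emph{all} $B\in\calA_\Lambda$; the antisymmetric part is a multiple of $\omega(\delta(B^*B))$ and vanishes by your first-order condition, so the ground state inequality holds on the core $\calA_{loc}$.

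In the necessity direction the architecture (represent every competitor as a normal state in $\pi_\omega$, split $\pi_\omega(\widetilde H_\Lambda)=H_\omega-Z$ with $Z$ affiliated to $\pi_\omega(\calA_{\Lambda^c})''$, compare expectations) is sound and close to the Bratteli--Kishimoto--Robinson argument, but the key affiliation step is mis-justified: approximate innerness together with invariance does \emph{not} place the canonical implementation inside $\pi_\omega(\calA)''$ -- already for $\calA=M_2$ with the inner dynamics generated by $\sigma^3$ and $\omega$ the tracial state, the canonical $U_t$ is a product of a left and a right multiplication and lies outside $\pi_\omega(\calA)''$. What rescues the step is precisely the positivity of $H_\omega$, via the Borchers--Arveson theorem, which yields a positive generator affiliated with $\pi_\omega(\calA)''$ implementing the GNS dynamics; one must then either use its minimality to conclude that its expectation in $\Omega_\omega$ vanishes, or note that the discrepancy with $H_\omega$ is harmless because it ends up in $\pi_\omega(\calA_{\Lambda^c})''$ anyway. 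Note also that the theorem does not assume $\omega$ pure, so you cannot sidestep this by $\pi_\omega(\calA)''=\calB(\calH_\omega)$. With that fixed, your normality argument for $\omega'$ (using finite-dimensionality of $\calA_\Lambda$), the relative-commutant identification, and the truncation of $-Z$ are acceptable, though the interchange of the form inequality $H_\omega\ge0$ with the trace against $D'$ deserves the same care you give the truncations.
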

   
   For each state $\omega$, the \emph{GNS-construction} gives a $*$-representation
   $\pi:\calA \ra \calB(\calH)$ and a cyclic vector $\Omega$ satisfying 
   $\overline{\{ \pi(A)\Omega:A \in\calA_{loc}\}^{\|\cdot\|} } = \calH$
   and 
   \begin{equation*}
   \omega(A) = \langle \Omega, \pi(A) \Omega\rangle.
   \end{equation*} 
   The representation $\pi$ is determined uniquely up to unitary equivalence.
   Thus for a time-invariant state, i.e.  $\omega(\tau_t (A)) = \omega(A)$ for all $t \in \RR$,
   there is a one parameter family of 
   unitary equivalent GNS representations defined by $ (\pi_\omega\circ \tau_t, \calH_\omega, \Omega_\omega)$.
   Let $\{U_t\}_{t\in\RR}$ be the strongly continuous one-parameter group of unitary operators implementing the equivalence.
   By Stone's theorem, we recover a self-adjoint generator $H_\omega$, called the \emph{GNS-Hamiltonian},
   with a core given by $\{\pi(A)\Omega: A\in \calA_{loc}\}$.
   Since $U_t \Omega = \Omega$ for all $t$ we have that $H_\omega \Omega =  0$.
   Differentiating $\pi(\tau_t(A)) = U_t^*\pi(A) U_t$ at $t = 0$ we obtain
   \begin{equation}\label{eqn:GNSHamCom}
   \pi(\delta(A)) = [H_\omega, \pi(A)] \quad \text{for all} \quad A \in \calA_{loc}.
   \end{equation}
   
   \begin{lemma}
   	$\omega$ is a $\tau$-ground state if and only if $\omega$ is $\tau_t$ invariant and $H_\omega\geq 0$.
   \end{lemma}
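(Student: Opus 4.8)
The plan is to prove the two implications separately. The forward direction needs an extra preliminary step because the GNS-Hamiltonian $H_\omega$ is only available once we know $\omega$ is $\tau_t$-invariant; the backward direction is an immediate consequence of the commutation relation \eqref{eqn:GNSHamCom}.

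\textbf{Proof that invariance and $H_\omega\geq 0$ imply the ground state condition.} Assume $\omega$ is $\tau_t$-invariant and $H_\omega\geq 0$. For $A\in\calA_{loc}$ write $\omega(A^*\delta(A))=\langle\Omega,\pi(A)^*\pi(\delta(A))\Omega\rangle$ and substitute \eqref{eqn:GNSHamCom} to get
\[
\omega(A^*\delta(A))=\langle\Omega,\pi(A)^*[H_\omega,\pi(A)]\Omega\rangle=\langle\pi(A)\Omega,H_\omega\pi(A)\Omega\rangle-\langle\Omega,\pi(A)^*\pi(A)H_\omega\Omega\rangle .
\]
The last term vanishes since $H_\omega\Omega=0$, so $\omega(A^*\delta(A))=\langle\pi(A)\Omega,H_\omega\pi(A)\Omega\rangle\geq 0$ by positivity of $H_\omega$. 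Since $\calA_{loc}$ is a core for $\delta$, this verifies \eqref{eqn:gs} on all of $dom(\delta)$, so $\omega$ is a $\tau$-ground state.

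\textbf{Proof that the ground state condition implies invariance and $H_\omega\geq 0$.} Assume $\omega$ satisfies \eqref{eqn:gs}. First I would show $\omega(\delta(A))=0$ for all $A\in dom(\delta)$: for such $A$ and any $z\in\CC$, the element $zI+A$ again lies in $dom(\delta)$ with $\delta(zI+A)=\delta(A)$ since $\delta(I)=0$, so \eqref{eqn:gs} gives
\[
0\leq\omega\bigl((\bar z I+A^*)\delta(A)\bigr)=\bar z\,\omega(\delta(A))+\omega(A^*\delta(A)).
\]
Here $\omega(A^*\delta(A))$ is a fixed nonnegative real; if $\omega(\delta(A))\neq 0$, choosing $z$ so that $\bar z\,\omega(\delta(A))$ is a sufficiently large negative real contradicts the inequality, hence $\omega(\delta(A))=0$. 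Next, for $A\in\calA_{loc}$ the curve $t\mapsto\omega(\tau_t(A))$ is norm-differentiable and, using $\tfrac{d}{dt}\tau_t(A)=i\tau_t(\delta(A))$, the invariance of $dom(\delta)$ under $\tau_t$, and $\delta\tau_t=\tau_t\delta$ on the domain, we get $\tfrac{d}{dt}\omega(\tau_t(A))=i\,\omega(\delta(\tau_t(A)))=0$. Thus $\omega(\tau_t(A))=\omega(A)$ for all $t$, and norm-density of $\calA_{loc}$ in $\calA$ with continuity of $\omega$ and $\tau_t$ gives full $\tau_t$-invariance.

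Now the GNS-Hamiltonian $H_\omega$ is defined, $H_\omega\Omega=0$, and \eqref{eqn:GNSHamCom} holds. Reversing the computation of the first part, for $A\in\calA_{loc}$,
\[
\langle\pi(A)\Omega,H_\omega\pi(A)\Omega\rangle=\omega(A^*\delta(A))\geq 0 .
\]
Since $\{\pi(A)\Omega:A\in\calA_{loc}\}$ is a core for $H_\omega$, for any $\psi\in dom(H_\omega)$ pick $A_n\in\calA_{loc}$ with $\pi(A_n)\Omega\to\psi$ and $H_\omega\pi(A_n)\Omega\to H_\omega\psi$ in norm; passing to the limit yields $\langle\psi,H_\omega\psi\rangle\geq 0$ for all $\psi\in dom(H_\omega)$, so $H_\omega\geq 0$ by the spectral theorem. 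The step I expect to be the crux is extracting $\tau_t$-invariance from \eqref{eqn:gs} alone — in particular the vanishing $\omega\circ\delta=0$ obtained by testing \eqref{eqn:gs} against $zI+A$ — since without invariance the operator $H_\omega$ is not even defined; once invariance is in place, the rest reduces to \eqref{eqn:GNSHamCom} and routine core arguments.
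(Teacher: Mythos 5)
Your proof is correct, and its overall skeleton (invariance plus positivity of $H_\omega$ from the GNS relation \eqref{eqn:GNSHamCom} in one direction; extracting $\tau_t$-invariance from \eqref{eqn:gs} and then reading off $\langle\pi(A)\Omega,H_\omega\pi(A)\Omega\rangle=\omega(A^*\delta(A))$ in the other) matches the paper. The one genuine difference is how you get $\omega\circ\delta=0$: you test the ground state inequality against the affine perturbation $zI+A$ and let $\bar z\,\omega(\delta(A))$ run off to $-\infty$, which is the classical Bratteli--Robinson trick; the paper instead uses that $\omega(A^*\delta(A))$ is real to deduce $\omega(B\delta(B)+\delta(B)B)=0$ for self-adjoint $B$ (implicitly via $\delta(A^*)=-\delta(A)^*$), hence $\omega(\delta(B^2))=0$ by the derivation property, and then extends by writing an arbitrary element as a combination of four positive operators. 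Your route is a bit more economical, since it avoids both the conjugation identity and the positive-operator decomposition, and it establishes $\omega(\delta(A))=0$ directly on all of $dom(\delta)$ rather than first on $\calA_{loc}$-type elements; the paper's route is more computational but makes the role of the derivation property explicit. You also spell out two points the paper leaves terse: that the derivative of $t\mapsto\omega(\tau_t(A))$ vanishes for all $t$ (via invariance of $dom(\delta)$ under $\tau_t$ and $\delta\tau_t=\tau_t\delta$), and that positivity of $H_\omega$ on the core $\{\pi(A)\Omega:A\in\calA_{loc}\}$ extends to all of $dom(H_\omega)$ by a limiting argument. Both versions are complete proofs.
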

   \begin{proof}
   	Suppose $\omega$ is a ground state, that is, $ \omega(A^* \delta(A))\geq 0$.
   	It follows that 
   	\[ - i \omega(A^* \delta(A)) = \overline{- i \omega(A^* \delta(A))} = i \omega(\delta(A^*)A) \]
   	so that 
   	\[\omega(A^*\delta(A) +\delta(A^*)A) = 0. \]
   	First, consider the case where $A$ is a positive operator and let $A = B^2$ 
   	for $B$ a self-adjoint operator.
   	Since $\delta$ is a derivation we have that 
   	$\omega(\delta(A)) = \omega(\delta(B^2)) = \omega(\delta(B)B + B\delta(B)) = 0$.
   	Therefore, $\frac{d}{dt} \omega(\tau_t(A))|_{t=0} = \omega(\delta(A)) = 0$
   	so that $ \omega(\tau_t(A)) = \omega(A)$.
   	But, every operator can be written as a linear combination of four positive operators,
   	so by linearity of $\omega$ it follows that $\omega$ is $\tau_t$ invariant.
   	For $A \in A_{loc}$ and $\psi = \pi(A) \Omega$, it follows from \eqref{eqn:GNSHamCom} that
   	\begin{align*}
   	\langle \psi, H_\omega \psi \rangle& = \left\langle \pi(A) \Omega, H_\omega \pi(A) \Omega\right\rangle  = \left\langle \Omega, \pi(A^*) [H_\omega,\pi(A) ] \Omega \right\rangle.\\
   	& = \omega(A^* \delta(A)) \geq 0.
   	\end{align*}
   	Therefore, $H_\omega \geq 0$.
   	
   	Now suppose $\omega$ is $\tau_t$ invariant and $H_\omega\geq 0$.
   	Again from \eqref{eqn:GNSHamCom} it follows that 
   	$ 0 \leq \langle \psi, H_\omega \psi\rangle = \omega(A^* \delta(A)),$
   	where $ \psi = \pi(A) \Omega$ for any $A \in \calA_{loc}$.
   \end{proof}
   
   The GNS Hamiltonian will generally be an unbounded self-adjoint operator.
   Its relation to the finite volume Hamiltonians is through the derivation of the dynamics, \eqref{eqn:derivationlimit} and \eqref{eqn:GNSHamCom}.
   In certain cases, the local Hamiltonians will converge in some sense to the GNS-Hamiltonian.
   As a result, certain spectral properties of $H_\omega$ can also be deduced from the finite volume spectral properties of $H_\Lambda$.

   \begin{defn}
   	For $A \in \calA$ and $\lambda \in \CC \setminus spec(A)$ let $R_\lambda(A) = (A - \lambda I)^{-1}$ be the resolvent.
   	A sequence of self-adjoint (not necessarily bounded) operators $\{A_n\}_{n=1}^\infty$  on $\calH$
   	is said to converge to $A$ in the \emph{strong resolvent sense} if $R_\lambda(A_n) \ra R_\lambda(A)$ 
   	in the strong operator topology for all $\lambda \in \CC$ with $ \operatorname{Im}(\lambda) \neq 0$.
   \end{defn}
   
   \begin{lemma}\label{lem:SRC}
   	Let $\omega$ be a ground state and $\Lambda_n \in \mathcal{P}_0(\Gamma)$ be an increasing and exhausting sequence.
   	If $\pi(H_{\Lambda_n}) \Omega \ra 0$ then,
   	$\pi(H_{\Lambda_n}) \ra H_\omega$  in the strong resolvent sense.
   	It follows that $spec(H_\omega) \subset \bigcup_{n} spec(\pi(H_{\Lambda_n}))$.
   \end{lemma}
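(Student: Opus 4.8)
The plan is to reduce both assertions to the standard abstract theory of strong resolvent convergence (Reed and Simon, \emph{Functional Analysis}, Sec.\ VIII.7). First I would record the elementary point that each $H_{\Lambda_n} = \sum_{X\subset\Lambda_n}\Phi(X)$ is a finite sum of bounded self-adjoint elements of $\calA$, so $\pi(H_{\Lambda_n})$ is a bounded self-adjoint operator on $\calH$ with $D(\pi(H_{\Lambda_n})) = \calH$. Recall that $H_\omega$ was obtained from Stone's theorem as a self-adjoint operator with core $D := \{\pi(A)\Omega : A\in\calA_{loc}\}$ and with $H_\omega\Omega = 0$. Since every dense subspace is a core for a bounded operator, $D$ is simultaneously a core for every $\pi(H_{\Lambda_n})$ and for $H_\omega$, so by the standard criterion (Reed and Simon, Thm.\ VIII.25(a)) it is enough to prove $\pi(H_{\Lambda_n})\psi \to H_\omega\psi$ for each $\psi\in D$.

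The vector convergence on $D$ is where the hypothesis $\pi(H_{\Lambda_n})\Omega\to 0$ enters. I would fix $A\in\calA_{loc}$ with finite support $X$, choose $N$ so that $X\subset\Lambda_n$ for $n\geq N$, and for such $n$ commute $H_{\Lambda_n}$ past $A$:
\begin{equation*}
\pi(H_{\Lambda_n})\pi(A)\Omega = \pi\bigl([H_{\Lambda_n},A]\bigr)\Omega + \pi(A)\,\pi(H_{\Lambda_n})\Omega .
\end{equation*}
The second term tends to $0$ because $\pi(A)$ is bounded and $\pi(H_{\Lambda_n})\Omega\to 0$ by assumption. For the first term, $[H_{\Lambda_n},A]\to\delta(A)$ in norm — this is precisely how $\delta$ was constructed on $\calA_{loc}$ — and $\pi$ is norm-contractive, so $\pi([H_{\Lambda_n},A])\Omega\to\pi(\delta(A))\Omega$; by \eqref{eqn:GNSHamCom} together with $H_\omega\Omega = 0$ we get $\pi(\delta(A))\Omega = [H_\omega,\pi(A)]\Omega = H_\omega\pi(A)\Omega$. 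Combining the two limits gives $\pi(H_{\Lambda_n})\pi(A)\Omega\to H_\omega\pi(A)\Omega$, which is exactly the required convergence on $D$, and the first assertion follows.

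For the spectral inclusion, I would invoke the general fact that strong resolvent convergence forces $\operatorname{dist}\bigl(\lambda, spec(\pi(H_{\Lambda_n}))\bigr)\to 0$ for every $\lambda\in spec(H_\omega)$ (Reed and Simon, Thm.\ VIII.24(a)); such a $\lambda$ is then a limit of points of $\bigcup_n spec(\pi(H_{\Lambda_n}))$, hence lies in its closure. I do not expect a serious obstacle: the only point demanding care is the bookkeeping for the abstract criterion — that the relevant set is a core for the \emph{limit} operator $H_\omega$, that the index threshold $N$ is allowed to depend on $\psi$, and that $D$ really is a core for $H_\omega$ as furnished by Stone's theorem — and all of the genuinely analytic input (the Lieb--Robinson estimates underlying the existence of $\delta$ and of $\tau_t$) has already been established.
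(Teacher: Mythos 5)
Your proof is correct and follows essentially the same route as the paper: the same decomposition $\pi(H_{\Lambda_n})\pi(A)\Omega = \pi([H_{\Lambda_n},A])\Omega + \pi(A)\pi(H_{\Lambda_n})\Omega$ with the hypothesis killing the second term, the same use of \eqref{eqn:GNSHamCom} together with $H_\omega\Omega=0$, the observation that $\pi(\calA_{loc})\Omega$ is a common core, and the same appeal to Theorems VIII.25 and VIII.24 of Reed--Simon. You are in fact somewhat more explicit than the paper (which leaves the commutator decomposition and the role of the hypothesis implicit), and your remark that the abstract theorem only yields $spec(H_\omega)$ inside the \emph{closure} of $\bigcup_n spec(\pi(H_{\Lambda_n}))$ is a fair caveat about the statement as written.
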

   
   \begin{proof}
   	Let $A \in A_{loc}$.  From \eqref{eqn:GNSHamCom} it follows that 
   	\begin{align*}
   	H_\omega \pi(A) \Omega & = \pi(\delta(A)) \Omega
   	= \lim_{n\ra \infty}\pi([H_{\Lambda_n}, A]) \Omega 
   	= \lim_{n \ra \infty}\pi(H_{\Lambda_n}) \pi(A) \Omega.
   	\end{align*}
   	Note that $\pi(\calA_{loc})\Omega_\omega$ is a common core for $\pi(H_{\Lambda_n})$ 
   	and $H_\omega$.
   	Thus, $\pi( H_{\Lambda_n}) \ra H_\omega$ in the strong resolvent sense (see Theorem VIII.25,  \cite{ReedSimon}),
   	which implies (see Theorem VIII.24, \cite{ReedSimon})
   	\begin{equation*}
   	spec(H_\omega) \subset \bigcup_{n\geq 0} spec(\pi_\omega(H_\Lambda).
   	\end{equation*} 
   \end{proof}
   
   In the recent years, there has been much interest in gapped ground states and 
   in particular, gapped ground state phases \cite{ChenGW, BachmannMNS, BachmannO}.
   \begin{defn}\label{defn:gappedgs}
   	A $\tau$-ground state $\omega$ is said to be a \emph{gapped ground state}
   	if the corresponding GNS-Hamiltonian $H_\omega$ has a spectral gap at $0$, that is,
   	there exists $\gamma >0$ such that 
   	\begin{equation*}
   	\inf \{ \lambda \in spec(H_\omega)\setminus\{0\} \} >\gamma.
   	\end{equation*}
   	
   \end{defn}
   \begin{defn}\label{defn:ff}
   	Let $\Phi$ be the interaction corresponding to $\tau$.
   	A $\tau$-ground state $\omega$ is called \emph{frustration-free}  if
   	\begin{equation*}
   	\omega( \Phi(X)) = \inf spec(\Phi(X)) \quad \text{ for all } \quad X \in \mathcal{P}_0(\Gamma).
   	\end{equation*}
   \end{defn}
   
   It is always possible to shift each interaction term $\tilde{\Phi}(X) = \Phi(X) - \inf spec(\Phi(X)) \geq 0$
   without altering the derivation $\delta$. 
   The frustration-free condition then becomes a \emph{zero-energy} condition, $\omega(\tilde{\Phi}(X)) = 0$ for all $X$.
   In the zero-energy case, it follows immediately from 
   Lemma \ref{lem:SRC} that 
   if the local Hamiltonians have a uniform spectral gap, that is, for all $\Lambda \in \mathcal{P}_0(\Gamma)$ there exists $\gamma$ uniform in $\Lambda$ such that 
   \begin{equation*} 
   \inf \{  \lambda \in spec(H_\Lambda)\setminus\{0\} \}\geq  \gamma >0,
   \end{equation*}
   then $H_\omega$ has a spectral gap.
   The algebraic structure of zero-energy ground states was studied extensively in \cite{GottsteinW}.
   
   A current challenge in mathematical physics is estimating lower bounds for the spectral gap of local Hamiltonians.
   In the frustration-free case, the state-of-the-art in estimating lower bounds is given by the martingale method \cite{NachtergaeleSG}
   and its generalizations \cite{BachmannHNYSG,BishopNYSG}.
   It if often possible to estimate lower bounds on the spectral gap for the 
   local Hamiltonians with boundary conditions \cite{AKLT, FNWFCS, Knabe}.
   In certain cases, this implies a spectral gap for the infinite system.
   
   \begin{lemma}
   	Let $\Phi$ be an interaction satisfying a finite $F$-norm
   	and $\Lambda_n \in \mathcal{P}_0(\Gamma)$ be an increasing and exhausting sequence in $\Gamma$.
   	Suppose the following are true: the local Hamiltonians with boundary $\widetilde{H}_{\Lambda_n} = H_{\Lambda_n} + W_{\partial \Lambda_n}$
   	have simple ground state vector $\Omega_n$,
   	$\widetilde{H}_{\Lambda_n}$ have a uniform spectral gap $\gamma$,
   	and the limit $\omega = \wslim\omega_n$ exists where 
   	$\omega_n$  is some extension of the state  $\langle \Omega_n,\ \cdot \ \Omega_n\rangle $   to the quasi-local algebra $\mathcal{A}$.
   	Then, $\omega$ is a gapped ground state.
   \end{lemma}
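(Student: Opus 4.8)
The plan is to push the uniform finite-volume gap $\gamma$ through the weak$^*$ limit by a quadratic-form estimate for the GNS Hamiltonian $H_\omega$. As a harmless normalization I would first subtract from each $\widetilde H_{\Lambda_n}$ its ground state energy; this changes neither the derivation $\delta$ nor the gap, so we may assume $\widetilde H_{\Lambda_n}\ge 0$ and $\widetilde H_{\Lambda_n}\Omega_n=0$. Both assertions to be proven — that $\omega$ is a $\tau$-ground state (Definition~\ref{defn:gs}) and that $H_\omega$ has a spectral gap at $0$ (Definition~\ref{defn:gappedgs}) — will be extracted from a single inequality for the form of $H_\omega$.

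The core of the argument is the claim
\begin{equation*}
\omega\bigl(A^*\delta(A)\bigr)\;\ge\;\gamma\bigl(\omega(A^*A)-\abs{\omega(A)}^2\bigr)\qquad\text{for all }A\in\calA_{loc}.
\end{equation*}
To establish it, fix $A\in\calA_{loc}$. For $n$ large enough that the support of $A$ lies in $\Lambda_{n-1}$, the boundary term $W_{\partial\Lambda_n}$ is supported away from $A$, so $[\widetilde H_{\Lambda_n},A]=[H_{\Lambda_n},A]$; moreover $\|[H_{\Lambda_m}-H_{\Lambda_n},A]\|$ is small uniformly in $m\ge n$ by uniform integrability, exactly as in \eqref{eqn:dercauchy}. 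Since $\omega_m$ restricts to $\langle\Omega_m,\,\cdot\,\Omega_m\rangle$ on $\calA_{\Lambda_n}$ for $m\ge n$ and $\omega_m\to\omega$ in the weak$^*$ topology, a $\limsup$/$\liminf$ squeeze over the two indices upgrades the iterated limit to the diagonal one,
\begin{equation*}
\omega\bigl(A^*\delta(A)\bigr)=\lim_{n\to\infty}\langle\Omega_n,A^*[\widetilde H_{\Lambda_n},A]\Omega_n\rangle.
\end{equation*}

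Now evaluate each finite-volume term: setting $\phi_n:=A\Omega_n-\omega_n(A)\Omega_n\perp\Omega_n$ and using $\widetilde H_{\Lambda_n}\Omega_n=0$, one has $\langle\Omega_n,A^*[\widetilde H_{\Lambda_n},A]\Omega_n\rangle=\langle\phi_n,\widetilde H_{\Lambda_n}\phi_n\rangle\ge\gamma\|\phi_n\|^2$, where the inequality is exactly the statement that $\Omega_n$ is a \emph{simple} ground state with gap $\gamma$. Since $\|\phi_n\|^2=\omega_n(A^*A)-\abs{\omega_n(A)}^2\to\omega(A^*A)-\abs{\omega(A)}^2$, letting $n\to\infty$ proves the claim. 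Its right-hand side is nonnegative by Cauchy--Schwarz, so $\omega$ is a $\tau$-ground state (Definition~\ref{defn:gs}); hence $\omega$ is $\tau_t$-invariant, $H_\omega$ is defined, and $\langle\pi(A)\Omega,H_\omega\pi(A)\Omega\rangle=\omega(A^*\delta(A))\ge 0$ forces $H_\omega\ge 0$. Specializing the claim to $A\in\calA_{loc}$ with $\omega(A)=0$ gives $\langle\pi(A)\Omega,H_\omega\pi(A)\Omega\rangle\ge\gamma\|\pi(A)\Omega\|^2$. Since $\{\pi(A)\Omega:A\in\calA_{loc}\}$ is a core, hence a form core, for $H_\omega$, since $H_\omega\Omega=0$ makes $H_\omega$ reduce along $\CC\Omega\oplus\{\Omega\}^\perp$, and since $\pi(A)\Omega=\pi(A-\omega(A))\Omega+\omega(A)\Omega$ exhibits $\{\pi(A)\Omega:\omega(A)=0\}$ as a form core for $H_\omega|_{\{\Omega\}^\perp}$, the bound extends by closedness of the form to all of $\{\Omega\}^\perp$. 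Hence $H_\omega|_{\{\Omega\}^\perp}\ge\gamma$, i.e.\ $spec(H_\omega)\setminus\{0\}\subseteq[\gamma,\infty)$, and $\omega$ is a gapped ground state (Definition~\ref{defn:gappedgs}, with gap, say, $\gamma/2$).

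The genuinely delicate step is the limit exchange in the second paragraph: one naturally produces the iterated limit $\lim_n\lim_m\langle\Omega_m,A^*[\widetilde H_{\Lambda_n},A]\Omega_m\rangle$, whereas it is the \emph{diagonal} sequence $\langle\Omega_n,A^*[\widetilde H_{\Lambda_n},A]\Omega_n\rangle$ that couples the gap of $\widetilde H_{\Lambda_n}$ to the state built from its own ground vector $\Omega_n$ — and only this coupling lets the finite-volume gap survive in the limit. (Note that Lemma~\ref{lem:SRC} cannot be used directly: there is no reason for $\pi(\widetilde H_{\Lambda_n})\Omega\to 0$ once the boundary terms $W_{\partial\Lambda_n}$ are arbitrary.) The uniform-in-$m$ decay furnished by the finite $F$-norm hypothesis is precisely what makes the diagonal argument work; the remaining ingredients — operator core $\Rightarrow$ form core, and the reduction of $H_\omega$ by $\CC\Omega$ — are routine.
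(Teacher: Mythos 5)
Your argument is correct and is essentially the paper's proof: both couple the uniform gap bound $\widetilde H_{\Lambda_n}\ge\gamma\left(I-\ket{\Omega_n}\bra{\Omega_n}\right)$ diagonally to the states $\omega_n$, use $[W_{\partial\Lambda_n},A]=0$ together with the finite $F$-norm estimate \eqref{eqn:dercauchy} to replace $\delta(A)$ by $[\widetilde H_{\Lambda_n},A]$ up to a vanishing error, and pass to the weak$^*$ limit to get $\omega(A^*\delta(A))\ge\gamma\,\omega(A^*A)$ for $A\in\calA_{loc}$ with $\omega(A)=0$. The extra care you take---the explicit energy normalization, the diagonal-versus-iterated limit, and the form-core argument showing that this variational bound really gives $spec(H_\omega)\setminus\{0\}\subseteq[\gamma,\infty)$---only fills in steps the paper asserts without detail.
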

   
   \begin{proof}
   	Recall $\delta(A) = i [ H_\omega, A]$ for all $A \in \calA_{loc}$.
   	Thus, $H_\omega$ has a spectral gap $\gamma$ if for all $A\in \calA_{loc}$ 
   	such that $\omega(A) =0$ we have
   	\[ \omega( A^* \delta(A)) \geq \gamma \omega(A^* A).\] 
   	
   	Since $\gamma$ is the uniform spectral gap we have that $ \widetilde{H}_{\Lambda_n} \geq \gamma( I -\ket{\Omega_n}\bra{\Omega_n} )$ for all $n$.
   	Let $A \in \calA_{loc}$. Then, by a calculation similar to \eqref{eqn:dercauchy} we have that 
   	\begin{equation*}
   	\lim_{n \ra \infty}\lim_{m \ra \infty} \omega_n( A^*[H_{\Lambda_m} - H_{\Lambda_n} , A]) = 0.
   	\end{equation*}
   	Then, for $A\in \calA_{loc}$ such that $\omega(A) =0$ we have that
   	\begin{align*}
   	\omega( A^* \delta(A)) 
   	& =  \lim_{n\ra \infty} \omega_n( A^*\delta(A) )=  \lim_{n\ra \infty} \lim_{m \ra \infty} \omega_n( A^*[H_{\Lambda_m}, A] )\\
   	& = \lim_{n\ra \infty} \lim_{m \ra \infty} \langle \Omega_n, A^*[\widetilde{H}_{\Lambda_n}, A] \Omega_n\rangle  + \omega_n( A^*[H_{\Lambda_m} - H_{\Lambda_n} , A]) \\
   	& \geq \lim_{n\ra \infty} \gamma \langle \Omega_n, A^*(I -\ket{\Omega_n}\bra{\Omega_n} )A \Omega_n\rangle\\
   	& = \gamma \omega(A^*A) - \lim_{n \ra \infty} \abs{\langle \Omega_n, A \Omega_n\rangle}^2,
   	\end{align*}
   	where we use the fact that $ [W_{\partial\Lambda_n}, A] = 0$ for large enough $n$.
   	The last term vanishes since $  0 = \omega(A) = \lim_{n\ra \infty} \langle \Omega_n, A \Omega_n \rangle$.
   \end{proof}

   The existence of the spectral gap implies weak correlation structure in the ground state.

   \begin{thm}(Exponential decay of correlations, \cite{HastingsKLR,NachtergaeleSLR}) 
   	Let $b>0$ and $\Phi$ be an interaction satisfying a finite $F_b$-norm and $\tau_t$ be the corresponding dynamics.
   	Suppose $\omega$ is a gapped ground state with spectral gap $\gamma>0$ and 
   	$H_\omega$ has a non-degenerate ground state space.
   	Then, for all local observables $A\in\calA_X$ and $B \in \calA_Y$ it holds that 
   	\begin{equation*}
   	\abs{ \omega(AB) - \omega(A)\omega(B) } \leq C(A,B,\gamma) e^{- \mu d(X,Y)}
   	\end{equation*}
   	where $ \mu = \frac{ b \gamma}{4 \|\Phi\|_b C_b + \gamma}$.
   \end{thm}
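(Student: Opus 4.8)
The plan is to follow the proof of the exponential clustering theorem of Hastings–Koma and Nachtergaele–Sims \cite{HastingsKLR,NachtergaeleSLR}: convert the spectral gap into decay of a pair of half-plane–analytic correlation functions, bound their difference on the real axis by the Lieb–Robinson estimate \eqref{eqn:fvLRb}, and balance the two against each other. First I would reduce to $d(X,Y)>0$ (otherwise the bound is vacuous for a sufficiently large $C(A,B,\gamma)$) and, after replacing $A$ by $A-\omega(A)I$ and $B$ by $B-\omega(B)I$, assume $\omega(A)=\omega(B)=0$; since then $[A,B]=0$, also $\omega(AB)=\omega(BA)$. In the GNS triple $(\pi,\calH,\Omega)$ of $\omega$, the preceding results give $H_\omega\geq 0$, $H_\omega\Omega=0$, and \eqref{eqn:GNSHamCom}, while non-degeneracy of the ground state forces $\ker H_\omega=\CC\Omega$, so the gap hypothesis yields $spec(H_\omega)\subseteq\{0\}\cup[\gamma,\infty)$. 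Because $\omega(A)=\omega(B)=0$, the vectors $\pi(A^*)\Omega$ and $\pi(B)\Omega$ are orthogonal to $\Omega$, hence lie in the range of $E_{[\gamma,\infty)}(H_\omega)$, and $\omega(AB)=\langle\pi(A^*)\Omega,\pi(B)\Omega\rangle$.

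Using \eqref{eqn:GNSHamCom} one has $G(t):=\omega(\tau_t(A)B)=\langle\pi(A^*)\Omega,e^{itH_\omega}\pi(B)\Omega\rangle$; since $\pi(B)\Omega$ is supported on $[\gamma,\infty)$ spectrally, the spectral theorem shows $G$ extends to a function holomorphic on the open upper half-plane, continuous up to $\RR$, with $|G(t+is)|\leq\|A\|\|B\|e^{-\gamma s}$ for $s\geq 0$. Symmetrically, $\widetilde G(t):=\omega(B\tau_t(A))=\langle\pi(B^*)\Omega,e^{-itH_\omega}\pi(A)\Omega\rangle$ extends holomorphically to the lower half-plane with $|\widetilde G(t-is)|\leq\|A\|\|B\|e^{-\gamma s}$ for $s\geq 0$. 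On the real axis $G(t)-\widetilde G(t)=\omega([\tau_t(A),B])$ vanishes at $t=0$, and since $\Phi$ has a finite $F_b$-norm, \eqref{eqn:fvLRb} with $g(r)=r$ gives
\begin{equation*}
|G(t)-\widetilde G(t)|\leq \frac{2}{C_{F_b}}\|F\|_0\,\min\{\abs{X},\abs{Y}\}\,\|A\|\|B\|\,e^{v_b\abs{t}-b\,d(X,Y)},\qquad v_b=2\|\Phi\|_b C_b.
\end{equation*}

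Then I would combine these as in \cite{NachtergaeleSLR}. Closing contours in the two half-planes represents $\omega(AB)$ as a (conditionally convergent) principal-value integral of $\bigl(G(t)-\widetilde G(t)\bigr)/t$; more usefully, for a test function $h$ whose Fourier transform equals $1$ on $[\gamma,\infty)$ and $0$ on $(-\infty,-\gamma]$ up to an error controlled by the gap, one obtains $\omega(AB)=\int h(t)\bigl(G(t)-\widetilde G(t)\bigr)\,dt$ up to that error. Splitting the integral at $\abs{t}=T$, using the Lieb–Robinson bound for $\abs{t}\leq T$ and $|G-\widetilde G|\leq 2\|A\|\|B\|$ for $\abs{t}>T$, and choosing $T$ together with the parameter of $h$ so as to equalize the resulting exponents, yields
\begin{equation*}
|\omega(AB)-\omega(A)\omega(B)|\leq C(A,B,\gamma)\,e^{-\mu\,d(X,Y)},\qquad \mu=\frac{b\gamma}{4\|\Phi\|_b C_b+\gamma},
\end{equation*}
the volume and norm dependence being absorbed into $C(A,B,\gamma)$.

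The routine ingredients are the GNS setup, the half-plane analyticity bounds, and the Lieb–Robinson step, which is already \eqref{eqn:fvLRb}. The hard part is the function-theoretic step: the test function $h$ (equivalently, the contour deformation) must be arranged so that $\omega(AB)$ is recovered faithfully, the unavoidable error is governed by $\gamma$, and the remaining integral is governed by the commutator — whose bound grows like $e^{v_b\abs{t}}$, so the tail of $h$ must be tuned against this growth. Extracting \emph{honest} exponential decay in $d(X,Y)$, rather than a stretched exponential, relies on the half-plane analyticity of $G$ and $\widetilde G$, not merely their regularity, and is the technical heart of the exponential clustering theorem.
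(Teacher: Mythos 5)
The paper does not prove this theorem at all --- it is quoted from the literature with the citations \cite{HastingsKLR,NachtergaeleSLR} --- so there is no internal proof to compare against, and your sketch is exactly the strategy of those references: spectral-gap analyticity of $\omega(\tau_t(A)B)$ and $\omega(B\tau_t(A))$ in the upper and lower half-planes, the Lieb--Robinson bound \eqref{eqn:fvLRb} controlling their difference on the real axis, and an optimized interpolation between the two. Your setup steps (reduction to $\omega(A)=\omega(B)=0$, $\ker H_\omega=\CC\Omega$, the half-plane bounds $e^{-\gamma s}$) are correct, and the one piece you leave unexecuted --- the contour/test-function optimization that turns these into honest exponential decay with the stated rate $\mu=\frac{b\gamma}{4\|\Phi\|_b C_b+\gamma}$ --- is precisely where the cited proofs do their work, so as a blind reconstruction of the quoted result your outline is faithful.
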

   
   Let $\omega$ be a pure state.
   Let $\omega|_\Lambda$ denote the restriction of the state $\omega$ to $\calA_\Lambda$
   and $\rho_\Lambda$ be the density matrix of $\omega|_\Lambda$.
   The \emph{entanglement entropy} is defined as 
   \begin{equation*}
   S_\omega(\Lambda) \equiv - \Tr ( \rho_\Lambda \ln \rho_\Lambda) =  - \omega( \ln \rho_\Lambda).
   \end{equation*}
   Hastings \cite{HastingsAL} proved that for quantum spin chains, 
   if $\omega$ is a translation invariant pure gapped ground state with non-degenerate ground state space for the GNS-Hamiltonian
   then the entanglement entropy is bounded uniform in the volume.
   This result was recently improved.
   
   \begin{thm} \cite{BrandaoHAL}
   	Let $\nu =1$ and  $\omega$ be a translationally invariant pure state.
   	Suppose there exists constants $C, \mu >0 $ such that for each  $j \in \ZZ^{>0}$,  $A \in \calA_{ [a,b]}$ and $ B \in \calA_{ [a-j, b+j]^c}$  we have
   	\begin{equation*}
   	\abs{ \omega(AB) - \omega(A)\omega(B)} \leq C \|A\| \|B\| e^{-\mu j}.
   	\end{equation*}
   	Then, there is a constant $K$ such that $S_\omega([a,b]) \leq K$ for all $a,b \in \ZZ$.
   \end{thm}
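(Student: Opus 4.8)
The plan is to follow the argument of Brand\~ao and Horodecki \cite{BrandaoHAL}, which I sketch here. By translation invariance $S_\omega([a,b])$ depends only on the length $n=b-a+1$, so it suffices to bound $S_\omega([1,n])$ uniformly in $n$. Working on a large finite chain containing $[1,n]$, where the global state may be taken pure (the infinite-volume statement then follows by a routine weak-$*$ limiting argument), one has $S_\omega([1,n])=\tfrac12 I([1,n]:[1,n]^c)$, so the theorem becomes a uniform bound on the mutual information between a block and the remainder of the chain. Introduce buffers of a length $\ell$, to be chosen, immediately outside each endpoint of $[1,n]$; let $B$ be their union and $R$ the rest of the complement. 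The chain rule gives
\[
2\,S_\omega([1,n]) \;=\; I([1,n]:BR) \;=\; I([1,n]:B) + I([1,n]:R\mid B) \;\le\; 4\ell\ln d + I([1,n]:R\mid B),
\]
with $d$ the local Hilbert space dimension. So everything reduces to showing that the conditional mutual information $I([1,n]:R\mid B)$ is small once $\ell$ is a bounded multiple of the correlation length $\mu^{-1}$.

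Here lies the whole difficulty, and the reason one-dimensionality is indispensable: the hypothesis controls only expectation values of bounded local observables, and this does \emph{not} in general bound a (conditional) mutual information --- data-hiding states have exponentially small correlations yet near-maximal mutual information. In 1D, however, any separation between $[1,n]$ and $R$ is mediated by a buffer $B$ of bounded size, so the correlation that can be ``hidden'' from local observables is capped by $|B|\ln d$. I would make this quantitative as in \cite{BrandaoHAL}: combine a single-shot state-merging / entanglement-distillation estimate --- a large $I([1,n]:R\mid B)$ forces either a detectable correlation between the block and $R$ (contradicting the hypothesis once $\ell\gtrsim\mu^{-1}$) or only a bounded amount of ``hidden'' correlation --- with the quantum substate theorem and a telescoping (``saturation'') argument over a nested family of buffers of increasing length, whose total conditional mutual information is controlled. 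The conclusion is that for some $\ell=O(\mathrm{poly}(\mu^{-1},\ln d))$ one gets $I([1,n]:R\mid B)\le 1$, say, uniformly in $n$.

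With that in hand the proof closes immediately: feeding it back into the chain-rule inequality yields $S_\omega([1,n])\le 2\ell\ln d+\tfrac12=:K$, a constant depending only on $C$, $\mu$, and $d$, hence $S_\omega([a,b])\le K$ for all $a,b$. Along the way one uses continuity of the von Neumann entropy (the Fannes--Audenaert inequality) to convert the trace-norm estimates coming from the distillation and substate arguments into the entropy statements above, and the Araki--Lieb inequality together with subadditivity to deal with the two-component complement of a finite block.

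The step I expect to be the main obstacle is exactly the quantitative implication ``small local correlations $\Rightarrow$ small conditional mutual information across a bounded buffer''; the failure of this implication in higher dimensions is what makes the result special to 1D. Everything else --- the reduction to a mutual information, the chain rule, entropy continuity --- is routine, and the real content is the one-dimensional entanglement ``bottleneck'' estimate of \cite{BrandaoHAL}, which I would import in the state-merging-plus-saturation form sketched above rather than reprove from scratch.
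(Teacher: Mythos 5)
This theorem is quoted in the thesis as background from \cite{BrandaoHAL}; the paper gives no proof of its own, so there is nothing internal to compare your argument against. Your outline matches the structure of the cited proof — for a pure state, $S_\omega([1,n])=\tfrac12 I([1,n]:[1,n]^c)$, the chain rule with an $O(\mu^{-1})$ buffer, and the genuinely hard step being the one-dimensional bound on the conditional mutual information via single-shot state merging and the saturation argument (this is where data hiding is circumvented and where $\nu=1$ enters) — and importing that estimate from \cite{BrandaoHAL} rather than reproving it is consistent with how the thesis itself treats the result.
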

   
   \section{Gapped ground state phases}
   
   Let $\omega_0$ and $ \omega_1$ be gapped ground states for a quantum spin system with quasi-local algebra $\calA$.
   Then, it is commonly said that $\omega_0$ and $\omega_1$ are in the same \emph{gapped ground state phase}
   if there exists a continuous family of finite range Hamiltonians $H(s)$ such that 
   $H(s)$ has a non-vanishing spectral gap above the ground state, uniform in $s$, and 
   $ \omega_0$ and $ \omega_1$ are ground states of $H(0)$ and $H(1)$ \cite{ChenGW, BachmannMNS}.
   In this section, mainly following \cite{BachmannMNS}, we discuss how under conditions analogous to the above setting the ground states $\omega_0$ and $\omega_1 $ are connected via a quasi-local dynamics.
   
   Let $\Gamma = \ZZ^\nu$ and consider a family of interactions
   $\Phi_s : \mathcal{P}_0(\Gamma) \ra \calA$ for $0\leq s \leq 1$.
   
   \emph{Assumption 1.} We assume the family is differentiable in the parameter $s$ and short-range such that
   for some $a, M>0$
   \begin{equation*}
   \sup_{x,y \in \Gamma} e^{ a d(x,y)} \sum_{\substack{ X\subset Z:\\ x,y\in X} }\sup_s \| \Phi_s (X) \| + \abs{X} \| \partial_s \Phi_s (X) \| \leq M.
   \end{equation*}
   
   \emph{Assumption 2.} We assume there exists an increasing and exhaustive sequence $\Lambda_n \in \mathcal{P}_0(\Gamma)$ such that the
   local Hamiltonians  $H_{\Lambda_n}(s) = \sum_{X \subset \Lambda_n} \Phi_X(s)$ have a uniform spectral gap in the following sense. 
   Let $\Sigma_\Lambda(s) \equiv \operatorname{spec}(H_\Lambda(s))$ and assume that the spectrum decomposes into two
   non-empty disjoint  sets, $\Sigma_\Lambda(s) =\Sigma_\Lambda^{(0)}(s)  \cup \Sigma_\Lambda^{(1)}(s) $ and there are disjoint intervals $I_i(s) = [a_i(s), b_i(s)]$ for $i = 0,1$  
   such that $a_i(s), b_i(s)$ are continuous functions, $a_0(s) \leq  b_0(s) < a_1(s) \leq b_1(s)$ and
   $\Sigma_i(s) \subset I_i(s)$.
   The spectrum is gapped in the sense that 
   there exists $\gamma>0$ with $\min \{  d( \Sigma_{\Lambda_n}^1(s),\Sigma_{\Lambda_n}^0(s) ), d( \Sigma_{\Lambda_n}^1(s),\Sigma_{\Lambda_n}^2(s)) \}\geq \gamma >0$
   where $\gamma$ is uniform  in $s\in [0,1]$ and $n$.
   
   \emph{Assumption 3.} Consider the finite volume dynamics $\tau_t^{H_s(\Lambda)} (A) = e^{i t H_\Lambda(s)} Ae^{-i t H_\Lambda(s)}$.  We  assume there exist $a>0$ and $ v_a, K_a >0$ such that
   the following exponential Lieb-Robinson bound holds 
   \[  \|[\tau_t^{H_\Lambda(s)}(A), B]\| \leq K_a \|A\| \|B\| e^{v_a t} \sum_{x \in X}\sum_{y\in Y} e^{-a d(x,y)}. \]
   
   Let $w_\gamma \in L^1(\RR)$ be a function satisfying 	
   \begin{enumerate}
   	\item $w_\gamma$ is real-valued and $ \int dt w_\gamma(t) = 1$,
   	\item the Fourier transform $\widehat{w}_\gamma$ is supported in the interval $[-\gamma, \gamma]$.
   \end{enumerate}
   For the existence of such a function see Lemma 2.6, \cite{BachmannMNS}.
   
   The spectral flow, or quasi-adiabatic continuation as it was first introduced in \cite{HastingsLSM, HastingsW}, is constructed as follows.
   Consider the self-adjoint operator
   \begin{equation*}
   D_\Lambda(s) = \int_{-\infty}^{\infty} dt w_\gamma(t) \int_0^t du e^{i  u H_\Lambda(s)} H'_\Lambda(s) e^{-i u H_\Lambda(s)}.
   \end{equation*}
   \begin{prop}\label{prop:spectralflow}(Spectral Flow)
   	There is a norm-continuous family of unitaries $U_\Lambda(s)$ such that the spectral projections $P_\Lambda(s)$ onto the subset $\Sigma_\Lambda^1(s)$ are given by
   	\begin{equation*}
   	P_\Lambda(s) = U_\Lambda(s) P(0) U_\Lambda(s)^*. 
   	\end{equation*} 
   	The unitary family is given by the unique solution to
   	$-i \frac{d}{ds} U_\Lambda(s) = D_\Lambda(s) U_\Lambda(s)$ and $U_\Lambda(0) = I$.
   \end{prop}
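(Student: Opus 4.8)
The strategy is to \emph{take} $U_\Lambda(s)$ to be the family defined by the stated initial value problem and then verify that it conjugates $P(0)=P_\Lambda(0)$ onto $P_\Lambda(s)$; so the two things to prove are (i) well-posedness of the ODE and (ii) the generator identity $\frac{d}{ds}P_\Lambda(s)=i[D_\Lambda(s),P_\Lambda(s)]$. Since $\Lambda$ is fixed and finite, everything lives in the finite-dimensional algebra $\calB(\calH_\Lambda)$, so no domain issues arise. For (i): the operator $D_\Lambda(s)$ is bounded, with $\|D_\Lambda(s)\|\le\|H'_\Lambda(s)\|\int_{\RR}|t|\,|w_\gamma(t)|\,dt$, the last integral being finite because $w_\gamma$ decays faster than any polynomial (Lemma 2.6, \cite{BachmannMNS}); it is self-adjoint because $w_\gamma$ is real and each $e^{iuH_\Lambda(s)}H'_\Lambda(s)e^{-iuH_\Lambda(s)}$ is self-adjoint; and it is norm-continuous in $s$ by Assumption 1 and dominated convergence. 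A bounded, self-adjoint, norm-continuous generator yields, via the standard existence/uniqueness theory for linear ODEs in a Banach algebra (the Dyson series), a unique norm-$C^1$ solution $U_\Lambda(s)$, and $\frac{d}{ds}\bigl(U_\Lambda(s)^*U_\Lambda(s)\bigr)=U_\Lambda(s)^*\bigl(-iD_\Lambda(s)+iD_\Lambda(s)\bigr)U_\Lambda(s)=0$ together with $U_\Lambda(0)=I$ shows $U_\Lambda(s)$ is unitary.

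For (ii) I would diagonalize $H_\Lambda(s)=\sum_a E_a|a\rangle\langle a|$ and use the Riesz representation $P_\Lambda(s)=\frac{1}{2\pi i}\oint_{\mathcal C}(z-H_\Lambda(s))^{-1}\,dz$, where, by the uniform gap of Assumption 2, the contour $\mathcal C$ can be held fixed for $s$ in a neighborhood so that it always encircles $\Sigma_\Lambda^{(1)}(s)$ and stays in the gap; differentiating under the integral gives
\begin{equation*}
\langle a|\tfrac{d}{ds}P_\Lambda(s)|b\rangle=\langle a|H'_\Lambda(s)|b\rangle\cdot\frac{1}{2\pi i}\oint_{\mathcal C}\frac{dz}{(z-E_a)(z-E_b)},
\end{equation*}
and a residue computation shows this contour integral is $0$ unless exactly one of $E_a,E_b$ lies in $\Sigma_\Lambda^{(1)}(s)$ (hence $|E_a-E_b|\ge\gamma$), in which case it equals $\pm(E_a-E_b)^{-1}$. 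On the other hand, evaluating the double integral in the definition of $D_\Lambda(s)$ gives $\langle a|D_\Lambda(s)|b\rangle=\langle a|H'_\Lambda(s)|b\rangle\,\widehat W_\gamma(E_a-E_b)$ with $\widehat W_\gamma(\omega)=\int_{\RR}w_\gamma(t)\tfrac{e^{i\omega t}-1}{i\omega}\,dt=\tfrac{1}{i\omega}\bigl(\widehat w_\gamma(\omega)-1\bigr)$; since $\widehat w_\gamma$ vanishes outside $[-\gamma,\gamma]$ this forces $\widehat W_\gamma(\omega)=i/\omega$ for $|\omega|\ge\gamma$, which is precisely what makes the matrix elements of $i[D_\Lambda(s),P_\Lambda(s)]$ agree, entry by entry, with those of $\frac{d}{ds}P_\Lambda(s)$ (entries with $E_a,E_b$ in the same band contribute nothing on either side). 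This establishes $\frac{d}{ds}P_\Lambda(s)=i[D_\Lambda(s),P_\Lambda(s)]$.

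The final step is to set $\widetilde P(s)=U_\Lambda(s)P(0)U_\Lambda(s)^*$; using (i),
\begin{equation*}
\frac{d}{ds}\widetilde P(s)=iD_\Lambda(s)\widetilde P(s)-i\widetilde P(s)D_\Lambda(s)=i[D_\Lambda(s),\widetilde P(s)],
\end{equation*}
so $\widetilde P(s)$ and $P_\Lambda(s)$ solve the same linear ODE $\frac{d}{ds}X=i[D_\Lambda(s),X]$ on $\calB(\calH_\Lambda)$ with the same initial value $\widetilde P(0)=P(0)=P_\Lambda(0)$; since $X\mapsto i[D_\Lambda(s),X]$ is a bounded map (of norm $\le 2\|D_\Lambda(s)\|$) depending continuously on $s$, Gronwall's inequality forces $\widetilde P(s)=P_\Lambda(s)$ for all $s\in[0,1]$, which is the assertion, and norm-continuity of $U_\Lambda(\cdot)$ was already established in (i). I expect the Fourier/residue matching in (ii) to be the only genuinely non-formal part; well-posedness and the uniqueness argument are routine.
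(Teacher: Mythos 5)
Your proof is correct and is essentially the standard argument behind this proposition, which the thesis states without proof by deferring to \cite{BachmannMNS}: well-posedness of the ODE for $U_\Lambda(s)$, the identity $\frac{d}{ds}P_\Lambda(s)=i[D_\Lambda(s),P_\Lambda(s)]$ obtained by matching matrix elements of the Riesz contour integral with those of $D_\Lambda(s)$ using the support condition on $\widehat w_\gamma$, and uniqueness for the linear ODE satisfied by both $P_\Lambda(s)$ and $U_\Lambda(s)P(0)U_\Lambda(s)^*$. You also correctly note the one point not guaranteed by the two listed properties of $w_\gamma$ alone, namely $\int \abs{t}\,\abs{w_\gamma(t)}\,dt<\infty$ (needed for $D_\Lambda(s)$ to be well defined and bounded), which indeed follows from the explicit construction in Lemma 2.6 of \cite{BachmannMNS}.
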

   
   Define the \emph{spectral flow} automorphism family as
   \begin{equation*}
   \alpha_s^\Lambda(A) = U^*_\Lambda(s) A U_\Lambda(s).
   \end{equation*}
   Since the $U_\Lambda(s)$ are solutions to a time-dependent Schroedinger's equation, 
   the spectral flow will have a cocycle property. 
   
   Let $g(x)$ denote the function given by \eqref{eqn:subexpg} for $p=2$, that is,
   
   \begin{equation*}
   g(r) = \left\{ \begin{array}{ll}
   \frac{r}{\ln^p(r)} & \mbox{ if }  x> e^p\\
   \left( \frac{e}{p}\right)^p & \mbox{ if } x\leq e^p
   \end{array} \right. 
   \end{equation*}
   
   and $F$ be an $\mathcal{F}$-function such that there exists $ 0< \delta < 2/7$ with
   \begin{equation*}
   \sup_{r\geq 1} \frac{e^{-\delta g(r)}}{F(r)} < \infty.
   \end{equation*}
   Define the $\mathcal{F}$-function 
   \begin{equation*}
   F_\Psi(r) \equiv e^{ - \mu g\left(\frac{\gamma}{8 v_a} r\right)} F\left( \frac{\gamma}{8 v_a} r \right).
   \end{equation*}
   The main property of the spectral flow is that it is a quasi-local dynamics connecting the spectral subspaces.
   
   \begin{thm}
   	Let Assumptions 1, 2 and 3 hold.
   	Then, there exists a time-dependent volume-dependent interaction $\Psi_\Lambda(s)$ 
   	and a such that 
   	\begin{equation}\label{eqn:specflowFnorm}
   	\| \Psi_\Lambda\|_{F_\Psi} \equiv 
   	\sup_{x,y \in \Gamma} \frac{1}{F_\Psi(d(x,y))}\sum_{\substack{Z \subset \Lambda \\x,y \in Z}} \sup_{0\leq s \leq 1} \| \Psi_\Lambda(Z,s)\| <\infty
   	\end{equation} 
   	and 
   	\begin{equation*}
   	D_\Lambda(s) = \sum_{Z \subset \Lambda } \Psi_\Lambda(Z,s).
   	\end{equation*}
   	Further, if $\Lambda_n \in \mathcal{P}_0(\Gamma)$ is an increasing and exhausting sequence as given in Assumption 2 and there exist positive constants $ b_1, b_2$ and $ p$ such that 
   	\begin{equation*}
   	d(\Lambda_m, \Lambda_n^c) \geq b_1 (n-m), \quad \text{and} \quad \abs{\Lambda_n}\leq b_2 n^p
   	\end{equation*}
   	then \eqref{eqn:specflowFnorm} holds uniformly in $\Lambda_n$.
   \end{thm}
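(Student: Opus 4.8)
\medskip

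The plan is to follow the strategy of \cite{BachmannMNS}: write $H'_\Lambda(s) = \sum_{X\subset\Lambda}\partial_s\Phi_s(X)$, conjugate each local term by the finite-volume Heisenberg dynamics $\tau_u^{H_\Lambda(s)}$, and then use the Lieb-Robinson bound of Assumption 3 together with the rapid decay of the weight $w_\gamma$ to expand the resulting quasi-local operator as a norm-convergent telescoping sum of strictly local pieces. The interaction $\Psi_\Lambda(Z,s)$ is obtained by collecting all pieces whose support is exactly $Z$, and the $F_\Psi$-norm bound will come from controlling how fast those pieces decay in the diameter of their support.

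Concretely, first I would fix $X\subset\Lambda$ and set
\[
B_X(s) = \int_{-\infty}^{\infty} dt\, w_\gamma(t) \int_0^t du\, \tau_u^{H_\Lambda(s)}\big(\partial_s\Phi_s(X)\big),
\]
so that $D_\Lambda(s) = \sum_{X\subset\Lambda} B_X(s)$. For an integer $n\ge 0$ let $X(n) = \{z\in\Lambda : d(z,X) \le n\}$, and recall the Haar-average partial trace $\langle\,\cdot\,\rangle_{X(n)^c}$, which maps $\calA$ into $\calA_{X(n)}$ and, by the lemma of \cite{NachSW}, satisfies $\|\langle A\rangle_{X(n)^c} - A\| \le \epsilon\|A\|$ whenever $\|[A,B]\| \le \epsilon\|A\|\|B\|$ for all $B\in\calA_{X(n)^c}$. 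Writing $\Delta_n(A) = \langle A\rangle_{X(n)^c} - \langle A\rangle_{X(n-1)^c}$ for $n\ge 1$ and $\Delta_0(A) = \langle A\rangle_{X(0)^c}$, one has $A = \sum_{n\ge 0}\Delta_n(A)$ in norm whenever $A$ is well approximated by local observables. Applying this to $A = \tau_u^{H_\Lambda(s)}(\partial_s\Phi_s(X))$ and invoking Assumption 3 to bound the commutator with observables localized in $X(n)^c$ gives, after converting the two-point sum into a distance bound via the $\mathcal{F}$-function,
\[
\big\|\Delta_n\big(\tau_u^{H_\Lambda(s)}(\partial_s\Phi_s(X))\big)\big\| \le C\, |X|\, \|\partial_s\Phi_s(X)\|\, e^{v_a|u|}\, e^{-a(n - \operatorname{diam} X)/2}.
\]
Since $\Delta_n(B_X(s))$ is supported in $X(n)$, I would then set
\[
\Psi_\Lambda(Z,s) = \sum_{X\subset\Lambda}\ \sum_{n\ge 0:\, X(n) = Z} \Delta_n\big(B_X(s)\big),
\]
which manifestly satisfies $D_\Lambda(s) = \sum_{Z\subset\Lambda}\Psi_\Lambda(Z,s)$.

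The analytic heart of the argument is carrying out the $u$- and $t$-integrals in $B_X(s)$. By Lemma 2.6 of \cite{BachmannMNS} the weight $w_\gamma$ can be chosen so that $|w_\gamma(t)| \le C e^{-c\, g(|t|)}$, where $g(r) = r/\ln^p(r)$ for large $r$ is a sub-exponential function of the form \eqref{eqn:subexpg}. Bounding $\big|\int_0^t du\, e^{v_a|u|}\big| \le |t|\, e^{v_a|t|}$ and integrating against $|w_\gamma(t)|$, the contribution to $\|\Delta_n(B_X(s))\|$ at distance $n$ is controlled by
\[
C\, |X|\, \|\partial_s\Phi_s(X)\| \int_{\RR} dt\, |t|\, e^{v_a|t|}\, |w_\gamma(t)|\, \mathbf{1}\!\left[\, v_a|t| \ge \tfrac{a}{4}(n - \operatorname{diam} X)\,\right].
\]
Splitting this integral at $|t| = \tfrac{a}{4 v_a}(n - \operatorname{diam} X)$ and using sub-additivity of $g$ yields a bound of the form $C\, e^{-\mu g(\gamma n/(8 v_a))}$, which is exactly the decay built into $F_\Psi$. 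Summing over $X$ with $x,y\in X\subset Z$ and over the finitely many $n$ with $X(n)=Z$, and controlling the remaining combinatorial factors via Assumption 1 and the convolution inequality for the $\mathcal{F}$-function, gives $\|\Psi_\Lambda\|_{F_\Psi} < \infty$, with all bounds uniform in $s\in[0,1]$. For the uniform-in-$\Lambda_n$ claim I would note that the volume enters only through boundary effects for $X$ near $\partial\Lambda_n$ and through crude cardinality bounds $|\Lambda_n|\le b_2 n^p$; together with $d(\Lambda_m,\Lambda_n^c)\ge b_1(n-m)$ these grow at most polynomially in $n$, which is overwhelmed by the super-polynomial decay of $F_\Psi$.

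I expect the main obstacle to be the bookkeeping in this last estimate: one must match the Lieb-Robinson velocity $v_a$ against the decay rate of $w_\gamma$ so that the leftover weight $F_\Psi$ is still a legitimate $\mathcal{F}$-function. This is precisely why one cannot work with the stretched-exponential weights $g(r)=r^\alpha$ — conjugation by a dynamics with a finite light-cone speed degrades stretched-exponential decay — whereas the class $g(r) = r/\ln^p(r)$ is preserved (up to constants) under the rescaling $r\mapsto \tfrac{\gamma}{8v_a} r$. Keeping all constants uniform in $s$ and in the volume while performing this balancing is the delicate part; the remaining summations over supports are routine given the finite-$F$-norm formalism developed earlier in the chapter.
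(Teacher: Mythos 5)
Your sketch is essentially the argument the paper relies on: the theorem is stated without proof and quoted from \cite{BachmannMNS}, and your construction---conjugating each $\partial_s\Phi_s(X)$ by $\tau^{H_\Lambda(s)}_u$, telescoping with the conditional expectations of \cite{NachSW} onto the fattened sets $X(n)$, splitting the $t$-integral against the sub-exponential decay of $w_\gamma$ to obtain $g$-type decay in $n$, and collecting the pieces with $X(n)=Z$ into $\Psi_\Lambda(Z,s)$---is exactly that proof, including the reason the weight must be of the form $g(r)=r/\ln^p r$ rather than a stretched exponential. The only blemishes are cosmetic: the Lieb-Robinson step gives decay $e^{-an}$ with no $\operatorname{diam}X$ offset (since $d(X,X(n)^c)\geq n$), and in the final summation the constraint in the $F_\Psi$-norm is $x,y\in Z=X(n)$ rather than $x,y\in X$, so one uses $d(x,y)\leq \operatorname{diam}X+2n$ and Assumption 1 to absorb the $\operatorname{diam}X$ part---routine bookkeeping that does not change the outcome.
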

   
   We can use the results similar to the previous sections to argue the following.
   There exists a strongly continuous cocycle of automorphisms $\alpha_s$ on $\calA$ as the strong limit of $ \alpha_s^{\Lambda_n}$ 
   for an increasing and exhausting sequence $\Lambda_n$.
   In addition, for all $A\in \calA_X$ and $ B\in \calA_Y$ such that $ d(X,Y) >0$ and $ 0\leq s \leq 1$ we have that 
   \begin{equation*}
   \| [\alpha_s(A), B] \| \leq 2 \frac{\|A\| \|B\|}{C_{F_\Psi}} \left(e^{2 \|\Psi\|_{F_\Psi} \abs{s}} -1\right)\sum_{x \in X}\sum_{y\in Y} F_\Psi(d(x,y)) .
   \end{equation*}
   
   \begin{lemma}
   	If $\Phi_s$ is a translation invariant interaction for all $s$ then $ \alpha_s \circ T_x = T_x \circ \alpha_s$ for all $s$ and $x\in\Gamma$.
   \end{lemma}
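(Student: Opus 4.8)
The plan is to prove the covariance identity at finite volume, namely $T_x\circ\alpha_s^{\Lambda}\circ T_x^{-1}=\alpha_s^{\Lambda+x}$, and then pass to the thermodynamic limit along a sequence that is stable under translation — for definiteness the cubes $\Lambda_n=[-n,n]^\nu$, whose translates $\Lambda_n+x$ again satisfy the geometric conditions $d(\Lambda_m,\Lambda_n^c)\ge b_1(n-m)$ and $\abs{\Lambda_n}\le b_2 n^p$, so that the construction of $\alpha_s$ applies verbatim to them.

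\textbf{Finite-volume covariance.} Fix $x\in\Gamma$. Translation invariance of $\Phi_s$ means precisely that $T_x(\Phi_s(X))=\Phi_s(X+x)$ for every finite $X$, hence
\[
T_x(H_\Lambda(s))=\sum_{X\subset\Lambda}\Phi_s(X+x)=\sum_{Y\subset\Lambda+x}\Phi_s(Y)=H_{\Lambda+x}(s),
\]
and, differentiating in $s$, also $T_x(H'_\Lambda(s))=H'_{\Lambda+x}(s)$. Because $T_x$ is an isometric $*$-automorphism it commutes with the continuous functional calculus of the bounded self-adjoint operators $H_\Lambda(s)$, so $T_x(e^{iuH_\Lambda(s)})=e^{iuH_{\Lambda+x}(s)}$, and it commutes with the iterated integral defining $D_\Lambda(s)$, which converges in norm (its integrand is bounded by $\abs{t}\,\|H'_\Lambda(s)\|$ against the integrable weight $w_\gamma$). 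Therefore
\[
T_x(D_\Lambda(s))=\int_{-\infty}^{\infty}dt\,w_\gamma(t)\int_0^t du\,e^{iuH_{\Lambda+x}(s)}H'_{\Lambda+x}(s)e^{-iuH_{\Lambda+x}(s)}=D_{\Lambda+x}(s).
\]
Consequently $s\mapsto T_x(U_\Lambda(s))$ is a norm-continuous family of unitaries with $T_x(U_\Lambda(0))=I$ solving $-i\frac{d}{ds}T_x(U_\Lambda(s))=D_{\Lambda+x}(s)\,T_x(U_\Lambda(s))$; by the uniqueness statement in Proposition~\ref{prop:spectralflow} this forces $T_x(U_\Lambda(s))=U_{\Lambda+x}(s)$. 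Hence, for every $A\in\calA$,
\[
(T_x\circ\alpha_s^{\Lambda}\circ T_x^{-1})(A)=T_x\bigl(U_\Lambda(s)^*\,T_x^{-1}(A)\,U_\Lambda(s)\bigr)=U_{\Lambda+x}(s)^*\,A\,U_{\Lambda+x}(s)=\alpha_s^{\Lambda+x}(A).
\]

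\textbf{Thermodynamic limit.} The sequence $\{\Lambda_n+x\}_n$ is increasing and exhausts $\Gamma$ and satisfies the geometric conditions above, so by the argument establishing the spectral-flow limit (the analogue of Theorem~\ref{thm:infvoldyn}, whose conclusion is independent of the chosen exhausting sequence) we have $\alpha_s^{\Lambda_n+x}(A)\to\alpha_s(A)$ in norm for every $A$ and $s$. Since $T_x$ is norm continuous,
\[
(T_x\circ\alpha_s\circ T_x^{-1})(A)=\lim_{n\to\infty}(T_x\circ\alpha_s^{\Lambda_n}\circ T_x^{-1})(A)=\lim_{n\to\infty}\alpha_s^{\Lambda_n+x}(A)=\alpha_s(A),
\]
which is exactly $\alpha_s\circ T_x=T_x\circ\alpha_s$.

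\textbf{Main obstacle.} The one point needing genuine care is the finite-volume identity $T_x(D_\Lambda(s))=D_{\Lambda+x}(s)$: one must justify interchanging the bounded map $T_x$ with the $t$- and $u$-integrals (legitimate because at fixed finite $\Lambda$ the integrals converge in operator norm, a topology in which $T_x$ is continuous) and with the functional calculus $u\mapsto e^{iuH_\Lambda(s)}$. Everything else is routine bookkeeping: the covariance $T_x(H_\Lambda(s))=H_{\Lambda+x}(s)$ and $T_x(H'_\Lambda(s))=H'_{\Lambda+x}(s)$ is immediate from translation invariance of $\Phi_s$, the transfer to $U_\Lambda(s)$ is uniqueness of the spectral-flow ODE, and the passage to the limit uses only the already-established norm convergence $\alpha_s^{\Lambda_n}\to\alpha_s$ and its independence of the exhausting sequence.
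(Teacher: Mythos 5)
Your proposal is correct and follows essentially the same route as the paper: the paper's entire proof consists of the identity $T_x(D_\Lambda(s)) = D_{\Lambda+x}(s)$, which is exactly the core of your argument. You simply spell out the steps the paper leaves implicit — passing $T_x$ through the integrals and functional calculus, invoking uniqueness of the spectral-flow ODE to get $T_x(U_\Lambda(s))=U_{\Lambda+x}(s)$, and taking the thermodynamic limit using independence of the exhausting sequence.
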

   
   \begin{proof}
   	This follows from
   	\begin{align*}
   	T_x( D_\Lambda(s)) & = T_x\left(\int_{-\infty}^{\infty} dt w_\gamma(t) \int_0^t du e^{i  u H_\Lambda(s)} H'_\Lambda(s) e^{-i u H_\Lambda(s)} \right)\\ 
   	& = \int_{-\infty}^{\infty} dt w_\gamma(t) \int_0^t du e^{i  u H_{\Lambda+x}(s)} H'_{\Lambda+x}(s) e^{-i u H_{\Lambda+x}(s)}\\
   	& = D_{\Lambda+x}(s).
   	\end{align*}
   \end{proof}

   Let $S_\Lambda(s)$ denote the set of states on $\calA_{\Lambda}$ that are mixtures of eigenstates with energy in $I_0(s)$, see Assumption 2.
   Define $S(s)$ as the set of all weak* limit points of the sets $S_\Lambda$.
   
   \begin{thm}[\cite{BachmannMNS}] \label{thm:autoeq}
   	The set $S(s)$ is automorphically equivalent to $S(0)$ for all $s \in [0,1]$.
   	In particular, 
   	\begin{equation}
   	S(s) = S(0 ) \circ \alpha_s
   	\end{equation}
   	where $\alpha_s$ is the spectral flow on $\calA$.
   \end{thm}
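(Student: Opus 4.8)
The plan is to prove the identity first at finite volume, where it is a direct consequence of the intertwining property of the spectral flow unitaries in Proposition~\ref{prop:spectralflow}, and then to pass to the thermodynamic limit using the strong convergence $\alpha_s^{\Lambda_n}\to\alpha_s$ together with the weak${}^{*}$ convergence of the approximating states.

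\emph{Finite volume.} Fix $n$ and $s$. I would first observe that if $\rho$ is a state on $\calA_{\Lambda_n}$ whose density matrix is supported on the range of $P_{\Lambda_n}(0)$, then for $A\in\calA_{\Lambda_n}$,
\[
(\rho\circ\alpha_s^{\Lambda_n})(A)=\Tr\!\big(\rho\,U_{\Lambda_n}(s)^{*}A\,U_{\Lambda_n}(s)\big)=\Tr\!\big(U_{\Lambda_n}(s)\,\rho\,U_{\Lambda_n}(s)^{*}\,A\big),
\]
so $\rho\circ\alpha_s^{\Lambda_n}$ has density matrix $U_{\Lambda_n}(s)\,\rho\,U_{\Lambda_n}(s)^{*}$, which by Proposition~\ref{prop:spectralflow} is supported on the range of $P_{\Lambda_n}(s)$. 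Conversely $(\alpha_s^{\Lambda_n})^{-1}(A)=U_{\Lambda_n}(s)\,A\,U_{\Lambda_n}(s)^{*}$ sends $S_{\Lambda_n}(s)$ back into $S_{\Lambda_n}(0)$ by the same computation. Hence $S_{\Lambda_n}(s)=S_{\Lambda_n}(0)\circ\alpha_s^{\Lambda_n}$.

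\emph{Thermodynamic limit.} Let $\omega\in S(0)$ and choose states $\omega_n\in S_{\Lambda_n}(0)$, extended arbitrarily to $\calA$, with $\omega_n\to\omega$ weak${}^{*}$; the choice of extension is immaterial in the limit since $\calA_{loc}\subset\bigcup_n\calA_{\Lambda_n}$ is dense and the $\omega_n$ are uniformly bounded. For each fixed $A\in\calA$ I would estimate
\[
\big|\omega_n(\alpha_s^{\Lambda_n}(A))-\omega(\alpha_s(A))\big|\le\|\alpha_s^{\Lambda_n}(A)-\alpha_s(A)\|+\big|\omega_n(\alpha_s(A))-\omega(\alpha_s(A))\big|,
\]
where the first term tends to $0$ because $\alpha_s^{\Lambda_n}\to\alpha_s$ strongly and the second because $\omega_n\to\omega$ weak${}^{*}$. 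Thus $\omega_n\circ\alpha_s^{\Lambda_n}\to\omega\circ\alpha_s$ weak${}^{*}$; since $\omega_n\circ\alpha_s^{\Lambda_n}$ restricts to an element of $S_{\Lambda_n}(s)$ by the finite-volume identity, this shows $\omega\circ\alpha_s\in S(s)$, i.e. $S(0)\circ\alpha_s\subseteq S(s)$. For the reverse inclusion I would run the same argument for the cocycle inverse $\alpha_s^{-1}$ — which is again a strong limit of the (reversed) finite-volume spectral flows and which maps $S_{\Lambda_n}(s)$ onto $S_{\Lambda_n}(0)$ — to get $S(s)\circ\alpha_s^{-1}\subseteq S(0)$, i.e. $S(s)\subseteq S(0)\circ\alpha_s$. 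Combining the two inclusions gives $S(s)=S(0)\circ\alpha_s$.

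\emph{Main obstacle.} The substantive analytic input — existence of the quasi-local limit $\alpha_s$ and its Lieb-Robinson estimate — has already been isolated in the preceding results, so what remains is mostly careful bookkeeping of limit points. The delicate points are that the interchange of the weak${}^{*}$ limit of $\omega_n$ with the norm-convergent family $\alpha_s^{\Lambda_n}$ must be justified on all of $\calA$ rather than merely on $\calA_{loc}$, which the displayed estimate handles, and that every element of $S(s)$ genuinely arises as a limit of finite-volume low-energy states to which Proposition~\ref{prop:spectralflow} applies, which is exactly how $S(s)$ is defined. I would expect no further obstruction.
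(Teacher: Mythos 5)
Your proposal is correct and follows essentially the same route as the proof in the cited reference \cite{BachmannMNS} (the thesis states this theorem without reproving it): the finite-volume unitaries $U_{\Lambda_n}(s)$ from Proposition~\ref{prop:spectralflow} intertwine the spectral subspaces, so $S_{\Lambda_n}(s)=S_{\Lambda_n}(0)\circ\alpha_s^{\Lambda_n}$, and both inclusions in the infinite volume follow by combining this with the strong convergence $\alpha_s^{\Lambda_n}\to\alpha_s$ and the weak$^*$ limit-point definition of $S(s)$. The only step to make explicit is that $(\alpha_s^{\Lambda_n})^{-1}\to\alpha_s^{-1}$ strongly, which is immediate from isometry of automorphisms and is exactly how the reverse inclusion is handled there as well.
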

   
   %
   %
   %
   %

   \section{Stability of frustration-free Hamiltonians}
   
   One aspect of topological order is long-range entanglement.
   Long-range entanglement is manifest in a quantum spin system 
   when there are no local order parameters.
   In the case of a gapped long-range entangled ground state, one expects stability of the spectral gap under sufficiently local perturbations.
   In this section, following \cite{MichalakisZ,NachSYffs}, we give precise conditions of topological order and strength of perturbations for which the spectral gap
   of a short-ranged Hamiltonian is stable.

   \begin{ex}
   	We begin by showing a model without long-range entanglement. 
   	Consider the Ising chain 
   	\begin{equation*}
   	H_{[-L,L]} = -\sum_{x=-L}^{L-1} \sigma^3_x\sigma_{x+1}^3.
   	\end{equation*}
   	Let $\ket{+}$, $\ket{-}$ denote a orthonormal basis of eigenstates for $\sigma^3$.
   	Then, the states $\ket{P} = \ket{++\cdots +}$ and $ \ket{M} = \ket{--\cdots -}$ are ground states for $H_{[-L,L]}$.
   	A direct computation shows that 
   	\begin{align*}
   	\bra{P} \sigma_0^3 P\rangle  - \bra{M} \sigma_0^3 M \rangle   = 2 \quad \text{for all} \quad L.
   	\end{align*}
   	Therefore, the ground state of the Ising chain is not topologically ordered.
   \end{ex}
   
   Let $\Phi$ be a finite range $R$, uniformly bounded non-negative interaction
   and consider perturbations of the form 
   \begin{equation}\label{eqn:perturbationlinear}
   H_\Lambda(\epsilon) =  \sum_{X \subset \Lambda} \Phi(X) + \epsilon \Psi(X)
   \end{equation}
   where $\Psi$ is an interaction satisfying a finite $F$-norm for $F(r) = e^{a r}$ and $a>0$.
   Let $P_\Lambda$ denote the ground state projector for $H_\Lambda$
   and $P_\Lambda(\epsilon)$ denote the ground state projector for $H_\Lambda(\epsilon)$.
   Let $\Lambda_L = [0,L]^\nu \cap \Gamma$ and $B_x(r) \equiv \{ y \in \Gamma: \|x - y\|_{\infty} <r\}$ denote the box of radius $r$ about $x$.
   
   \begin{assumption}\label{ass:pbs}
   	For all $L$, the Hamiltonian $H_{\Lambda_L}$ is defined with periodic boundary conditions.
   \end{assumption}
   
   \begin{assumption}\label{ass:ff} (Frustration-free)
   	We assume $\ker H_{\Lambda_L}\neq \{0\}$ whenever $diam(\Lambda)\geq R$.
   \end{assumption}

   \begin{assumption}\label{ass:localgap}(Local gap)
   	We assume there is a $\gamma>0$ such that 
   	\begin{equation*} 
   	\inf \{ \operatorname{spec}(H_{B_x(r)}) \setminus \{0\}\} > \gamma
   	\end{equation*}
   	for all $x\in \Gamma$ and $r \geq R$.
   	
   \end{assumption}
   
   \begin{assumption}\label{ass:LTQO}(Local Topological Quantum Order - LTQO)
   	We assume there is a $q > 2(\nu +1)$ and $\alpha \in (0,1)$
   	such that for all $r \leq L^\alpha < L$ and $l$ such that  $1 \leq  l \leq L - r$,
   	and all self-adjoint $A \in \calA_{ B_{x}(r)}$ we have
   	\begin{equation}\label{eqn:ltqo}
   	\| P_{B_{x}(r+l)} A P_{B_{x}(r+l) } - \omega_{B_x(r+l)}(A) P_{B_x(r+l)} \| \leq C \|A\| l^{-q}.
   	\end{equation}
   	where 
   	$\omega_{\Lambda}(A) = \operatorname{Tr}(P_\Lambda A)/\operatorname{Tr}(P_\Lambda)$
   	and $C$ is a constant.
   \end{assumption}
   
   Equation \eqref{eqn:ltqo} can be interpreted as saying that local observables take approximately the same expectation for any ground state.
   More precisely, the left hand side of \eqref{eqn:ltqo} can be rewritten for $A$ self-adjoint 
   \begin{align}\label{eqn:LTQOexpectation}
   &\| P_{B_{x}(r+l)} A P_{B_{x}(r+l) } - \omega_{B_x(r+l)}(A) P_{B_x(r+l)} \|\\
   & \qquad \qquad =  \sup_{ \| \psi \| = 1} \abs{ \langle P_{B_{x}(r+l)}\psi,  A P_{B_{x}(r+l)} \psi\rangle - \omega_{B_{x}(r+l)}(A) \langle P_{B_{x}(r+l)}\psi,   P_{B_{x}(r+l)} \psi\rangle}\\
   & \qquad\qquad  = \sup_{ \substack{\|\Omega\| = 1, \\ \Omega \in \calG_{B_{x}(r+l)}}} 
   \abs{ \langle \Omega,  A \Omega\rangle - \omega_{B_{x}(r+l)}(A)}.
   \end{align}
   where $\calG_{B_{x}(r+l)} \equiv P_{B_{x}(r+l)} \calH_{B_{x}(r+l)}$ is the ground state space.
   
   \begin{lemma}\label{lem:LTQOff}
   	Suppose Assumptions \ref{ass:ff} and \ref{ass:LTQO} hold.
   	Then, 
   	\begin{equation*}
   	\omega(A) \equiv \lim_{L \ra \infty } \frac{1}{\dim \ker H_{\Lambda_L}} \operatorname{Tr}( P_{\Lambda_L} A) \quad \text{  exists for all } \quad A \in \calA.
   	\end{equation*}
   	Further, if $\rho$ is a frustration-free ground state then $\rho = \omega$.
   \end{lemma}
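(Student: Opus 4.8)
The plan is to compare the finite--volume average $\omega_{\Lambda_L}(A):=\frac{1}{\dim\ker H_{\Lambda_L}}\Tr(P_{\Lambda_L}A)$ with the normalized ground--state functional $\omega_{B_x(r+l)}$ of a fixed interior box, using frustration--freeness to localize the trace and Assumption~\ref{ass:LTQO} to replace compressions of local observables by scalars. Since every local observable is a complex combination of self--adjoint ones and all the functionals in play are linear and norm--bounded, it is enough to treat a self--adjoint $A\in\calA_{B_x(r)}$.

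First I would record a \emph{nesting of ground spaces}. For $L$ large enough that $B_x(r+l)\subset\Lambda_L$ lies in the interior (so no term wraps around the torus), non--negativity of $\Phi$ gives $H_{\Lambda_L}\ge H_{B_x(r+l)}=\sum_{X\subset B_x(r+l)}\Phi(X)\ge 0$ even with the periodic boundary conditions of Assumption~\ref{ass:pbs}, because the remaining terms are non--negative. Hence $\ker H_{\Lambda_L}\subset\ker H_{B_x(r+l)}$, i.e.\ $P_{\Lambda_L}\le P_{B_x(r+l)}$ on $\calH_{\Lambda_L}$. Using $P_{\Lambda_L}=P_{B_x(r+l)}P_{\Lambda_L}P_{B_x(r+l)}$ and cyclicity, $\Tr(P_{\Lambda_L}A)=\Tr\big(P_{\Lambda_L}\,P_{B_x(r+l)}AP_{B_x(r+l)}\big)$. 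Now apply Assumption~\ref{ass:LTQO} with our fixed $r$ and $l$ (taking the parameter $L$ occurring there large enough, which is possible for any fixed $r,l$): $P_{B_x(r+l)}AP_{B_x(r+l)}=\omega_{B_x(r+l)}(A)\,P_{B_x(r+l)}+E_l$ with $\|E_l\|\le C\|A\|l^{-q}$. Dividing by $\Tr P_{\Lambda_L}$ and using $\Tr(P_{\Lambda_L}P_{B_x(r+l)})=\Tr P_{\Lambda_L}$ and $\abs{\Tr(P_{\Lambda_L}E_l)}\le\|E_l\|\Tr P_{\Lambda_L}$ gives $\abs{\omega_{\Lambda_L}(A)-\omega_{B_x(r+l)}(A)}\le C\|A\|l^{-q}$ for all large $L$. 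Since the right side is independent of $L$, any two such averages differ by at most $2C\|A\|l^{-q}$; letting $l\to\infty$ shows $(\omega_{\Lambda_L}(A))_L$ is Cauchy, so $\omega(A):=\lim_L\omega_{\Lambda_L}(A)$ exists. The limit $\omega$ is linear, positive and unital on $\calA_{loc}$ with $\abs{\omega(A)}\le\|A\|$, hence extends uniquely to a state on $\calA$, and moreover $\abs{\omega(A)-\omega_{B_x(r+l)}(A)}\le C\|A\|l^{-q}$.

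For the identification with a frustration--free ground state $\rho$, I would first note that by Assumption~\ref{ass:ff} every finite $X$ lies in some $\Lambda_L$ with $\ker H_{\Lambda_L}\ne\{0\}$, so $\inf\operatorname{spec}\Phi(X)=0$ and therefore $\rho(\Phi(X))=0$ for all $X$ by Definition~\ref{defn:ff}; summing, $\rho(H_{B_x(r+l)})=0$. Since $H_{B_x(r+l)}\ge 0$, the density matrix $\sigma$ of $\rho|_{\calA_{B_x(r+l)}}$ is supported in $\ker H_{B_x(r+l)}$, i.e.\ $\sigma=P_{B_x(r+l)}\sigma P_{B_x(r+l)}$. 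Repeating the trace manipulation of the previous paragraph with $\sigma$ in place of $P_{\Lambda_L}/\dim\ker H_{\Lambda_L}$ (now with $\Tr\sigma=1$) gives $\abs{\rho(A)-\omega_{B_x(r+l)}(A)}\le C\|A\|l^{-q}$. Combining with the bound for $\omega$, $\abs{\rho(A)-\omega(A)}\le 2C\|A\|l^{-q}$ for every $l$, so $\rho(A)=\omega(A)$ on $\calA_{loc}$, and $\rho=\omega$ by continuity.

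The main obstacle is the nesting step: one must use frustration--freeness in exactly the right way so that the global, periodic--boundary ground--state projector $P_{\Lambda_L}$ is dominated by the projector $P_{B_x(r+l)}$ of a fixed interior box, for this is what turns the non--local quantity $\Tr(P_{\Lambda_L}A)$ into something governed by the LTQO estimate; everything after that is routine bookkeeping with the trace norm. Note that only $q>0$ is used here, not the sharper $q>2(\nu+1)$ of Assumption~\ref{ass:LTQO}.
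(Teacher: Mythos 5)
Your proof is correct, but it follows a genuinely different route from the paper's. The paper argues softly: it extends the finite-volume trace states $\omega_L$ arbitrarily to $\calA$, extracts a weak$^*$-convergent subsequence by compactness, notes that any limit point is a frustration-free ground state, and then shows that an arbitrary frustration-free ground state $\rho$ agrees with any such limit point by writing the reduced density matrix $\rho_L$ as a mixture of ground vectors and invoking the LTQO bound in its expectation form \eqref{eqn:LTQOexpectation}; coincidence of all limit points then upgrades subsequential convergence to convergence of the whole sequence. You instead prove convergence directly: the projector nesting $P_{\Lambda_L}\le P_{B_x(r+l)}$, which follows from non-negativity of the interaction even with the periodic boundary terms present, lets you compress $\operatorname{Tr}(P_{\Lambda_L}A)$ through $P_{B_x(r+l)}$, and the operator-norm form of Assumption \ref{ass:LTQO} then pins both $\omega_{\Lambda_L}(A)$ and, by the same manipulation applied to the reduced density matrix of $\rho$ (supported in $\ker H_{B_x(r+l)}$ by frustration-freeness), the value $\rho(A)$ to within $C\|A\|l^{-q}$ of $\omega_{B_x(r+l)}(A)$. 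This yields a Cauchy sequence with an explicit rate, avoids compactness and subsequence extraction, needs only $q>0$, and in fact makes explicit the intermediate-scale use of LTQO that the paper's step $\sup_{\Omega\in\calG_L}\abs{\omega_L(A)-\langle\Omega,A\Omega\rangle}\to 0$ implicitly requires, since the assumption is stated for boxes $B_x(r+l)$ rather than for the periodic torus $\Lambda_L$. What the paper's approach buys is brevity and the side conclusion, used later though not claimed in the lemma, that the limit $\omega$ is itself a frustration-free ground state; if you want that as well, it follows at once in your setup because the nesting gives $\omega_{\Lambda_L}(\Phi(X))=0$ for all sufficiently large $L$, hence $\omega(\Phi(X))=0$ for every $X$.
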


   \begin{proof}
   	Consider the increasing and exhaustive sequence given by $\Lambda_L$,
   	and let $\omega_L$ be an extension of the state  $\frac{1}{\dim \ker H_{\Lambda_L}} \operatorname{Tr}( P_{\Lambda_L} \cdot ) $ to $\calA$.
   	By compactness, there exists a converging subsequence we again call $\omega_L$ 
   	such that  $ \omega = \wslim_{L\ra \infty} \omega_L$.    
   	By the frustration-free condition, there exists $L$ such that if $L'>L$ then $ \omega_{L'}(\Phi(X)) = 0$ for all $X \in \mathcal{P}_0(\Lambda_L)$.
   	It follows that $\omega(\Phi(X)) = \lim_{L \ra \infty} \omega_L(\Phi(X)) = 0$
   	and that $\omega$ is a frustration-free ground state.
   	
   	Suppose $\rho$ is a frustration-free ground state
   	and, abusing of notation, let $\rho_L = \rho|_{\calA_{ \Lambda_L}}$ denote the restriction and the corresponding reduced density matrix.
   	By the frustration-free condition, $P_L \rho_L P_L = P_L \rho_L = \rho_L P_L = \rho_L$,
   	that is, $\rho_L$ is supported on the ground state space.
   	Let $ \rho_L = \sum c_k \ket{\Omega_k}\bra{\Omega_k}$ be a convex decomposition for $\rho_L$.
   	
   	Consider a self-adjoint local observable $A \in \calA_{loc}$. 
   	Then, for $L$ large enough that $supp(A) \subset \Lambda_L$ we have that $\rho(A) = \rho_L(A)$.
   	Since $A$ is self-adjoint we can apply the LTQO bound as in Assumption \ref{ass:LTQO} and \eqref{eqn:LTQOexpectation} to get
   	\begin{align*}
   	\abs{\omega(A) - \rho(A)}& = \lim_{L \ra \infty } \abs{ \omega_{L} (A)  - \rho_L(A) }\\
   	&\leq \lim_{L \ra \infty } \sum c_k \abs{\omega_L( A ) - \langle \Omega_k, A \Omega_k \rangle}\\
   	&\leq \lim_{L \ra \infty }  \sup_{\substack{\| \Omega\| = 1 \\ \psi \in \mathcal{G}_L}}
   	\abs{\omega_L( A ) - \langle \Omega, A \Omega \rangle} \qquad  \ra \quad  0.
   	\end{align*}
   	
   	Therefore, since any observable can be written as a linear combination of two self-adjoint observables
   	and since $\calA_{loc}$ is dense in $\calA$, $\rho = \omega$.
   	
   	Combining the above two arguments, we have shown that for the sequence $\{\omega_L\}_{L =1}^\infty$ 
   	every subsequence has a convergent subsequence to a frustration-free ground state.
   	Further, since there is a unique ground state satisfying the frustration-free condition, 
   	these subsequences must all converge to $\omega$.  
   	Therefore, $\wslim_{L\ra\infty} \omega_L = \omega.$
   \end{proof}
   
   Let $E_\Lambda(\epsilon) = \inf \operatorname{spec}(H_\Lambda(\epsilon))$.
   Generally, a perturbation will cause a splitting at the zero ground state energy of $H_\Lambda$.
   It is convenient to modify the definition of spectral gap in the perturbed Hamiltonian as
   \begin{equation*}
   \gamma_\delta(H_\Lambda(\epsilon)) \equiv \sup \{ \eta > 0 : (\delta, \delta +\eta) \cap \operatorname{spec}(H_\Lambda(\epsilon) - E_\Lambda(\epsilon) I ) = \emptyset \}.
   \end{equation*}

   \begin{thm}\label{thm:stablegap}\cite{MichalakisZ, NachSYffs}
   	Let $H_{\Lambda_L}(\epsilon)$ be as defined above
   	and suppose Assumptions \ref{ass:ff}, \ref{ass:localgap} and \ref{ass:LTQO} hold.
   	Then, for every $0 < \gamma_0 < \gamma$ there exists an $ \epsilon_0 >0$ such that if $\abs{\epsilon}<\epsilon_0$,
   	for sufficiently large $L$ we have
   	\begin{equation*}
   	\gamma_{\delta_L}(H_{\Lambda_L}(\epsilon)) \geq \gamma_0, \quad \mbox{ where } \quad \delta_L \leq C L^{-p} \mbox{ for some } p>0.
   	\end{equation*}
   \end{thm}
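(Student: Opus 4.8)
The plan is to follow the two-step scheme of \cite{MichalakisZ, NachSYffs, BravyiHM}: first promote the \emph{local} gap of Assumption~\ref{ass:localgap} to a spectral gap for the unperturbed $H_{\Lambda_L}$ that is bounded below uniformly in $L$, and then show that this bulk gap is stable under $\epsilon\Psi$.

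\emph{Step 1: bulk gap of $H_{\Lambda_L}$.} I would prove an operator inequality $H_{\Lambda_L}\geq \gamma_1\,(I-P_{\Lambda_L})$ for some $\gamma_1\in(0,\gamma)$ independent of $L$; its $\epsilon=0$ consequence is exactly the $\epsilon=0$ case of the statement, since by Assumption~\ref{ass:ff} and non-negativity of $\Phi$ the unperturbed ground-state energy is exactly $0$, so there the band-width $\delta_L$ vanishes. The tool is a coarse-grained martingale decomposition: cover $\Lambda_L$ by an overlapping family of boxes $B_x(r)$ with $r$ of order $L^\alpha$, and write $I-P_{\Lambda_L}=\sum_k G_k$ as a telescoping sum of increments built from the ground-state projectors $P_{B_x(r+l)}$ attached to a nested sequence of such boxes. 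On each box the local gap gives $H_{B_x(r)}\geq\gamma\,(I-P_{B_x(r)})$, and by frustration-freeness together with Assumption~\ref{ass:pbs} these add coherently; the only obstruction is the non-orthogonality of the increments $G_j$, $G_k$. Here Assumption~\ref{ass:LTQO} enters: since ground-state expectations of observables supported in a box are the same up to $O(l^{-q})$ for \emph{all} ground states of a thickened box (this is \eqref{eqn:ltqo} in the form \eqref{eqn:LTQOexpectation}), the projectors attached to overlapping boxes are almost compatible and one obtains $\|G_jG_k\|\leq C|j-k|^{-q}$. Because $q>2(\nu+1)$, the accumulated cross-terms stay bounded uniformly in $L$, which yields the claimed inequality with $\gamma_1$ not degrading as $L\to\infty$. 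As a byproduct this re-derives that the state $\omega$ of Lemma~\ref{lem:LTQOff} is a gapped ground state.

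\emph{Step 2: stability under $\epsilon\Psi$.} With the bulk gap in hand I would interpolate $H_{\Lambda_L}(s)=H_{\Lambda_L}+s\,V_{\Lambda_L}$, where $V_{\Lambda_L}:=\sum_{X\subset\Lambda_L}\Psi(X)$, for $s\in[0,\epsilon]$, and run a continuity (bootstrap) argument: the set of $s$ along which the gap above the ground-state band stays $\geq\gamma_0$ is closed (the spectrum of a finite matrix depends continuously on $s$), contains $0$, and I would show it is open. For openness, at an $s$ where the gap is still open one invokes the spectral flow of Proposition~\ref{prop:spectralflow}, generated by the quasi-adiabatic kernel $D_{\Lambda_L}(s)$ built from the current gap, to conjugate $H_{\Lambda_L}(s)$ into block-diagonal form with respect to $P_{\Lambda_L}(s)$; the only terms that can close the gap are then the \emph{effective} perturbation $P_{\Lambda_L}(s)\,V_{\Lambda_L}\,P_{\Lambda_L}(s)$ and a small off-block remainder controlled by the quasi-locality of the spectral flow. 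Approximating each $\Psi(X)$ by an observable supported in a box of radius $l$, using the rapid decay of the $F$-norm of $\Psi$, and applying \eqref{eqn:ltqo}/\eqref{eqn:LTQOexpectation} to replace $P\,\Psi(X)\,P$ by the scalar $\omega_{B}(\Psi(X))\,P$ up to $O(\|\Psi(X)\|\,l^{-q})$, I would sum over $X\subset\Lambda_L$; since $q>2(\nu+1)$ the total error is $O(L^{-p})$ for some $p>0$, which is exactly the band-width $\delta_L\leq CL^{-p}$ in the statement. Feeding this back into ordinary perturbation theory for the relevant spectral projections produces a self-improving bound showing the gap cannot drop below any fixed $\gamma_0<\gamma_1<\gamma$ once $|\epsilon|<\epsilon_0$, closing the bootstrap.

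\emph{Main obstacle.} The delicate point is Step~1's gluing: converting purely local input (the gap of each $H_{B_x(r)}$ plus LTQO decay) into a genuinely $L$-independent lower bound on the global gap, i.e.\ controlling the accumulated non-orthogonality of the martingale increments without picking up a volume-growing factor. This is precisely where the quantitative hypothesis $q>2(\nu+1)$ is essential, and choosing the box sizes ($r\sim L^\alpha$, thickenings $l$) so that every tail sum converges uniformly in $L$ is the technical heart of the argument; Step~2 is then comparatively soft, apart from the care needed to set up the spectral flow self-consistently along the perturbation path rather than assuming the very gap one is trying to prove.
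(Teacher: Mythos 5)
You should first note that the thesis itself does not prove Theorem~\ref{thm:stablegap}; it is quoted from \cite{MichalakisZ,NachSYffs} (building on \cite{BravyiHM}), so the comparison is with those arguments, whose overall shape your Step~2 partially mirrors. The first genuine gap is your Step~1. The claim that LTQO gives $\|G_jG_k\|\le C|j-k|^{-q}$ for the martingale increments, and hence an $L$-independent global gap for $H_{\Lambda_L}$, is unsupported: Assumption~\ref{ass:LTQO} (equation \eqref{eqn:ltqo}) controls \emph{expectation values} of local observables in ground states of thickened regions, not operator norms of products of projector differences attached to overlapping boxes, which is what the martingale method actually requires. No such local-to-global gluing appears in \cite{MichalakisZ,NachSYffs}; there the gap of the unperturbed global Hamiltonian is in effect part of the hypotheses (the local-gap condition is imposed for regions up to the scale of the torus), and in the application this paper needs (commuting projector quantum double models, Corollary~\ref{cor:qdLTQO}) it is immediate. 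Deriving a uniform global gap from Assumption~\ref{ass:localgap} plus LTQO alone is a separate, hard problem that your sketch does not solve.

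The second gap is that Step~2 mislocates where Assumptions~\ref{ass:ff} and \ref{ass:localgap} do their work. After the spectral-flow block-diagonalization, the block $(I-P)\tilde{V}(I-P)$ is not a ``small off-block remainder'': its norm is extensive, of order $\epsilon L^\nu$, so ``ordinary perturbation theory'' cannot close your bootstrap. The heart of \cite{BravyiHM,MichalakisZ,NachSYffs} is a \emph{relative form bound}: decompose the (transformed) perturbation as $\sum_{u,r}V_{u,r}$ with $V_{u,r}$ supported in $B_u(r)$ and rapidly decaying in $r$, use frustration-freeness and the local gap to write $I-P_{B_u(r)}\le \gamma^{-1}H_{B_u(r)}$, and sum, using that each interaction term lies in only $O(r^\nu)$ balls, to obtain $\pm(I-P)\tilde{V}(I-P)\le c\,\epsilon\,H_{\Lambda_L}+\mbox{(small)}$; only this makes the gap survive with $\gamma_0$ arbitrarily close to $\gamma$ for $|\epsilon|<\epsilon_0$. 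Your proposal never invokes the local gap in Step~2 at all. What you do get right is the continuity-in-$s$ bootstrap implemented with the spectral flow of Proposition~\ref{prop:spectralflow}, and the use of LTQO to replace $P\tilde{V}P$ by a scalar up to errors which, thanks to $q>2(\nu+1)$ and the decay of $\Psi$, sum to $O(L^{-p})$ and produce the band width $\delta_L$; but without the relative-boundedness step the argument does not go through.
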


   \chapter[%
   Kitaev models
   ]{%
   	Kitaev's quantum double models
   }%
   \label{ch:qdoub}
   
   In this chapter we introduce the  Kitaev's  quantum double model for abelian group $G$ in the setting of quantum spin systems. 
   The Hamiltonian interaction terms are commuting projectors.
   The ground state space is frustration-free in the sense that local Hamiltonians have non-trivial kernel.
   When the underlying surface of the model is a surface of genus $g$, 
   the ground state degeneracy is a topological invariant depending only on the genus.
   There are no local order parameters in the sense that local observables take identical expectation values in the ground state 
   and the LTQO Assumption \ref{ass:LTQO} is satisfied.
   The algebraic structure describing the statistics of the elementary excitations of the model is 
   representation theory of the quantum double, $\operatorname{Rep}(\mathcal{D}(G))$.
   
   \section{The model}\label{sec:qdoub}
   
   \subsection{Dynamics}
   
   Let $G$ be a finite group and consider the bonds (or edges) $\calB$ of the square lattice $\ZZ^2$, i.e.\ the edges between 
   nearest neighbors of points (or vertices)  in $\mathbb{Z}^2$.
   We give $\calB$ an orientation by having edges either point up or right.
   To each edge $e \in \calB$ we associate a $\abs{G}$-dimensional Hilbert space with an orthonormal basis labeled by group elements 
   and denoted by $\ket{g}$. 
   We will use the notation $\bar{g}$ to denote the inverse element $g^{-1}$.
   In general, the model can be defined on any oriented metric graph, see~\cite{BombinMD,KitaevQD}. Reversing 
   the orientation on a given edge corresponds to the unitary transformation that maps $\ket{g}$  to $\ket{\bar{g}}$ in the state space of that
   edge. 
   Let $ \mathcal{P}_0(\calB)$ denote  the set of finite subsets of $\mathcal{B}$.
   
   Recall, the quantum spin system on $\Lambda$ is defined on the Hilbert space $ \calH_\Lambda := (\CC^{\abs{G}})^{\otimes \abs{\Lambda}} $
   and the algebra of observables is $\calA_\Lambda = M_{\abs{G}^{\abs{\Lambda}}}(\CC)$.
   If $ \Lambda_1 \subset \Lambda_2$ there is a natural inclusion
   $ i_{\Lambda_2, \Lambda_1}:\calA_{\Lambda_1} \hookrightarrow \calA_{\Lambda_2}$ mapping $ A \mapsto A\otimes I_{\Lambda_2\backslash \Lambda_1}$.
   The maps $i_{\Lambda_2, \Lambda_1}$ are isometric morphisms 
   so we will often abuse notation by identifying $i_{\Lambda_2,\Lambda_1}(A)$ with simply $A.$
   For this net of algebras, we define the local algebra of observables 
   and the quasi-local algebra of observables as, respectively,
   \begin{equation*}
   \calA_{loc} = \bigcup_{\Lambda \in \mathcal{P}_0(\calB) }\calA_\Lambda, \quad \text{and} \quad \calA = \overline{\calA_{loc}}^{\| \cdot \|}.
   \end{equation*}

   To define the model we specify the local Hamiltonians and the Heisenberg dynamics on $\calA$.
   The interaction terms of the local Hamiltonian are non-trivial only on certain subsets of $\calB$, called stars and plaquettes.
   We define a \emph{star} $v$ to be a set of four edges sharing a vertex.
   Similarly, a \emph{plaquette} $f$ is the set of four edges forming a unit square in the lattice.
   Interaction terms are defined for each star and plaquette by
   \begin{equation*}
   A_v \equiv \frac{1}{\abs{G}} \sum_{g\in G} A_v^g, \qquad \text{ and } \qquad  B_f \equiv B_f^e,
   \end{equation*}
   where the terms $A_v^g$ and $B_f^h$ are defined linearly by their action on an orthonormal basis, as shown in the following figures:
   \begin{figure}[h]
   	\centering
   	\includegraphics[width=0.45\textwidth]{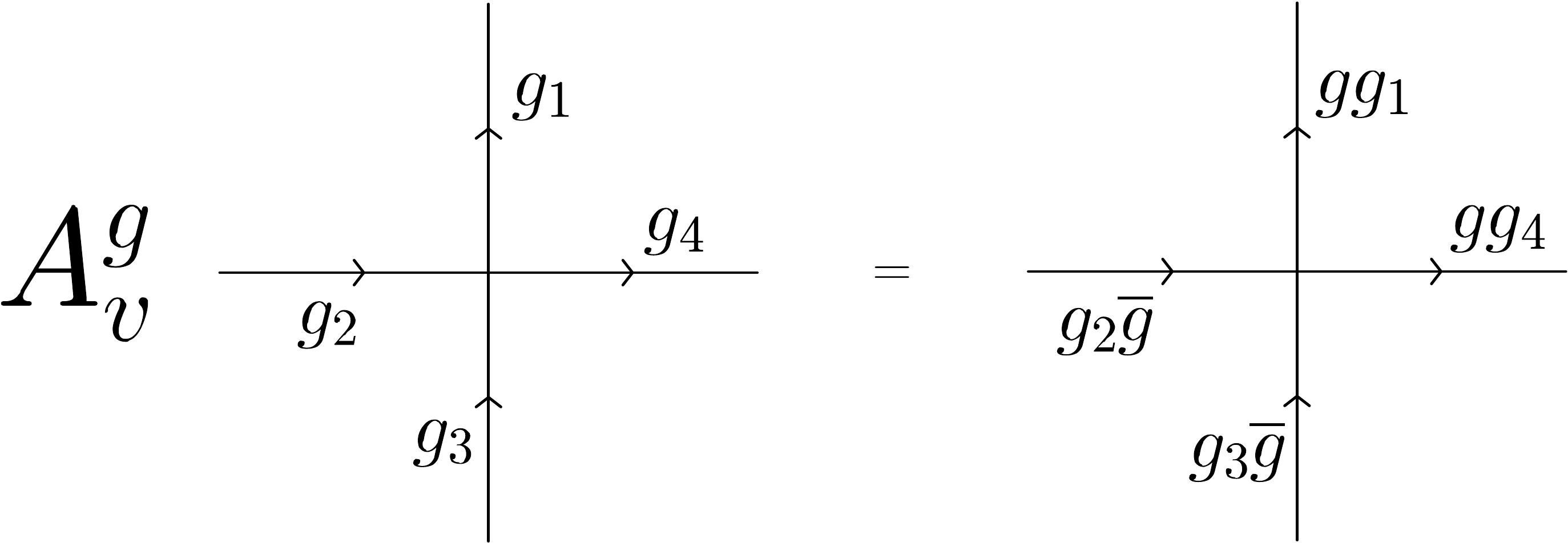}
   	\hfill
   	\includegraphics[width=0.45\textwidth]{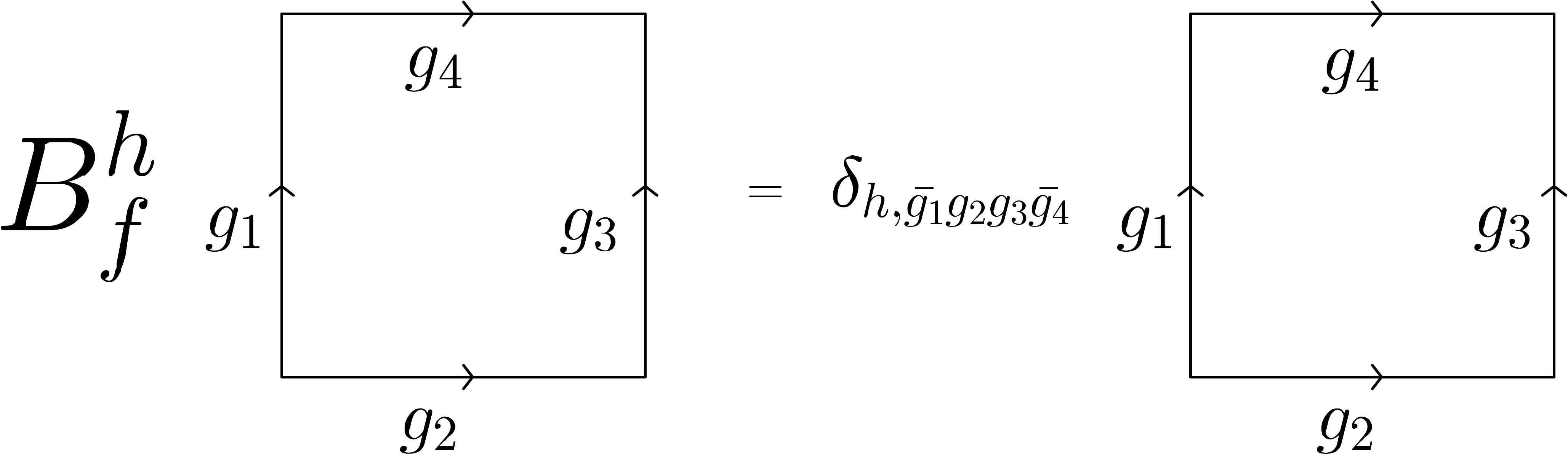}
   \end{figure}
   
   A direct computation gives that the operators $A_v$ and $B_f$ satisfy the following relations:
   \begin{align*}
   &A_v^g A_v^{g'} = A_v^{gg'},  & \left(A_v^{\bar{g}}\right)^* = A_v^g, \\
   &B_f^h B_f^{h'} = \delta_{h,h'} B_f^h,  &B_f^{h*} = B_f^h, \\
   &A_v^g B_f^h = B_f^{gh\bar{g}} A_v^g &\text{(if $v$ and $f$ share edges)}.
   \end{align*}
   In all other cases the operators commute.
   It follows that
   \begin{align*}
   A_v^2 &= \frac{1}{\abs{G}^2} \sum_{g,g' \in G} A_v^g A_v^{g'}= \frac{1}{\abs{G}} \sum_{g \in G} \left( \frac{1}{\abs{G}} \sum_{g'\in G} A_v^{ g g'}\right) = A_v
   \end{align*}
   and 
   \begin{equation*}
   A_v^* = \frac{1}{\abs{G}^2} \sum_{g \in G} A_v^{\bar{g}} = A_v.
   \end{equation*}
   A similar calculation as above holds for the plaquette terms.
   Thus, the interactions terms are mutually commuting projectors
   \begin{equation*}
   A_v = A^*_v=A_v^2, \quad \quad B_f = B_f^* = B_f^2, \quad \quad [A_v, B_f] = 0 \text{ for all } v,f.
   \end{equation*}
   
   We caution the reader that in the case of the toric code model (which corresponds to $G = \mathbb{Z}_2$) one usually shifts the 
   local interaction terms by a constant. This has no effect on the dynamics, but the algebraic relations are slightly different. Explicitly,
   the common convention is to define the toric code model in terms of star and plaquette operators $A_v^{tc}$ and $B_f^{tc}$ given
   by $2 A_v -I = A_v^{tc}$ and $ 2 B_f -I = B_f^{tc}$.
   
   For $\Lambda \subset \calB$ denote the subset of stars and plaquettes contained in $\Lambda$ as
   \begin{equation*}
   \mathcal{V}_\Lambda = \{ v \subset \Lambda: v \text{ is a star} \}, \qquad 
   \mathcal{F}_\Lambda = \{ f \subset \Lambda: f \text{ is a plaquette} \}.
   \end{equation*}
   If $\Lambda \in \mathcal{P}_0(\calB)$, the local Hamiltonians for the quantum double models defined by Kitaev \cite{KitaevQD} are given by
   \begin{equation*}
   \sum_{v\in \mathcal{V}_\Lambda} (I - A_v) + \sum_{f\in \mathcal{F}_\Lambda} (I - B_f) = H_\Lambda \in \calA_\Lambda.
   \end{equation*}
   Since the interaction terms are uniformly bounded and of finite range, 
   the existence of global dynamics $t \mapsto \tau_t \in \operatorname{Aut}(\calA)$ is readily established by Theorem \ref{thm:infvoldyn}.
   
   For our analysis it will be enough to consider squares $\Lambda_L\subset \calB$ consisting of all edges in $[-L,L]^2$.  
   We will denote $H_L = H_{\Lambda_L}$ and $\calH_L = \calH_{\Lambda_L}$.
   The generator of the dynamics is the closure of the operator
   \begin{equation*}
   \delta(A) = \lim_{L \ra \infty} [H_L, A],
   \end{equation*}
   where $\calA_{loc}$ is a core for $\delta$, and $\tau_t(A) = e^{i t \delta}(A) $ for all $A \in \calA_{loc}$.
   
   \begin{figure}
   	\includegraphics[width=0.4\textwidth]{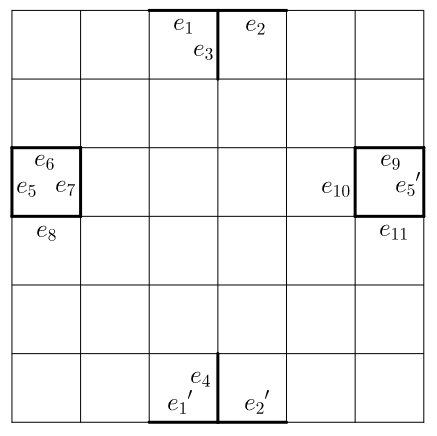}
   	\caption{The set $\Lambda_3$ is pictured. In the case of periodic boundary conditions we identify edges, $e_1 = e_1'$, $ e_2= e_2'$ and $e_5 = e_5'$.
   		The set $\mathcal{V}_3^{per}$ will contain the star $v = \{e_1, e_2, e_3,e_4\}$ and $\mathcal{F}_3^{per}$ will contain the plaquette $f = \{e_5, e_9, e_{10}, e_{11} \}.$ }
   	\label{fig:periodic}
   \end{figure}
   
   We will often consider the case of periodic boundary conditions for $\Lambda_L$,
   where the edges on the outer boundary of $\Lambda_L$ are identified.
   Let $\mathcal{F}_{L}^{per}$ and $ \mathcal{V}_{L}^{per}$ be the corresponding sets of stars and plaquettes, see Figure \ref{fig:periodic}.
   Define the local Hamiltonians with periodic boundary conditions as
   \begin{equation*}
   H_{L}^{per} = \sum_{v\in \mathcal{V}_{L}^{per}} (I - A_v) + \sum_{f\in \mathcal{F}_{L}^{per}} (I - B_f)
   \end{equation*}

   \subsection{Ground state space}
   We consider the subspace defined by 
   \begin{equation*}
   \mathcal{G}_\Lambda = \{ \psi \in \calH_{\Lambda} : A_v \psi = \psi, B_f \psi = \psi, \forall v \in \mathcal{V}_\Lambda, \forall f \in \mathcal{F}_\Lambda\}.
   \end{equation*}
   If the set $\mathcal{G}_\Lambda$ is non-trivial, that is it consists of more than the zero vector, then it will agree with the ground state space of $H_\Lambda$. 
   In the following, using the framework of lattice gauge theory \cite{Oeckl}, we show that the dimension of $\mathcal{G}_\Lambda$ coincides with the number of flat $G$-connections up to conjugacy of $\Lambda$.

   \begin{defn}
   	A $G$-connection is a map $c: \Lambda \ra G$ .
   \end{defn}
   
   There is a natural isomorphism $ \calH_L \cong \CC[ c: \Lambda_L \ra G]$ given by the map $c \mapsto \ket{c(x_1) c(x_2) \cdots c(x_{\abs{\Lambda}})}$
   where $x_i$ is an arbitrary enumeration of the edges in $\Lambda_L$.
   
   \begin{defn} 
   	A $G$-connection $c$ is called \emph{flat} if for every face $f = (x_1 x_2 x_3 x_4)$, labeled clockwise, we have 
   	\begin{equation}\label{eqn:trivialmonodromy}
   	\sigma_f( c(x_1))\sigma_f( c(x_2))\sigma_f( c(x_3))\sigma_f( c(x_4)) = e
   	\end{equation}
   	where $ \sigma_f( c(x)) =\left\{  \begin{array}{ll}
   	c(x) & \mbox{ if x  is oriented clockwise wrt f }\\
   	\bar{c}(x) & \mbox{ otherwise } 
   	\end{array}\right .
   	$
   	
   	The product on the left hand side of equation \ref{eqn:trivialmonodromy} will be referred to as the monodromy of $c$ about the face $f$.
   \end{defn}
   
   \begin{figure}
   	\includegraphics[width = .4\textwidth]{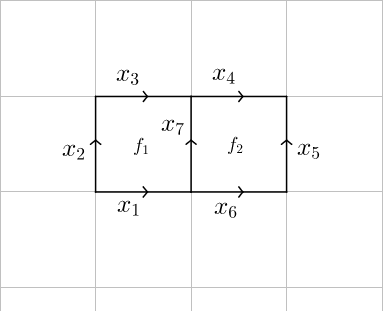}
   	\caption{ The faces $f_1 = ( x_1 x_2 x_3 x_7)$ and $f_2 = ( x_4 x_5 x_6 x_7)$ are labeled clockwise and share the edge $e_7$.}
   	\label{fig:monodromy}
   \end{figure}
   
   Let $c$ be a flat connection.
   Let $f_1 = ( x_1 x_2 x_3 x_7)$ and $f_2 = ( x_4 x_5 x_6 x_7)$ be faces sharing the edge $x_7$ and consider the
   closed loop $\gamma = (x_1 x_2 \cdots x_6)$, see Figure~\ref{fig:monodromy}.
   The orientation of $x_7$ in $f_1$ is opposite that of $f_2$ such that $\sigma_{f_1}( c(x_7)) = \sigma_{f_2}( c(x_7))^{-1}$.
   It follows that 
   \begin{align*}
   \sigma_\gamma( c(x_1))\sigma_\gamma( c(x_2)) \cdots \sigma_\gamma( c(x_6)) &= \sigma_{f_1}( c(x_1))\sigma_{f_1}( c(x_2))\sigma_{f_1}( c(x_3))  \sigma_{f_1}(c(x_7)) \\
   & \quad \quad \quad \quad \sigma_{f_2}({c}(x_7)) \sigma_{f_2}( c(x_4)) \sigma_{f_2}( c(x_5)) \sigma_{f_2}( c(x_6))\\
   & = e.
   \end{align*}
   The following lemma readily follows:
   \begin{lemma}
   	Let $c$ be a flat $G$-connection.
   	Then, for any simple closed loop in the lattice $\gamma = (x_1 x_2 \ldots x_n)$ we have 
   	\begin{equation*}
   	\sigma_\gamma(c(x_1)) \cdots \sigma_\gamma(c(x_n)) = e.
   	\end{equation*}
   \end{lemma}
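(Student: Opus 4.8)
The statement is the lattice counterpart of the fact that a flat connection on a simply connected region has trivial holonomy around every loop: flatness of $c$ says precisely that the holonomy around each elementary plaquette is trivial, and one wants to conclude that the holonomy map factors through $\pi_1$ of the disk bounded by $\gamma$. Concretely, the plan is as follows. By the Jordan curve theorem for the square lattice, a simple closed loop $\gamma$ bounds a region $R_\gamma$ which is a union of finitely many faces and is simply connected. I would then prove, by induction on the number $k$ of faces of $R_\gamma$, that the monodromy $\mu(\gamma) := \sigma_\gamma(c(x_1))\cdots\sigma_\gamma(c(x_n))$ equals $e$, the engine of the induction being exactly the cancellation of a shared edge already carried out above for two adjacent faces.

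First I would record two elementary reductions. Under a cyclic relabeling of the edges of $\gamma$ the product $\mu(\gamma)$ only changes by conjugation, so whether it equals $e$ is independent of the chosen base vertex; and traversing $\gamma$ in the opposite direction replaces $\mu(\gamma)$ by $\mu(\gamma)^{-1}$. Hence it suffices to establish $\mu(\gamma)=e$ for one orientation and one base vertex. The base case $k=1$ is then precisely the flatness condition \ref{eqn:trivialmonodromy} read off the unique plaquette bounded by $\gamma$, up to one of these reductions.

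For the inductive step I would invoke the standard fact that a simply connected union of lattice faces admits a shelling: its faces can be ordered $f_1,\dots,f_k$ so that $R_j := f_1\cup\cdots\cup f_j$ is simply connected for every $j$ (one may repeatedly peel off a face meeting the current boundary in a single edge-path, e.g.\ at a locally convex corner). Passing from $R_{k-1}$ to $R_k=R_\gamma$, the plaquette $f_k$ meets $R_{k-1}$ along a connected path $P$ of edges --- connectedness of $P$ is forced, since otherwise $R_k$ would not be simply connected --- and $\gamma=\partial R_k$ is obtained from $\gamma':=\partial R_{k-1}$ and $\partial f_k$ by excising $P$ from each and gluing the complementary arcs. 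Choosing the base vertex at an endpoint of $P$ and reading off holonomies, one gets $\mu(\gamma')=A\,s$ and $\mu(\partial f_k)=s^{-1}B$, where $s$ is the holonomy along $P$ (traversed oppositely inside $\gamma'$ and inside $\partial f_k$, since $f_k$ sits on the far side of $P$) and $A,B$ are the holonomies along the complementary arcs, so that $\mu(\gamma)=AB=\mu(\gamma')\,\mu(\partial f_k)$ up to conjugation. By the inductive hypothesis $\mu(\gamma')=e$ and by flatness $\mu(\partial f_k)=e$, hence $\mu(\gamma)=e$.

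The routine parts are the two reductions and the bookkeeping of the relative orientations of $P$, which must be handled with care precisely because $G$ is not assumed abelian, so the group elements have to be kept in the correct left-to-right order at every step. The only genuinely non-formal input is the existence of the shelling order together with the claim that each newly attached face meets the previous region in a single connected arc; I expect this planar-combinatorial point to be the main obstacle, and I would either cite it or dispatch it by an extremal-cell argument --- take a face of $R_\gamma$ that is lexicographically extremal among its cells and verify that deleting it preserves simple connectedness.
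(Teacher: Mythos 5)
Your proposal is correct and follows essentially the same route as the paper: the paper proves the two-adjacent-face case by cancelling the shared edge (whose orientation is opposite relative to the two faces) and then asserts the lemma "readily follows," which is exactly the induction over the faces enclosed by $\gamma$ that you spell out. Your explicit treatment of the shelling order and the connected-arc attachment simply fills in the planar bookkeeping the paper leaves implicit.
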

   
   \begin{defn}
   	Let $V_\Lambda$ be the vertex set of $\Lambda$ and $h: V_\Lambda \ra G$.
   	A \emph{gauge transformation} is a map $\Delta_h: \CC[c : \Lambda \ra G] \ra \CC[c : \Lambda \ra G]$ 
   	defined by 
   	\begin{equation*}
   	(\Delta_h c)(x) = h(v_1) c(x) h(v_2)^{-1}
   	\end{equation*}
   	where $ x =(v_1 v_2) $ is a oriented from $v_1$ to $ v_2$.
   \end{defn}

   \begin{lemma}\label{lem:gauge}
   	Let $ \calH_{\Lambda}^f$ denote the subspace generated by flat connections.
   	If $\Delta_h$ is a gauge transformation then $\Delta_h(\calH_{\Lambda}^f) = \calH_{\Lambda}^f$.
   	In particular, $\Delta_h$ is invertible and $(\Delta_h)^{-1} = \Delta_{h^{-1}}$.
   \end{lemma}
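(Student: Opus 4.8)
The plan is to show that $\Delta_h$ is a permutation of the basis of $G$-connections which restricts to a permutation of the flat connections; the lemma then follows at once.

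First I would record how $\Delta_h$ acts on the distinguished basis $\{\ket{c} : c\colon\Lambda\ra G\}$ of $\calH_\Lambda\cong\CC[c\colon\Lambda\ra G]$, namely $\ket{c}\mapsto\ket{\Delta_h c}$, where $\Delta_h c$ is again a $G$-connection. A one-line computation on a single edge $x=(v_1 v_2)$ gives $\Delta_h\Delta_{h^{-1}}c = \Delta_{h^{-1}}\Delta_h c = c$ (here $h^{-1}$ is the pointwise inverse $v\mapsto h(v)^{-1}$), so $c\mapsto\Delta_h c$ is a bijection of the set of $G$-connections with inverse $c\mapsto\Delta_{h^{-1}}c$; extending linearly, $\Delta_h$ is an invertible operator on $\calH_\Lambda$ with $(\Delta_h)^{-1}=\Delta_{h^{-1}}$. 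This already gives the last sentence of the lemma, so it remains to prove $\Delta_h(\calH_\Lambda^f)=\calH_\Lambda^f$, and since $\calH_\Lambda^f$ is spanned by the flat connections it suffices to show that $c$ flat implies $\Delta_h c$ flat.

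The key step is to track the monodromy around a fixed face $f=(x_1 x_2 x_3 x_4)$ labeled clockwise, with corner vertices $p_1,p_2,p_3,p_4$ and $p_5:=p_1$, chosen so that $x_i$ joins $p_i$ and $p_{i+1}$. By a two-case analysis on whether $x_i$ is oriented clockwise or counterclockwise with respect to $f$, I would verify that in either case
\begin{equation*}
\sigma_f\big((\Delta_h c)(x_i)\big) = h(p_i)\,\sigma_f(c(x_i))\,h(p_{i+1})^{-1}:
\end{equation*}
if $x_i$ runs from $p_i$ to $p_{i+1}$ this is $(\Delta_h c)(x_i)=h(p_i)c(x_i)h(p_{i+1})^{-1}$, and if it runs from $p_{i+1}$ to $p_i$ this is $((\Delta_h c)(x_i))^{-1}=h(p_i)c(x_i)^{-1}h(p_{i+1})^{-1}$. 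Multiplying the four identities, the interior factors $h(p_2),h(p_3),h(p_4)$ telescope and
\begin{equation*}
\prod_{i=1}^{4}\sigma_f\big((\Delta_h c)(x_i)\big) = h(p_1)\left(\prod_{i=1}^{4}\sigma_f(c(x_i))\right)h(p_1)^{-1}.
\end{equation*}
Thus the monodromy of $\Delta_h c$ about $f$ is a conjugate of that of $c$, hence equals $e$ exactly when the latter does. Since this holds for every face, $c$ flat implies $\Delta_h c$ flat, so $\Delta_h(\calH_\Lambda^f)\subseteq\calH_\Lambda^f$; applying this to $\Delta_{h^{-1}}$ gives $\calH_\Lambda^f=\Delta_h\Delta_{h^{-1}}(\calH_\Lambda^f)\subseteq\Delta_h(\calH_\Lambda^f)$, and equality follows.

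The only genuine subtlety — and the step I would be most careful with — is the bookkeeping of edge orientations relative to the face, ensuring that the conjugating element picked up at each edge is exactly the value of $h$ at the correct corner vertex so that the product telescopes cleanly around the loop. Everything else is formal; alternatively one could invoke the preceding lemma on monodromy around simple closed loops, but for a single face the direct telescoping computation is cleanest.
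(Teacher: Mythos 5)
Your proof is correct and follows essentially the same route as the paper: the two-case orientation check giving $\sigma_f\big((\Delta_h c)(x_i)\big) = h(p_i)\,\sigma_f(c(x_i))\,h(p_{i+1})^{-1}$ and the telescoping of the interior $h$-factors around the face is exactly the paper's argument. If anything, you are slightly more careful than the paper, which elides the $\sigma_f$ in the counterclockwise case and leaves the equality $\Delta_h(\calH_\Lambda^f)=\calH_\Lambda^f$ (as opposed to mere inclusion) implicit, whereas you obtain it explicitly via $\Delta_{h^{-1}}$.
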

   
   \begin{proof}
   	Let $c$ be a $G$-connection and $f = (x_1 x_2 x_3 x_4)$ be a face in $\Lambda$.
   	If $x_i$ is oriented clockwise with respect to $f$ then we write $ x_i = ( v_i v_{i+1})$.
   	If $x_i$ is oriented counter clockwise with respect to $f$ then we write $x_i = (v_{i+1} v_i)$.
   	In both cases $\sigma_f (\Delta_h c(x_i) ) = h(v_i) c(x_i) h(v_{i+1})^{-1}$.
   	It follows that, 
   	\begin{align*}
   	\sigma_f ( \Delta_hc(x_1))\sigma_f ( \Delta_hc(x_2))\sigma_f ( \Delta_hc(x_3))\sigma_f ( \Delta_hc(x_4)) = e.
   	\end{align*}
   	Therefore, $\Delta_h c $ is a flat $G$-connection.
   	Thus, $\Delta_h$ an invertible linear map on $ \calH_{\Lambda}^f$
   	where $ \Delta_{h^{-1}} = (\Delta_h)^{-1}$.
   \end{proof}
   
   The notions of flat $G$-connections and gauge transformations are directly related to the interaction terms.
   Indeed, a connection $c$ is flat if and only if 
   the state $\ket{c} = \ket{c(x_1) \cdots c(x_{\abs{\Lambda}})}$ satisfies $ B_f \ket{c} = \ket{c}$ for all $ f \in \mathcal{F}_{\Lambda}$.
   Let $P_c = \ket{c} \bra{c}$, then we have that $\prod_{f \in \mathcal{F}_\Lambda} B_f P_c = P_c$.
   Let $\Delta_h$ be a gauge transformation.
   A direct calculation shows that in the case of periodic boundary conditions
   \begin{equation*}
   \Delta_h  = \prod_{v \in \mathcal{V}_L^{per}} A_v^{h(v)}.
   \end{equation*}
   Define an equivalence relation on flat connections by $ c \sim c'$ 
   if there exists a gauge transformation $\Delta_h$ such that $\Delta_h c = c'$.
   In particular, our construction shows that the ground state degeneracy of $H_{\Lambda}^{per}$ is equal to the number of flat $G$-connections up to gauge transformation.
   For open boundaries conditions, the  dimension grows exponentially with the perimeter of $\Lambda$ since the edges lying on the perimeter of $\Lambda$ do not belong to any star in $\mathcal{V}_\Lambda$,
   and thus, the product $\prod_{v \in \mathcal{V}_\Lambda} A_v^{h(v)}$  does not uniquely define a gauge transformation.

   The argument above generalizes for any cellulation of a closed orientable surface, see \cite{Oeckl} for a precise definition.
   In this case, the dimension is a topological invariant.
   \begin{thm} \cite{KitaevQD}
   	Let $ \Lambda_g$ denote a cellulation of a closed  surface of genus $g$. 
   	Then, 
   	\begin{equation*}
   	\ker(H_{\Lambda_g})  =  \{ \psi \in \calH_{{\Lambda_g}}:  A_f \psi = \psi, B_p \psi = \psi, \forall f \in \mathcal{F}_{\Lambda_g},\forall v \in\mathcal{V}_{\Lambda_g} \}
   	\end{equation*}
   	and  $\dim(\ker(H_{\Lambda_g})) = \dim( \mathcal{H}_{\Lambda_g}^f/\sim).$
   	Furthermore, $\dim(\ker(H_{\Lambda_g})) $ is independent of the choice of cellulation $\Lambda_g$
   	and if $G$ is a finite abelian group then $ \dim(\ker(H_\Lambda))  = \abs{G}^{2g}$.
   \end{thm}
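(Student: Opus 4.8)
The plan is to prove the four assertions in turn, mostly by upgrading the periodic-boundary computation already carried out above to an arbitrary cellulation.

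\emph{Step 1 (the kernel is the joint fixed space).} Since the $A_v$ and $B_f$ are mutually commuting orthogonal projections, each summand $I-A_v$ and $I-B_f$ of $H_{\Lambda_g}$ is positive, so for every $\psi$ the quantity $\langle\psi,H_{\Lambda_g}\psi\rangle$ is a sum of non-negative terms. Hence $H_{\Lambda_g}\psi=0$ iff $(I-A_v)\psi=0$ and $(I-B_f)\psi=0$ for all stars $v$ and plaquettes $f$, which, these being projections, is equivalent to $A_v\psi=\psi$ and $B_f\psi=\psi$. This gives the first displayed equality. That $\mathcal{G}_{\Lambda_g}$ is non-trivial (so that it genuinely is the ground state space) will follow from Step 2, since the trivial connection is flat.

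\emph{Step 2 (reduction to flat connections modulo gauge).} As in the periodic case, the basis vector $\ket{c}$ satisfies $B_f\ket{c}=\ket{c}$ for all $f$ exactly when the monodromy of $c$ around every face is trivial, i.e.\ $c$ is flat; hence $\{\psi:B_f\psi=\psi\ \forall f\}=\calH_{\Lambda_g}^f$. On a cellulation of a closed surface there are no boundary edges, so $\Delta_h=\prod_{v}A_v^{h(v)}$ holds for every $h:V_{\Lambda_g}\to G$, and therefore $\prod_{v}A_v=\abs{G}^{-\abs{V_{\Lambda_g}}}\sum_{h}\Delta_h$. Restricted to $\calH_{\Lambda_g}^f$, which each $\Delta_h$ preserves by Lemma~\ref{lem:gauge}, this is the averaging projection of the permutation representation of the gauge group $G^{V_{\Lambda_g}}$ on the finite set of flat connections; by Burnside's lemma the rank of such a projection equals the number of orbits. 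Hence $\ker(H_{\Lambda_g})=\big(\prod_v A_v\big)\calH_{\Lambda_g}^f$ has dimension $\dim(\calH_{\Lambda_g}^f/\!\sim)$, and this is at least $1$ because the trivial connection gives at least one orbit.

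\emph{Step 3 (topological invariance).} I would identify flat connections modulo gauge with $\operatorname{Hom}(\pi_1(\Sigma_g),G)/G$ (conjugation) by the standard lattice-gauge-theory argument: choose a spanning tree $T$ of the $1$-skeleton; every flat connection is gauge-equivalent to one that equals $e$ on every edge of $T$, the only residual gauge freedom being conjugation by a single group element (the value at the basepoint obtained by collapsing $T$); the edges not in $T$ then freely generate $\pi_1$ of the $1$-skeleton, and the one flatness relation per $2$-cell is precisely the relator in the standard presentation of $\pi_1(\Sigma_g)$. The orbit count thus depends only on $\pi_1(\Sigma_g)$, hence not on the cellulation.

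\emph{Step 4 (abelian case).} For finite abelian $G$ conjugation is trivial, and the single relator $\prod_{i=1}^{g}[a_i,b_i]$ of the genus-$g$ surface group is a product of commutators, hence trivial in any abelian group; therefore $\operatorname{Hom}(\pi_1(\Sigma_g),G)=\operatorname{Hom}(\ZZ^{2g},G)=G^{2g}$, giving $\dim(\ker(H_{\Lambda_g}))=\abs{G}^{2g}$. The main obstacle is Step~3: making the spanning-tree trivialization and the bookkeeping of the basepoint gauge freedom precise for an arbitrary cellulation, keeping careful track of the edge orientations (the $\sigma_f$ signs) so that the $2$-cell relations assemble into exactly the surface-group relator, and remembering that the gauge group need not act freely, so that Burnside's lemma, rather than naive division by $\abs{G}^{\abs{V}}$, is what is required.
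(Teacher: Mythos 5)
Your proposal is correct and follows essentially the same route as the paper, which (citing Kitaev and Oeckl) establishes the result via exactly the reduction in your Steps 1--2: the kernel is the joint fixed space of the commuting projections, the plaquette constraints select the span of flat $G$-connections, and the product of star projectors is the average over gauge transformations, so the ground space dimension is the number of flat connections modulo gauge. Your Steps 3--4 --- the spanning-tree identification of flat connections mod gauge with $\operatorname{Hom}(\pi_1(\Sigma_g),G)/G$ and the abelianization count giving $\abs{G}^{2g}$ --- correctly supply the standard details that the paper defers to the cited references, including the right orbit-counting (averaging/Burnside) argument in place of a naive division by the order of the gauge group.
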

   
   Further, \cite{FreedmanM, BravyiK} considered  cellulations of $\RR P^2$ for the toric code model, $G =\ZZ_2$.
   They showed that the ground state space encodes exactly one qubit,  $\dim  \mathcal{G}_{\Lambda^{\RR P^2}}  = 2$.

   \subsection{Ribbon operators}
   
   \begin{figure}
   	\includegraphics[width=0.4\textwidth]{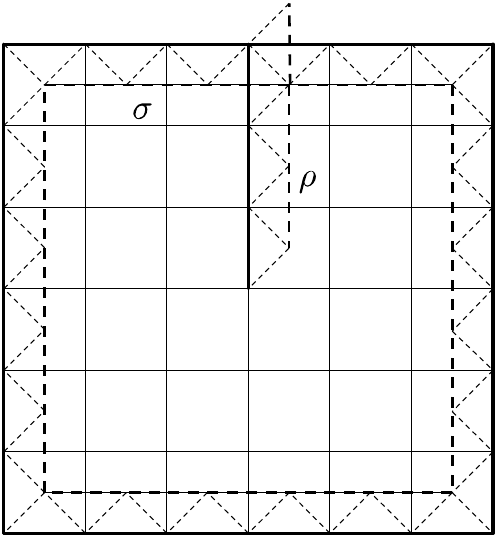}
   	\hfill
   	\includegraphics[width=0.4\textwidth]{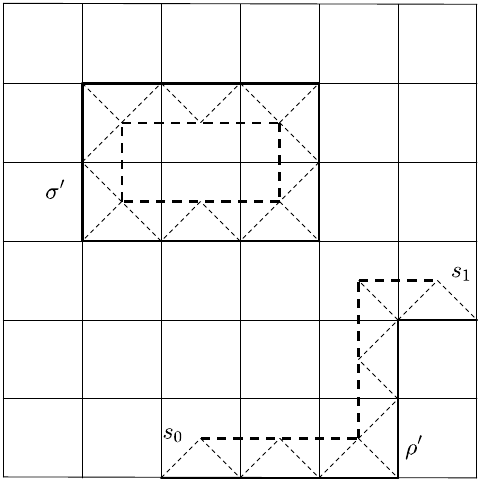}
   	\caption{The region $\Lambda_3$ is depicted with typical configurations of ribbons.
   		On the left, $\rho$ connects a site in $\mathcal{S}_3$ to a site on the boundary,
   		and intersects $\sigma = \partial \Lambda_3$, the boundary ribbon of $\Lambda_3$.
   		On the right, $\rho'$ is an open ribbon connecting site  $s_0 \in \mathcal{S}_3$ to $s_1 \in \mathcal{S}_3$, while $\sigma'$ is a closed ribbon.}
   	\label{fig:ribbons}
   \end{figure}
   
   We now restrict to $G$ being a finite \emph{abelian} group.
   
   By abuse of notation, use $v$ and $f$ to also denote a vertex and face of the lattice $\ZZ^2$.
   A \emph{site} is a pair $s=(v,f)$ of a vertex $v \in \ZZ^2$ and neighboring face $f$. 
   Let $\mathcal{S}_L$ denote the set of all sites $s=(v,f)$ such that $v \in \ZZ^2 \cap [-L,L]^2$ and the corresponding face $f \in \mathcal{F}_L$.
   We say that a site $s=(v,f)$ is on the boundary of $\Lambda_L$ if $v \in \ZZ^2 \cap [-L,L]^2$ and 
   the corresponding face $f \in \mathcal{F}_{L+1} \setminus \mathcal{F}_{L}$.
   As we will see, excitations of the model are located at sites.
   A \emph{ribbon} $\rho$ is a sequence of adjacent sites connecting two sites $s_0$ and $s_1$. 
   We assume ribbons avoid self-crossing 
   and label $\partial_0\rho =  s_0$ as the start of the ribbon and $\partial_1 \rho = s_1$ as its end. 
   In particular, note that ribbons carry a direction (see~\cite{BombinMD} how this relates to the direction of the lattice).
   We also assume that ribbons have at least length two.
   A ribbon is said to be \emph{open} if $s_0 \neq s_1$ and \emph{closed} if $s_0 = s_1$, see Figure~\ref{fig:ribbons}.

   \begin{figure}
   	\begin{center}
   		\includegraphics[width=\textwidth]{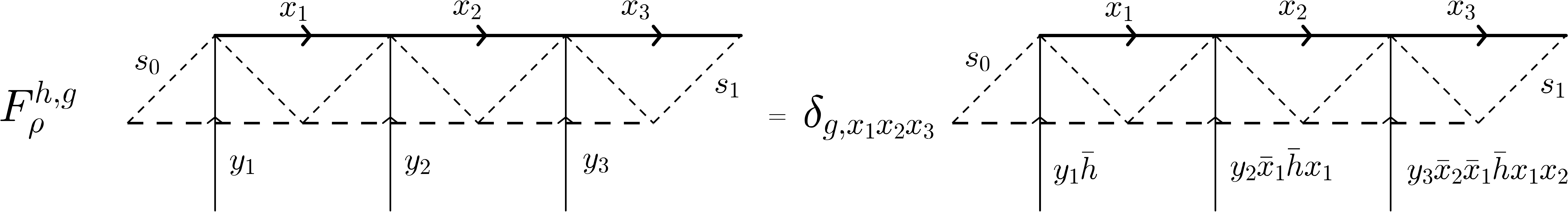}
   	\end{center}
   	\caption{Definition of the ribbon operator $F_\rho^{h,g}$.}
   	\label{fig:defnrib}
   \end{figure}
   
   For any ribbon $\rho$ and $g,h \in G$ the ribbon operator $F_\rho^{g,h}$ is defined as in Figure~\ref{fig:defnrib}.
   The ribbon operators can also be defined recursively as concatenations of elementary triangle operators, see \cite{BombinMD}. 
   If $\rho_0$ and $\rho_1$ are two ribbons such that $\partial_1 \rho_0 = \partial_0\rho_1$ 
   then for the ribbon $\rho = \rho_0\rho_1$,  where the product is defined by concatenation of ribbons,
   the ribbon operator satisfies $F_\rho^{h,g}  = \sum_{k\in G} F_{\rho_0}^{h,k} F_{\rho_1}^{\bar{k}  h k, \bar{k} g}$.
   
   The operators $F_\rho^{h,g}$ can be thought of as creating excitations at the ribbon endpoints.
   However, in general $F^{h,g}_\rho$ will yield a superposition of different excitation types, and it is more convenient to choose a different basis.
   Essentially, what one does is to decompose the space of excitations as invariant subspaces with respect to a \emph{local} action of the quantum double symmetry $\mathcal{D}(G)$ at each site.
   This symmetry is implemented by the star and plaquette operators $A^g$ and $B^h$~\cite{BombinMD,KitaevQD}.
   In this new basis the ribbon operators are labeled by pairs $(\chi, c) \in \widehat{G} \times G$, where $\widehat{G}$ is the group of characters, and we define
   \begin{equation}\label{eqn:ribbondef}
   F_\rho^{\chi,c} := \sum_{g \in G} \overline{\chi}(g) F_\rho^{\bar{c}, g}.
   \end{equation}
   If $\rho$ consists of a single edge, then the family of ribbon operators generate the local algebra on that edge.
   Similarly, for any finite subset $\Lambda$, the family of ribbon operators supported in $\Lambda$ generate the local observable algebra $\calA_\Lambda$.
   
   We list some properties of the ribbon operators that we will use frequently.
   These properties can be verified by direct computation, see~\cite{BombinMD,FiedlerN,KitaevQD}:
   \begin{enumerate}
   	\item For operators acting along the same ribbon:
   	\begin{equation}\label{eqn:ribprop1}
   	F^{\chi,c }_\rho F^{\sigma, d}_\rho = F_{\rho}^{\chi\sigma,cd} \qquad \text{and} \qquad (F^{\chi,c}_\rho)^* = F^{\bar{\chi}, \bar{c}}_\rho.
   	\end{equation}
   	\item If $\rho$ is an open ribbon connecting sites $\partial_0 \rho = (v_0, f_0)$ and $ \partial_1 \rho= (v_1,f_1)$ then for all $k \in G$ we have
   	\begin{align}
   	A_{v_0}^k F_\rho^{\chi,c} &= \chi(k) F_\rho^{\chi,c} A_{v_0}^k, 
   	& A_{v_1}^k F_\rho^{\chi,c}&= \bar{\chi}(k) F_\rho^{\chi,c} A_{v_1}^k, 
   	\label{eqn:ribstarrel} \\
   	B_{f_0}^k F_\rho^{\chi,c} &= F_\rho^{\chi,c} B_{f_0}^{k\bar{c}}, 
   	& B_{f_1}^k F_\rho^{\chi,c} &= F_\rho^{\chi,c} B_{f_1}^{ck}
   	.\label{eqn:ribplaqrel}	
   	\end{align}
   	In all other cases, the star and plaquette interaction terms commute with the open ribbon operators.
   	From equations \eqref{eqn:ribstarrel} and \eqref{eqn:ribplaqrel}, we can compute 
   	the commutation relation with the local Hamiltonian $H_L$ as
   	\begin{equation}\label{eqn:ribHamrel}
   	[H_L, F_{\rho}^{\chi,c}] = F_{\rho}^{\chi,c} \bigg(B_{f_0} - B_{f_0}^{\bar{c}}  + B_{f_1} - B_{f_1}^c 
   	+ \sum_{k \in G} \left( 1 - \chi(k) \right) A_{v_0}^k + \left(1 - \bar{ \chi}(k)\right) A_{v_1}^k \bigg).
   	\end{equation}
   	\item Let $\Omega \in \calG_L$ be a ground state and $\rho$ be an open ribbon.  We can compute the energy introduced by the ribbon operators from the relations \eqref{eqn:ribstarrel}, \eqref{eqn:ribplaqrel}, and \eqref{eqn:ribHamrel},
   	\begin{equation}\label{eqn:ribenergy}
   	H_L F_\rho^{\chi,c} \Omega = C_\rho ( 2- \delta_{\chi, \iota} - \delta_{c, e}  ) F_{\rho}^{\chi,c}  \Omega.
   	\end{equation}
   	where 
   	\[ C_\rho = 
   	\left\{ \begin{array}{ll}
   	2 & \mbox{ if } \partial_i \rho \in \mathcal{S}_L \text{ for }   i = 0,1 \\
   	1 & \mbox{ if } \partial_i \rho \in \mathcal{S}_L, \  \partial_{i+1}\rho \notin \mathcal{S}_L \\
   	0 & \mbox{ if } \partial_i \rho \notin \mathcal{S}_L \text{ for }   i=0,1.
   	\end{array}\right. \]
   	Thus, $\rho$ generates \emph{excitations} at its endpoints.
   	\item If $\rho$ is a closed ribbon then for all $\Omega \in \calG_{L}$
   	\begin{equation}\label{eqn:closedribbon}
   	F_\rho^{\chi,c} \Omega = \Omega \qquad \quad  \forall (\chi, c) \in \widehat{G}\times G
   	\end{equation}
   	and 
   	\begin{equation}\label{eqn:ribclosedcom}
   	[ F_\rho^{\chi,c}, A_v^g] = [ F_\rho^{\chi,c}, B_f^h] = 0 \qquad \quad \forall v\in \mathcal{V}_\mathcal{B}, f\in \mathcal{F}_\mathcal{B}.
   	\end{equation}
   	\item If $\rho$ is a concatenation of $\rho_0$ and $\rho_1$, that is, $\partial_1 \rho_0 = \partial_0 \rho_1$  and we write $\rho = \rho_0 \rho_1$ then the ribbon operators obey
   	\begin{equation}\label{eqn:ribbonconcatenation}
   	F_\rho^{\chi, c} = F_{\rho_0}^{\chi,c} F_{\rho_1}^{\chi,c}.
   	\end{equation}
   	The above operation
   	\item A complete set of eigenvectors of $H_L$ for $\calH_L$ is,
   	\begin{equation*}
   	\left\{ \prod_{i} F_{\rho_i}^{\chi_i, c_i} \Omega :  \Omega\in \calG_L, \forall \mbox{ list of triples } (\chi_i, c_i, \rho_i)  \mbox{ where } \chi_i\in \widehat{G}, c \in G, \rho_i \mbox{ is a ribbon } \right\}.
   	\end{equation*}
   	\item  When two ribbons intersect once (as in left Figure~\ref{fig:ribbons}),
   	\begin{equation}\label{eqn:ribbonrelation}
   	F_\rho^{\chi,c} F_\sigma^{\xi,d} = \chi(d)\overline{\xi}(c)  F_\sigma^{\xi,d}F_\rho^{\chi,c}.
   	\end{equation}
   	In the case of multiple crossings, one can induct on the formula above by decomposing the ribbons $\rho$ and $\sigma$ into sections where only one crossing occurs using the concatenation formula \eqref{eqn:ribbonconcatenation}.
   	\item  Ribbon operators satisfy \emph{path independence} in the ground state, that is, 
   	if $\rho$ and $ \sigma $ are ribbons such that $\partial_i \rho = \partial_i\sigma$ for $ i= 0,1$ then
   	\begin{equation}\label{eqn:ribbonpathind}
   	F_\rho^{\chi,c}\Omega  = F_\sigma^{\chi,c} \Omega \qquad \text{ for all } \qquad \Omega\in\calG_L.
   	\end{equation}
   \end{enumerate}
   
   From the properties of ribbon operators above, 
   we can see the elementary excitations formed at the endpoints of the ribbons are \emph{abelian anyons},
   that is, a full exchange of particles results in multiplication by a phase.
   Indeed, let $\Omega$ be a ground state of $H_\Lambda$.
   Consider open ribbons $\rho$ and $ \sigma$ and suppose $\rho'$ is a closed ribbon 
   connecting $ \partial_1\rho$ to itself such that $\partial_1 \sigma$ is in interior of the region enclosed by $\rho'$. 
   Let $\psi = F_\rho^{\chi,c} F_\sigma^{\xi,d} \Omega$ and consider the action of braiding determined by applying $F_{\rho'} $.
   From properties \eqref{eqn:ribbonrelation} and \eqref{eqn:ribbonpathind} we have
   \begin{align*}
   F_{\rho'}^{\chi,c} \psi & = 	F_{\rho'}^{\chi,c} F_\rho^{\chi,c} F_\sigma^{\xi,d} \Omega 
   =  \chi(d)\bar{\xi}(c) F_\rho^{\chi,c} F_\sigma^{\xi,d} F_{\rho'}^{chi,c}  \Omega\\
   & =  \chi(d)\bar{\xi}(c) \psi.
   \end{align*}

   \section{Topological order}
   
   The notion of topological order in quantum many body systems still does not have a adequate mathematical formulation.
   In this section, we review two notions of topological order that the quantum double models satisfy, namely, long-range entanglement and  the topological entanglement entropy.
   
   The following lemma and corollary  show that to distinguish between ground states, 
   observables need to be supported on regions at least the size of $\operatorname{diam}\Lambda -2$.
   In the language of stabilizer codes (see \cite{NielsenC}), this shows that the ground state of the quantum double model has a macroscopic code distance.
   In particular,  the Local Topological Quantum Order Assumption \ref{ass:LTQO} is satisfied.
   
   Let $c$ be a $G$-connection and define the projector $P_c = \ket{c}\bra{c}$.
   Recall that $\ket{c}$ forms an orthonormal basis of $\calH_{L}$.
   Let $\Delta_h$ be a gauge transformation and recall that in the case of periodic boundary conditions
   $\Delta_h = \prod_{v} A_v^{h(v)}$ and $(\Delta_h)^{-1} = \Delta_{h^{-1}}$.
   It follows that the operators $\Delta_h P_c$ span the observable algebra $\calA_{ L}$.

   \begin{lemma}(Lemma 12.1.2, \cite{NaaijkensD})
   	For all $l > 2$ and for all $\Omega \in \mathcal{G}^{per}_{\Lambda_{L+l}}$ we have that 
   	\begin{equation*}
   	\langle \Omega, \Delta_h P_c \Omega \rangle = \left\{ \begin{array}{ll}
   	1/\dim \ker H^{per}_L & \mbox{ if } c \text{ is a flat connection} \\
   	0 & \mbox{ otherwise }
   	\end{array}\right.
   	\end{equation*}
   \end{lemma}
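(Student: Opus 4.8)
The plan is to split the two cases in the statement and, in each, to reduce the expectation in the large torus to the finitely many conditions $A_v\Omega=\Omega$, $B_f\Omega=\Omega$ satisfied by any ground state $\Omega$ of $H^{per}_{\Lambda_{L+l}}$. The hypothesis $l>2$ enters only to guarantee that every star $v$ and every plaquette $f$ lying in $\Lambda_L$ is an honest star, respectively plaquette, of the larger torus $\Lambda_{L+l}^{per}$, so that $A_v\Omega=\Omega$ and $B_f\Omega=\Omega$ hold for all such $v,f$. Two elementary facts will be used repeatedly: (i) $A_v\Omega=\Omega$ in fact forces $A_v^g\Omega=\Omega$ for every $g\in G$, since the $A_v^g$ are unitaries averaging to the projection $A_v$ and equality in the triangle inequality pins them down; (ii) any gauge transformation $\Delta_h$ commutes with every plaquette operator $B_f$, because gauge transformations preserve flatness by Lemma~\ref{lem:gauge} (equivalently, $A_v^gB_f^h=B_f^hA_v^g$ when $G$ is abelian).

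If $c$ is \emph{not} flat, choose a plaquette $f\in\mathcal{F}_{\Lambda_L}$ around which the monodromy of $c$ is non-trivial. Then $B_f\ket{c}=0$, hence $B_fP_c=P_cB_f=0$; combining this with $B_f\Omega=\Omega$, the self-adjointness of $B_f$, and $[\Delta_h,B_f]=0$ gives
\[
\langle\Omega,\Delta_hP_c\Omega\rangle=\langle\Omega,B_f\Delta_hP_cB_f\Omega\rangle=\langle\Omega,\Delta_h\,(B_fP_cB_f)\,\Omega\rangle=0 .
\]

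Now suppose $c$ is flat. The first step is to show that $\langle\Omega,\Delta_hP_c\Omega\rangle$ does not depend on the unit vector $\Omega\in\mathcal{G}^{per}_{\Lambda_{L+l}}$; equivalently, that $P_{\mathcal{G}}\,\Delta_hP_c\,P_{\mathcal{G}}$ is a scalar multiple of the spectral projection $P_{\mathcal{G}}$ onto $\mathcal{G}^{per}_{\Lambda_{L+l}}$. Because $l>2$, $\Lambda_L$ is contained in a topological disk of the torus winding around neither homology generator, so the complement $\Lambda_L^c$ (within $\Lambda_{L+l}^{per}$) carries representatives of both non-contractible cycles. The ribbon operators along those cycles are unitaries in $\calA_{\Lambda_L^c}$ whose span, applied to any ground state, exhausts $\mathcal{G}^{per}_{\Lambda_{L+l}}$; since they commute with $\Delta_hP_c\in\calA_{\Lambda_L}$, writing an arbitrary ground state through such operators applied to a fixed $\Omega_0$ collapses every matrix element of $\Delta_hP_c$ on $\mathcal{G}^{per}_{\Lambda_{L+l}}$ to a single number. (Equivalently, the reduced density matrix $\rho_{\Lambda_L}$ of a ground state of this stabilizer-type model is the maximally mixed state on the subspace of $\calH_L$ cut out by the stabilizers $A_v,B_f$ supported in $\Lambda_L$, hence is $\Omega$-independent; this is the formulation I would actually use.)

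The second step is to compute this common value. Writing $\Delta_hP_c=\ket{\Delta_h c}\bra{c}$ and using $\rho_{\Lambda_L}$, one has $\langle\Omega,\Delta_hP_c\Omega\rangle=\langle c\,|\,\rho_{\Lambda_L}\,|\,\Delta_h c\rangle$; as $c$ and $\Delta_h c$ are flat, the plaquette factors of $\rho_{\Lambda_L}$ fix $\ket{c}$ and $\ket{\Delta_h c}$, leaving an average of star operators whose matrix element between two flat connections counts the gauge transformations taking one to the other. Performing this finite count and combining it with the flat-connection description of the ground space of Section~\ref{sec:qdoub} --- in particular the identification of $\dim\ker H^{per}_\Lambda$ with the number of gauge equivalence classes of flat $G$-connections established there --- yields the normalizing denominator $\dim\ker H^{per}_L$. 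I expect this bookkeeping to be the only real obstacle: one must handle with care the stars of $\Lambda_L$ that lie on its boundary, which appear inside $\Delta_h$ but for which no constraint $A_v\Omega=\Omega$ is available, and verify that the resulting count of admissible local configurations is precisely $\dim\ker H^{per}_L$ rather than some perimeter-dependent quantity.
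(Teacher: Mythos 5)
Your treatment of the non-flat case coincides with the paper's argument and is fine. In the flat case, however, there is a genuine gap: the actual value of the constant, which is the whole content of the lemma, is never produced. You reduce the problem to the matrix element $\bra{c}\rho_{\Lambda_L}\ket{\Delta_h c}$ of a reduced density matrix and then defer the ``finite count'' of gauge transformations, explicitly flagging the boundary stars of $\Lambda_L$ as an unresolved obstacle --- but that count is precisely where the lemma lives, and nothing in your plan explains why it yields the box-independent normalization $1/\dim\ker H^{per}_L$ rather than a perimeter- or area-dependent quantity. In addition, your Step 1 rests on two unproven inputs: that $\rho_{\Lambda_L}$ is the maximally mixed state on the subspace cut out by the stabilizers supported in $\Lambda_L$, and that it is the same for every $\Omega \in \mathcal{G}^{per}_{\Lambda_{L+l}}$. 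Your ribbon-operator sketch only controls matrix elements between ground states related by a single loop unitary, not the off-diagonal terms arising from a general superposition, and the $\Omega$-independence of local expectations is essentially a statement of the same strength as the lemma itself, so taking it as input is close to circular.

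The paper's proof shows how to avoid all of this by working with an arbitrary $\Omega$ directly, using only the relations $A_v^g\Omega=\Omega$ and $B_f\Omega=\Omega$ (valid for every star and plaquette of $\Lambda_L$ because $l>2$, the point you correctly identified). First, any two flat connections $c_1,c_2$ on the box are related by a gauge transformation $\Delta_h=\prod_v A_v^{h(v)}$, and invariance of $\Omega$ under each $A_v^{h(v)}$ gives $\langle\Omega,\Delta_h P_{c_1}\Omega\rangle=\langle\Omega,P_{c_1}\Omega\rangle=\langle\Omega,P_{c_2}\Omega\rangle$, so all flat connections (with or without the $\Delta_h$ insertion) contribute one and the same number. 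Second, the completeness identity $\sum_{c\ \mathrm{flat}}P_c=\prod_f B_f$, together with $B_f\Omega=\Omega$, shows these equal contributions sum to $1$, which pins the common value down to the normalization asserted in the statement. It is this second ingredient --- the sum rule replacing your deferred stabilizer count --- that is missing from your proposal; with it, the reduced-density-matrix machinery of your Step 1 becomes unnecessary.
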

   
   \begin{proof}
   	Let $c$ be a $G$-connection.  
   	If $c$ is not flat then there is some face $f$ for which the monodromy is not trivial.
   	It follows that $B_f P_c = 0$.
   	Since $\Omega$ satisfies $B_f \Omega = \Omega$ we have that
   	\begin{equation*}
   	\langle \Omega, \Delta_h P_c \Omega\rangle = \langle \Omega, \Delta_h P_c B_f \Omega\rangle  = 0.
   	\end{equation*}
   	
   	Let $c_1 $ and $c_2$ be flat connections.
   	It follows that there is a gauge transformation $\Delta_h = \prod_{v \in \mathcal{V}_\Lambda} A_v^{h(v)}$ such that $\Delta_h P_{c_1} = P_{c_2}$.
   	Notice that $A_{v}^{g'} A_v = A_v$ so that $ A_{v}^{g'} \Omega = \Omega$.
   	Thus,
   	\begin{align*}
   	\langle \Omega,  P_{c_2} \Omega\rangle &= \langle \Omega, \Delta_h P_{c_1}\Omega\rangle
   	= \langle \Omega,  \prod_{v \in \mathcal{V}_\Lambda}  A_v^{h(v)} P_{c_1}\Omega\rangle
   	= \langle \Omega,  P_{c_1} \Omega\rangle.
   	\end{align*}

   	Consider the projectors $ \prod_{f \in \mathcal{F}_\Lambda} B_f $ and $ \sum_{c: \text{ flat }} P_c$.
   	A direct calculation shows that $ P_c \leq \prod_{f \in \mathcal{F}^{per}_{L}} B_f $ for all $c$ flat
   	and $ B_f \leq  \sum_{c: \text{ flat }} P_c$ for all $f$.
   	Therefore,
   	\begin{equation*}
   	\prod_{f \in \mathcal{F}^{per}_{L}} B_f = \sum_{c: \text{ flat }} P_c.
   	\end{equation*}
   	The number of flat $G$-connections is counted by the $ \dim \ker H^{per}_L$.
   	
   	Therefore, $\omega(P_c) = 1/\dim \ker H^{per}_L$.
   \end{proof}

   \begin{cor}\label{cor:qdLTQO}
   	The Kitaev's quantum double models satisfy the Local Topological Quantum Order Assumption \ref{ass:LTQO} for all $q >6$ and $ \alpha \in (0,1)$.
   	Furthermore, the right hand side of \eqref{eqn:ltqo} in Assumption \ref{ass:LTQO} can be replaced with $\|A\| h(l)$ where $h$ is the step function
   	$h(l) = 2$ if $l \leq 2$ and $ h(l) =0$ if $x >2$.
   \end{cor}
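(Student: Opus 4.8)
The plan is to first rewrite the norm on the left-hand side of~\eqref{eqn:ltqo}. For self-adjoint $A\in\calA_{B_x(r)}$ the identity~\eqref{eqn:LTQOexpectation} gives
\[
\| P_{B_x(r+l)} A P_{B_x(r+l)} - \omega_{B_x(r+l)}(A) P_{B_x(r+l)}\|
= \sup_{\substack{\|\Omega\|=1\\ \Omega\in\calG_{B_x(r+l)}}} \bigl| \langle \Omega, A\Omega\rangle - \omega_{B_x(r+l)}(A)\bigr|,
\]
so the whole statement becomes a bound on how far the expectation of $A$ in an arbitrary ground-state vector of the enlarged box can be from the tracial ground-state value. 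When $l\le 2$ there is nothing to do beyond the trivial estimates $|\langle\Omega,A\Omega\rangle|\le\|A\|$ and $|\omega_{B_x(r+l)}(A)|\le\|A\|$, which already give the right-hand side $2\|A\|=\|A\|h(l)$.

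The content is the case $l>2$, where I claim the supremum is identically zero. This is precisely the preceding Lemma (Lemma~12.1.2 of~\cite{NaaijkensD}): the operators $\Delta_h P_c$ span $\calA_{B_x(r)}$, and the Lemma evaluates $\langle\Omega,\Delta_h P_c\Omega\rangle$ to a value that does not depend on the ground-state vector $\Omega$ (zero unless $c$ is flat, in which case it is a fixed constant), because a non-flat $c$ forces $B_fP_c=0$ for a plaquette $f$ with $B_f\Omega=\Omega$, while on flat connections the gauge transformation $\Delta_h=\prod_v A_v^{h(v)}$ acts trivially on $\calG_{B_x(r+l)}$ and $\prod_f B_f$ is the sum of the projectors onto flat connections. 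Hence the linear functional $A\mapsto\langle\Omega,A\Omega\rangle$ on $\calA_{B_x(r)}$ is independent of the unit vector $\Omega\in\calG_{B_x(r+l)}$, so it equals its own average over a ground-state basis, namely $\omega_{B_x(r+l)}$, and the left-hand side of~\eqref{eqn:ltqo} vanishes. The mechanism behind this is that two ground states of the larger box differ by ribbon operators whose ribbons can, by path independence~\eqref{eqn:ribbonpathind}, be routed through the annulus $B_x(r+l)\setminus B_x(r)$ of width $l>2$, where they commute with $A$; I would mention this picture but carry out the computation through the Lemma. Combining the two cases bounds the left side of~\eqref{eqn:ltqo} by $\|A\|h(l)$ for every admissible $l$, which is the refined statement of the corollary.

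To recover the stated polynomial form, note that $\nu=2$, so the admissible exponents in Assumption~\ref{ass:LTQO} are $q>2(\nu+1)=6$; for any such $q$ one has $h(l)\le 2^{\,q+1}l^{-q}$ for all integers $l\ge1$ (the inequality is binding only at $l=1,2$ and is vacuous for $l>2$, where $h(l)=0$), so~\eqref{eqn:ltqo} holds with $C=2^{\,q+1}$. The relation $r\le L^\alpha<L$ is never used, so the estimate is valid for every $\alpha\in(0,1)$. The one point that requires care — the main obstacle — is that Lemma~12.1.2 is phrased for squares $\Lambda_L\subset\Lambda_{L+l}$ with \emph{periodic} boundary conditions on the outer square, whereas~\eqref{eqn:ltqo} involves the $\ell^\infty$-balls $B_x(r)\subset B_x(r+l)$ with the open box Hamiltonian on the outer ball; I would check that the two ingredients of the Lemma survive this change of geometry, namely that every plaquette and every star of $B_x(r)$ is interior to $B_x(r+l)$ (so that $B_f\Omega=\Omega$ and $A_v\Omega=\Omega$ there) and that a gauge transformation relating two flat connections on $B_x(r)$ is implemented by star operators interior to $B_x(r+l)$ — both of which hold because $B_x(r)$ sits at least two lattice steps inside $B_x(r+l)$, which is exactly the hypothesis $l>2$.
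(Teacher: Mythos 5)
Your argument is exactly the one the paper intends (and leaves essentially implicit): the rewriting \eqref{eqn:LTQOexpectation}, the trivial bound $2\|A\|$ for $l\le 2$, the preceding Lemma giving ground-state-independence of expectations of the spanning operators $\Delta_h P_c$ for $l>2$, and the domination $h(l)\le 2^{q+1}l^{-q}$ together with the observation that the constraint involving $\alpha$ is never used. Your closing caveat about transferring the Lemma from periodic outer boundary conditions on squares to the open boxes $B_x(r)\subset B_x(r+l)$ is the one point the paper glosses over, and the check you propose (all stars and plaquettes meeting $B_x(r)$ are interior to $B_x(r+l)$ so they are enforced on $\calG_{B_x(r+l)}$, and gauge equivalences of flat connections on the simply connected inner region are implemented by such interior star operators) is precisely what is needed, so the proposal is correct and follows the paper's route.
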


   The area law for entanglement entropy is expected to be satisfied generally in ground states of gapped Hamiltonians. 
   For topologically ordered states in the thermodynamic limit, there exists a universal constant correction term to the area law, $S(\rho) \simeq C \abs{\partial \Lambda} - \gamma$, where $\gamma$ is called the topological entanglement entropy (TEE) and $\Lambda$ is a disk shaped region  \cite{KitaevP}.
   The TEE is related to the number of distinct superselection sectors \cite{NaaijkensKL}
   and for the abelian quantum double models this relation is precise.
   If $\mathcal{D}$ is the total quantum dimension, which in the case of abelian anyons $\mathcal{D}$ counts the number of superselection sectors, then  $\gamma =  \log \mathcal{D}$.
   
   Suppose $G =\ZZ_d$ be the finite cyclic group of order $d$.
   Let $\Lambda = [-L,L] \cap \calB$ be a rectangular region in the plane and let $ \calG_{\Lambda}^{per}$ denote the ground state space of $H_\Lambda^{per}$.

   \begin{lemma}\label{lem:qdTEE}(\cite{HammaIZ, BachmannTEE})
   	Let $X \subset \Lambda$ be a simple region.
   	Then,
   	\begin{equation*} 
   	S(X) = (\abs{\partial{X}} - 1) \log d.
   	\end{equation*}
   \end{lemma}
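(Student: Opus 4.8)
The plan is to use that the ground space $\calG_\Lambda^{per}$ is a qudit stabilizer code space, to reduce the entropy computation to its maximally mixed state, and then to read off $S(X)$ by counting the stabilizer elements supported inside $X$.

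First I would identify the stabilizer group. Using that $G=\ZZ_d$ is abelian (so that $A_v^g$ commutes with $B_f^h$) and that $A_v,B_f$ are the projections onto gauge-invariant, respectively flat, configurations, $\calG_\Lambda^{per}$ is the joint $+1$-eigenspace of the abelian group $\mathcal{S}$ generated by the unitaries $A_v^g$ ($v$ a star, $g\in\ZZ_d$) and $\widehat{B}_f:=\sum_{h\in\ZZ_d}\zeta^h B_f^h$ ($f$ a plaquette, $\zeta=e^{2\pi i/d}$); these are generalized Pauli operators on the edge spaces $\CC^d$, of shift type for stars and clock type for plaquettes, and the only relations among the generators are $\prod_v A_v^g=I$ and $\prod_f\widehat{B}_f=I$. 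Hence $\abs{\mathcal{S}}=d^{\,\abs{\mathcal{V}_L^{per}}+\abs{\mathcal{F}_L^{per}}-2}$, $\dim\calG_\Lambda^{per}=d^2$, and $P:=\abs{\mathcal{S}}^{-1}\sum_{S\in\mathcal{S}}S$ is the projection onto $\calG_\Lambda^{per}$. Since the ground states are locally indistinguishable on simple regions (the lemma preceding Corollary~\ref{cor:qdLTQO} shows $P_\Lambda A P_\Lambda=\omega(A)P_\Lambda$ for $A\in\calA_X$; equivalently two ground states differ by closed ribbon operators along noncontractible cycles, which by path independence~\eqref{eqn:ribbonpathind} may be displaced off the disk $X$), the reduced density matrix $\rho_X$ of any pure ground state coincides with that of the maximally mixed ground state $d^{-2}P$, so it is this state I would restrict.

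Computing, $\rho_X=\Tr_{X^c}(d^{-2}P)=d^{-\abs X}\sum_{S\in\mathcal{S}_X}S|_X$, where $\mathcal{S}_X:=\{S\in\mathcal{S}:S|_{X^c}=I\}$, because $\Tr_{X^c}(S)=0$ unless $S$ acts trivially off $X$ (a short bookkeeping gives the stated prefactor). The right-hand side is $\abs{\mathcal{S}_X}\,d^{-\abs X}$ times the orthogonal projection onto the joint $+1$-eigenspace of $\mathcal{S}_X$ in $\calH_X$, whose dimension is $d^{\abs X}/\abs{\mathcal{S}_X}$; hence $\rho_X$ is maximally mixed on that subspace and $S(X)=\abs X\log d-\log\abs{\mathcal{S}_X}$. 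To count $\mathcal{S}_X$: the star generators act on shift-type and the plaquette generators on clock-type degrees of freedom of each edge, so $\mathcal{S}_X=\mathcal{S}_X^{\mathrm{A}}\times\mathcal{S}_X^{\mathrm{B}}$. A star word $\prod_v(A_v^1)^{a_v}$, $a\colon V\to\ZZ_d$, acts on the edge $(v_1v_2)$ as the $(a_{v_1}-a_{v_2})$-th power of the shift, so it lies in $\mathcal{S}_X^{\mathrm{A}}$ iff $a$ takes equal values at the endpoints of every edge not in $X$; since $X$ is a simple region, the vertices incident to some edge outside $X$ form one connected set, so $a$ is forced to a single constant there, leaving — modulo $\prod_v A_v^1=I$ — only the values of $a$ on the vertices all four of whose edges lie in $X$ as free parameters. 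Thus $\abs{\mathcal{S}_X^{\mathrm{A}}}=d^{v(X)}$, with $v(X)$ the number of these interior vertices, and dually $\abs{\mathcal{S}_X^{\mathrm{B}}}=d^{f(X)}$, with $f(X)$ the number of plaquettes all four of whose edges lie in $X$. Euler's formula for the disk $X$ (with cells the $V(X)$ vertices incident to $X$, the edges of $X$, and the interior plaquettes of $X$) gives $V(X)-\abs X+f(X)=1$, and $\abs{\partial X}=V(X)-v(X)$ by the definition of the boundary; combining these, $\abs X-v(X)-f(X)=\abs{\partial X}-1$, so $S(X)=(\abs{\partial X}-1)\log d$.

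The step I expect to be the main obstacle is the combinatorics of that last paragraph: fixing the precise meaning of ``simple region'' and of $\partial X$; justifying that the off-$X$ vertices (and dually the off-$X$ plaquettes) form a single connected cluster so that the constraints collapse to one constant, while checking this is not spoiled when $X$ approaches the scale of the torus; verifying that the only relations among the star, respectively plaquette, generators are the two global ones; and establishing Euler's identity for the possibly non-convex disk $X$. The passage to the maximally mixed ground state, the ground-state independence, and the stabilizer entropy formula are all routine once these combinatorial points are pinned down.
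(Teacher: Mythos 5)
The paper does not actually prove this lemma: it is stated with citations to \cite{HammaIZ,BachmannTEE} and no argument is given, so there is no internal proof to compare against. Judged on its own, your proposal is correct and is essentially the stabilizer-group counting argument of the cited references, with one pleasant variation: \cite{HammaIZ} work directly with a pure ground state and the group-theoretic entropy formula, whereas you route through the maximally mixed code state $d^{-2}P$ and exact local indistinguishability on contractible regions (the closed-ribbon/path-independence argument behind Corollary~\ref{cor:qdLTQO} and \eqref{eqn:ribbonpathind}), which makes it manifest that the answer does not depend on which ground state in the $d^2$-dimensional space is chosen. The algebraic core is sound: every element of $\mathcal{S}$, written as its star part times its plaquette part (each separately well defined as an operator, since the only relations are the two global products), is a tensor product over edges of operators of the form $X^aZ^b$ with no extra scalars; this is exactly what makes the partial-trace selection rule ($\operatorname{Tr}_{X^c}S=0$ unless $S|_{X^c}=I$), the prefactor $d^{-\abs{X}}$, the identity $S(X)=\abs{X}\log d-\log\abs{\mathcal{S}_X}$, and the factorization $\mathcal{S}_X=\mathcal{S}_X^{\mathrm{A}}\times\mathcal{S}_X^{\mathrm{B}}$ all go through, and simple sanity checks (a single edge gives $\log d$, a single plaquette gives $3\log d$) confirm the bookkeeping. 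The only genuinely load-bearing issues are the combinatorial ones you flag yourself: since the paper never defines ``simple region'' or $\abs{\partial X}$, you must fix these so that (i) the vertices, respectively faces, meeting an edge of $X^c$ form a single connected cluster under edges of $X^c$ (true for a contractible region small enough relative to the torus, and presumably what ``simple'' is meant to ensure), (ii) the subcomplex of $X$-edges, incident vertices, and interior faces is contractible so that $V(X)-\abs{X}+f(X)=1$, and (iii) $\abs{\partial X}$ is read as the number of non-interior incident vertices $V(X)-v(X)$; with those conventions your final identity $\abs{X}-v(X)-f(X)=\abs{\partial X}-1$ and hence the lemma follow. With these definitional points pinned down, your proof is a complete and correct substitute for the cited external arguments.
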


   \section{Infinite system}
   
   In the thermodynamic limit, it is expected that single excitation states will exist. 
   To construct a single excitation state, consider a procedure where an excitation pair is created from the ground state by application of a ribbon operator.
   By extending one endpoint of the open ribbon operator to infinity, one of the pair excitations is effectively moved off to infinity.
   Naaijkens \cite{Naaijkens11} showed how to make the above procedure rigorous in the operator algebraic framework of infinite quantum spin systems. 
   In particular, the single excitation states with different corresponding charge types belong to distinct superselection sectors.
   The superselection structure corresponding to all single excitation states can be classified 
   as the complete set of localized endomorphisms for the model \cite{NaaijkensKL}.
   Analysis of the superselection structure recovers a faithful and full functor to the modular tensor category $\operatorname{Rep}(\mathcal{D}(G))$ \cite{Naaijkens11}.
   In the following section, we discuss the aforementioned results.
   
   \subsection{Frustation-free ground state}
   
   Let $\Omega_L\in \calG_L$ be a sequence of finite volume ground states of $H_L$.
   A frustration-free ground state of the quantum double model $\omega^0$
   can be constructed as follows.
   Consider a family of states $\{\omega_L\}_{L=2}^\infty$ as $L\ra \infty$, 
   where $ \omega_L$ is an arbitrary extension
   of the state $\langle \Omega_L , \cdot \ \Omega_L  \rangle$ to the quasi-local algebra $\calA$ (in particular, we could choose a product state).
   By arguments similar to \ref{lem:LTQOff}, the sequence $\omega_L(A)$ is eventually constant for any local observable $A$.
   Thus, the limit $\omega_0 \equiv \lim_{L \ra \infty } \omega_L$ exists.
   For any $v$ and $f$, choose $L$ large enough such that $v \in \mathcal{V}_{\Lambda_L}$ and $f \in \mathcal{F}_{\Lambda_L}$.
   Since $\omega_L$ is a ground state for the finite model it follows that $\omega^0(I - A_v) = \omega_L(I - A_v) = 0$ and $\omega^0(I - B_f) = \omega_L(I - B_f) = 0$.
   This is exactly the  frustration-free or zero-energy property, Definition \ref{defn:ff}.

   \begin{prop}\label{prop:qdffgs}(\cite{AlickiFH,Naaijkens11,FiedlerN})
   	Let $\omega^0$ be the frustration-free ground state of the quantum double model obtained as above.
   	Then,
   	\begin{enumerate}
   		\item if $\omega$ is a frustration-free ground state then $\omega = \omega^0$,
   		\item $\omega^0$ is a pure state,
   		\item Let $(\pi_0, \Omega_0, \calH_0)$ be a GNS-representation for $\omega^0$ and $H_0$ be the GNS Hamiltonian.
   		Then, $spec(H_0) = 2 \ZZ^{\geq 0}$ with a simple ground state eigenvector $\Omega_0$.
   	\end{enumerate}
   \end{prop}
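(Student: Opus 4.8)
The plan is to handle the three items in turn, relying on the LTQO estimate (Corollary~\ref{cor:qdLTQO}), the strong‑resolvent‑convergence lemma (Lemma~\ref{lem:SRC}), and the ribbon‑operator identities \eqref{eqn:ribprop1}--\eqref{eqn:ribbonpathind}. For (1) I would in fact prove the slightly stronger assertion that \emph{any} state $\omega$ with $\omega(I-A_v)=\omega(I-B_f)=0$ for all stars $v$ and plaquettes $f$ equals $\omega^0$; since a frustration‑free ground state satisfies these equalities, (1) follows. The argument is the one used for Lemma~\ref{lem:LTQOff}: positivity of the projectors $I-A_v$, $I-B_f$ forces the density matrix of $\omega$ restricted to any box $B_x(r+l)$ to be supported on the ground‑state space $\calG_{B_x(r+l)}$, so writing it as $\sum_k c_k\ket{\Omega_k}\bra{\Omega_k}$ and applying Corollary~\ref{cor:qdLTQO} (whose step‑function bound is \emph{exact} once $l>2$) gives, for self‑adjoint $A$ supported in $B_x(r)$, $\omega(A)=\sum_k c_k\langle\Omega_k,A\Omega_k\rangle=\omega_{B_x(r+l)}(A)$, a value independent of $\omega$. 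The same identity holds for $\omega^0$, so $\omega=\omega^0$ on $\calA_{loc}$ and hence everywhere.

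Item (2) is then immediate: if $\omega^0=\lambda\omega_1+(1-\lambda)\omega_2$ with $0<\lambda<1$, then $0=\omega^0(I-A_v)=\lambda\omega_1(I-A_v)+(1-\lambda)\omega_2(I-A_v)$ with both terms nonnegative, so $\omega_i(I-A_v)=0$ and likewise $\omega_i(I-B_f)=0$; by the strengthened form of (1), $\omega_1=\omega_2=\omega^0$, so $\omega^0$ is an extreme point of the state space, i.e.\ pure.

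For (3), first note that $H_L\geq 0$ and $\omega^0(H_L)=\sum_v\omega^0(I-A_v)+\sum_f\omega^0(I-B_f)=0$, whence $\|\pi_0(H_L)^{1/2}\Omega_0\|^2=0$ and so $\pi_0(H_L)\Omega_0=0$ for every $L$. Lemma~\ref{lem:SRC} then gives $\pi_0(H_L)\to H_0$ in the strong resolvent sense and $\operatorname{spec}(H_0)\subseteq\overline{\bigcup_L\operatorname{spec}(\pi_0(H_L))}$; since each $H_L$ is a finite sum of commuting projectors this already yields $\operatorname{spec}(H_0)\subseteq\ZZ^{\geq 0}$. To sharpen it I would use the ribbon basis. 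For $A=\prod_i F_{\rho_i}^{\chi_i,c_i}\in\calA_{loc}$ and all $L$ large enough that the ribbons lie in the bulk, $\pi_0(A)\Omega_0$ is an eigenvector of $\pi_0(H_L)$ with an $L$‑independent eigenvalue $N(A)$ (this is the fact that such products applied to $\Omega_0$ form a complete eigenbasis of $H_L$, together with $\pi_0(H_L)\Omega_0=0$), so $H_0\pi_0(A)\Omega_0=\pi_0(\delta(A))\Omega_0=\lim_L\pi_0(H_L)\pi_0(A)\Omega_0=N(A)\pi_0(A)\Omega_0$. Since ribbon operators generate $\calA_{loc}$, these vectors are dense in $\calH_0$, so $\operatorname{spec}(H_0)=\overline{\{N(A)\}}$; one then argues $N(A)\in 2\ZZ^{\geq 0}$ — intuitively because by \eqref{eqn:ribenergy} every bulk ribbon inserts a particle–antiparticle pair — and, conversely, $n$ pairwise disjoint bulk ribbons with a nontrivial label give an eigenvector of eigenvalue $2n$, nonzero since the $F_\rho^{\chi,c}$ are unitary, so $\operatorname{spec}(H_0)=2\ZZ^{\geq 0}$. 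Finally, for simplicity of $\Omega_0$: the spectral projection $P_0$ of $H_0$ onto $\{0\}$ kills every basis vector $\pi_0(A)\Omega_0$ with $N(A)>0$, whereas a basis vector with $N(A)=0$ is a product of ribbon operators creating no excitations, which by path independence \eqref{eqn:ribbonpathind}, triviality of closed ribbons \eqref{eqn:closedribbon}, and vanishing of $H_1$ of the plane is a scalar multiple of $\Omega_0$; as $P_0$ is bounded and the basis vectors are dense, $P_0\calH_0=\CC\Omega_0$, i.e.\ $\dim\ker H_0=1$.

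The routine parts are the LTQO argument in (1)–(2) and the abstract consequences of strong resolvent convergence. The hard part will be the energy bookkeeping in (3): showing that $N(A)$ is well defined and lies in $2\ZZ^{\geq 0}$ for \emph{arbitrary} products of ribbon operators — where endpoint charges may fuse, so the naive sum of the single‑ribbon energies \eqref{eqn:ribenergy} is reduced — and, dually, that the only zero‑energy configurations reduce to $\Omega_0$. This is the step that genuinely uses the planar topology together with the abelian structure of $\mathcal{D}(G)$, and it is where I would expect to spend the bulk of the effort.
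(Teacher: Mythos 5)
Your treatment of items (1) and (2) is correct and follows the route the paper itself indicates: uniqueness of the zero-energy state via the LTQO bound of Corollary~\ref{cor:qdLTQO} (i.e.\ the argument of Lemma~\ref{lem:LTQOff}), and purity by feeding an arbitrary convex decomposition back into the strengthened uniqueness statement. Likewise, invoking Lemma~\ref{lem:SRC} to obtain strong resolvent convergence and the spectral inclusion is exactly what the paper does for item (3).

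The gap is in the energy bookkeeping you yourself flag as the hard step of (3): the claim that $N(A)\in 2\ZZ^{\geq 0}$ for arbitrary products of ribbon operators is false once $\abs{G}\geq 3$. Take $G=\ZZ_3$, $\chi$ a nontrivial character, and two ribbons $\rho_1,\rho_2$ ending at a common site $s_3=(v_3,f_3)$ with their other endpoints distinct. The vector $F_{\rho_1}^{\chi,e}F_{\rho_2}^{\chi,e}\Omega_0$ is nonzero (abelian ribbon operators are unitary) and, by \eqref{eqn:ribstarrel}, the star term at $v_3$ sees the fused charge $\chi^2\neq\iota$, so exactly three star terms are violated and $N(A)=3$. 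What charge conservation (Lemma~\ref{lem:ribchargeinv}, Lemma~\ref{lem:globprojboundaryop}) actually gives is only that the number of violated star terms, and separately of violated plaquette terms, is never exactly one; evenness of each count holds only for $G=\ZZ_2$, where the unique nontrivial charge is its own conjugate. Consequently the set of eigenvalues is $2\ZZ^{\geq 0}$ only in the toric code case, and $\{0\}\cup\{2,3,4,\dots\}$ for $\abs{G}\geq 3$ --- note the paper restates this same proposition in Chapter 4 with $\ZZ^{\geq 0}$ in place of $2\ZZ^{\geq 0}$, so the clause you are trying to prove is not correct as written for general abelian $G$, and your plan cannot close this step. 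The remainder of your item (3) --- $\pi_0(H_L)\Omega_0=0$, the eigenbasis $\pi_0(A)\Omega_0$ with $A$ a product of ribbon operators, and simplicity of $\Omega_0$ via path independence and triviality of closed ribbons --- is sound.
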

   
   The existence and uniqueness of the frustration-free ground state is a general property for frustration-free Hamiltonians
   satisfying a local topological quantum order condition, Lemma \ref{lem:LTQOff}.
   The last property follows from an application of strong resolvent convergence, Lemma \ref{lem:SRC}, 
   and shows that the frustration-free ground state is a gapped ground state.
   
   \subsection{Single excitation states}
   
   We now come to states in the thermodynamic limit that describe a single excitation.
   Such states may be constructed on the quasi-local algebra by moving one of the excitations in a pair off to infinity.
   Let $\rho$ be a ribbon extending to infinity such that $\partial_0\rho = s$ and $\partial_1 \rho = \infty$, where $\partial_1 \rho = \infty$ means that the ribbon goes to infinity in any direction.
   We assume that it does not ``loop back'', in the sense that if $\rho_n$ is the ribbon consisting of the first $n$ parts of $\rho$, then for any fixed point in the lattice, the distance to the endpoint of $\rho_n$ that is not fixed goes to infinity as $n$ goes to infinity.
   
   We denote $\rho_L = \rho \cap \Lambda_L$.
   Define the state $\omega^{\chi,c}_s$  on $\calA_{loc}$, and its unique continuous extension to $\calA$, by
   \begin{equation}\label{eqn:singleexcitation}
   \omega_s^{\chi,c} (A) := \lim_{L\ra \infty} \langle F_{\rho_L}^{\chi,c } \Omega_L, A   F_{\rho_L}^{\chi,c } \Omega_L\rangle.
   \end{equation}
   
   The limit converges because the sequence is eventually constant for fixed local $A$.
   That is, by concatenation \eqref{eqn:ribbonconcatenation} and unitarity in the ribbon operators, there exists $L>0$ such that for all $L' >L$ we have 
   $(F_{\rho_{L'}}^{\chi,c})^* A F_{\rho_{L'}}^{\chi,c} = (F_{\rho_{L}}^{\chi,c})^* A F_{\rho_{L}}^{\chi,c}$,
   and by local indistinguishability, the state is independent of the choice of sequence $\Omega_L$. 
   By path independence in the ground state, the state $\omega_s^{\chi,c}$ is also independent of the 
   path that $\rho$ takes to infinity and depends only on the basepoint $s$.
   
   \subsection{Superselection structure}

   For the quantum double models the relevant superselection criterion is as follows.
   Let $\Lambda \subset \mathcal{B}$ be an infinite cone region (the precise shape is not that important).
   We consider representations $\pi$ which satisfy the following criterion for \emph{any} such $\Lambda$:
   \begin{equation}\label{eqn:conecrit}
   \pi_0 \upharpoonright \calA_{\Lambda^c} \cong \pi \upharpoonright \calA_{\Lambda^c}.
   \end{equation}
   Here $\pi_0$ is the GNS representation of the frustration-free ground state and $\pi_0 \upharpoonright \calA_{\Lambda^c}$ means that we restrict the representation to $\calA_{\Lambda^c}$, the $C^*$-algebra generated by all local observables supported outside $\Lambda$.
   Physically, to detect the charge of a state in the representation $\pi$, one needs to measure the value of a ``Wilson loop''.
   If such loops around the charge are not allowed (as in the selection criterion, due to the absence of the cone), the charge cannot be detected.
   
   The superselection structure of the quantum double model can be analyzed in the same spirit as the DHR program.
   The sector structure is summarized in the following proposition
   
   \begin{prop}\label{prop:singleexc}(\cite{FiedlerN,Naaijkens11})
   	Let $ (\pi_s^{\chi,c}, \Omega_s^{\chi,c}, \calH_s^{\chi,c})$ be the GNS triple for $\omega_s^{\chi,c}$.
   	Then,
   	\begin{enumerate}
   		\item $\pi_s^{\chi,c}$ are irreducible representations satisfying the criterion~\eqref{eqn:conecrit},
   		\item $\pi_{s}^{\chi,c} \cong \pi_{s'}^{\chi,c} $,
   		\item if $(\chi,c) \neq (\chi',c')$ then $\pi_s^{\chi,c} $ and $\pi_s^{\chi',c'}$ belong to different superselection sectors (and hence are inequivalent),
   		\item if $\pi$ is irreducible and satisfies \eqref{eqn:conecrit} then there exists $\rho$ and $(\chi, c)$ such that $\pi \cong \pi_s^{\chi,c}$.
   	\end{enumerate}
   \end{prop}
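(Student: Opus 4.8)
The plan is to run the Doplicher--Haag--Roberts program with cone regions, using the GNS representation $\pi_0$ of the frustration-free ground state $\omega^0$ (Proposition~\ref{prop:qdffgs}) as the vacuum and the extended ribbon operators as the charged automorphisms. The first step is to realize $\omega_s^{\chi,c}$ as $\omega^0 \circ \varrho_{(\chi,c)}$ for a $*$-automorphism $\varrho_{(\chi,c)}$ of $\calA$ localized in a cone. Fix a ribbon $\rho$ from $s$ to infinity running inside a cone $\Lambda$, and for $A \in \calA_{loc}$ set $\varrho_{(\chi,c)}(A) := (F_{\rho_L}^{\chi,c})^{*} A\, F_{\rho_L}^{\chi,c}$ for $L$ large; by the stabilization noted after~\eqref{eqn:singleexcitation} this is independent of $L$, and using unitarity of the ribbon operators (i.e.\ $F_\rho^{\iota,e}=I$, cf.~\eqref{eqn:ribprop1}) it extends to a $*$-automorphism of $\calA$ with inverse $\varrho_{(\bar\chi,\bar c)}$ along the same ribbon, which acts as the identity on $\calA_{\Lambda^c}$ (there $F_{\rho_L}^{\chi,c}$ and $A$ commute). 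Local indistinguishability (Corollary~\ref{cor:qdLTQO}) identifies the right-hand side of~\eqref{eqn:singleexcitation} with $\omega^0\bigl((F_{\rho_L}^{\chi,c})^{*} A\, F_{\rho_L}^{\chi,c}\bigr)$, so $\omega_s^{\chi,c} = \omega^0 \circ \varrho_{(\chi,c)}$ and $\pi_s^{\chi,c} \cong \pi_0 \circ \varrho_{(\chi,c)}$. Since $\varrho_{(\chi,c)}$ is an automorphism and $\omega^0$ is pure, $\omega_s^{\chi,c}$ is pure, giving irreducibility in~(1), and $\pi_s^{\chi,c} \upharpoonright \calA_{\Lambda^c} = \pi_0 \upharpoonright \calA_{\Lambda^c}$ gives~\eqref{eqn:conecrit} for this particular cone.

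Next I would prove~(2) and upgrade~(1) to the criterion for all cones. Given sites $s, s'$, pick a finite ribbon $\tau$ from $s'$ to $s$. Path independence in the ground state~\eqref{eqn:ribbonpathind} shows $\omega_{s'}^{\chi,c}$ does not depend on the path to infinity, so one may compute it along $\tau\rho$; concatenation~\eqref{eqn:ribbonconcatenation} then gives $\omega_{s'}^{\chi,c} = \omega_s^{\chi,c} \circ \operatorname{Ad}\bigl((F_\tau^{\chi,c})^{*}\bigr)$, and the associated unitary in $\pi_0$ (the charge transporter) intertwines $\pi_s^{\chi,c}$ and $\pi_{s'}^{\chi,c}$, proving~(2); equivalently, $\varrho_{(\chi,c)}$ is transportable. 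For an arbitrary cone $\Lambda'$, choose $s'$ and a ribbon running inside a subcone $\Lambda'' \subseteq \Lambda'$; then $\pi_{s'}^{\chi,c} \upharpoonright \calA_{\Lambda'^{c}} = \pi_0 \upharpoonright \calA_{\Lambda'^{c}}$, and combined with $\pi_s^{\chi,c} \cong \pi_{s'}^{\chi,c}$ this yields~\eqref{eqn:conecrit} for $\Lambda'$, completing~(1).

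For~(3) I would use a Wilson loop to read off the charge. Fix $(\xi,d) \in \widehat{G}\times G$ and a sequence of closed ribbons $\sigma_n$ of diameter tending to infinity, each encircling both $s$ and $s'$. For $n$ fixed and $L$ large, $\rho_L$ crosses $\sigma_n$ once, so~\eqref{eqn:ribbonrelation} and~\eqref{eqn:ribprop1} give $\varrho_{(\chi,c)}(F_{\sigma_n}^{\xi,d}) = \overline{\chi}(d)\,\xi(c)\, F_{\sigma_n}^{\xi,d}$, hence $\pi_s^{\chi,c}(F_{\sigma_n}^{\xi,d}) = \overline{\chi}(d)\,\xi(c)\,\pi_0(F_{\sigma_n}^{\xi,d})$. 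By~\eqref{eqn:closedribbon}, $F_{\sigma_n}^{\xi,d}$ fixes every finite-volume ground vector, and for $n$ large it commutes with any fixed local observable, so $\pi_0(F_{\sigma_n}^{\xi,d}) \to I$ strongly; thus $\pi_s^{\chi,c}(F_{\sigma_n}^{\xi,d}) \to \overline{\chi}(d)\,\xi(c)\, I$, and likewise $\pi_{s'}^{\chi',c'}(F_{\sigma_n}^{\xi,d}) \to \overline{\chi'}(d)\,\xi(c')\, I$. If $U$ intertwines the two (irreducible, hence unitarily equivalent) representations, passing to the limit in $U\pi_s^{\chi,c}(F_{\sigma_n}^{\xi,d}) = \pi_{s'}^{\chi',c'}(F_{\sigma_n}^{\xi,d})U$ forces $\overline{\chi}(d)\,\xi(c) = \overline{\chi'}(d)\,\xi(c')$ for all $(\xi,d)$; taking $d=e$ and using that characters separate points of the abelian group $G$ gives $c=c'$, and taking $\xi=\iota$ gives $\chi=\chi'$.

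Finally,~(4) is the reconstruction direction and is where the real work lies. Given an irreducible $\pi$ satisfying~\eqref{eqn:conecrit}, fix a cone $\Lambda$ and a unitary $V$ with $V\pi_0(A) = \pi(A)V$ for $A \in \calA_{\Lambda^c}$; conjugating, assume $\pi \upharpoonright \calA_{\Lambda^c} = \pi_0 \upharpoonright \calA_{\Lambda^c}$. The point is to promote this to a cone-localized transportable endomorphism: using Haag duality for cones in $\omega^0$ --- a locality property of the model established in~\cite{FiedlerN,NaaijkensKL} --- one shows $\pi(\calA_{\widehat\Lambda}) \subseteq \pi_0(\calA_{\widehat\Lambda})''$ for a slightly enlarged cone $\widehat\Lambda$, so that $\pi \cong \pi_0 \circ \gamma$ for a transportable endomorphism $\gamma$ localized in $\widehat\Lambda$. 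One then has to classify the irreducible such $\gamma$ up to unitary equivalence; for abelian $G$ the ribbon-operator algebra and the braided fusion structure of $\operatorname{Rep}(\mathcal{D}(G))$ force $\gamma$ to be equivalent to one of the $\varrho_{(\chi,c)}$, so $\pi \cong \pi_0 \circ \varrho_{(\chi,c)} \cong \pi_s^{\chi,c}$; this classification is carried out in~\cite{FiedlerN,Naaijkens11}. I expect the Haag-duality input together with the exhaustive identification of the localized endomorphisms to be by far the most technical part of the argument, the remaining steps being essentially bookkeeping with the ribbon-operator relations listed above.
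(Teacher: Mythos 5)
The paper does not actually prove this proposition: it is stated as a summary of results from \cite{FiedlerN,Naaijkens11}, so there is no internal proof to compare against. Judged on its own terms, your treatment of parts (1)--(3) is correct and follows the same route as those references, and it is consistent with the machinery this thesis records elsewhere: you realize $\omega_s^{\chi,c}=\omega^0\circ\alpha^{\chi,c}_\rho$ with the cone-localized automorphism of \eqref{eqn:chargemorp}, so purity of $\omega^0$ (Proposition \ref{prop:qdffgs}) gives irreducibility and localization gives \eqref{eqn:conecrit} for the cone containing the ribbon; finite charge transporters $F_\tau^{\chi,c}$ together with concatenation \eqref{eqn:ribbonconcatenation} and path independence \eqref{eqn:ribbonpathind} give (2) and upgrade the criterion to arbitrary cones; and the Wilson-loop argument using the crossing relation \eqref{eqn:ribbonrelation}, the closed-ribbon invariance \eqref{eqn:closedribbon}, and the strong limit $\pi_0(F_{\sigma_n}^{\xi,d})\to I$ correctly separates the $\abs{G}^2$ sectors in (3).

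The genuine gap is (4). What you offer there is a strategy plus a citation: Haag duality for cones turns a representation satisfying \eqref{eqn:conecrit} into a localized, transportable morphism, and then you assert that the ribbon-operator algebra and the braided structure of $\operatorname{Rep}(\mathcal{D}(G))$ ``force'' this morphism to be one of the $\varrho_{(\chi,c)}$, deferring to \cite{FiedlerN,Naaijkens11}. That last step is precisely the content of (4), not bookkeeping: nothing about the fusion and braiding of the already-constructed sectors rules out, a priori, additional cone-localized irreducible sectors, so exhaustiveness must be proven; in the cited works it rests on Haag duality together with a detailed analysis of the cone algebras, and the analogous completeness statements in this thesis (Theorem \ref{thm:eqstates}, Theorem \ref{thm:aloctoloc}) require the full ground-state classification of Chapter \ref{ch:gs}. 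There is also a technical wrinkle you elide: the map obtained from Haag duality takes values in the cone von Neumann algebras $\mathcal{R}(\widehat{\Lambda})\subset\calB(\calH_0)$ rather than in $\calA$, so one must pass to an auxiliary algebra (as in the $\calB_K$ construction of Chapter \ref{ch:stability}) before speaking of localized transportable endomorphisms of an algebra. Since the proposition is itself quoted from the literature, citing those references for (4) is defensible, but as a self-contained proof the argument for (4) is missing.
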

   Pushing this analysis further, all properties of the charges such as their fusion and braiding rules can be recovered~\cite{FiedlerN}.
   It follows that the structure is completely described by the representation theory of the quantum double, $\operatorname{Rep}(\mathcal{D}(G))$.
   It is interesting to see that the charge superselection structure is closely related to the classification of ground states of the quantum double, as will become even clearer in the next chapter.

   \part{Results}
   
   \chapter[%
   Ground states for Kitaev's models
   ]{%
   	The complete set of infinite volume ground states for Kitaev's abelian quantum double models
   }%
   \label{ch:gs}
	
	We study the charges, or superselection sectors, of the quantum double model.
	To this end, we study the excitations of the model, which can be obtained by using 
	what are called ribbon operators \eqref{eqn:ribbondef}.
	Recall the ribbon operators we defined in Section \ref{sec:qdoub} and some of the essential properties of these operators were discussed.
	A good understanding of these operators and how they can be used to build up the local Hilbert spaces will be essential to our proof.
	In particular, we need to be able to detect such excitations with local projections, which we will call \emph{charge projections}.
	We will prove the projections that measure the total charge in a box are supported on the boundary of the box.
	
	After introducing the charge projectors, we turn to the main topic of interest, the infinite volume ground states. 
	Recall from Section \ref{sec:qdoub} there exists a the unique frustration-free ground state, $\omega_0$.
	We will also construct a family of pure non-translation invariant ground states that do not satisfy the  frustration-free property, namely, the single excitation states.
	It turns out that they can be obtained by judiciously choosing boundary terms of finite volume Hamiltonians, and taking weak$^*$ limits of finite volume ground states. As such, they can be identified with the superselection sectors of the theory, 
	see~\cite{FiedlerN,Naaijkens11}.
	Finally, we will show that in fact all pure ground states are equivalent to such states.
	
	The results of this chapter are based off the work of the author in collaboration with Bruno Nachtergaele and Pieter Naaijkens \cite{ChaNN}.
	
	\pagebreak
	
	\section{Excitations and superselection sectors}\label{sec:supersel}

	\subsection{Local and global charge projectors}\label{sec:finvolume}
	We can detect the presence of an excitation and the charge type localized at a site $s=(v,f)$ 
	with the orthogonal projectors 
	\begin{align*}
	D_v^\chi &:= \frac{1}{\abs{G}} \sum_{g\in G} \overline{\chi}(g) A_v^g  &\text{ for } \chi \in \widehat{G}, \\
	D_f^c &:= B_f^c  &\text{ for } c \in G.
	\end{align*}
	These can be obtained by considering the action of the quantum double at each site~\cite{BombinMD}.
	The first detects ``electric'' charges labeled by the characters of $G$, while the latter project on the ``magnetic'' charges labeled by group elements.
	The electric charges are located on the vertices, while the magnetic charges are located on faces.
	Since we only consider abelian models, these two types of excitations can be treated separately.
	One can check that, in the case of abelian groups $G$, the projectors commute.
	Thus, there is no ambiguity in defining the operator $D_s^{\chi, c} := D_v^\chi D_f^c$ for the site $s =(v,f)$.
	
	The local charge projectors have the following properties. They follow readily using the properties listed in the previous chapter.
	\begin{align}
	D_v^\chi \Omega&= \delta_{\chi,\iota} \Omega, & D_f^c \Omega&=  \delta_{c, e} \Omega &\text{ for all  } \Omega\in \calG_L \label{eqn:localprojgs}\\
	D_v^{\chi} F_\rho^{\xi, d} & = F_\rho^{\xi, d} D_v^{\chi\bar{\xi}},  &
	D_f^{c} F_\rho^{\xi, d} &= F_\rho^{\xi, d} D_f^{c\bar{d}}
	&\text{ if } \partial_0\rho = (v,f) \neq \partial_1\rho  \label{eqn:localprojribbonrelation1}\\
	D_v^{\chi} F_\rho^{\xi, d} & = F_\rho^{\xi, d} D_v^{\chi\xi},  &
	D_f^{c} F_\rho^{\xi, d} &= F_\rho^{\xi, d} D_f^{dc}
	&\text{ if } \partial_1\rho = (v,f) \neq \partial_0\rho\label{eqn:localprojribbonrelation2} \\
	D_v^{\chi}D_v^{\xi} &= \delta_{\chi,\xi} D_s^{\chi}, &
	D_f^{c}D_f^{d} &= \delta_{c,d} D_f^{c}\\
	\sum_{\chi\in\widehat{G}} D_v^{\chi} &= I, &  \sum_{ c \in G} D_f^{c} & = I  &
	\implies \sum_{ \chi\in \widehat{G}, c \in G} D_s^{\chi,c} = I \label{eqn:localprojcomplete}.
	\end{align}
	Thus the projections onto the charges at a site $s$ form a complete set of orthogonal projections, by equation~\eqref{eqn:localprojcomplete}.
	The ground state projector onto the ground state subspace $\calG_L$ is a product of all local charge projectors with trivial charge, $\prod_{s\in \mathcal{S}_L} D_s^{\iota,e}$.
	
	A \emph{global} (or \emph{total}) \emph{charge projector} selects the total charge $(\chi,c)$ in the region $\Lambda_L$.
	Heuristically, if for each face $f\in \mathcal{F}_L$ there is a local charge $c_f$, 
	then the total charge of magnetic type is $\prod_{f\in \mathcal{F}_L} c_f = c$ (again, we use that $G$ is abelian).
	For example, if magnetic charges $c$ and $\bar{c}$ are located on two faces and all other faces carry trivial charge, then the total charge in the region is $ c \bar{c} = e$.
	Thus, for a charge $c$ the \emph{conjugate charge} is given by the inverse group element $\bar{c}$.
	Here ``conjugate charge'' is standard terminology: it means that you can combine the two charges to obtain a trivial charge. It has nothing to do with conjugation in the group.
	
	Note that the example above in particular shows that trivial global charge does not mean that there are no excitations in the region $\Lambda_L$.
	Rather, it means that all charges add up to the trivial charge (similarly, the total charge of a state with an electron and a positron is trivial).
	An open ribbon operator with both endpoints in $\mathcal{S}_L$  
	generates a charge $c$ (see equation~\eqref{eqn:localprojribbonrelation1}), and a conjugate charge $\bar{c}$ (from equation~\eqref{eqn:localprojribbonrelation2}) pair at each of its endpoints. 
	Thus, charge is created locally at its endpoints but the total charge of the initial state is preserved.
	However, an open ribbon operator with only one of its endpoints in the region $\Lambda_L$ 
	does \emph{not} conserve the global charge in the region.
	The same is true for charges of electric type, where the multiplication is in the dual group $\widehat{G}$.
	
	We define the global charge projectors by
	\begin{equation*}
	D_L^\chi:= \sum_{\prod_{v}\chi_v = \chi} \ 
	\prod_{v\in\mathcal{V}_L} D_v^{\chi_v}, \qquad 
	D_L^{c} := \sum_{\prod_{f}c_f= c}  \ 
	\prod_{f \in \mathcal{F}_L} D_f^{c_f},
	\end{equation*}
	where the sums are over all configurations $\{ \chi_v \}_{v \in \mathcal{V}_L}$ such that $ \prod_{v}\chi_v = \chi$
	and configurations $\{ c_f\}_{f \in \mathcal{F}_L}$ such that $\prod_{f}c_f= c$, respectively.

	To project onto (non-trivial) electric $\epsilon$ or magnetic $\mu$ charge types, we have the projectors, respectively,
	\begin{align*}
	D_L^\epsilon  &:= \sum_{\chi\neq \iota}D_L^\chi = I - D_L^\iota, &
	D_L^\mu &:= \sum_{c\neq e} D_L^c = I - D_L^e.
	\end{align*}
	From the definitions above it appears that the global charge projectors are supported on the entire region $\Lambda_L$. 
	We will show that they are actually boundary operators.
	
	To this end, we consider closed ribbon operators encircling the boundary of $\Lambda_L$ given by the operators
	\begin{equation}\label{eqn:boundribbon}
	V_L^\epsilon := \frac{1}{\abs{G}} \sum_{c\in G} \left(I - F_{\partial L}^{\iota, c}\right), \qquad
	V_L^\mu := \frac{1}{\abs{G}} \sum_{\chi \in \widehat{G}} \left( I -  F_{\partial L}^{\chi,e} \right),
	\end{equation}
	where we use $\partial L$ to denote the closed ribbon running along the boundary of $\Lambda_L$, see Figure~\ref{fig:ribbons}.
	Due to the anyonic nature of the charges, if there is a charge present in the region $\Lambda_L$, this is a non-trivial operation for at least one of the charges.
	In principle, this can be detected and used to determine the total charge in $\Lambda_L$.
	
	We first show that $V_L^\varepsilon$ and $V_L^\mu$ are in fact projections (see~\cite[Sect. B.9]{BombinMD}).
	\begin{prop}
		The operators $V_L^\epsilon$ and $V_L^\mu$ are orthogonal projections.
	\end{prop}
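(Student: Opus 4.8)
The plan is to recognize each of $V_L^\epsilon$ and $V_L^\mu$ as $I$ minus a group-averaging operator and to show that this averaging operator is an orthogonal projection. First I would observe that, by the ribbon algebra identity \eqref{eqn:ribprop1} applied to the \emph{closed} boundary ribbon $\partial L$, the family $\{F_{\partial L}^{\iota,c}\}_{c\in G}$ is a unitary representation of $G$ on $\calH_L$: the composition rule gives $F_{\partial L}^{\iota,c}F_{\partial L}^{\iota,c'} = F_{\partial L}^{\iota,cc'}$ (since $\iota\,\iota=\iota$), while the adjoint rule gives $(F_{\partial L}^{\iota,c})^* = F_{\partial L}^{\iota,\bar c} = (F_{\partial L}^{\iota,c})^{-1}$, so each $F_{\partial L}^{\iota,c}$ is unitary. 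In the same way $\{F_{\partial L}^{\chi,e}\}_{\chi\in\widehat{G}}$ is a unitary representation of the character group $\widehat{G}$.

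Next I would set $P_L^\epsilon := \frac{1}{\abs{G}}\sum_{c\in G}F_{\partial L}^{\iota,c}$ and verify it is an orthogonal projection by the standard averaging argument. Self-adjointness follows by reindexing $c\mapsto\bar c$ and using $(F_{\partial L}^{\iota,c})^* = F_{\partial L}^{\iota,\bar c}$. Idempotency is the computation
\begin{equation*}
(P_L^\epsilon)^2 = \frac{1}{\abs{G}^2}\sum_{c,c'\in G}F_{\partial L}^{\iota,cc'} = \frac{1}{\abs{G}^2}\sum_{k\in G}\abs{G}\,F_{\partial L}^{\iota,k} = P_L^\epsilon,
\end{equation*}
since for each fixed $k\in G$ there are exactly $\abs{G}$ pairs $(c,c')$ with $cc'=k$. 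The identical computation for $P_L^\mu := \frac{1}{\abs{G}}\sum_{\chi\in\widehat{G}}F_{\partial L}^{\chi,e}$, using $\abs{\widehat{G}}=\abs{G}$, shows $P_L^\mu$ is an orthogonal projection.

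Finally, from the definitions in \eqref{eqn:boundribbon} we have $V_L^\epsilon = I - P_L^\epsilon$ and $V_L^\mu = I - P_L^\mu$, and the complement of an orthogonal projection is again an orthogonal projection, which proves the claim. There is essentially no obstacle here; the only point requiring any care is that the composition and adjoint identities of \eqref{eqn:ribprop1} apply to the closed ribbon $\partial L$ rather than merely to open ribbons, and this is precisely what the ribbon-operator property list in Section~\ref{sec:qdoub} already guarantees.
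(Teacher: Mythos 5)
Your proof is correct and is essentially the paper's argument: both rest on the composition and adjoint rules \eqref{eqn:ribprop1} applied to the closed boundary ribbon, together with the counting of pairs multiplying to a fixed element (or character). Writing $V_L^k = I - P_L^k$ and checking that the group average $P_L^k$ is an orthogonal projection is just a cosmetic reorganization of the paper's direct expansion of $(V_L^k)^2$ and $(V_L^k)^*$.
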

	
	\begin{proof}
		From equation \eqref{eqn:ribprop1}, we have that 
		\[\sum_{\chi,\chi'} F_{ \partial L}^{\chi, e} F_{ \partial L}^{\chi', e}= 
		\sum_{\chi,\chi'} F_{ \partial L}^{\chi \chi' , e} = \abs{G} \sum_{\chi} F_{\partial L}^{\chi,e}.\]
		Thus,
		\begin{align*}
		(V_L^\epsilon)^2 &= \frac{1}{\abs{G}^2} \sum_{\chi,\chi'} \left(I - F_{\partial L}^{\chi, e } -  F_{\partial L}^{\chi', e } 
		+ F_{\partial L}^{\chi, e } F_{\partial L}^{\chi', e } \right) \\
		&= \frac{1}{\abs{G}} \sum_{\chi} \left( I - F_{\partial L}^{\chi, e }  \right) = V_L^\epsilon.
		\end{align*}
		Also with equation~\eqref{eqn:ribprop1} we find
		\begin{align*}
		(V_L^\epsilon)^* &= \frac{1}{\abs{G}} \sum_{\chi} I - F_{\partial L}^{\chi, e *} = \frac{1}{\abs{G}} \sum_{\chi} I - F_{\partial L}^{\bar{\chi}, e} = V_L^\epsilon.
		\end{align*}
		The calculation to show that $V_L^\mu = (V_L^\mu)^2 = (V_L^\mu)^*$ is similar.
	\end{proof}
	
	The following lemma states that localized ribbon operators do not change the total charge.
	In other words, local operation cannot change the charged sector.  
	
	\begin{lemma}\label{lem:ribchargeinv}
		Let $L' > L$.  Then, $[D_{L'}^\chi,A] = [D_{L'}^c, A] =0 $ for all
		$ \chi\in \widehat{G}, c\in G$ and $A \in \calA_L$.  
	\end{lemma}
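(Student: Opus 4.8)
The plan is to reduce the claim to a check against a generating set of $\calA_L$ and then exploit the commutation relations between the \emph{local} charge projectors and the ribbon operators, the point being that commuting a localized ribbon operator past the local projectors merely reshuffles the local charge labels while preserving their product, so the total--charge label defining $D_{L'}^\chi$ (resp.\ $D_{L'}^c$) is left untouched.

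First I would recall that $\calA_L$ is generated, as a $C^*$-algebra, by the single-edge algebras $\calA_e$ with $e\in\Lambda_L$, and that each $\calA_e$ is generated by the ribbon operators on the single edge $e$. Since the commutant of any subset of $\calB(\calH)$ is a $C^*$-algebra (closed under sums, products, adjoints and norm limits), it suffices to show that $D_{L'}^\chi$ and $D_{L'}^c$ commute with every single-edge ribbon operator $F_{\sigma_e}^{\xi,d}$, $e\in\Lambda_L$, $\xi\in\widehat G$, $d\in G$. Here I would use the hypothesis $L'>L$: for such an edge the two endpoints $\partial_0\sigma_e = (v_0,f_0)$ and $\partial_1\sigma_e = (v_1,f_1)$ of the ribbon $\sigma_e$ satisfy $v_0,v_1\in\mathcal{V}_{L'}$ and $f_0,f_1\in\mathcal{F}_{L'}$, with $v_0\neq v_1$ and $f_0\neq f_1$; this is exactly what guarantees that the relevant factors $D_{v_0}^{\cdot},D_{v_1}^{\cdot}$ (resp.\ $D_{f_0}^{\cdot},D_{f_1}^{\cdot}$) genuinely appear in the products defining the global projectors.

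For the electric projector, by \eqref{eqn:localprojribbonrelation1}--\eqref{eqn:localprojribbonrelation2} the factor $D_v^{\chi_v}$ commutes with $F_{\sigma_e}^{\xi,d}$ for every $v\neq v_0,v_1$, while $D_{v_0}^{\chi_{v_0}}F_{\sigma_e}^{\xi,d} = F_{\sigma_e}^{\xi,d}D_{v_0}^{\chi_{v_0}\bar\xi}$ and $D_{v_1}^{\chi_{v_1}}F_{\sigma_e}^{\xi,d} = F_{\sigma_e}^{\xi,d}D_{v_1}^{\chi_{v_1}\xi}$. Hence for any configuration $\{\chi_v\}_{v\in\mathcal{V}_{L'}}$ one obtains $\prod_{v}D_v^{\chi_v}\,F_{\sigma_e}^{\xi,d} = F_{\sigma_e}^{\xi,d}\prod_{v}D_v^{\chi_v'}$ with $\chi_{v_0}' = \chi_{v_0}\bar\xi$, $\chi_{v_1}' = \chi_{v_1}\xi$, and $\chi_v'=\chi_v$ otherwise. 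The crucial observation is that $\prod_{v}\chi_v' = \prod_{v}\chi_v$, because the inserted factors $\bar\xi$ and $\xi$ cancel --- this is where the assumption that $G$ (hence $\widehat G$) is abelian enters. Thus $\{\chi_v\}\mapsto\{\chi_v'\}$ is a bijection of the set of configurations with $\prod_v\chi_v=\chi$ onto itself, and summing the displayed identity over that set gives $D_{L'}^\chi F_{\sigma_e}^{\xi,d} = F_{\sigma_e}^{\xi,d}D_{L'}^\chi$. The magnetic case is handled identically with $D_f^c=B_f^c$ in place of $D_v^\chi$, using the face relations in \eqref{eqn:localprojribbonrelation1}--\eqref{eqn:localprojribbonrelation2}; the constraint $\prod_f c_f=c$ is preserved because the extra factors $\bar d$ at $f_0$ and $d$ at $f_1$ cancel. (As a sanity check, closed ribbon operators in $\Lambda_L$ commute with all $A_v^g$ and $B_f^h$ by \eqref{eqn:ribclosedcom}, so they automatically commute with the global projectors, consistently with the above.)

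There is no real obstacle here: the content is entirely bookkeeping, and the only things to be careful about are (i) tracking which endpoint produces the twist $\bar\xi$ versus $\xi$ (resp.\ $\bar d$ versus $d$), and (ii) verifying that $v_0,v_1\in\mathcal{V}_{L'}$ and $f_0,f_1\in\mathcal{F}_{L'}$ once $L'>L$. The conceptual takeaway --- and the reason the statement would fail for a ribbon reaching outside $\Lambda_{L'}$, or for non-abelian $G$ --- is that a localized ribbon operator only permutes the local charge configuration in a way that conserves the total charge.
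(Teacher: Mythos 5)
Your proposal is correct and follows essentially the same route as the paper: both reduce to ribbon-operator generators of $\calA_L$, push the operator through the product of local charge projectors using \eqref{eqn:localprojribbonrelation1}--\eqref{eqn:localprojribbonrelation2}, and observe that the conjugate twists $\bar\xi,\xi$ (resp.\ $\bar d,d$) at the two endpoints cancel, so the relabeling is a bijection of configurations with fixed total charge. The only cosmetic difference is that you check single-edge ribbon generators (so closed ribbons never arise), while the paper treats an arbitrary ribbon operator in $\calA_L$ and disposes of the closed case separately via \eqref{eqn:ribclosedcom}.
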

	
	\begin{proof}
		Suppose $A = F_\rho^{\xi,d}$ is a ribbon operator.
		If $\rho$ is a closed ribbon, then by equation~\eqref{eqn:ribclosedcom},
		$[D_{L'}^\chi, F_\rho^{\xi,d}] = [D_{L'}^c,  F_\rho^{\xi,d}] =0 $.
		
		If $\rho$ is an open ribbon let $\partial_i\rho = s_i = (v_i,f_i)$ for $i = 0,1$. 
		Since, $L'>L$ it follows that $v_i \in \mathcal{V}_{L'}$ and $s_i \in \mathcal{S}_{L'}$ for $i =0,1$.
		Thus,
		\begin{align*}
		D_{L'}^\chi F_\rho^{\xi,d} 
		&= \sum_{\prod_{v}\chi_v = \chi} \ 
		\prod_{v\in\mathcal{V}_{L'}} D_v^{\chi_v} F_\rho^{\xi,c}
		= F_\rho^{\chi,c}
		\sum_{\prod_{v}\chi_v = \chi} \ 
		\prod_{\substack{v\in \mathcal{V}_{L'}\\ v\neq v_0,v_1}} D_v^{\chi_v} D_{s_0}^{\chi_{v_0} \bar{\xi} } D_{s_1}^{\chi_{v_1} \xi}\\
		& = F_\rho^{\xi, d}  D_{L'}^\chi.
		\end{align*}
		The lemma follows from the fact that ribbon operators generate $\calA_L$.
		
		A similar calculation holds to show $[D_L^c, A] = 0$.
	\end{proof}
	
	We can now prove our claim that the operators $D_L^{\epsilon}$ and $D_L^\mu$ are supported on the boundary of $\Lambda_L$, by showing that they are equal to the projections in the proposition above. We do this by showing that equality holds on a spanning set of vectors.
	
	\begin{lemma}\label{lem:globprojboundaryop}
		The global charge projectors $D_L^\epsilon$ and $D_L^\mu$, defined above, 
		are supported on the boundary of the region $\Lambda_L$.  
		In particular, they are equal to the boundary ribbon operators on $\calH_L$, defined in equations~\eqref{eqn:boundribbon}:
		\begin{equation*}
		D_L^\epsilon= V_L^\epsilon \in \calA_{\Lambda_L\setminus\Lambda_{L-1}}, \qquad  D_L^\mu=V_L^\mu \in \calA_{\Lambda_L\setminus\Lambda_{L-1}}.
		\end{equation*}
	\end{lemma}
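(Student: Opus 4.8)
I would prove the two identities by checking them on a spanning set of the finite-dimensional space $\calH_L$, as the paragraph before the statement suggests. Note first that $V_L^\epsilon$ and $V_L^\mu$ are built from the ribbon operators $F_{\partial L}^{\iota,c}$ and $F_{\partial L}^{\chi,e}$, supported on the closed ribbon $\partial L$ running along the outermost layer of edges, so $V_L^\epsilon, V_L^\mu \in \calA_{\Lambda_L\setminus\Lambda_{L-1}}$ automatically; it therefore remains to show $D_L^\epsilon = V_L^\epsilon$ and $D_L^\mu = V_L^\mu$ as operators on $\calH_L$. I will carry out the electric case; the magnetic case is the same argument with the roles of $G$ and $\widehat{G}$ interchanged.

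For the spanning set I would take the vectors $\psi = F_{\rho_1}^{\chi_1,c_1}\cdots F_{\rho_n}^{\chi_n,c_n}\,\Omega$ with $\Omega\in\calG_L$ and $\rho_1,\dots,\rho_n$ ribbons in $\Lambda_L$, which constitute the complete set of $H_L$-eigenvectors listed among the ribbon-operator properties of Section~\ref{sec:qdoub}. Using $D_v^\chi\Omega = \delta_{\chi,\iota}\Omega$ from \eqref{eqn:localprojgs} and pushing $D_v^\chi$ through the ribbon operators by \eqref{eqn:localprojribbonrelation1}--\eqref{eqn:localprojribbonrelation2}, one sees that $\psi$ is a joint eigenvector of all $D_v^\chi$: there is a unique character $\chi_v(\psi)$ — the product of the $\chi_i$ over the ribbons $\rho_i$ that start at $v$, times the product of the $\bar\chi_i$ over the ribbons that end at $v$ — with $D_v^{\chi'}\psi = \delta_{\chi',\chi_v(\psi)}\psi$. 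Hence $\prod_{v\in\mathcal{V}_L}D_v^{\chi_v}$ kills $\psi$ unless $\chi_v = \chi_v(\psi)$ for every $v$, so $D_L^\iota\psi = \delta_{\chi_{\mathrm{tot}}(\psi),\iota}\,\psi$ with $\chi_{\mathrm{tot}}(\psi) := \prod_{v\in\mathcal{V}_L}\chi_v(\psi)$. In this product the two endpoints of any ribbon lying entirely inside $\mathcal{V}_L$ cancel, so $\chi_{\mathrm{tot}}(\psi)$ equals the product of the $\chi_i^{\pm1}$ over exactly those $\rho_i$ having precisely one endpoint vertex in $\mathcal{V}_L$. Therefore $D_L^\epsilon\psi = (I-D_L^\iota)\psi = (1-\delta_{\chi_{\mathrm{tot}}(\psi),\iota})\,\psi$.

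Next I would compute $F_{\partial L}^{\iota,c}\psi$ by commuting $F_{\partial L}^{\iota,c}$ to the right past each $F_{\rho_i}^{\chi_i,c_i}$ and then using $F_{\partial L}^{\iota,c}\Omega = \Omega$ from \eqref{eqn:closedribbon}. By path independence \eqref{eqn:ribbonpathind} a ribbon with both endpoint vertices in $\mathcal{V}_L$ may be deformed so as not to link $\partial L$, while a ribbon with exactly one endpoint vertex in $\mathcal{V}_L$ crosses $\partial L$ once; applying the crossing relation \eqref{eqn:ribbonrelation} (split via \eqref{eqn:ribbonconcatenation} when there are several crossings), each such ribbon contributes a phase $\chi_i(c)^{\pm1}$ whose exponent matches the one appearing in $\chi_{\mathrm{tot}}(\psi)$. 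This yields $F_{\partial L}^{\iota,c}\psi = \chi_{\mathrm{tot}}(\psi)(c)\,\psi$. Summing over $c\in G$ in \eqref{eqn:boundribbon} and invoking the character identity $\frac{1}{\abs{G}}\sum_{c\in G}\chi(c) = \delta_{\chi,\iota}$, I get $V_L^\epsilon\psi = \psi - \frac{1}{\abs{G}}\sum_{c\in G}\chi_{\mathrm{tot}}(\psi)(c)\,\psi = (1-\delta_{\chi_{\mathrm{tot}}(\psi),\iota})\,\psi$, which is exactly $D_L^\epsilon\psi$. Since the $\psi$ span $\calH_L$, this gives $D_L^\epsilon = V_L^\epsilon\in\calA_{\Lambda_L\setminus\Lambda_{L-1}}$; the magnetic identity follows verbatim, using instead $\frac{1}{\abs{G}}\sum_{\chi\in\widehat{G}}\chi(c) = \delta_{c,e}$ and the total magnetic charge $c_{\mathrm{tot}}(\psi)$ read off from the group-element indices $c_i$ of the ribbons via \eqref{eqn:localprojribbonrelation1}--\eqref{eqn:localprojribbonrelation2} and detected by $F_{\partial L}^{\chi,e}$.

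I expect the main obstacle to be the geometric bookkeeping in the third paragraph: one must verify that a creation ribbon $\rho_i$ produces a nontrivial phase when commuted past $F_{\partial L}^{\iota,c}$ \emph{precisely} when it contributes to $\chi_{\mathrm{tot}}(\psi)$, i.e.\ when exactly one of its endpoint vertices lies in $\mathcal{V}_L$, and that the exponents $\pm1$ coming from the braiding relation \eqref{eqn:ribbonrelation} agree with those coming from \eqref{eqn:localprojribbonrelation1}--\eqref{eqn:localprojribbonrelation2}. This is a matching of orientation conventions — the direction in which $\partial L$ is traversed, the sense of each crossing, and the built-in asymmetry between the start and the end of a ribbon — and it is where path independence is needed to reduce every ribbon to a standard form whose linking with $\partial L$ is manifest. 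The remaining algebra is routine.
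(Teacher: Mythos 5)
Your proposal is correct and takes essentially the same route as the paper's proof: both check the identities on the spanning set $\prod_i F_{\rho_i}^{\chi_i,c_i}\Omega$, reading off the $D_L^\epsilon$-eigenvalue from the local charge projector relations \eqref{eqn:localprojgs}--\eqref{eqn:localprojribbonrelation2}, the $V_L^\epsilon$-eigenvalue from the crossing relation \eqref{eqn:ribbonrelation} together with path independence, and matching the two by orthogonality of characters. The only organizational difference is that the paper settles the boundary-support claim separately (Lemma \ref{lem:ribchargeinv} plus Schur's lemma) and treats ribbons with both endpoints interior or both on the boundary as separate commuting cases before the main computation, whereas you fold everything into one eigenvalue calculation and deduce the support statement from the identity with $V_L^\epsilon$.
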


	\begin{proof}
		The first part of the lemma follows directly from Lemma \ref{lem:ribchargeinv} and an application of Schur's lemma.
		
		Let $\Omega \in \calG_L$.  
		From equation \eqref{eqn:closedribbon}, $F_{\partial L}^{\chi,c} \Omega = \Omega$ for all $\chi, c$.
		Thus, $V_L^\epsilon \Omega = V^\mu_L \Omega = 0$. 
		On the other hand, from eqn.~\eqref{eqn:localprojgs}, 
		$D_L^\epsilon \Omega= \sum_{\prod_{s}\chi_s \neq \iota} 
		\prod_{v} D_v^{\chi_v} \Omega 
		= \sum_{\prod_{v}\chi_v \neq \iota}  \prod_{v} \delta_{\chi_v, \iota} \Omega = 0$,
		where each term in the sum vanishes because the requirement $\prod_{v}\chi_v \neq \iota$ 
		forces a non-trivial charge to exist at least one star.
		Similarly, $D_L^\mu \Omega = 0$.
		Hence the operators agree on the subspace $\calG_L$.
		
		Let $\rho$ be a ribbon and suppose $\rho$ has both endpoints contained in $\Lambda_L$.
		Label its endpoints as $\partial_0\rho = (v_0, f_0)$ and $ \partial_1 \rho = (v_1, f_1)$.
		By path independence in the ground state, equation~\eqref{eqn:ribbonpathind}, we can assume without loss of generality that $\rho$ 
		does not intersect the boundary ribbon, implying $V_L^\epsilon F_\rho^{\chi,c} \Omega= F_\rho^{\chi,c}V_L^\epsilon \Omega = 0$.
		Also, by Lemma \ref{lem:ribchargeinv} $D_L^\epsilon F_\rho^{\chi,c} \Omega= F_\rho^{\chi,c}D_L^\epsilon \Omega = 0$.
		
		Now suppose $\rho$ has both endpoints contained on the boundary.
		Since the endpoints of $\rho$ are not in $\mathcal{S}_L$,
		the vertex and face projectors corresponding to the endpoints of $\rho$ will not be involved in the products defining $D_L^\epsilon$.
		It follows from \eqref{eqn:ribstarrel}
		that $D_L^\epsilon F_\rho^{\chi,c} = F_\rho^{\chi,c} D_L^\epsilon$.
		On the other hand, from the ribbon crossing relations \eqref{eqn:ribbonrelation} we can compute
		\begin{align*}
		V_L^\epsilon F_{\rho}^{\chi,c} &= \bigg(\frac{1}{\abs{G}}\sum_{g\in G} I - F_{\partial L}^{\iota,g}\bigg)F_{\rho}^{\chi,c}\\
		& = F_{\rho}^{\chi,c} \bigg(\frac{1}{\abs{G}}\sum_{\chi\neq\iota} I - \chi(g) \chi(\bar{g}) F_{\partial L}^{\iota,g}\bigg)\\
		&= F_{\rho}^{\chi,c} V_L^\epsilon.
		\end{align*}

		Finally, we consider the action of $V_L^\epsilon$ and $D_L^\epsilon$ 
		on the spanning set of vectors of the form $\prod_{i}F_{\rho_i}^{\chi_i,c_i} \Omega$.
		From the above arguments, without loss of generality, 
		we can consider the product of ribbon operators for ribbons which have one endpoint on the boundary
		and one in the interior of $\Lambda$, since the other path operators commute with both $V_L^\epsilon$ and $D_L^{\epsilon}$. 
		We also assume $\partial_0 \rho_i \in \mathcal{S}_L$.
		This can always be achieved by reversing the direction of the ribbon (called ``inversion'' in~\cite{BombinMD}).
		By concatenation of ribbons \eqref{eqn:ribbonconcatenation}, without loss of generality we can assume
		that all endpoints are distinct, that is,
		$\partial_i\rho_k \neq \partial_j\rho_l$ for all $i,j = 0,1$ and $k, l = 0,1,\ldots,n$.
		Let $\{(v_i,f_i)\}_{i=1}^n$ enumerate the endpoints of $ \rho_i$ contained in the interior of $\Lambda_L$.
		
		Applying the ribbon operator relations \eqref{eqn:ribbonrelation},
		\begin{align*}
		V_L^\epsilon \bigg( \prod_{i=1}^n F_{\rho_i}^{\chi_i,c_i}\bigg) \Omega
		&= \frac{1}{\abs{G}} \bigg( \sum_{g\in G} I - F_{\partial L}^{\iota,g}\bigg) 
		\bigg( \prod_{i=1}^n F_{\rho_i}^{\chi_i,c_i}\bigg) \Omega\\
		& = \frac{1}{\abs{G}}\bigg( \prod_{i=1}^n F_{\rho_i}^{\chi_i,c_i}\bigg)
		\bigg( \sum_{g\in G} I - \prod_{i=1}^n \chi_i(g) F_{\partial L}^{\iota,g}\bigg)\Omega\\
		& =\bigg( \prod_{i=1}^n F_{\rho_i}^{\chi_i,c_i}\bigg) \left( I - \delta_{\prod_{i=1}^n \chi_i,\iota} \right)\Omega,
		\end{align*}
		where we use orthogonality of characters, $ \frac{1}{|G|} \sum_{g\in G} \prod_{i=1}^n \chi_i(g) = \delta_{\prod_{i=1}^n \chi_i,\iota}$.
		On the other hand we have that 
		\begin{align*}
		D_L^\epsilon \bigg( \prod_{i=1}^n F_{\rho_i}^{\chi_i,c_i}\bigg) \Omega
		&= \bigg( \sum_{\prod_{v}\chi_v \neq \iota} 
		\prod_{v} D_v^{\chi_v} \bigg) \bigg( \prod_{i=1}^n F_{\rho_i}^{\chi_i,c_i}\bigg) \Omega\\
		&=\bigg( \prod_{i=1}^n F_{\rho_i}^{\chi_i,c_i}\bigg)  \bigg( \sum_{\prod_{v}\chi_v \neq \iota} 
		\prod_{i=1}^n D_{v_i}^{\chi_{i}\chi_{v_i}} \prod_{v \neq v_i} D_v^{\chi_v}\bigg) \Omega\\
		&= \bigg( \prod_{i=1}^n F_{\rho_i}^{\chi_i,c_i}\bigg) \left( I - \delta_{\prod_{i=1}^n \chi_i,\iota} \right)\Omega,
		\end{align*}
		where for the last equality we apply an extension of \eqref{eqn:localprojribbonrelation1}.
		
		We have shown $D_L^\epsilon \left( \prod_{i=1}^n F_{\rho_i}^{\chi_i,c_i} \Omega\right) =V_L^\epsilon \left( \prod_{i=1}^n F_{\rho_i}^{\chi_i,c_i} \Omega\right)$ for any arbitrary spanning vector of $\calH_L$.
		Therefore, $D_L^\epsilon  = V_L^\epsilon$ as operators on $\calH_L$.
		A similar argument gives  $D_L^\mu = V_L^\mu $ as operators on $\calH_L$.
	\end{proof}
	
	We briefly summarize the physical interpretation of the properties of ribbon operators and charge projectors.
	The ribbon operators create a pair of excitations, one at each end of the ribbon.
	In addition, the different types of excitations are labeled by pairs $(\chi,c) \in \widehat{G} \times G$ and the charges at each end are \emph{conjugate} to each other.
	In other words, the \emph{global} charge does not change after applying a ribbon operator.
	Moreover, when acting with a ribbon operator on the ground state, the resulting state only depends on the endpoints of the ribbon, i.e.\ the location of the excitations.
	A similar thing is true for configurations of multiple charges, up to a phase.
	This phase can be explained by the anyonic nature of the charges: exchanging two of them gives an overall phase, much like interchanging two fermions yield a minus sign.
	The local Hilbert spaces can be obtained completely by such operations, so that we can define a basis by specifying the charge at each site (with an additional constraint on the total charge).
	This observation will play an important role in our proof.
	
	\section{Single excitation ground states}
	Recall that for the quantum double models there exists a unique ground state satisfying the frustration-free property Proposition \ref{prop:qdffgs}.
	
	\begin{prop}(\cite{AlickiFH,Naaijkens11,FiedlerN})
		Let $\omega^0$ be the frustration-free ground state of the quantum double model obtained as the limit of finite volume ground states.
		Then,
		\begin{enumerate}
			\item if $\omega$ is a frustration-free ground state then $\omega = \omega^0$,
			\item $\omega^0$ is a pure state,
			\item Let $(\pi_0, \Omega_0, \calH_0)$ be a GNS-representation for $\omega^0$ and $H_0$ be the GNS Hamiltonian.
			Then, $spec(H_0) = \ZZ^{\geq 0}$ with a simple ground state eigenvector $\Omega_0$.
		\end{enumerate}
	\end{prop}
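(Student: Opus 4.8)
The plan is to obtain all three assertions by assembling results already available, in each case reducing an infinite-volume statement to a finite-volume computation.

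\emph{Uniqueness and purity.} The interaction terms $I-A_v$ and $I-B_f$ are non-negative orthogonal projections and the joint kernel $\calG_L$ is non-trivial for all sufficiently large boxes, so the model is frustration-free in the sense of Definition~\ref{defn:ff}; by Corollary~\ref{cor:qdLTQO} it also satisfies the Local Topological Quantum Order Assumption~\ref{ass:LTQO}. Lemma~\ref{lem:LTQOff} then gives that $\omega(A)=\lim_L\frac{1}{\dim\ker H_{\Lambda_L}}\Tr(P_{\Lambda_L}A)$ exists, is frustration-free, and is the unique frustration-free ground state; since the $\omega^0$ obtained as a weak$^*$-limit of finite-volume ground-state functionals is itself frustration-free (for fixed $v,f$ the numbers $\omega_L(I-A_v)$ and $\omega_L(I-B_f)$ vanish once $L$ is large), we get $\omega^0=\omega$, which is assertion (1). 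For purity I would argue that for any non-negative $P\in\calA$ the set $\{\varphi\in\calA^*_{+,1}:\varphi(P)=0\}$ is a face of the state space, so the set $F$ of states annihilating every $I-A_v$ and every $I-B_f$ is a face, being an intersection of faces; every $\varphi\in F$ is a ground state (for $A\in\calA_{loc}$ one has $\varphi(A^*\delta(A))=\lim_L\langle\pi_\varphi(A)\Omega_\varphi,\pi_\varphi(H_{\Lambda_L})\pi_\varphi(A)\Omega_\varphi\rangle\geq0$, using $\pi_\varphi(H_{\Lambda_L})\Omega_\varphi=0$ and $H_{\Lambda_L}\geq0$), so $F=\{\omega^0\}$ by (1); a singleton face is an extreme point of $\calA^*_{+,1}$, hence $\omega^0$ is pure, and in particular $\pi_0$ is irreducible.

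\emph{The GNS Hamiltonian.} Let $(\pi_0,\Omega_0,\calH_0)$ be the GNS triple and $H_0\geq0$ the GNS-Hamiltonian. Frustration-freeness gives $\langle\Omega_0,\pi_0(H_{\Lambda_n})\Omega_0\rangle=\omega^0(H_{\Lambda_n})=0$, and $\pi_0(H_{\Lambda_n})\geq0$, so $\pi_0(H_{\Lambda_n})\Omega_0=0$ for every $n$; thus Lemma~\ref{lem:SRC} applies and $\pi_0(H_{\Lambda_n})\to H_0$ in the strong resolvent sense with $\operatorname{spec}(H_0)\subseteq\overline{\bigcup_n\operatorname{spec}(\pi_0(H_{\Lambda_n}))}$. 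Each $H_{\Lambda_n}$ is a finite sum of mutually commuting orthogonal projections, so $\operatorname{spec}(H_{\Lambda_n})\subseteq\ZZ^{\geq0}$ and hence $\operatorname{spec}(H_0)\subseteq\ZZ^{\geq0}$. For the reverse inclusion I would exhibit, for each $n\in\ZZ^{\geq0}$, a product $W_n=\prod_iF_{\rho_i}^{\chi_i,c_i}$ of ribbon operators with pairwise-disjoint supports whose endpoints and labels are chosen so that, by the commutation and energy relations \eqref{eqn:ribHamrel}--\eqref{eqn:ribenergy} (with every prefactor $C_{\rho_i}$ at its interior value once the box contains $\rho_i$), the total energy introduced equals $n$; since $W_n\in\calA_{loc}$ and $\pi_0(\calA_{loc})\Omega_0$ is a core for $H_0$, the vector $\pi_0(W_n)\Omega_0$ is an eigenvector of $H_0$ with eigenvalue $n$, so $\operatorname{spec}(H_0)=\ZZ^{\geq0}$. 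Finally, since $\{\pi_0(H_{\Lambda_n})\}_n$ is a monotone nondecreasing sequence of non-negative operators converging to $H_0$ in the strong resolvent sense, the standard monotone-convergence description of the associated quadratic forms yields $\ker H_0=\bigcap_n\ker\pi_0(H_{\Lambda_n})$; a unit vector $\psi$ in this intersection satisfies $\omega_\psi(I-A_v)=\omega_\psi(I-B_f)=0$ for all $v,f$ (because $H_{\Lambda_n}\geq I-A_v$ for $v\in\mathcal{V}_{\Lambda_n}$, and similarly for plaquettes), so $\omega_\psi$ is frustration-free, hence $\omega_\psi=\omega^0$ by (1), and irreducibility of $\pi_0$ forces $\psi\in\CC\Omega_0$. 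Thus $\ker H_0=\CC\Omega_0$, so $\Omega_0$ is a simple ground-state eigenvector and assertion (3) holds.

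\emph{Main obstacle.} The two delicate points are the uses of strong resolvent convergence in the last step: the identification $\ker H_0=\bigcap_n\ker\pi_0(H_{\Lambda_n})$, which genuinely needs monotonicity of $\{\pi_0(H_{\Lambda_n})\}_n$ and not merely Lemma~\ref{lem:SRC}, and the verification that the ribbon-operator eigenvectors realize every value in $\ZZ^{\geq0}$; everything else is bookkeeping around Lemmas~\ref{lem:LTQOff} and~\ref{lem:SRC} and the face structure of the state space.
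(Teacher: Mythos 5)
Your overall route is exactly the one the paper indicates: it does not prove this proposition itself but points to Lemma~\ref{lem:LTQOff} for existence and uniqueness and to Lemma~\ref{lem:SRC} for the spectral statement, citing the references for the rest. Your elaboration of parts (1) and (2) is sound --- the identification of $\omega^0$ with the state of Lemma~\ref{lem:LTQOff}, the observation that the zero-energy states form a face of $\calA^*_{+,1}$, and the conclusion that a singleton face is an extreme point all go through --- and so do the inclusion $\operatorname{spec}(H_0)\subseteq\ZZ^{\geq 0}$ and the simplicity of $\Omega_0$, the latter because the $\pi_0(H_{\Lambda_n})$ are indeed monotone nondecreasing so the form-convergence argument you flag as delicate is available.

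The gap is in the reverse inclusion. By \eqref{eqn:ribenergy}, a single open ribbon with both endpoints in the interior contributes energy $2(2-\delta_{\chi,\iota}-\delta_{c,e})\in\{0,2,4\}$, so a product of ribbon operators with \emph{pairwise-disjoint} supports realizes only even energies; your $W_n$ cannot have energy $n$ for odd $n$. Worse, $1$ can never lie in $\operatorname{spec}(H_0)$: energy one means exactly one violated star or plaquette in the whole lattice, i.e.\ a single isolated non-trivial charge, whereas by Lemma~\ref{lem:ribchargeinv} local observables preserve the global charge, so every vector of $\calH_0$ carries trivial total charge and hence an even number ($0$ or $\geq 2$) of violated stars and of violated plaquettes separately. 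So the equality $\operatorname{spec}(H_0)=\ZZ^{\geq 0}$ is not attainable as written --- compare the earlier statement of this proposition (Proposition~\ref{prop:qdffgs}), which reads $2\ZZ^{\geq 0}$ and is correct for $G=\ZZ_2$. For $\abs{G}\geq 3$ odd energies $\geq 3$ do occur, but only through configurations of ribbons sharing endpoints (e.g.\ two magnetic ribbons $F_{\rho_1}^{\iota,c}$ and $F_{\rho_2}^{\iota,c}$ emanating from a common site leave charges $c^2,\bar c,\bar c$ at three faces, giving energy $3$ when $c^2\neq e$), which your disjointness hypothesis excludes. To repair part (3) you should drop the disjointness requirement, include such shared-endpoint configurations, and correct the target set to $\{0\}\cup\{2,3,4,\dots\}$ for $\abs{G}\geq 3$ and $2\ZZ^{\geq 0}$ for $G=\ZZ_2$.
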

	
	In the finite volume, elementary excitations are constructed by violating one of the frustration-free ground state conditions.
	These excitations must come in pairs since they are generated by open ribbon operators.
	By introducing a boundary condition to the Hamiltonian,
	we generate ground states which have one excitation in the bulk and one on the boundary.
	In the infinite volume, this is equivalent to moving one of the excitations off to infinity, 
	thereby isolating a single excitation in the bulk.
	By construction, these states will be ground states in the infinite volume.
	One way to understand this intuitively is that even though there is an excitation, we cannot lower the energy of the system with local operations.
	It is possible to move the excitation around (thereby locally decreasing the energy), but we cannot get rid of it completely with local operations, and the moved excitation will increase the local energy density at its new location.
	
	The idea is to use the projections that were introduced in the previous section to define boundary conditions, which can compensate for the existence of an excitation in the bulk.
	Recall that these projections are supported on the boundary of $\Lambda_L$.
	
	\begin{defn}
		\label{def:gsboundary}
		Define the following Hamiltonians with boundary condition
		\begin{align*}
		H_L^{\epsilon} &:= H_L - V_L^\epsilon,\\
		H_L^{\mu} &:= H_L - V_L^\mu,\\
		H_L^{\epsilon, \mu} & := H_L - V_L^\epsilon - V_L^\mu. 
		\end{align*}
		We will sometimes use the index $k$ to denote either $\epsilon, \mu,$ or $ (\epsilon,\mu)$,
		and set $V^{\epsilon,\mu}_L := V_L^\epsilon + V_L^\mu$.
	\end{defn}
	
	Recall that the boundary terms $V_L^k$ are linear combination of closed ribbon operators and thus commute with each interaction term, and hence the Hamiltonian:
	\begin{equation*}
	[V_L^k, B_f] = [ V_L^k, A_v] = [ V_L^k, H_L] = 0 \quad \text{ for all } \quad k, f, v.
	\end{equation*}
	In what follows we will show that $H_L^k \geq 0$, despite it being a difference of positive operators.
	From equation \eqref{eqn:closedribbon}, if  $\Omega_L \in \calG_L$ then $H_L^k \Omega_L = 0$ for all $k$.
	
	Now consider an open ribbon $\rho$ connecting a site $ \partial_0\rho = (v,f)\in \mathcal{S}_L$ to a site on the boundary,
	for instance $\rho$ as the ribbon in Figure~\ref{fig:ribbons}, and its corresponding ribbon operator, $F_\rho^{\chi,c}\in \calA_L$.
	Then, for $(\chi,c)$ we have that
	\begin{align*}
	H_L F_\rho^{\chi,c} \Omega_L &= F_\rho^{\chi,c} \left(I- \frac{1}{\abs{G}} \sum_{g\in G} \chi(g) A_v^g + I - B_f^{\bar{c}}\right) \Omega_L\\
	& = F_\rho^{\chi,c} \left( 2  -  \delta_{\bar{\chi},\iota}- \delta_{\bar{c},e}\right) \Omega_L.
	\end{align*} 
	In the last line we used orthonormality: $\langle \chi_1, \chi_2 \rangle := \frac{1}{|G|} \sum_{g \in G} \bar{\chi}_1(g) \chi_2(g) = \delta_{\chi_1, \chi_2}$. Similar calculations, as in the proof of Lemma~\ref{lem:globprojboundaryop}, yield:
	\begin{align*}
	V_L^\epsilon F_\rho^{\chi,c} \Omega_L &= F^{\chi,c}_\rho \left[\frac{1}{\abs{G}} \sum_{d\in G} \left(I - \chi(d) F^{\iota, d}_{\partial L}\right)\right] \Omega_L\\
	& = F^{\chi,c}_\rho ( I - \delta_{\chi, \iota}) \Omega_L,
	\end{align*}
	and for the magnetic charges,

	\begin{align*}
	V_L^\mu F_\rho^{\chi,c} \Omega_L &= F^{\chi,c}_\rho \left[\frac{1}{\abs{G}} \sum_{\xi\in \widehat{G}}\left( I - \xi(c) F^{\xi,e}_{\partial L}\right)\right] \Omega_L\\
	& = F^{\chi,c}_\rho ( I - \delta_{c, e}) \Omega_L.
	\end{align*}
	
	Therefore, together with equation~\eqref{eqn:ribenergy}, we find
	\begin{equation*}
	H_L^{\epsilon, \mu} F_\rho^{\chi, c} \Omega_L = 0.
	\end{equation*}
	If $\rho$ connects  two sites in $\mathcal{S}_L$ then $[V_L^k, F_\rho^{\chi,c}] = 0$.
	Thus, combined with a similar calculation from above for multiple charges on the boundary, we can conclude that 
	\[ \bigg\langle \bigg( \prod_{i} F_{\rho_i}^{\chi_i, c_i} \bigg) \Omega_L,  H_L^{\epsilon, \mu} \bigg( \prod_{i} F_{\rho_i}^{\chi_i, c_i} \bigg) \Omega_L \bigg\rangle \geq 0.\]
	Therefore, we have the following lemma.
	\begin{lemma}\label{lem:gsspan}
		Let $\calG_L^k$ be the ground state space of $H_L^k$ for $k = \epsilon, \mu$ and $(\epsilon,\mu)$.  Then,
		\begin{enumerate}
			\item $H_L^{\epsilon,\mu} \geq 0$ and $H_L^k \calG_L^k = 0$,
			\item $\calG_L^{\epsilon}$ is spanned by $\{ F_\rho^{\chi,e} \Omega_L: \rho $ connects an interior site to the boundary, $\Omega_L \in \calG_L, \chi\in \widehat{G}\}$.
			
			$\calG_L^{\mu}$ is spanned by $\{ F_\sigma^{\iota,c} \Omega_L: \sigma $ connects an interior site to the boundary, $\Omega_L \in \calG_L,  c\in G \}$.
			
			$\calG_L^{\epsilon,\mu}$ is spanned by $\{ F_\rho^{\chi,e}F_\sigma^{\iota,c}  \Omega_L: \rho,\sigma $ connects interior sites to the boundary, $\Omega_L \in \calG_L, (\chi,c)\in\widehat{G}\times G\}$.
		\end{enumerate}
	\end{lemma}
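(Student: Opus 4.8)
The plan is to deduce both parts from the explicit computations carried out just above, together with three structural facts already available: $V_L^\epsilon=D_L^\epsilon$ and $V_L^\mu=D_L^\mu$ (Lemma~\ref{lem:globprojboundaryop}) are mutually commuting orthogonal projections which also commute with $H_L$; every closed ribbon operator fixes each ground state, so $V_L^k\Omega_L=0$ and hence $H_L^k\Omega_L=0$ for all $\Omega_L\in\calG_L$ by~\eqref{eqn:closedribbon}; and the vectors $\prod_i F_{\rho_i}^{\chi_i,c_i}\Omega_L$ form a complete set of eigenvectors of $H_L$ on $\calH_L$.

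For part~(1), I would observe that the computations above actually produce eigenvectors, not just a one-sided quadratic-form bound. Using concatenation~\eqref{eqn:ribbonconcatenation} one may take the spanning vectors in the form used above: products of ribbon operators with pairwise distinct endpoints, each joining two interior sites or an interior site to the boundary, applied to a ground state. The interior–interior factors commute with $V_L^\epsilon$ and $V_L^\mu$, and on the interior–boundary factors the actions of $H_L$, $V_L^\epsilon$ and $V_L^\mu$ were computed explicitly, each returning the same vector times a nonnegative integer; so every such vector is a simultaneous eigenvector of $H_L$, $V_L^\epsilon$ and $V_L^\mu$, hence of $H_L^{\epsilon,\mu}$. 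Its $H_L^{\epsilon,\mu}$-eigenvalue is $\geq0$, because the $H_L$-eigenvalue dominates $V_L^\epsilon+V_L^\mu$ on it: a nontrivial total electric (resp.\ magnetic) boundary charge forces some interior-to-boundary ribbon to carry a nontrivial character (resp.\ group element), hence at least one unit of energy at that ribbon's interior endpoint, and two disjoint such contributions appear when both $V_L^\epsilon$ and $V_L^\mu$ act as $1$. A self-adjoint operator with a spanning set of eigenvectors of nonnegative eigenvalue is positive, so $H_L^{\epsilon,\mu}\geq0$, and then $H_L^\epsilon=H_L^{\epsilon,\mu}+V_L^\mu\geq0$ and $H_L^\mu=H_L^{\epsilon,\mu}+V_L^\epsilon\geq0$. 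Since each $H_L^k$ is nonnegative and kills the nonzero space $\calG_L$, its ground state space $\calG_L^k$ equals $\ker H_L^k$ and $H_L^k\calG_L^k=0$.

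For part~(2), I treat $k=\epsilon$; the cases $\mu$ and $(\epsilon,\mu)$ are identical up to relabeling, with one extra joint sector in the last case. Because $H_L$ and the projection $V_L^\epsilon$ commute, $\calG_L^\epsilon=\ker(H_L-V_L^\epsilon)$ splits orthogonally into $\ker H_L\cap\ker V_L^\epsilon$ and the joint eigenspace where $H_L$ acts as $1$ and $V_L^\epsilon$ as the identity. The first summand is $\calG_L\cap\ker V_L^\epsilon=\calG_L$, which is exactly the $\chi=\iota$ part of the claimed family. The second summand consists of the energy-one states of nontrivial total electric charge; since the spectrum of $H_L$ counts violated interaction terms and $V_L^\epsilon=D_L^\epsilon$ forces a violated star, these are precisely the states with a single electric excitation at one interior site, carrying some $\chi\neq\iota$, and no magnetic excitation. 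The inclusion $\supseteq$ is already established: for $\rho$ joining an interior site to the boundary and $\chi\neq\iota$, the computations above give $H_L^\epsilon F_\rho^{\chi,e}\Omega_L=0$ and $V_L^\epsilon F_\rho^{\chi,e}\Omega_L=F_\rho^{\chi,e}\Omega_L$. For $\subseteq$ I would expand a generic such state in the complete eigenbasis of $H_L$, reduce each term by concatenation to a product of ribbons with distinct endpoints, read off from the energy formula~\eqref{eqn:ribenergy} that a contributing term has exactly one interior-to-boundary ribbon labeled $(\chi,e)$ with $\chi\neq\iota$ and all remaining factors either closed or interior–interior with trivial labels, delete the closed factors using~\eqref{eqn:closedribbon}, absorb the trivial interior–interior factors into a ground state, and finally use path independence~\eqref{eqn:ribbonpathind} to bring the term to the stated form. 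Reassembling, $\calG_L^\epsilon=\operatorname{span}\{F_\rho^{\chi,e}\Omega_L\}$; the same argument with magnetic charges tracked via~\eqref{eqn:localprojribbonrelation2} gives the descriptions of $\calG_L^\mu$ and $\calG_L^{\epsilon,\mu}$.

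The main obstacle is the inclusion $\subseteq$ in part~(2): turning the spectral description ``exactly one violated star, no violated plaquette, nontrivial total charge'' into the normal form ``a single interior-to-boundary ribbon operator applied to a ground state'' is, in effect, the statement that the $H_L$-eigenspace attached to a fixed excitation configuration is a single copy of $\calG_L$, and establishing this cleanly requires careful control of how the complete eigenbasis behaves under concatenation and of how the charge at a site accumulates from several incident ribbons. Everything else is the linear algebra of the commuting family $\{H_L,V_L^\epsilon,V_L^\mu\}$ combined with the identities already verified above.
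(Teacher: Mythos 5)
Your proposal is correct and follows essentially the same route as the paper, which does not give a separate formal proof but presents the lemma as a summary of the preceding computations: the identification $V_L^k = D_L^k$ from Lemma~\ref{lem:globprojboundaryop}, the eigenvalue calculations of $H_L$, $V_L^\epsilon$, $V_L^\mu$ on ribbon states, and the completeness of the eigenbasis $\prod_i F_{\rho_i}^{\chi_i,c_i}\Omega_L$. Your additional care in noting that the spanning vectors are simultaneous eigenvectors (so that nonnegativity on them genuinely yields $H_L^{\epsilon,\mu}\geq 0$, rather than only a quadratic-form bound on a spanning set) and your sketch of the $\subseteq$ inclusion via concatenation, the energy formula~\eqref{eqn:ribenergy}, \eqref{eqn:closedribbon} and path independence are exactly the details the paper leaves implicit.
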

	
	From the decompositions given in Lemma~\ref{lem:gsspan}, 
	it is clear that 
	\begin{equation*}
	\text{if } L' > L \text{ then } H_L^{ \epsilon, \mu} ( \calG_{L'}^{\epsilon,\mu}) = 0.
	\end{equation*}
	Note that $\calG^\mu_L$ and $\calG_L^\epsilon$ are subspaces of $\calG_L^{\epsilon,\mu}$. 
	This result allows us to decompose the ground state space into different sectors corresponding to the different charges:
	\begin{cor}
		The ground state space has a natural decomposition
		\begin{equation*}
		\calG_L^{\epsilon,\mu} = \bigoplus_{\chi\in\widehat{G},c\in G} D_L^{\chi,c}\calG_L^{\epsilon,\mu}.
		\end{equation*}
	\end{cor}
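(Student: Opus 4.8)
The plan is to exhibit the family $\{D_L^{\chi,c}\}_{(\chi,c)\in\widehat{G}\times G}$, with $D_L^{\chi,c} := D_L^\chi D_L^c$, as a complete set of mutually orthogonal projections that leaves the ground state space $\calG_L^{\epsilon,\mu}$ invariant; the decomposition then follows by simply restricting the resolution of the identity $\sum_{\chi,c}D_L^{\chi,c}=I$ to $\calG_L^{\epsilon,\mu}$. First I would record the algebraic facts about the global projectors. Because $G$ is abelian, all star and plaquette operators $A_v^g$ and $B_f^h$ commute pairwise, so the products $\prod_v D_v^{\chi_v}$ and $\prod_f D_f^{c_f}$ commute with one another, $D_L^{\chi,c}=D_L^\chi D_L^c$ is a well-defined self-adjoint operator, and using the orthogonality and completeness identities for the local projectors $D_v^\chi$, $D_f^c$ (the displayed relations preceding \eqref{eqn:localprojcomplete}) a direct computation gives $D_L^{\chi,c}D_L^{\chi',c'}=\delta_{\chi,\chi'}\delta_{c,c'}\,D_L^{\chi,c}$ and $\sum_{\chi\in\widehat{G},\,c\in G}D_L^{\chi,c}=I$.

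Next I would check that each $D_L^{\chi,c}$ commutes with $H_L^{\epsilon,\mu}=H_L-V_L^\epsilon-V_L^\mu$. Commutation with $H_L$ again uses that in the abelian case every $A_v^g$ and every $B_f^h$ commutes, so $[D_L^{\chi,c},A_v]=[D_L^{\chi,c},B_f]=0$ for all $v,f$. Commutation with the boundary terms follows because $V_L^\epsilon$ and $V_L^\mu$ are, by \eqref{eqn:boundribbon}, linear combinations of closed ribbon operators $F_{\partial L}^{\cdot,\cdot}$, and by \eqref{eqn:ribclosedcom} a closed ribbon operator commutes with every $A_v^g$ and $B_f^h$, hence with $D_L^{\chi,c}$. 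Since $H_L^{\epsilon,\mu}\geq 0$ and $H_L^{\epsilon,\mu}\calG_L^{\epsilon,\mu}=0$ by Lemma~\ref{lem:gsspan}, the ground state space is exactly $\ker H_L^{\epsilon,\mu}$, so any operator commuting with $H_L^{\epsilon,\mu}$ preserves it; in particular each $D_L^{\chi,c}$ restricts to an orthogonal projection on $\calG_L^{\epsilon,\mu}$.

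Restricting $I=\sum_{\chi,c}D_L^{\chi,c}$ to $\calG_L^{\epsilon,\mu}$ and invoking pairwise orthogonality of the restricted projections then yields $\calG_L^{\epsilon,\mu}=\bigoplus_{\chi\in\widehat{G},\,c\in G}D_L^{\chi,c}\calG_L^{\epsilon,\mu}$. If desired I would also note that every summand is nonzero by applying $D_L^{\chi,c}$ to the spanning vectors $F_\rho^{\chi,e}F_\sigma^{\iota,c}\Omega_L$ of Lemma~\ref{lem:gsspan} and using the ribbon--projector relations \eqref{eqn:localprojribbonrelation1}--\eqref{eqn:localprojribbonrelation2}, so that the decomposition has the expected $\abs{G}^2$ nontrivial pieces. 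The only step needing genuine care is the commutation of the global charge projectors with the boundary terms $V_L^k$; everything else is bookkeeping with the abelian commutation relations, and that commutation is itself an immediate consequence of \eqref{eqn:ribclosedcom}, so I do not expect a real obstacle here.
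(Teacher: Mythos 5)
Your proposal is correct, but it proves invariance of the ground state space in a different way than the paper does. The paper's proof is a one-liner: by Lemma~\ref{lem:gsspan} the space $\calG_L^{\epsilon,\mu}$ is spanned by vectors $F_\rho^{\chi,e}F_\sigma^{\iota,c}\Omega_L$, each of which carries a definite total charge (it lies in the range of $D_L^{\chi,c}$ by the projector--ribbon relations), so the resolution of the identity \eqref{eqn:localprojcomplete} immediately splits the span into the charged subspaces. You instead never use the explicit spanning vectors for the main argument: you establish that the global projectors $D_L^{\chi,c}$ form a complete orthogonal family, that they commute with $H_L$ (abelianness of the $A_v^g$, $B_f^h$) and with the boundary terms $V_L^\epsilon, V_L^\mu$ (closed-ribbon commutation \eqref{eqn:ribclosedcom}), hence preserve $\ker H_L^{\epsilon,\mu}=\calG_L^{\epsilon,\mu}$, and then restrict the resolution of the identity. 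Both routes rest on the same algebraic identities; yours is slightly more abstract and would survive any other description of the ground space (only positivity and $H_L^{\epsilon,\mu}\calG_L^{\epsilon,\mu}=0$ from Lemma~\ref{lem:gsspan} are needed), while the paper's is shorter because the spanning-set lemma has already done the work, and it additionally makes visible that every summand is nonempty --- a point you correctly relegate to an optional remark, and which the corollary does not actually assert.
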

	
	\begin{proof}
		This follows from Lemma \ref{lem:gsspan} and the relation \eqref{eqn:localprojcomplete}.
	\end{proof}

	Note that by construction we have that $H_L^{\epsilon,\mu} \geq 0$,
	$ \delta(A) = \lim_{L\ra \infty} [H_L, A] = \lim_{L\ra \infty} [H_L^{\epsilon,\mu}, A]$  for all $ A \in \calA_{loc}$, 
	and  $\omega_s^{\chi,c}( H_L^{\epsilon,\mu}) = 0$ from Lemma~\ref{lem:gsspan}.
	From the following basic lemma, it follows that the single excitation state $\omega_s^{\chi,c}$ (see Equation \eqref{eqn:singleexcitation}) is an infinite volume ground state. 
	
	\begin{lemma}\label{lem:finitegslim}
		Let $\omega \in \calA_{+,1}^*$ and $\tilde{H}_L\in \calA_L$ be a sequence of positive operators 
		such that $\delta(A) = \lim_{L \ra \infty} [\tilde{H}_L, A]$ for all $A \in \calA_{loc}$.
		If $\omega (\tilde{H}_L) = 0$ for all $L$ then $\omega$ is a ground state, that is, 
		$\omega(A^* \delta(A)) \geq 0$ for all $A \in \calA_{loc}$.
	\end{lemma}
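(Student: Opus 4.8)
The plan is to show that the nonnegativity $\omega(A^*\delta(A))\geq 0$ already holds at each finite volume, with $\delta$ replaced by the inner derivation $[\tilde{H}_L,\,\cdot\,]$, and then to pass to the limit using the norm convergence $[\tilde{H}_L,A]\ra\delta(A)$ together with the norm continuity of the state $\omega$.

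The key point is that the two hypotheses on $\tilde{H}_L$ — positivity and $\omega(\tilde{H}_L)=0$ — together force $\omega(X\tilde{H}_L)=0$ for \emph{every} $X\in\calA$. Indeed, since $\tilde{H}_L\geq 0$ lies in the $C^*$-algebra $\calA_L$, its square root $\tilde{H}_L^{1/2}\in\calA_L\subset\calA$ is defined by continuous functional calculus, and the Cauchy--Schwarz inequality for the state $\omega$, namely $\abs{\omega(B^*C)}^2\leq\omega(B^*B)\,\omega(C^*C)$, applied with $B=\tilde{H}_L^{1/2}$ and $C^*=X\tilde{H}_L^{1/2}$, gives $\abs{\omega(X\tilde{H}_L)}^2\leq\omega(\tilde{H}_L)\,\omega(X\tilde{H}_L X^*)=0$. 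Equivalently, in the GNS representation $(\pi_\omega,\calH_\omega,\Omega_\omega)$ of $\omega$ one has $\langle\Omega_\omega,\pi_\omega(\tilde{H}_L)\Omega_\omega\rangle=\omega(\tilde{H}_L)=0$ with $\pi_\omega(\tilde{H}_L)\geq 0$, whence $\pi_\omega(\tilde{H}_L)\Omega_\omega=0$, and the same conclusion follows.

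With this in hand, for $A\in\calA_{loc}$ expand
\begin{equation*}
\omega(A^*[\tilde{H}_L,A]) = \omega(A^*\tilde{H}_L A) - \omega(A^*A\,\tilde{H}_L).
\end{equation*}
The second term vanishes by the previous step (take $X=A^*A$), while the first term equals $\omega\big((\tilde{H}_L^{1/2}A)^*(\tilde{H}_L^{1/2}A)\big)\geq 0$ because $\omega$ is positive. Hence $\omega(A^*[\tilde{H}_L,A])\geq 0$ for every $L$. Finally, letting $L\ra\infty$, the hypothesis that $[\tilde{H}_L,A]\ra\delta(A)$ in norm together with boundedness of $\omega$ yields $\omega(A^*[\tilde{H}_L,A])\ra\omega(A^*\delta(A))$, so the inequality is preserved in the limit and $\omega(A^*\delta(A))\geq 0$, as required.

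I do not anticipate a genuine obstacle; the only mild subtlety is justifying $\tilde{H}_L^{1/2}\in\calA_L$ and the Cauchy--Schwarz manipulation above, which is why, if needed, I would present the alternative GNS phrasing, where $\pi_\omega(\tilde{H}_L)\Omega_\omega=0$ is immediate and the computation $\omega(A^*[\tilde{H}_L,A])=\langle\pi_\omega(A)\Omega_\omega,\pi_\omega(\tilde{H}_L)\pi_\omega(A)\Omega_\omega\rangle\geq 0$ is transparent.
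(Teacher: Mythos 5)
Your proof is correct and follows essentially the same route as the paper: establish $\omega(A^*[\tilde{H}_L,A])\geq 0$ at each finite $L$ from positivity of $\tilde{H}_L$ and $\omega(\tilde{H}_L)=0$, then pass to the limit using the norm convergence of the commutators and boundedness of $\omega$. The only cosmetic difference is that the paper kills the cross term $\omega(A^*A\tilde{H}_L)$ via the reduced density matrix ($\tilde{H}_L\rho_L=0$), whereas you use Cauchy--Schwarz, equivalently $\pi_\omega(\tilde{H}_L)\Omega_\omega=0$ in the GNS representation, which is the same idea in a representation-free (and slightly more general) form.
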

	
	\begin{proof}
		Let $\rho_L \in \calA_L$ be the reduced density matrix for $\omega$ on $\calA_L$, that is, 
		$\omega(A) = \Tr(\rho_L A)$ for all $A \in \calA_L$. 
		From the condition, $ \omega(\tilde{H}_L) = 0$ and $\tilde{H}_L \geq 0$, it follows that $\tilde{H}_L \rho_L = 0$ for all $L$.
		Therefore, by boundedness of $\omega$,
		\begin{align*}
		\omega(A^* \delta(A)) & = \lim_{L \ra \infty} \omega( A^* [ \tilde{H}_L, A])= \lim_{L \ra \infty}  \omega(A^* \tilde{H}_L A ) - \omega(A^* A \tilde{H}_L) \\
		& = \lim_{L \ra \infty} \Tr(\rho_L A^* \tilde{H}_L A)  - \Tr( \tilde{H}_L \rho_L A^*A)\\
		& \geq 0
		\end{align*}
		for all $A \in \calA_{loc}$.
	\end{proof}
	
	The states $\omega^{\chi,c}_s$ were first introduced in~\cite{FiedlerN,Naaijkens11}.
	They showed the states can be constructed from the frustration-free ground state via an automorphism,
	$ \omega_s^{\chi,c} = \omega^0 \circ \alpha^{\chi,c}_\rho$, where
	\begin{equation}\label{eqn:chargemorp}
	\alpha^{\chi,c}_\rho(A) = \lim_{L\ra \infty} F_{\rho_L}^{\chi,c *} A \ F_{\rho_L}^{\chi,c}.
	\end{equation}
	The limit converges in norm for each $A \in \calA$ and defines an outer automorphism.
	
	\section{The complete set of ground states}\label{sec:results}
	In this section we prove the main result of this chapter, that is, a complete classification of the ground states of the quantum double model
	for abelian groups.
	Our strategy is to find a boundary condition such that any 
	infinite volume ground state has zero energy for the local Hamiltonian with this boundary condition.
	It turns out that this is possible with the boundary conditions introduced in Section~\ref{sec:finvolume}.
	The classification of infinite volume ground states then
	simplifies to a classification of infinite volume limits of finite volume ground states.
	These finite volume ground states are well understood by the results in the previous section, and this allows us to obtain our classification.
	This strategy is similar to the solution of the complete ground state for the XXZ chain given in \cite{KomaN}.
	
	We begin by introducing some notation.
	\begin{defn}
		Let $K := \{ \omega \in \calA_{+,1}^*: \omega(A^* \delta(A))\geq 0\}$ denote the set of infinite volume ground states,
		where $\delta$ is the generator of the dynamics for the quantum double model for abelian group $G$.
		Similarly, for the set of all finite volume ground state functionals of $H_L^{\epsilon,\mu}$ we write $ K_L := \{\omega_L:\calA_L \ra \CC \mid \omega_L(H_L^{\epsilon,\mu})=0\} $.
	\end{defn}
	
	The first step is to show that any infinite volume ground state minimizes the energy of the finite volume Hamiltonians $H_L^{\epsilon,\mu}$ of Definition~\ref{def:gsboundary}.
	Here we will use the formulation of the boundary term in terms of a sum of products of local charge projections. This gives us precise control on the location of possible excitations.
	\begin{lemma}\label{lem:gshambound}
		Let $\omega\in \calA_{+,1}^*$.
		Then, $\omega \in K$ if and only if for all $L \geq 2$
		\begin{equation*} 
		\omega(H_L - D_L^\epsilon - D_L^\mu) = 0.
		\end{equation*} 
	\end{lemma}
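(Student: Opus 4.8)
Throughout, write $H_L^{\epsilon,\mu} := H_L - V_L^\epsilon - V_L^\mu$ as in Definition~\ref{def:gsboundary}; by Lemma~\ref{lem:globprojboundaryop} we have $D_L^\epsilon = V_L^\epsilon$ and $D_L^\mu = V_L^\mu$ as operators on $\calH_L$, so the asserted condition is exactly $\omega(H_L^{\epsilon,\mu}) = 0$ for all $L \geq 2$. For the implication ``$\Leftarrow$'', note that each $H_L^{\epsilon,\mu}$ is a non-negative element of $\calA_{\Lambda_L}$ by Lemma~\ref{lem:gsspan}, and that the boundary term is supported on $\Lambda_L \setminus \Lambda_{L-1}$, so that for any fixed $A \in \calA_{loc}$ one has $[H_L^{\epsilon,\mu},A] = [H_L,A]$ for $L$ large, hence $\delta(A) = \lim_{L \to \infty}[H_L^{\epsilon,\mu},A]$. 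If $\omega(H_L^{\epsilon,\mu}) = 0$ for all $L \geq 2$, then Lemma~\ref{lem:finitegslim} applies to the sequence $\{H_L^{\epsilon,\mu}\}_{L \geq 2}$ and yields $\omega(A^*\delta(A)) \geq 0$ for all $A \in \calA_{loc}$, i.e.\ $\omega \in K$.

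For the implication ``$\Rightarrow$'', suppose $\omega \in K$ and fix $L \geq 2$. The key point is that $H_L^{\epsilon,\mu}$ implements the generator $\delta$ on $\calA_{\Lambda_{L-1}}$: for $A \in \calA_{\Lambda_{L-1}}$ the boundary term commutes with $A$ (disjoint supports, by Lemma~\ref{lem:globprojboundaryop}), while every star or plaquette not commuting with $A$ shares an edge with $\Lambda_{L-1}$ and so, being within distance one of that edge, is contained in $\Lambda_L$; hence $[H_L^{\epsilon,\mu},A] = [H_L,A] = \delta(A)$. Since $H_L^{\epsilon,\mu}$ is a self-adjoint element of $\calA$ with $H_L^{\epsilon,\mu} \geq 0$ (Lemma~\ref{lem:gsspan}), it may be used as the operator $\tilde H_{\Lambda_{L-1}}$ in the Bratteli--Kishimoto--Robinson variational characterization of infinite volume ground states \cite{BratteliKR}; applying that result with $\Lambda = \Lambda_{L-1}$ gives
\begin{equation*}
\omega(H_L^{\epsilon,\mu}) \;=\; \inf\bigl\{\, \omega'(H_L^{\epsilon,\mu}) : \omega' \in \calA_{+,1}^*,\ \omega'|_{\calA_{\Lambda_{L-1}^c}} = \omega|_{\calA_{\Lambda_{L-1}^c}} \,\bigr\} \;\geq\; 0 ,
\end{equation*}
so that it remains to show that this infimum vanishes.

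To this end, recall that $H_L^{\epsilon,\mu}$ has integer spectrum (the energy of any configuration $\prod_i F_{\rho_i}^{\chi_i,c_i}\Omega_L$ is computed from \eqref{eqn:ribenergy} and the displays preceding Lemma~\ref{lem:gsspan}), hence $H_L^{\epsilon,\mu} \geq I - P_L$ with $P_L$ the orthogonal projection onto $\calG_L^{\epsilon,\mu} = \ker H_L^{\epsilon,\mu}$; so it is enough to produce, for each $\varepsilon > 0$, a state $\omega' \in \calA_{+,1}^*$ with $\omega'|_{\calA_{\Lambda_{L-1}^c}} = \omega|_{\calA_{\Lambda_{L-1}^c}}$ and $\omega'(P_L) \geq 1-\varepsilon$. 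I would build such an $\omega'$ in two steps. First, using the explicit spanning set for $\calG_L^{\epsilon,\mu}$ in Lemma~\ref{lem:gsspan} (together with the fact that ribbon operators ending on a given boundary edge generate the algebra of that edge), one checks that the reduced density matrices of vectors of $\calG_L^{\epsilon,\mu}$ on the boundary annulus $\Lambda_L \setminus \Lambda_{L-1}$ span all of $\calA_{\Lambda_L \setminus \Lambda_{L-1}}$, so that there is a density matrix $\sigma$ on $\calH_{\Lambda_L}$ supported in $\calG_L^{\epsilon,\mu}$ whose restriction to $\calA_{\Lambda_L \setminus \Lambda_{L-1}}$ equals $\omega|_{\calA_{\Lambda_L \setminus \Lambda_{L-1}}}$. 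Second, since $\calA = \calA_{\Lambda_{L-1}} \otimes \calA_{\Lambda_{L-1}^c}$ and the overlap algebra $\calA_{\Lambda_L \setminus \Lambda_{L-1}}$ is finite-dimensional, one glues $\sigma$ to $\omega|_{\calA_{\Lambda_{L-1}^c}}$ along their common restriction to $\calA_{\Lambda_L \setminus \Lambda_{L-1}}$ (after a perturbation of order $\varepsilon$ making that overlap marginal full rank, if needed) to obtain a state $\omega'$ on $\calA$ with $\omega'|_{\calA_{\Lambda_{L-1}^c}} = \omega|_{\calA_{\Lambda_{L-1}^c}}$ and $\Lambda_L$-marginal within $\varepsilon$ of $\sigma$, whence $\omega'(P_L) \geq 1-\varepsilon$. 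Thus the infimum above is $0$ and $\omega(H_L^{\epsilon,\mu}) = 0$.

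The step I expect to be the main obstacle is the first half of this last construction: showing that the boundary-annulus marginals of the ground vectors of $H_L^{\epsilon,\mu}$ exhaust the state space of $\calA_{\Lambda_L \setminus \Lambda_{L-1}}$. This is where the detailed combinatorics of ribbon operators from Section~\ref{sec:qdoub} and Lemma~\ref{lem:gsspan} really enters, and it is also where one must be careful about the thickness of the region peeled off: if a one-edge annulus is too thin for the marginal-exhaustion or the gluing step, one replaces $\Lambda_{L-1}$ by a smaller box and uses $H_{L'}^{\epsilon,\mu}$ with a correspondingly larger $L'$, which does not affect the conclusion. Everything else is a direct appeal to Lemmas~\ref{lem:globprojboundaryop}, \ref{lem:gsspan} and~\ref{lem:finitegslim}, to the Bratteli--Kishimoto--Robinson theorem, or to routine locality bookkeeping.
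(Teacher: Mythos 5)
Your reverse direction and your reduction of the forward direction are fine: the identification $D_L^{\epsilon}=V_L^{\epsilon}$, $D_L^{\mu}=V_L^{\mu}$, the verification that $[H_L^{\epsilon,\mu},A]=\delta(A)$ for $A\in\calA_{\Lambda_{L-1}}$, and the appeal to the variational characterization of \cite{BratteliKR} are all legitimate, and they correctly reduce the claim to showing that $\inf\{\omega'(H_L^{\epsilon,\mu}):\omega'|_{\calA_{\Lambda_{L-1}^c}}=\omega|_{\calA_{\Lambda_{L-1}^c}}\}=0$. The gap is in the construction you propose for this last step, and it is not a technicality. First, the exhaustion claim is false: not every state of $\calA_{\Lambda_L\setminus\Lambda_{L-1}}$ is the annulus marginal of a density matrix supported on $\calG_L^{\epsilon,\mu}$. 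On the kernel of $H_L^{\epsilon,\mu}$ the operator $\sum_{f\in\mathcal{F}_L}(I-B_f)$ coincides with $V_L^{\mu}$, which is a projection, so any $\sigma$ supported on $\calG_L^{\epsilon,\mu}$ satisfies $\sigma\bigl(\sum_f(I-B_f)\bigr)\leq 1$; but the annulus of width one contains, e.g., two full corner plaquettes $f_1,f_2\in\mathcal{F}_L$, and a state of the annulus algebra with $\eta(B_{f_1})=\eta(B_{f_2})=0$ would force $\sigma\bigl(\sum_f(I-B_f)\bigr)\geq 2$. (Linear spanning of the annulus algebra by marginals, which is what Lemma~\ref{lem:gsspan} could plausibly give, does not yield positivity, so it cannot produce the density matrix $\sigma$ you want.) Second, even granting a $\sigma$ with the right annulus marginal, the gluing step fails in general: two quantum states with a common overlapping marginal need not admit a joint extension (monogamy of entanglement); if $\omega$ is, say, close to a pure state entangled across the annulus and the exterior, every state agreeing with $\omega$ on $\calA_{\Lambda_{L-1}^c}$ has its $\Lambda_L$-marginal close to a product of the annulus marginal with something on the interior, and making the overlap full rank does not remove this obstruction. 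What you actually need is that the specific annulus marginal of the given ground state $\omega$ extends, compatibly with $\omega$ outside, to a state concentrated on $\calG_L^{\epsilon,\mu}$ --- but that is essentially the statement of the lemma itself, so the argument is circular at its crucial point.

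For comparison, the paper avoids any extension or gluing problem by exploiting the ground state inequality directly: for suitable operators $A$ built from ribbon operators and the projections $I-B_{f_i}$ (respectively $B_{f_i}^{c_i}$), one computes $A^*\delta(A)\leq 0$, so $\omega(A^*\delta(A))=0$; summing over magnetic charge configurations with trivial total charge and an induction on the number of faces then yields $\omega\bigl(\sum_f(I-B_f)\bigr)=\omega(V_L^{\mu})$, and similarly for the star terms, which is exactly the identity $\omega(H_L-D_L^{\epsilon}-D_L^{\mu})=0$. If you want to keep your variational framing, you would still need an argument of this kind to control $\omega$ on the annulus, so the detour through \cite{BratteliKR} does not buy you anything.
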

	
	\begin{proof}
		($\impliedby$) This follows from Lemma \ref{lem:finitegslim}.
		
		($\implies$) We will show that $\omega( \sum_{v\in\mathcal{V}_L} \left(I - A_v\right) ) = \omega(V_L^\epsilon)$ and $\omega(\sum_{f\in\mathcal{F}_L} \left(I - B_f\right))  = \omega( V_L^\mu)$.
		The result then follows from Lemma~\ref{lem:globprojboundaryop}.
		
		Let $L\geq 2$ be given.
		Consider an arbitrary enumeration of the set of plaquettes, $\mathcal{F}_L = \{f_i\}_{i=1}^{n_L}$, and
		a configuration of magnetic charges, $\{ c_i\in G\}_{i=1}^{n_L}$
		such that $\prod_i c_i = e$.
		In the following, sums and products indexed by $i,j$ and $k$ will run from $1$ to $n_L$ unless otherwise stated.
		Pair $f_i$ with a neighboring vertex $v_i$ and 
		let $\rho_i $ be a ribbon 
		such that $\partial_0 \rho_i = (v_i, f_i) $  and $\partial_1 \rho_i = (v_{i+1},f_{i+1})$.
		With this choice, consider the operator $A =\big( \prod_i F_{\rho_i}^{\iota, \tilde{c}_i}\big)\big(\prod_i(I - B_{f_i})\big)$, 
		where the family $\{ \tilde{c}_i \}$ is chosen such that 
		\begin{equation*}
		B_{f_i} \bigg(\prod_k F_{\rho_k}^{\iota, \tilde{c}_k}\bigg) = \bigg(\prod_k F_{\rho_k}^{\iota, \tilde{c}_k}\bigg) B_{f_i}^{c_i}, \quad \forall i.
		\end{equation*}
		Indeed, the $\tilde{c}_i$'s must be such that $c_{i+1} = \tilde{c}_{i} \bar{\tilde{c}}_{i+1}$.
		The condition that $\prod_i c_i = e$ guarantees that such a family $\{ \widetilde{c}_i \}_i$ always exists,
		for instance, $\tilde{c}_i = \prod_{j\leq i}\bar{c}_j.$

		We want to apply the ground state condition to the operator $A$, hence we compute
		\begin{align*}
		A^*\delta(A) &= A^*[H_L, A] \\
		&=  \prod_{i} (I - B_{f_i})\bigg(\prod_i F_{\rho_i}^{\iota, \tilde{c}_i}\bigg)^* \Big(\sum_{j} \bigg[  - B_{f_j},\prod_i F_{\rho_i}^{\iota, \tilde{c}_i} \bigg]\Big) \prod_i (I - B_{f_i})\\
		& = \prod_{i} (I - B_{f_i})\bigg(\prod_i F_{\rho_i}^{\iota, \tilde{c}_i}\bigg)^*\prod_i F_{\rho_i}^{\iota, \tilde{c}_i} \bigg(\sum_{j} B_{f_j} - B_{f_j}^{c_{j}}\bigg) \prod_{i} (I - B_{f_i})\\
		&= \prod_{i} (I - B_{f_i}) \bigg(\sum_{j} B_{f_j} - B_{f_j}^{c_{j}}\bigg) \\
		&=  - \prod_{i} (I - B_{f_i}) \bigg(\sum_{j} B_{f_j}^{c_{j}}\bigg).
		\end{align*}
		The operator $\prod_{i} (I - B_{f_i}) \bigg(\sum_{j} B_{f_j}^{c_{j}}\bigg)$ is a product of
		commuting positive operators and, hence, it is positive. But this implies that $A^*[H_L, A] \leq 0$.
		
		Because of the ground state condition, equation \eqref{eqn:gs}, and the calculation above, $\omega(A^*[H_L, A]) = 0$.
		We can then sum over each configuration $c_i$ with trivial product.
		Note that if we fix $c_j$ for $j=1, \ldots, n_L-1$, this fixes $c_{n_L}$ by the condition that their product should be trivial.
		Hence the summation over all configurations gives
		\begin{equation*}
		0 = \sum_{(c_1, \dots, c_{n_L-1}) \in G^{n_{L}-1}} \omega\bigg(\prod_{i} (I - B_{f_i}) \bigg(B_{f_{n_L}}^{\overline{\prod c_i}} + \sum_{k=1}^{n_L-1} B_{f_k}^{c_{k}}\bigg)\bigg).
		\end{equation*}
		Here we separated the $n_{L}$ face from the others in the summation, since its magnetic charge is fixed by the others.
		We now do the summation over $c_1$.
		Note that as $c_1$ runs over the group $G$, so does $\overline{\prod_{i=1}^{n_L-1} c_i}$.
		Also note that for any $j$, $\sum_{c_j\in G} B_{f_j}^{c_j} = I$.
		This yields, by repeating this procedure,
		\begin{align*}
		0 & = \sum_{(c_2, \dots c_{n_L-1}) \in G^{n_{L}-2}} \omega\bigg(\prod_{i} (I - B_{f_i}) \bigg(2 I + \sum_{k=2}^{n_L-1} B_{f_k}^{c_{k}}\bigg)\bigg)\\
		& = c(G, n_L) \omega\bigg(\prod_{i} (I - B_{f_i})\bigg),
		\end{align*}
		where $c(G, n_L)$ is some non-zero constant depending only on $|G|$ and the number of plaquettes.
		Therefore, 
		\begin{equation}\label{eqn:doubgscond3}
		\omega\bigg(\prod_{i=1}^{n_L} (I - B_{f_i})\bigg) = 0.
		\end{equation}
		Equation \eqref{eqn:doubgscond3} generally holds for a finite subset $\Lambda\subset \calB$,
		where we assume that the subset is contained in some box $\Lambda_L$.
		We will need this fact for the following argument.
		
		We proceed by induction to show that 
		\begin{equation}\label{eqn:doubgscond1}
		\omega\bigg( \sum_{f\in \mathcal{F}_L} I-B_f\bigg)  = \omega\bigg( I - \prod_{f\in \mathcal{F}_L} B_f\bigg).
		\end{equation}
		For the case of two faces, $f_1$ and $f_2$, we have from equation~\eqref{eqn:doubgscond3}
		\begin{align*}
		0  = \omega( (I- B_{f_1})(I-B_{f_2}))= \omega( I - B_{f_1} - B_{f_2} + B_{f_1}B_{f_2}),
		\end{align*}
		so that $\omega( I - B_{f_1}) + \omega(I- B_{f_2}) = \omega(I- B_{f_1}B_{f_2})$.
		
		Suppose that equation~\eqref{eqn:doubgscond1} holds if $X$ is a finite collection of faces with $\abs{X} \leq n$.
		Now let $X$ be a finite collection of $n$ faces making up a region in $\Lambda_L$ and 
		enumerate the elements, $X = \{f_i\}_{i=1}^n$ and let $ f_{n+1} \notin X$ be a face in $X$ but otherwise arbitrary.
		From equation~\eqref{eqn:doubgscond3} it follows that 
		\[
		\omega\Big( \Big(\prod_{i\in X} I-B_{f_i}\Big)(I-B_{f_{n+1}} )\Big) = 0.
		\]
		Expanding the product and using the hypothesis we have,
		
		\begin{align*}
		0 & =  \omega\Big( \Big(\prod_{f_i\in X} I-B_{f_i}\Big)(I-B_{f_{n+1}} )\Big) \\
		& = \omega\Big( I - \sum_{i=1}^{n+1} B_{f_i} + \sum_{i<j \leq n+1} B_{f_i}B_{f_j} - \sum_{i<j<k \leq n+1} B_{f_i}B_{f_j}B_{f_k} +   \ldots + 
		(-1)^{n+1} \Big( \prod_{f_i\in X} B_{f_i} \Big) B_{f_{n+1}}\Big)\\
		&= 1 + \left[ \sum_{i=1}^{n+1} \omega\left( I - B_{f_i} \right) - \binom{n+1}{1} \right] - 
		\left[ \sum_{i < j \leq n+1} \omega\left(I - B_{f_i} B_{f_j}\right) - \binom{n+1}{2}\right] \\
		& \qquad \qquad \qquad +  \ldots + (-1)^{n+1} \omega\Big( \Big( \prod_{f_i\in X} B_{f_i} \Big) B_{f_{n+1}}\Big)\Big) \\
		&= -(-1)^{n+1} + \sum_{i=1}^{n+1} \omega\left( I - B_{f_i} \right)  + \sum_{i < j \leq n+1} \omega\left(I - B_{f_i} B_{f_j}\right) 
		+ \ldots  + (-1)^{n+1} \omega\Big( \Big( \prod_{f_i\in X} B_{f_i} \Big) B_{f_{n+1}}\Big)\Big),
		\end{align*}
		where in the last step we use the elementary equation $\sum_{k=1}^{n-1} (-1)^k \binom{n}{k} = - (1+(-1)^n)$.
		We can then apply the induction hypothesis to all but the last terms.
		Note that for the term with $k$ products of $B_{f_i}$, after applying the summation in the induction hypothesis,
		each term $\omega(I-B_{f_i})$ appears exactly $\binom{n}{k-1}$ times.
		Hence we obtain
		\begin{align*}
		0 & =  -(-1)^{n+1}  + \sum_{k=1}^n (-1)^{k+1} \binom{n}{k-1} \left( \sum_{i=1}^{n+1} \omega\left(I - B_{f_i}\right) \right) 
		+ (-1)^{n+1} \omega\Big( \Big( \prod_{f_i\in X} B_{f_i} \Big) B_{f_{n+1}}\Big)\Big) \\
		& = -(-1)^{n+1}\omega\left(I - \prod_{i=1}^{n+1} B_{f_i} \right) + (-1)^{n+1} \omega\left( \sum_{i=1}^{n+1} \left(I - B_{f_i}\right)\right),
		\end{align*}
		where we used that $\sum_{k=1}^{n} (-1)^{k+1} \binom{n}{k-1} = (-1)^{n+1}$.
		Therefore equation~\eqref{eqn:doubgscond1} holds. 
		
		Now consider a configuration of magnetic charges, $\{ c_i\in G\}_{i=1}^{n_L}$
		such that $\prod_i c_i = e$ and $\rho_i$ and $\widetilde{c}_i$ are as defined earlier.
		Let  $ A' = \prod_{i=1}^{n_L} F_{\rho_i}^{\iota, \widetilde{c}_i} \prod_{i=1}^{n_L} B_{f_i}^{c_i}$
		and let $l = \#\{i: c_i \neq e\}$.
		We compute
		
		\begin{align*}
		A'^*\delta(A')& = A'^*[H_L, A'] \\
		& = \prod_i B_{f_i}^{c_i}\bigg( \prod_i F_{\rho_i}^{\iota, \widetilde{c}_i} \bigg)^*\sum_{j}  \bigg[ - B_{f_j},  \prod_i F_{\rho_i}^{\iota, \widetilde{c}_i}\bigg]\prod_i B_{f_i}^{c_i}\\
		&=\prod_i B_{f_i}^{c_i} \sum_{j} \left(B_{f_j} - B_{f_j}^{c_{j}}\right)\\
		& = - l \prod_i B_{f_i}^{c_i}\leq 0.
		\end{align*}
		Therefore, applying the ground state condition gives
		\begin{equation}\label{eqn:doubgscond2}
		\text{ if } l > 0 \qquad \text{ then } \qquad  \omega\bigg(\prod_{i=1}^{n_L}B_{f_i}^{c_i}\bigg) = 0.
		\end{equation}
		Finally, applying the equivalence in Lemma~\ref{lem:globprojboundaryop}
		with equations~\eqref{eqn:doubgscond1} and~\eqref{eqn:doubgscond2} gives the result,
		\begin{align*}
		\omega\bigg( \sum_{i=1}^{n_L} I-B_{f_i}\bigg) & = \omega\bigg( I - \prod_{i=1}^{n_L} B_{f_i}\bigg)
		= \omega\bigg( I - \sum_{\prod_i c_i =e }  \prod_{i=1}^{n_L} B_{f_i}^{c_i} \bigg)
		= \omega(D_L^\mu) = \omega(V_L^\mu).
		\end{align*}
		
		A similar argument gives
		\begin{equation*}
		\omega \bigg( \sum_{v\in \mathcal{V}_L} I  - A_v \bigg) = \omega( V_L^\epsilon).
		\end{equation*}
		This concludes the proof.
	\end{proof}
	
	We now give precise statements and proofs for the main result of the chapter,
	starting with the definitions of the infinite volume ground state subsets:
	\begin{defn}\label{def:chargedgs}
		Define the following convex subset of states for each $(\chi,c) \in \widehat{G} \times G$: 
		\begin{equation*}
		\begin{split}
		K^{\chi,c} := \bigg\{ \omega^{\chi,c} \in \calA_{+,1}^*:& \exists \omega\in K \text{ such  that } 
		\lim_{L\ra\infty} \omega(D_{L}^{\chi,c}) > 0 \text{ exists, and } \\
		& \omega^{\chi,c} = \wslim_{L\ra\infty} \frac{\omega(\ \cdot \ D_{L}^{\chi,c} )}{\omega(D_{L}^{\chi,c})} \bigg\}.
		\end{split}
		\end{equation*}
	\end{defn}
	
	By Lemma \ref{lem:ribchargeinv}, $D_{L'}^{\chi,c}$ is a supported on the boundary.
	It follows that if $L'>L$ and  $\omega(D_{L'}^{\chi,c}) > 0$ then 
	\begin{equation*}
	\frac{ \omega( \ \cdot \ D_{L'}^{\chi,c})}{\omega(D_{L'}^{\chi,c})}\bigg|_{\calA_L} = 
	\frac{\omega( D_{L'}^{\chi,c} \cdot \ D_{L'}^{\chi,c})}{\omega(D_{L'}^{\chi,c})}\bigg|_{\calA_L}. 
	\end{equation*} 
	
	In particular, we have that $\omega(\ \cdot \ D_{L'}^{\chi,c})$ is a positive linear functional and $\omega( H_L^{\epsilon,\mu} D_{L'}^{\chi,c}) = 0$.
	Thus, by Lemma \ref{lem:gshambound}, $K^{\chi,c} \subset K$ is a subset of the set of infinite volume ground states.
	The interpretation of a state in $K^{\chi,c}$ is that it has a \emph{global} excitation of type $(\chi,c)$, hence the projection onto the charge $(\chi,c)$ in the region $\Lambda_L$ has a positive expectation value as $L$ goes to infinity.
	The assumption that $  \lim_{L\ra\infty} \omega(D_{L}^{\chi,c})$ exists is always satisfied, as follows from the next lemma.
	
	\begin{lemma}\label{lem:asymptoticcoef}
		The limit $ \lambda_{\chi,c}(\omega) := \lim_{L\ra\infty} \omega(D_{L}^{\chi,c})$ exists for all ground states $\omega$ and 
		we have $ \lambda_{\chi,c}(\omega) \geq 0$.
		Furthermore, if $\omega^{\chi,c}\in K^{\chi,c}$ then  $\lambda_{\sigma,d}(\omega^{\chi,c}) = \delta_{(\sigma,d),(\chi,c)}$.
	\end{lemma}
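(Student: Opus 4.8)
The plan is to reduce everything to the structure of the finite-volume ground state spaces $\calG_L^{\epsilon,\mu}$ of Lemma~\ref{lem:gsspan}. First observe that, by Lemma~\ref{lem:gshambound}, every ground state $\omega$ satisfies $\omega(H_L^{\epsilon,\mu})=0$, and since $H_L^{\epsilon,\mu}\geq 0$ the reduced density matrix $\rho_L$ of $\omega$ on $\calA_{\Lambda_L}$ is supported on $\calG_L^{\epsilon,\mu}$. Hence $\omega(D_L^{\chi,c})=\Tr(\rho_L D_L^{\chi,c})$ is the weight $\rho_L$ assigns to the $(\chi,c)$-summand of $\calG_L^{\epsilon,\mu}=\bigoplus_{\chi,c}D_L^{\chi,c}\calG_L^{\epsilon,\mu}$; in particular it is nonnegative, so $\lambda_{\chi,c}(\omega)\geq 0$ once the limit is known to exist.

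For existence I would exploit that, by Lemma~\ref{lem:gsspan}, a vector in $\calG_L^{\epsilon,\mu}$ carries at most one bulk electric and at most one bulk magnetic excitation, both contained in $\Lambda_L$. For $L'<L$ and $\chi\neq\iota$, such a vector has total electric charge $\chi$ in $\Lambda_{L'}$ precisely when its unique bulk electric excitation already lies in $\Lambda_{L'}$ and carries charge $\chi$, and then it also has electric charge $\chi$ in $\Lambda_L$. This yields operator inequalities $D_{L'}^{\chi}\leq D_L^{\chi}$ ($\chi\neq\iota$) and $D_{L'}^{c}\leq D_L^{c}$ ($c\neq e$), hence $D_{L'}^{\chi}D_{L'}^{c}\leq D_L^{\chi}D_L^{c}$ ($\chi\neq\iota$, $c\neq e$), all on $\calG_L^{\epsilon,\mu}$. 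Since $\rho_L$ is supported there, $L\mapsto\omega(D_L^{\chi})$, $L\mapsto\omega(D_L^{c})$ and, for $\chi\neq\iota$, $c\neq e$, $L\mapsto\omega(D_L^{\chi,c})$ are nondecreasing and bounded by $1$, hence convergent. The remaining values follow algebraically: $\omega(D_L^{\iota})=1-\omega(D_L^{\epsilon})$ converges since $\omega(D_L^{\epsilon})=\omega\bigl(\sum_{v\in\mathcal{V}_L}(I-A_v)\bigr)$ is nondecreasing by Lemma~\ref{lem:gshambound}, and then $\omega(D_L^{\iota,c})=\omega(D_L^{c})-\sum_{\chi\neq\iota}\omega(D_L^{\chi,c})$, $\omega(D_L^{\chi,e})=\omega(D_L^{\chi})-\sum_{c\neq e}\omega(D_L^{\chi,c})$ and $\omega(D_L^{\iota,e})=1-\sum_{(\sigma,d)\neq(\iota,e)}\omega(D_L^{\sigma,d})$ are finite combinations of convergent sequences.

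For the last assertion, write $\omega^{\chi,c}=\wslim_{L}\omega(\,\cdot\,D_L^{\chi,c})/\omega(D_L^{\chi,c})$ with $\lambda_{\chi,c}(\omega)=\lim_L\omega(D_L^{\chi,c})>0$ as in Definition~\ref{def:chargedgs}. Since $K^{\chi,c}\subseteq K$, the coefficients $\lambda_{\sigma,d}(\omega^{\chi,c})$ exist by the part already proved, are nonnegative, and, summing over $(\sigma,d)$ using $\sum_{\sigma,d}D_{L'}^{\sigma,d}=I$, satisfy $\sum_{\sigma,d}\lambda_{\sigma,d}(\omega^{\chi,c})=1$; so it suffices to prove $\lambda_{\chi,c}(\omega^{\chi,c})=1$. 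Fix $L'<L$. On $\calG_L^{\epsilon,\mu}$ one has $(I-D_{L'}^{\chi,c})D_L^{\chi,c}\leq (D_L^{\epsilon}-D_{L'}^{\epsilon})+(D_L^{\mu}-D_{L'}^{\mu})$, because a vector in the $(\chi,c)$-summand whose charge in $\Lambda_{L'}$ differs from $(\chi,c)$ must have one of its two excitations stranded in the annulus $\Lambda_L\setminus\Lambda_{L'}$. Pairing with $\rho_L$ gives $0\leq \omega(D_L^{\chi,c})-\omega(D_{L'}^{\chi,c}D_L^{\chi,c})\leq(\omega(D_L^{\epsilon})-\omega(D_{L'}^{\epsilon}))+(\omega(D_L^{\mu})-\omega(D_{L'}^{\mu}))$. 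Letting $L\to\infty$ (the right side converges since $\omega(D_L^{\epsilon}),\omega(D_L^{\mu})$ do) and dividing by $\omega(D_L^{\chi,c})\to\lambda_{\chi,c}(\omega)$ shows $1-\omega^{\chi,c}(D_{L'}^{\chi,c})$ is squeezed between $0$ and a quantity tending to $0$ as $L'\to\infty$; hence $\lambda_{\chi,c}(\omega^{\chi,c})=\lim_{L'}\omega^{\chi,c}(D_{L'}^{\chi,c})=1$, and the other coefficients vanish by nonnegativity together with $\sum_{\sigma,d}\lambda_{\sigma,d}(\omega^{\chi,c})=1$.

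I expect the main difficulty to be this last step: one has to combine the single-excitation-pair description of $\calG_L^{\epsilon,\mu}$ with the monotone convergence established in the first part to control the iterated limit $L\to\infty$, then $L'\to\infty$, and thereby rule out any ``leakage'' of the global charge across the boundary of a large but fixed box.
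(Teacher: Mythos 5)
Your argument is correct and follows essentially the same route as the paper: both proofs use Lemma~\ref{lem:gshambound} to view the restriction of a ground state as a ground state functional of $H_L^{\epsilon,\mu}$, exploit the excitation structure of $\calG_L^{\epsilon,\mu}$ from Lemma~\ref{lem:gsspan} to get monotonicity of $L\mapsto\omega(D_L^{\chi})$ and $L\mapsto\omega(D_L^{c})$ for nontrivial charges (the paper's inclusions \eqref{eqn:chargecont1}--\eqref{eqn:chargecont4}), and then treat the remaining labels algebraically via $\sum_{\sigma,d}D_L^{\sigma,d}=I$. The one point of divergence is the proof that $\lambda_{\chi,c}(\omega^{\chi,c})=1$: the paper establishes $\abs{\omega(D_L^{\chi,c}D_{L'}^{\chi,c})-\omega(D_{L}^{\chi,c})}<\epsilon$ (equation \eqref{eqn:fullcharge}) by a case analysis over charge types, with a Cauchy--Schwarz estimate for the mixed cases such as $(\chi,e)$, $\chi\neq\iota$, and then evaluates the iterated limit of $\omega(D_L^{\chi,c}D_{L'}^{\chi,c})/\omega(D_{L'}^{\chi,c})$; you instead use the single inequality $(I-D_{L'}^{\chi,c})D_L^{\chi,c}\leq(D_L^{\epsilon}-D_{L'}^{\epsilon})+(D_L^{\mu}-D_{L'}^{\mu})$ on $\calG_L^{\epsilon,\mu}$, which handles all charge types uniformly, avoids Cauchy--Schwarz, and squeezes $1-\omega^{\chi,c}(D_{L'}^{\chi,c})$ directly by the charge leakage into the annulus. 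Both implementations rest on the same structural input, so this is a difference of bookkeeping rather than substance; your version is a bit cleaner for this lemma, whereas the paper's estimates \eqref{eqn:fullcharge} and \eqref{eqn:chargemoment} are set up in that form because they are reused in Theorems~\ref{thm:qdoubgs1} and \ref{thm:eqstates}. Two small points worth making explicit: your operator inequality (and the monotonicity claims) should be verified on a simultaneous eigenbasis of the commuting local charge projectors provided by the spanning vectors of Lemma~\ref{lem:gsspan}, since a general ground vector is a superposition and has definite excitation positions only in that sense; and the identity $\omega(D_L^{\epsilon})=\omega\bigl(\sum_{v\in\mathcal{V}_L}(I-A_v)\bigr)$ is proved inside the proof of Lemma~\ref{lem:gshambound} rather than stated in it, though you can bypass it entirely by writing $\omega(D_L^{\epsilon})=\sum_{\chi\neq\iota}\omega(D_L^{\chi})$, a finite sum of sequences you have already shown to converge.
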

	
	\begin{proof}
		Let $L''> L' > L$.
		First, we claim that 
		\begin{align}
		\label{eqn:chargecont1} D_L^\chi(\calG_{L''}^{\epsilon,\mu}) &\subset D_{L'}^\chi(\calG_{L''}^{\epsilon,\mu}) & & \text{if } \chi \neq \iota\\
		\label{eqn:chargecont2} D_L^c(\calG_{L''}^{\epsilon,\mu}) &\subset D_{L'}^c(\calG_{L''}^{\epsilon,\mu}) & & \text{if } c \neq e\\ 
		\label{eqn:chargecont3} D_{L'}^\iota(\calG_{L''}^{\epsilon,\mu}) &\subset D_{L}^\iota(\calG_{L''}^{\epsilon,\mu}) & &\text{if } \chi = \iota\\ 
		\label{eqn:chargecont4} D_{L'}^e(\calG_{L''}^{\epsilon,\mu}) &\subset D_{L}^e(\calG_{L''}^{\epsilon,\mu}) & &\text{if } c = e 
		\end{align}
		(see Lemma~\ref{lem:gsspan} for a description of $\calG_{L''}^{\epsilon,\mu}$).
		Note the reversal of $L$ and $L'$ in the last two equations.
		The reason is that while in the first two equations, the \emph{presence} of a charge in the region is measured, in the last two equations it is the \emph{absence} of any charge that is important.
		
		To see why these equations are true, consider first $\chi \neq \iota$ and note that the subspace $D_{L}^\chi(\calG_{L''}^{\epsilon,\mu})$ is spanned by 
		$\{ F_{\rho}^{\chi,e}F_{\sigma}^{\iota,c}  \Omega:\forall \Omega\in \calG_{L''}; c\in G; \rho, \sigma$ path connecting sites from the interior of $ \Lambda_{L''}$ to that boundary such that $\partial_0\rho \subset \Lambda_{L}  \}$.
		The same statement is true if we replace $L$ by $L'$.
		Thus, $D_L^\chi$ selects for a $\chi$-excitation in the region $\Lambda_L$ 
		whereas $D_{L'}^\chi$ selects for a $\chi$-excitation in the region $\Lambda_{L'}$.
		The latter condition is weaker.
		Therefore, $D_{L}^\chi|_{\calG_{L''}^{\epsilon,\mu}} \leq D_{L'}^\chi|_{\calG_{L''}^{\epsilon,\mu}} $ as projections.
		A similar argument gives $D_{L}^c|_{\calG_{L''}^{\epsilon,\mu}} \leq D_{L'}^c|_{\calG_{L''}^{\epsilon,\mu}} $ as projections.
		
		If $\chi=\iota$, $D_{L'}^{\iota}$ selects for a trivial $\epsilon$-type charge (i.e., the absence of an electric charge) in the region $\Lambda_{L'}$ 
		while $D_{L}^{\iota}$ selects for a trivial $\epsilon$-type charge in the region $\Lambda_{L}$.
		The latter condition is weaker.  
		Therefore, $D_{L'}^\iota|_{\calG_{L''}^{\epsilon,\mu}} \leq D_{L}^\iota|_{\calG_{L''}^{\epsilon,\mu}} $ as projections.
		A similar argument gives $D_{L'}^e|_{\calG_{L''}^{\epsilon,\mu}} \leq D_{L}^e|_{\calG_{L''}^{\epsilon,\mu}} $ as projections. 
		This shows that~\eqref{eqn:chargecont1}--\eqref{eqn:chargecont4} hold.
		
		Let $\omega \in K$ be an infinite volume ground state.
		As remarked below Definition~\ref{def:chargedgs}, $\omega|_{\calA_{L''}} \in K_{L''}$ is a ground state functional for $H_{L''}^{\epsilon,\mu}$.
		Consider the sequence $\{ \omega(D_L^{\chi}) \}_{L=2}^\infty$.
		If $\chi \neq \iota$, the inclusion~\eqref{eqn:chargecont1} gives that $ \omega( D_{L'}^\chi - D_{L}^\chi) \geq 0$,
		thus the sequence is increasing.
		The sequence is also bounded, $\omega(D_L^\chi) \leq \| D_L^\chi \| = 1$.
		Hence we have a uniformly bounded and increasing sequence, and therefore the limit $ \lim_{ L\ra \infty} \omega(D_L^\chi)$ exists.
		A similar argument gives that the limit $\lim_{L\ra \infty} \omega(D^{\chi,c}_L)$ exists if $\chi\neq \iota$ and $c \neq e$.
		
		If $\chi \neq \iota$ and $c =e$, where there is a non-trivial electric charge and the magnetic charge is trivial, we can rewrite the projector $D_L^{\chi,e}$ as 
		\begin{equation*}
		D_L^{\chi, e} = D_L^{\chi} D_L^e =  D_L^{\chi} \bigg( I - \sum_{c \neq e} D_L^c \bigg)  = D_L^{\chi} - \sum_{c \neq e} D_L^{\chi,c}.
		\end{equation*}
		This is enough to show the limit $\lim_{L\ra \infty} \omega(D^{\chi,e}_L)$ exists.
		By similar arguments, 
		the limits exist for the cases $\chi = \iota$ with $c \neq e$, and when $\chi = \iota$ with $ c = e$.
		The limits are always positive, since $\omega(D_L^{\chi,c}) \geq 0$ for all $L$.
		
		To prove the second claim,
		let $\omega^{\chi,c}\in K^{\chi,c}$ and choose $\omega\in K$ such that
		$$
		\omega^{\chi,c} = \wslim_{L'\ra\infty} \frac{\omega(\ \cdot \ D_{L'}^{\chi,c} )}{\omega(D_{L'}^{\chi,c})}.
		$$
		We use freely that the charge projectors commute.
		Equations \eqref{eqn:chargecont1}-\eqref{eqn:chargecont2} imply that if $\chi \neq \iota$ and $c \neq e$ then $\omega(D_L^{\chi,c}D_{L'}^{\chi,c}) = \omega(D_L^{\chi,c})$.
		Equations \eqref{eqn:chargecont3}-\eqref{eqn:chargecont4} imply that $\omega(D_L^{\iota,e} D_{L'}^{\iota,e}) = \omega(D_{L'}^{\iota,e})$.
		If $\chi \neq \iota$, \eqref{eqn:chargecont1} and \eqref{eqn:chargecont3} imply that $\omega(D_L^{\chi,e} D_{L'}^{\chi,e})  = \omega(D_L^\chi D_{L'}^e)$. 
		From the Cauchy-Schwarz inequality, it follows that
		\begin{align*}
		\abs{ \omega(D_L^{\chi,e} D_{L'}^{\chi,e}) - \omega(D_L^{\chi,e} )} 
		& = \abs{ \omega( D_L^\chi (D_{L'}^e - D_L^e)}\\
		& \leq \sqrt{\omega(D_L^\chi )}\sqrt{\abs{ \omega(( D_{L'}^e - D_L^e)^2) }}\\
		&= \sqrt{\omega(D_L^\chi )}\sqrt{\abs{ \omega( D_{L'}^e + D_L^e - 2 D_L^e D_{L'}^e)}}\\
		&= \sqrt{\omega(D_L^\chi )}\sqrt{\abs{ \omega( D_{L'}^e - D_L^e) }}
		\end{align*}
		and similarly for $c \neq e$, $\abs{ \omega(D_L^{\iota,c} D_{L'}^{\iota,c}) - \omega(D_L^{\iota,c} )} \leq \sqrt{\omega(D_L^c)}\sqrt{\abs{\omega(D_{L'}^\iota - D_L^\iota)}}$.
		
		Let $\epsilon >0$ be given.  
		The previous arguments show that for all $(\chi,c) \in \widehat{G}\times G$, there exists $l$ such that if  $L'>L>l$ then
		\begin{equation}\label{eqn:fullcharge}
		\abs{ \omega(D_L^{\chi,c}D_{L'}^{\chi,c}) - \omega(D_{L}^{\chi,c})} < \epsilon.
		\end{equation}
		Thus,
		$
		\lambda^{\chi,c}(\omega^{\chi,c})
		= \lim_{L\ra\infty} \lim_{L'\ra\infty}  \frac{\omega(D_L^{\chi,c} D_{L'}^{\chi,c})}{\omega(D_{L'}^{\chi,c})}
		=\lim_{L\ra\infty} \lim_{L'\ra\infty} \frac{\omega( D_{L}^{\chi,c})}{\omega(D_{L'}^{\chi,c})}  =  1.
		$
		Therefore, combining $\sum_{\sigma,d} \lambda_{\sigma,d}(\omega^{\chi,c}) = 1$ 
		and $\lambda_{\chi,c}(\omega^{\chi,c}) =1$ 
		gives $\lambda_{\sigma,d}(\omega^{\chi,c}) = \delta_{(\sigma,d),(\chi,c)}$.
	\end{proof}
	
	From the arguments given in the previous lemma, we can achieve a slightly stronger bound which will be used later.
	Let $\epsilon>0$ be given. 
	Then, for all $(\chi,c)\in \widehat{G} \times G$, we show that there exists $l$ such that if $L' > L >l$ then 
	\begin{equation}\label{eqn:chargemoment}
	\abs{ \omega\left( ( D_{L'}^{\chi,c} - D_{L}^{\chi,c})^2\right) } <\epsilon.
	\end{equation}
	If $\chi \neq \iota$ and $ c\neq e$, \eqref{eqn:chargecont1} and \eqref{eqn:chargecont2} give that $\omega\left( ( D_{L'}^{\chi,c} - D_{L}^{\chi,c})^2\right) = \omega( D_{L'}^{\chi,c} - D_{L}^{\chi,c})$. 
	Similarly, $\omega\left( ( D_{L'}^{\iota,e} - D_{L}^{\iota,e})^2\right) = \abs{\omega( D_{L'}^{\iota,e} - D_{L}^{\iota,e})}$.
	If $\chi \neq \iota$ then 
	\begin{align*}
	\abs{ \omega\left( ( D_{L'}^{\chi,e} - D_{L}^{\chi,e})^2\right) }
	& = \abs{\omega( D_{L'}^{\chi,e} + D_L^{\chi,e} - 2 D_L^{\chi,e}D_{L'}^{\chi,e})}\\
	& \leq \abs{\omega( D_{L'}^{\chi,e}  - D_L^\chi D_{L'}^e)} +  \abs{\omega( D_L^{\chi,e} - D_L^\chi D_{L'}^e)}\\
	& = \abs{\omega\left( D_{L'}^e (D_{L'}^\chi -D_L^\chi)\right)} + \abs{ \omega( D_L^{\chi} (D_{L'}^e - D_L^e))},
	\end{align*}
	with a similar bound holding if $ \chi = \iota$ and $c \neq e$.  Thus, \eqref{eqn:chargemoment} holds.
	
	Lemma \ref{lem:asymptoticcoef} serves to distinguish ground states with different charges and makes it possible to decompose any ground state into charged ground states. 
	\begin{thm}\label{thm:qdoubgs1}
		Let $\omega\in K$ be a ground state.
		Then there exists a convex decomposition of $\omega$  as
		\begin{equation}\label{eqn:gsdecomp4}
		\omega = \sum_{\chi\in \widehat{G},c \in G} \lambda_{\chi,c}(\omega) \omega^{\chi,c} \qquad \text{ where } \quad \omega^{\chi,c}\in K^{\chi,c}.
		\end{equation}
		Furthermore, we can calculate the coefficients explicitly as
		\begin{equation*}
		\lambda_{\chi,c}(\omega) = \lim_{L\ra\infty} \omega(D_{L}^{\chi,c}).
		\end{equation*}
		If $ \lambda_{\chi,c}(\omega) >0$ then 
		\begin{equation}
		\label{eqn:wslimit}
		\omega^{\chi,c} = \wslim_{L\ra\infty} \frac{\omega(\ \cdot \ D_{L}^{\chi,c} )}{\omega(D_{L}^{\chi,c})}.
		\end{equation}
	\end{thm}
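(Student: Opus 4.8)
The plan is to realize the charged components of $\omega$ as weak$^*$ limits of the charge-projected functionals $A\mapsto \omega(A\,D_L^{\chi,c})$, and to obtain convergence of these functionals from the second-moment estimate~\eqref{eqn:chargemoment} via the Cauchy--Schwarz inequality for $\omega$. Throughout I use Lemma~\ref{lem:gshambound}: the restriction $\omega|_{\calA_L}$ is a ground-state functional of $H_L^{\epsilon,\mu}$, so that Lemma~\ref{lem:asymptoticcoef} and the projection inclusions~\eqref{eqn:chargecont1}--\eqref{eqn:chargecont4} are available.

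First I would fix $(\chi,c)\in\widehat{G}\times G$ and consider the functionals $\omega_L^{\chi,c}(\,\cdot\,):=\omega(\,\cdot\,D_L^{\chi,c})$. For $A\in\calA_{L_0}$ and $L>L_0$, Lemma~\ref{lem:ribchargeinv} shows that $D_L^{\chi,c}$ commutes with $A$; since $D_L^{\chi,c}$ is a projection, $A D_L^{\chi,c}=D_L^{\chi,c}A D_L^{\chi,c}$, so $\omega_L^{\chi,c}(A)\geq 0$ whenever $A\geq 0$ and $\abs{\omega_L^{\chi,c}(A)}\leq\|A\|\,\omega(D_L^{\chi,c})$. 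For $L'>L>L_0$, Cauchy--Schwarz for the state $\omega$ gives
\[
\abs{\omega_L^{\chi,c}(A)-\omega_{L'}^{\chi,c}(A)}=\abs{\omega(A(D_L^{\chi,c}-D_{L'}^{\chi,c}))}\leq\|A\|\,\sqrt{\omega((D_{L'}^{\chi,c}-D_L^{\chi,c})^2)},
\]
and the right-hand side tends to $0$ as $L,L'\to\infty$ by~\eqref{eqn:chargemoment}. Hence $\{\omega_L^{\chi,c}(A)\}_L$ is Cauchy; call its limit $\mu^{\chi,c}(A)$. The functional $\mu^{\chi,c}$ on $\calA_{loc}$ is positive with $\abs{\mu^{\chi,c}(A)}\leq\|A\|\,\lambda_{\chi,c}(\omega)$, so it extends by continuity to $\calA$, and $\mu^{\chi,c}(I)=\lim_{L\to\infty}\omega(D_L^{\chi,c})=\lambda_{\chi,c}(\omega)$, this limit existing by Lemma~\ref{lem:asymptoticcoef}.

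Next I would assemble the decomposition. From $\sum_{\chi,c}D_L^{\chi,c}=I$ --- which follows from the completeness relation~\eqref{eqn:localprojcomplete} together with the definitions of $D_L^\chi$ and $D_L^c$ and the fact that, for abelian $G$, star and plaquette projectors commute --- one has $\sum_{\chi,c}\omega_L^{\chi,c}(A)=\omega(A)$ for all $L$, and letting $L\to\infty$ in this finite sum gives $\omega=\sum_{\chi,c}\mu^{\chi,c}$; the choice $A=I$ gives $\sum_{\chi,c}\lambda_{\chi,c}(\omega)=1$, so the decomposition is convex with the asserted coefficients. For each $(\chi,c)$ with $\lambda_{\chi,c}(\omega)>0$, set $\omega^{\chi,c}:=\lambda_{\chi,c}(\omega)^{-1}\mu^{\chi,c}$, which is a state; since $\omega(D_L^{\chi,c})$ is eventually bounded away from $0$, dividing gives $\omega^{\chi,c}=\wslim_{L\to\infty}\omega(\,\cdot\,D_L^{\chi,c})/\omega(D_L^{\chi,c})$, i.e.~\eqref{eqn:wslimit}, so $\omega^{\chi,c}\in K^{\chi,c}$ by Definition~\ref{def:chargedgs} (recall $K^{\chi,c}\subset K$ was established below that definition). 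For indices with $\lambda_{\chi,c}(\omega)=0$ the term is zero and may be represented by any element of $K^{\chi,c}$, e.g.\ a single excitation state $\omega_s^{\chi,c}$. This yields~\eqref{eqn:gsdecomp4}.

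The main obstacle is precisely the convergence of $\omega(A\,D_L^{\chi,c})$ as $L\to\infty$: without it one obtains only a decomposition along a subsequence, with no guarantee that distinct subsequences produce the same charged components. The argument above defuses this by reducing it, through Cauchy--Schwarz, to the quantitative bound~\eqref{eqn:chargemoment} on $\omega((D_{L'}^{\chi,c}-D_L^{\chi,c})^2)$, which in turn rests on the monotonicity relations~\eqref{eqn:chargecont1}--\eqref{eqn:chargecont4} among charge projectors restricted to the finite-volume ground-state spaces $\calG_{L''}^{\epsilon,\mu}$ of Lemma~\ref{lem:gsspan}. Once convergence is secured, positivity, boundedness, and the identification of each limit with an element of $K^{\chi,c}$ are routine.
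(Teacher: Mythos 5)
Your proof is correct and follows essentially the same route as the paper's: both rest on Lemma~\ref{lem:gshambound}, the monotonicity and second-moment estimates of Lemma~\ref{lem:asymptoticcoef} (in particular \eqref{eqn:fullcharge} and \eqref{eqn:chargemoment}), the Cauchy--Schwarz inequality, and the completeness relation \eqref{eqn:localprojcomplete} for the charge projectors. The only difference is cosmetic: you prove convergence of the unnormalized functionals $\omega(\,\cdot\,D_L^{\chi,c})$ and normalize at the end, whereas the paper establishes the Cauchy estimate \eqref{eqn:cauchyseq} directly for the normalized states; the underlying argument is the same.
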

	
	\begin{proof}
		For convenience, in this proof we denote $\lambda_{\chi,c}(\omega) = \lambda_{\chi,c}$.
		
		By Lemma \ref{lem:asymptoticcoef}, the values $\lambda_{\chi,c} \geq 0$  are well-defined,
		so we will have to show that the limit in equation~\eqref{eqn:wslimit} exists and that the decomposition in equation~\eqref{eqn:gsdecomp4} agrees with the state $\omega$.
		
		Let $L' > L \geq 2$.  
		Since for each $L'$ the charge projections add up to the identity, by equation~\eqref{eqn:localprojcomplete}, we have $\omega = \sum_{\chi,c} \omega( \ \cdot \ D_{L'}^{\chi,c})$ for all $L'$.
		By Lemma \ref{lem:gshambound}, $\omega|_{\calA_L}$
		is a finite volume ground state for $H_L^{\epsilon,\mu}$ for all $L\geq 2$ (see also the remark after Definition~\ref{def:chargedgs}).
		
		Now suppose $\lambda_{\chi,c}>0$.
		Let $\epsilon >0$ be given and suppose $\epsilon$ is small enough such that $\lambda_{\chi,c} > \epsilon >0$. 
		By Lemma~\ref{lem:asymptoticcoef}, and inequalities~\eqref{eqn:fullcharge} and~\eqref{eqn:chargemoment}, there exists $L>0$ such that if $L''> L' >L$ then
		\[
		\abs{\omega( D_{L'}^{\chi,c}) - \lambda_{\chi,c}}  <   \epsilon, \quad
		\abs{ \omega( D_{L''}^{\chi,c}-  D_{L'}^{\chi,c})} < \epsilon, \quad \text{ and }\quad
		\abs{ \omega\left( ( D_{L''}^{\chi,c} -  D_{L'}^{\chi,c})^2\right) } < \epsilon.
		\]
		We also demand that $\omega(D^{\chi,c}_{L'}) > 0$ for all $L' > L$, which can always be achieved by choosing $L$ big enough.
		Note that $\lambda_{\chi,c} \leq 1$, so we can restrict to $\epsilon < 1$.
		Let $A\in \calA_{L}$, then 
		\begin{align*}
		\left\vert \frac{\omega(A D_{L'}^{\chi,c} )}{\omega(D_{L'}^{\chi,c})}  - \frac{\omega(A D_{L''}^{\chi,c} )}{\omega(D_{L''}^{\chi,c})} \right\vert 
		& =  \frac{1}{\omega(D_{L'}^{\chi,c}) \omega(D_{L''}^{\chi,c})}
		\left\vert  \omega(A D_{L'}^{\chi,c} )\omega(D_{L''}^{\chi,c}) -  \omega(A D_{L''}^{\chi,c} )\omega(D_{L'}^{\chi,c}) \right\vert\\
		&  \leq \frac{1}{\omega(D_{L'}^{\chi,c}) \omega(D_{L''}^{\chi,c})}
		\Big( \left| \omega(A D_{L'}^{\chi,c} )\right| \left\vert   \omega(D_{L''}^{\chi,c}) - \omega( D_{L'}^{\chi,c} ) \right\vert \\
		&  + \omega(D_{L'}^{\chi,c})\abs{\omega(A (D_{L'}^{\chi,c} - D_{L''}^{\chi,c}) )}\Big). 
		\end{align*}
		Recall that by Lemma \ref{lem:ribchargeinv},
		we have that $\omega(A D_{L'}^{\chi,c}) = \omega(D_{L'}^{\chi,c} A D_{L'}^{\chi,c})$.
		It follows that $\abs{\omega(A D_{L'}^{\chi,c})} \leq \|A\| \omega( D_{L'}^{\chi,c})$.
		We also note the estimate $\frac{1}{\omega(D_{L''}^{\chi,c})} \leq \frac{1}{\lambda_{\chi,c} - \epsilon } $.
		The last term can be estimated using the Cauchy-Schwarz inequality,
		\[
		\left| \omega(A(D_{L'}^{\chi,c} - D_{L''}^{\chi,c}) \right| 
		\leq \sqrt{\omega(A^*A)} \sqrt{\abs{\omega\left( (D_{L'}^{\chi,c} - D_{L''}^{\chi,c})^2\right)}}
		\leq \|A\| \sqrt{\epsilon}.
		\]
		Combining these estimates we obtain the bound
		\begin{equation}\label{eqn:cauchyseq}
		\left\vert \frac{\omega(A D_{L'}^{\chi,c} )}{\omega(D_{L'}^{\chi,c})}  - \frac{\omega(A D_{L''}^{\chi,c} )}{\omega(D_{L''}^{\chi,c})} \right\vert \leq \frac{\|A\| (\epsilon + \sqrt{\epsilon})}{\lambda_{\chi,c} - \epsilon}.
		\end{equation}
		Because $\lambda_{\chi,c} > 0$, this goes to zero as $\epsilon$ goes to zero.
		
		Thus for each pair $(\chi,c)$ there is a sequence of states
		$\omega_L^{\chi,c} = \frac{\omega( \ \cdot \ D_{L}^{\chi,c})}{\omega(D_{L}^{\chi,c})},$
		converging in the weak$^*$ limit (or $\omega^{\chi,c}$ is the zero functional if $\lambda_{\chi,c} = 0$).
		We have the following properties (cf.\ equations (4.48--4.50) in~\cite{KomaN}):
		\begin{align*}
		&\wslim_{L\ra\infty} \omega_L^{\chi,c}  := \omega^{\chi,c} \quad \text{ exists}, \\
		&\wslim_{L\ra \infty} \left\vert \omega - \sum_{\chi,c} \lambda_{\chi,c}\omega_L^{\chi,c} \right\vert = 0\\
		& \omega^{\chi,c}_{L'}( H_{L}^{\epsilon,\mu}) = 0  \quad \text{ for all } L'> L\geq 2.
		\end{align*}
		The last property follows directly from the fact that $\omega_{L'}^{\chi,c}\in K_{L}$ for all $L'>L$
		and $\omega^{\chi,c}$ is an infinite volume ground state for all $(\chi,c)$.
		Thus we have proven the ground state decomposition as in equation~\eqref{eqn:gsdecomp4}.
	\end{proof}
	
	\begin{cor}\label{cor:face}
		For all $(\chi,c)\in \widehat{G}\times G$, $K^{\chi,c}$ is a face in the set of all states.
		In particular, if $\omega^{\chi,c} \in K^{\chi,c}$ is an extremal point of $K^{\chi,c}$ 
		then $\omega^{\chi,c}$ is a pure state.
	\end{cor}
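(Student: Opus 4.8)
The plan is to harvest everything from three facts already established: (i) the set $K$ of all infinite volume ground states is a face of the state space $\calA_{+,1}^*$ (the general Bratteli--Robinson result recalled in Chapter~\ref{ch:qss}); (ii) the decomposition $\omega=\sum_{\chi,c}\lambda_{\chi,c}(\omega)\,\omega^{\chi,c}$ with $\omega^{\chi,c}\in K^{\chi,c}$ and $\lambda_{\chi,c}(\omega)=\lim_{L}\omega(D_L^{\chi,c})$ from Theorem~\ref{thm:qdoubgs1}; and (iii) the orthogonality relation $\lambda_{\sigma,d}(\omega^{\chi,c})=\delta_{(\sigma,d),(\chi,c)}$ from Lemma~\ref{lem:asymptoticcoef}. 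I would first record that $K^{\chi,c}$ is convex: if $\omega_1,\omega_2\in K$ generate elements $\omega_i^{\chi,c}\in K^{\chi,c}$, with $\lambda_i:=\lambda_{\chi,c}(\omega_i)>0$, then for $s\in(0,1)$ the state $s\omega_1+(1-s)\omega_2\in K$ again has positive charge coefficient and generates $s'\,\omega_1^{\chi,c}+(1-s')\,\omega_2^{\chi,c}$ with $s'=s\lambda_1/(s\lambda_1+(1-s)\lambda_2)$, which sweeps out $(0,1)$. So it remains to verify the absorption property of a face.

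Next I would verify absorption. Suppose $\omega^{\chi,c}\in K^{\chi,c}$ and $\omega^{\chi,c}=t\,\omega_1+(1-t)\,\omega_2$ with $\omega_1,\omega_2$ states and $t\in(0,1)$. Since $K^{\chi,c}\subseteq K$ and $K$ is a face of $\calA_{+,1}^*$, both $\omega_1$ and $\omega_2$ are infinite volume ground states. Applying the (well-defined, by Lemma~\ref{lem:asymptoticcoef}) functionals $\lambda_{\sigma,d}(\cdot)=\lim_L(\cdot)(D_L^{\sigma,d})$ to the convex combination gives, by linearity, $\lambda_{\sigma,d}(\omega^{\chi,c})=t\,\lambda_{\sigma,d}(\omega_1)+(1-t)\,\lambda_{\sigma,d}(\omega_2)$. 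For $(\sigma,d)\neq(\chi,c)$ the left-hand side is $0$ by Lemma~\ref{lem:asymptoticcoef}, and since each $\lambda_{\sigma,d}(\omega_i)\geq 0$ and $t\in(0,1)$, we conclude $\lambda_{\sigma,d}(\omega_1)=\lambda_{\sigma,d}(\omega_2)=0$ whenever $(\sigma,d)\neq(\chi,c)$. Evaluating the decomposition of $\omega_i$ from Theorem~\ref{thm:qdoubgs1} on the identity gives $\sum_{\sigma,d}\lambda_{\sigma,d}(\omega_i)=1$, hence $\lambda_{\chi,c}(\omega_1)=\lambda_{\chi,c}(\omega_2)=1$. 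Then the decomposition of $\omega_i$ collapses to the single term $\omega_i=\omega_i^{\chi,c}\in K^{\chi,c}$. Therefore $\omega_1,\omega_2\in K^{\chi,c}$, so $K^{\chi,c}$ is a face of $\calA_{+,1}^*$.

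For the final assertion, I would use the elementary fact that an extreme point of a face $F$ of a convex set $C$ is an extreme point of $C$: if $\omega\in F$ is extreme in $F$ and $\omega=t\,\omega_1+(1-t)\,\omega_2$ with $\omega_1,\omega_2\in C$ and $t\in(0,1)$, then the face property forces $\omega_1,\omega_2\in F$, and extremality of $\omega$ in $F$ forces $\omega_1=\omega_2=\omega$. Taking $C=\calA_{+,1}^*$ and $F=K^{\chi,c}$, every extreme point of $K^{\chi,c}$ is an extreme point of the full state space, i.e.\ a pure state.

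The argument is essentially bookkeeping on top of Theorem~\ref{thm:qdoubgs1} and Lemma~\ref{lem:asymptoticcoef}; the only step needing a little care is the passage from ``$\lambda_{\chi,c}(\omega_i)=1$ and all other charge coefficients vanish'' to ``$\omega_i=\omega_i^{\chi,c}$'', which relies on the explicit form of the decomposition in Theorem~\ref{thm:qdoubgs1} together with the convention that a charge component with vanishing weight is the zero functional. The substantive content — that the charge coefficients $\lambda_{\chi,c}$ are affine, everywhere defined on $K$, and satisfy the orthogonality relation — is precisely what was done in Lemma~\ref{lem:asymptoticcoef} and Theorem~\ref{thm:qdoubgs1}, so no new obstacle arises here.
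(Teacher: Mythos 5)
Your proof is correct and takes essentially the same route as the paper: both arguments rest on the fact that the set of all ground states is a face, the decomposition of Theorem~\ref{thm:qdoubgs1}, and the orthogonality $\lambda_{\sigma,d}(\omega^{\chi,c})=\delta_{(\sigma,d),(\chi,c)}$ of Lemma~\ref{lem:asymptoticcoef}, which force all other charge coefficients of the components to vanish. The only cosmetic difference is that the paper verifies the face property in the domination form $\phi\le\lambda\,\omega^{\chi,c}$ rather than via convex decompositions, and your closing step (extreme points of a face are extreme in the full state space) is exactly the paper's argument for purity.
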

	
	\begin{proof}
		Let $\omega^{\chi,c} \in K^{\chi,c}$ and suppose $\phi \leq \lambda \omega^{\chi,c}$. 
		Since the set of ground states is a face
		this implies $\phi$ is a ground state.
		By Theorem \ref{thm:qdoubgs1}, decompose $\phi$ as 
		$\phi = \sum \lambda_{\sigma,d}(\phi)  \phi^{\sigma, d}$.
		By Lemma \ref{lem:asymptoticcoef}, 
		\begin{align*}
		\lambda_{\sigma,d}(\phi) = \lim_{L\ra \infty} \phi(D_L^{\sigma,d})  \leq \lambda \lim_{L\ra\infty}  \omega^{\chi,c}(D^{\sigma,d}_L) = \lambda \delta_{(\sigma,d), (\chi,c)}.
		\end{align*}
		Therefore, $\phi = \phi^{\chi,c} \in K^{\chi,c}$, which shows $K^{\chi,c}$ is a face
		in the set of all states.
		
		Suppose $\omega^{\chi,c}$ is an extremal state of $K^{\chi,c}$
		and that $\omega^{\chi,c}$ can be decomposed as a convex combination of states
		\[ \omega^{\chi,c}  = c_1 \omega_1 + c_2 \omega_2.\]
		It follows that $c_i \omega_i \leq \omega^{\chi,c}$, 
		so by the face property of $K^{\chi,c}$, $\omega_i \in K^{\chi,c}$.
		The supposition that the state $\omega^{\chi,c}$ is extremal in $K^{\chi,c}$ gives that $\omega_1 = \omega_2 =\omega^{\chi,c}$. 
	\end{proof}
	
	We note that the definition we take for a face does not require it to be a closed set in the weak$^*$ topology (but see also Theorem~\ref{thm:weakclosure}).
	The decomposition above suggests that each ground state can be decomposed into ground states that are related to the superselection sectors.
	Indeed, the \emph{pure} states in $K^{\chi,c}$ are equivalent to one of the charged states that we constructed before.
	Two states $\omega_1$ and  $\omega_2$ are said to be \emph{equivalent} if their corresponding GNS representations are unitarily equivalent.
	
	\begin{thm}\label{thm:eqstates}
		If $\omega^{\chi,c} \in K^{\chi,c}$ is a pure state then $\omega^{\chi,c}$ 
		is equivalent to a single excitation ground state $\omega_s^{\chi,c}$,
		as defined in equation~\eqref{eqn:singleexcitation}. 
	\end{thm}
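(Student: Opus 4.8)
The plan is to verify that the GNS representation $\pi^{\chi,c}$ of $\omega^{\chi,c}$ satisfies the cone superselection criterion~\eqref{eqn:conecrit} and then appeal to Proposition~\ref{prop:singleexc}. Since $\omega^{\chi,c}$ is pure, $\pi^{\chi,c}$ is irreducible, so it is enough to produce, for every infinite cone $\Lambda$, a unitary intertwining $\pi_0\upharpoonright\calA_{\Lambda^c}$ with $\pi^{\chi,c}\upharpoonright\calA_{\Lambda^c}$. For this I would work in finite volume: by Lemma~\ref{lem:gshambound} (with Lemma~\ref{lem:globprojboundaryop}) the restriction $\omega^{\chi,c}\upharpoonright\calA_{\Lambda_L}$ is a ground-state functional of $H_L^{\epsilon,\mu}$, and since $\omega^{\chi,c}\in K^{\chi,c}$ its density matrix is supported, for $L$ large, on $D_L^{\chi,c}\calG_L^{\epsilon,\mu}$, which by Lemma~\ref{lem:gsspan} is spanned by vectors $F_\rho^{\chi,e}F_\sigma^{\iota,c}\Omega_L$ with $\rho,\sigma$ running from an interior site to $\partial\Lambda_L$. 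Using path independence in the ground state~\eqref{eqn:ribbonpathind} one reroutes these ribbons so that, away from $\partial\Lambda_L$, they lie inside $\Lambda$; since ribbon operators act trivially on the complement of their support, this shows $\omega^{\chi,c}$ and $\omega^0$ agree on $\calA_{\Lambda^c\cap\Lambda_L}$ for all large $L$, hence on $\calA_{\Lambda^c}$, which gives the required equivalence of restricted representations.

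The main obstacle is precisely this rerouting: a priori the interior endpoints (the locations of the excitation) need not lie inside $\Lambda$, and for a \emph{mixed} state of sector $(\chi,c)$ the charge may be genuinely delocalized over sites outside $\Lambda$, so the step fails. This is where purity of $\omega^{\chi,c}$ must enter: I would use it to show that there is a single well-defined site carrying the charge (equivalently, that the excitation of a pure ground state is localizable), and then use transportability of the charge to move it into any prescribed cone by a finite ribbon, after which the argument above applies. Controlling this localization cleanly---presumably by analyzing the support of $\omega^{\chi,c}\upharpoonright\calA_{\Lambda_L}$ inside the orthogonal charge-configuration decomposition of $\calG_L^{\epsilon,\mu}$---is the technical heart of the proof.

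Granting the criterion~\eqref{eqn:conecrit}, Proposition~\ref{prop:singleexc}(4) gives a site $s$ and labels $(\chi',c')$ with $\pi^{\chi,c}\cong\pi_s^{\chi',c'}$, independent of $s$ by part~(2). To conclude that $(\chi',c')=(\chi,c)$ I would compare the charge invariants of Lemma~\ref{lem:asymptoticcoef}: from $\omega_s^{\chi',c'}=\omega^0\circ\alpha_\rho^{\chi',c'}$ and the relations~\eqref{eqn:localprojgs}, \eqref{eqn:localprojribbonrelation1}, \eqref{eqn:localprojribbonrelation2} one computes $\lambda_{\sigma,d}(\omega_s^{\chi',c'})=\delta_{(\sigma,d),(\chi',c')}$, whereas $\lambda_{\sigma,d}(\omega^{\chi,c})=\delta_{(\sigma,d),(\chi,c)}$ because $\omega^{\chi,c}\in K^{\chi,c}$; since the numbers $\lambda_{\sigma,d}=\lim_{L\to\infty}\omega(D_L^{\sigma,d})$ are expectations of boundary projectors that recede to infinity, they depend only on the equivalence class of the representation (any two cyclic vectors differ, up to an arbitrarily small error as $L\to\infty$, by a local operation commuting with $D_L^{\sigma,d}$), so the two agree and $(\chi',c')=(\chi,c)$. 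Hence $\omega^{\chi,c}$ is equivalent to the single excitation ground state $\omega_s^{\chi,c}$ of~\eqref{eqn:singleexcitation}. As an independent check one can instead transport the charge to infinity, forming $\phi=\omega^{\chi,c}\circ\alpha_\rho^{\bar\chi,\bar c}$: the relations above show $\phi$ has trivial global charge, Lemma~\ref{lem:gshambound} then forces any such ground state to be frustration-free and hence equal to $\omega^0$ by Proposition~\ref{prop:qdffgs}, so $\omega^{\chi,c}=\omega^0\circ\alpha_\rho^{\chi,c}=\omega_s^{\chi,c}$---provided one has first shown, again using purity, that $\phi$ is a ground state, i.e.\ that $\rho$ was based at the correct site.
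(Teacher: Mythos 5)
Your overall architecture (show the state looks standard far away, then use purity to upgrade an asymptotic agreement to unitary equivalence) is reasonable, but the step you yourself flag as the technical heart is a genuine gap, and the way you propose to fill it would fail. Purity of $\omega^{\chi,c}$ does \emph{not} imply that the charge sits at a single well-defined site: in the GNS representation of $\omega_s^{\chi,c}$ the kernel of the GNS Hamiltonian contains vectors describing coherent superpositions of the excitation at different sites (moving the charge costs no energy), and every such unit vector defines a \emph{pure} ground state in $K^{\chi,c}$ whose charge is not sharply localized anywhere. Relatedly, the statement you aim to verify --- that $\omega^{\chi,c}$ and $\omega^0$ literally \emph{agree} on $\calA_{\Lambda^c}$ --- is stronger than \eqref{eqn:conecrit} requires and is already false for $\omega_{s'}^{\chi,c}$ with $s'\notin\Lambda$, since the star and plaquette projectors at $s'$ detect the excitation; only equivalence of the restricted representations can be expected. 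So the route through \eqref{eqn:conecrit} via ``purity $\Rightarrow$ localization at a site'' breaks exactly where it has to deliver, and the concluding charge-matching discussion and the ``independent check'' (which presupposes $\omega^{\chi,c}=\omega^0\circ\alpha_\rho^{\chi,c}$ for a correctly based ribbon, i.e.\ essentially the conclusion) do not repair it.

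The paper's proof does not go through \eqref{eqn:conecrit} or Proposition \ref{prop:singleexc} at all; it compares $\omega^{\chi,c}$ directly with $\omega_s^{\chi,c}$ on observables supported outside a large box. Writing $\omega^{\chi,c}$ as the weak$^*$ limit of $\omega(\,\cdot\,D_{L'}^{\chi,c})/\omega(D_{L'}^{\chi,c})$ and inserting the additional projector $D_L^{\chi,c}$, Lemma \ref{lem:asymptoticcoef} together with \eqref{eqn:fullcharge} and \eqref{eqn:chargemoment} (i.e.\ $\lambda_{\chi,c}(\omega^{\chi,c})=1$) shows the conditioning changes expectation values by an arbitrarily small error for large $L$: the correct substitute for your localization claim is that the charge lies inside $\Lambda_L$ with probability tending to one. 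The crucial exact input is then that \emph{every} state, pure or mixed, supported on $D_L^{\chi,c}D_{L'}^{\chi,c}\calG_{L'}^{\epsilon,\mu}$ reproduces $\omega_s^{\chi,c}$ exactly on the annulus algebra $\calA_{L'-2}\cap\calA_{(L+2)^c}$; this is proved by rerouting ribbons via path independence \eqref{eqn:ribbonpathind} so they cross the annulus along $\rho_{L'}$ and by computing the common eigenvalue of products of closed ribbon operators from \eqref{eqn:ribbonrelation}. Combining these gives $\abs{\omega^{\chi,c}(A)-\omega_s^{\chi,c}(A)}\leq C(\epsilon)\|A\|$ for all $A\in\calA_{loc}\cap\calA_{L^c}$, and purity is used only at the very end, through the criterion for unitary equivalence of pure states (Corollary 2.6.11 of \cite{BratteliR1}), not to localize the charge.
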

	\begin{proof}
		First, we notice that $\omega_s^{\chi,c} \in K^{\chi,c}$ for all sites $s$ since 
		$ \frac{\omega_s^{\chi,c}( A D_{L'}^{\chi,c})}{\omega_s^{\chi,c}(D_{L'}^{\chi,c})} = \omega_s^{\chi,c}(A)$
		for all $A \in \calA_L$ and $L'>L$, where $L$ is chosen large enough so that $ s \in \mathcal{S}_L$.
		
		Let $\omega^{\chi,c} \in K^{\chi,c}$ and 
		let $\omega \in K$ be a ground state such that 
		$\omega^{\chi,c} = \wslim \frac{\omega( \cdot D^{\chi,c}_L)}{\omega(D_L^{\chi,c})}$ and $\lambda_{\chi,c}>0$.
		Let $\epsilon >0$ be given and suppose $\epsilon$ is small enough such that $\lambda_{\chi,c} > \epsilon >0$. 
		By Lemma \ref{lem:asymptoticcoef} and equation \eqref{eqn:fullcharge}, there is an $L>0$  such that if $L''\geq L' \geq L$ then $\abs{ \omega(D^{\chi,c}_{L''}) - \omega(D_{L'}^{\chi,c}D^{\chi,c}_{L''})}  < \epsilon$ and $\abs{ \omega(D^{\chi,c}_{L''}) - \omega(D_{L'}^{\chi,c})} < \epsilon$.
		
		Fix an operator $ A \in \calA_{loc} \cap \calA_{(L)^c}$.
		Then, there is an $L'>L+1$ such that $A$ is supported on the annulus $A \in \calA_{L'-2} \cap \calA_{(L)^c}$,
		and an $L''> L'+1$ such that 
		\begin{equation*}\label{eqn:bound1}
		\abs{ \omega^{\chi,c}( A) - \frac{\omega(A D^{\chi,c}_{L''})}{\omega(D^{\chi,c}_{L''})}} < \|A\|\epsilon.
		\end{equation*}
		The estimate \eqref{eqn:cauchyseq} also holds.
		Applying \eqref{eqn:fullcharge} and the estimates 
		$$
		\frac{1}{\omega(D_{L'}^{\chi,c})} \leq \frac{1}{\lambda_{\chi,c} - \epsilon } \quad \mbox{ and } \quad
		\frac{1}{\omega(D_L^{\chi,c}D_{L'}^{\chi,c})} \leq \frac{1}{\lambda_{\chi,c} - \epsilon }
		$$
		give
		\begin{align*}
		\abs{ \frac{\omega(A D_{L'}^{\chi,c})}{\omega(D_{L'}^{\chi,c})} - \frac{\omega(A D_{L}^{\chi,c} D_{L'}^{\chi,c})}{\omega(D_{L}^{\chi,c}D_{L'}^{\chi,c})}} 
		&\leq  \abs{  \frac{\omega(A D_{L'}^{\chi,c})}{\omega(D_{L'}^{\chi,c})} - \frac{\omega(A D_{L}^{\chi,c} D_{L'}^{\chi,c})}{\omega(D_{L'}^{\chi,c})}} + \abs{ \frac{\omega(A D_{L}^{\chi,c} D_{L'}^{\chi,c})}{\omega(D_{L'}^{\chi,c})}- \frac{\omega(A D_{L}^{\chi,c} D_{L'}^{\chi,c})}{\omega(D_{L}^{\chi,c}D_{L'}^{\chi,c})}} \\
		&\leq \frac{\|A\| (\epsilon + \sqrt{\epsilon})}{(\lambda_{\chi,c} - \epsilon)^2}.
		\end{align*}
		Combining the above estimates, it follows that for a site $s$ contained in $\Lambda_L$
		\begin{equation}\label{eqn:eqstates1}
		\begin{split}
		& \abs{ \omega^{\chi,c}(A) - \omega_s^{\chi,c}(A) } \leq 
		\abs{\omega^{\chi,c}(A) - \frac{\omega(A D^{\chi,c}_{L''})}{\omega(D^{\chi,c}_{L''})}}
		+ \abs{ \frac{\omega(A D^{\chi,c}_{L''})}{\omega( D^{\chi,c}_{L''})} -  \frac{\omega(A D_{L'}^{\chi,c})}{\omega(D_{L'}^{\chi,c})}} \\
		& \, + \abs{ \frac{\omega(A D_{L'}^{\chi,c})}{\omega(D_{L'}^{\chi,c})} - \frac{\omega(A D_{L}^{\chi,c} D_{L'}^{\chi,c})}{\omega(D_{L}^{\chi,c}D_{L'}^{\chi,c})}}
		+\abs{ \frac{\omega(A D_{L}^{\chi,c} D_{L'}^{\chi,c})}{\omega(D_{L}^{\chi,c}D_{L'}^{\chi,c})} -  \frac{ \omega(D_{L'}^{\chi,c} D_{L}^{\chi,c} A D_{L}^{\chi,c}D_{L'}^{\chi,c})}{\omega(D_{L}^{\chi,c}D_{L'}^{\chi,c})}}\\
		& \qquad \qquad + \abs{  \frac{ \omega(D_{L'}^{\chi,c} D_{L}^{\chi,c} A D_{L}^{\chi,c}D_{L'}^{\chi,c})}{\omega(D_{L}^{\chi,c}D_{L'}^{\chi,c})} - \omega_s^{\chi,c}(A) } \\
		& \quad\quad\quad \leq \|A\|\epsilon  +  2\frac{\|A\| (\epsilon + \sqrt{\epsilon})}{(\lambda_{\chi,c} - \epsilon)^2}
		+  \abs{  \frac{ \omega(D_{L'}^{\chi,c}D_{L}^{\chi,c}A D_{L}^{\chi,c}D_{L'}^{\chi,c})}{\omega(D_{L}^{\chi,c}D_{L'}^{\chi,c})} - \omega_s^{\chi,c}(A) },
		\end{split}
		\end{equation}
		where the fourth term vanishes by Lemma \ref{lem:ribchargeinv}.
		The last term will be shown to be identically zero.
		
		Recall that $\omega_s^{\chi,c}|_{\calA_{L'}} = \omega^0( F_{\rho_{L'}}^{\chi,c *} \ \cdot \ F_{\rho_{L'}}^{\chi,c})|_{\calA_{L'}}$ 
		where $\rho_{L'}$ is a ribbon that connects the site $s$ to the boundary of $\Lambda_{L'}$.
		Denote the subspace $\calG_{L',L}^{\chi,c} := D_{L}^{\chi,c} D_{L'}^{\chi,c} (\calG_{L'}^{\epsilon,\mu})$;
		it is spanned by simple vectors of the form 
		$F_{\sigma}^{\chi,e}F_{\sigma'}^{\iota, c} \Omega$ where $ \Omega \in \calG_{L'}$, and $\sigma$ and $\sigma'$ are a ribbons connecting sites in 
		$\mathcal{S}_L$ to the boundary of $\Lambda_{L'}$, (see Lemma \ref{lem:gsspan}).
		
		First, we consider the case $\psi \in\calG_{L',L}^{\chi,c}$ and $\psi =  F_\sigma^{\chi,c} \Omega$.
		Indeed, for each $\sigma$ as above, we can find a new ribbon, $\sigma' = \sigma_1 \sigma_2 \sigma_3$, see Figure~\ref{fig:eqstates}, connecting 
		$y$ to the boundary of $ \Lambda_{L'}$ with the following properties 
		\begin{align}\label{eqn:pathdecomp}
		&F_{\sigma_1}^{\chi,c}\in \calA_L \text{ and } \sigma_1\cap \Lambda_L =\sigma \cap \Lambda_L\\
		&F_{\sigma_2}^{\chi,c} \in \calA_{L'-1}\cap \calA_{(L+1)^c} \text{ and }  
		\sigma_2 \cap \Lambda_{L'-2}\cap \Lambda_{(L+2)^c}  = \rho_{L'} \cap  \Lambda_{L'-2}\cap \Lambda_{(L+2)^c} \\
		&F_{\sigma_3}^{\chi,c} \in \calA_{L'} \cap  \calA_{( L'-1)^c} 
		\text{ and } \sigma_3\cap \Lambda_{L''} \cap  \Lambda_{( L'-1)^c} =\sigma \cap \Lambda_{L''} \cap  \Lambda_{( L'-1)^c}\\
		&F_\sigma^{\chi,c} \Omega = F_{\sigma_1}^{\chi,c} F_{\sigma_2}^{\chi,c} F_{\sigma_3}^{\chi,c} \Omega. \label{eqn:pathdecomp2}
		\end{align} 
		Here we used that the state only depends on the endpoints of the ribbon, not on the path.
		Decompose 
		$\psi = F_{\sigma}^{\chi,c} \Omega
		= F_{\sigma_{1}}^{\chi,c} F_{\sigma_{2}}^{\chi,c} F_{\sigma_{3}}^{\chi,c}\Omega $.
		
		\begin{figure}
			\begin{center} 
				\includegraphics[width=.5\textwidth]{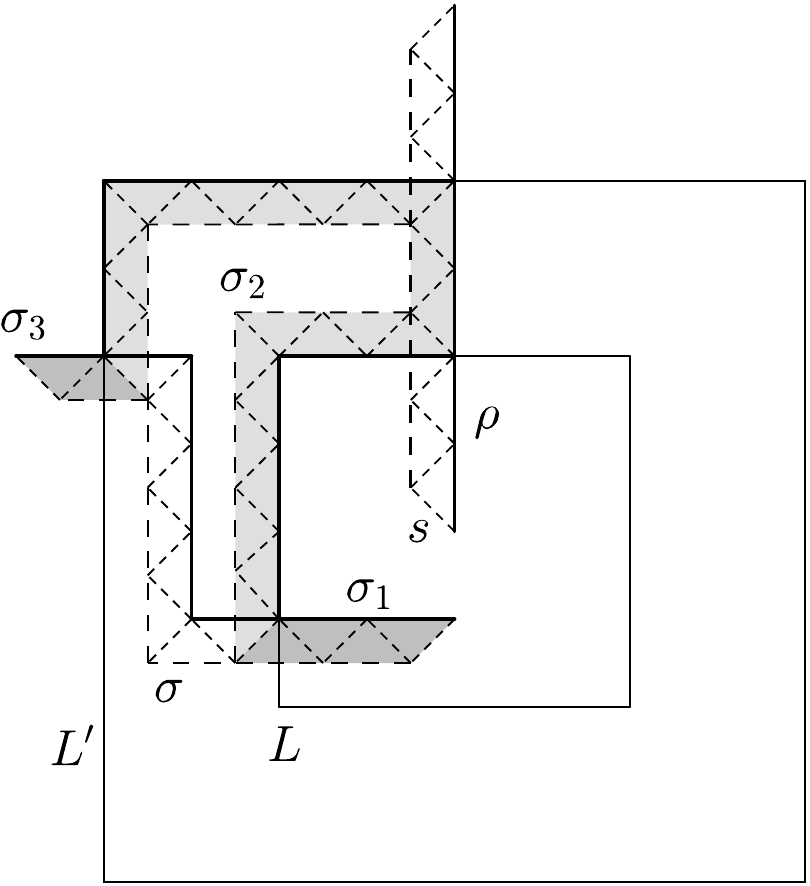}
			\end{center}
			\caption{A depiction of the ribbons $\rho$, $\sigma$ and $\sigma'=\sigma_1\sigma_2\sigma_3$. 
				The ribbon $\sigma'$ is shaded, $\sigma_2$ is distinguished by a lighter shade.}
			\label{fig:eqstates}
		\end{figure}
		
		Suppose $A$ is a product of ribbon operators.
		If $A$ is not a product of closed paths, 
		then its action on $\calG_{L',L}^{\chi,c}$ yields a subspace with strictly higher energy with respect to $H_{L''}^{\chi,c}$.
		Thus, if $\psi \in \calG_{L',L}^{\chi,c}$ then $\langle \psi, A \psi \rangle = 0 = \omega_s^{\chi,c}(A)$.
		If $A$ is a product of closed paths, then $A$ leaves the frustration free ground state invariant,
		$A\Omega = \Omega$.
		Let $k\in\CC$  be such that $A F_{\rho_{L'}}^{\chi,c} = kF_{\rho_{L'}}^{\chi,c} A$;
		$k$ can be computed from ribbon intertwining relations \eqref{eqn:ribbonrelation}.
		It follows that 
		\begin{align*}
		\omega_s^{\chi,c}(A)= \langle F_{\rho_{L'}}^{\chi,c} \Omega, A F_{\rho_{L'}}^{\chi,c} \Omega \rangle = k \langle \Omega, A \Omega \rangle = k.
		\end{align*}
		Now going back to the properties of $\sigma_{l}$ in equations \eqref{eqn:pathdecomp}-\eqref{eqn:pathdecomp2}, we have that 
		\begin{equation*}
		[A, F_{\sigma_1}^{\chi,c}] = 0, \quad [A, F_{\sigma_3}^{\chi,c}] = 0, 
		\quad \text{ and } \quad A F_{\sigma_2}^{\chi,c} = kF_{\sigma_2}^{\chi,c} A .
		\end{equation*}
		Thus, if $A$ is a product of closed ribbon operators then
		\begin{align*}
		A \psi & =  A F_{\sigma_{1}}^{\chi,c} F_{\sigma_{2}}^{\chi,c} F_{\sigma_{3}}^{\chi,c}\Omega =  k F_{\sigma_{1}}^{\chi,c} F_{\sigma_{2}}^{\chi,c} F_{\sigma_{3}}^{\chi,c}A \Omega  = k  \psi
		\end{align*}
		and $\psi$ has eigenvalue $k$.
		For the case $\psi = F_{\sigma}^{\chi,e}F_{\tau}^{\iota,c}\Omega$,
		the decomposition $\sigma = \sigma_1 \sigma_2 \sigma_3$ and $ \tau = \tau_1 \tau_2 \tau_3$ 
		as above then we can choose $ \sigma_2$ and $\tau_2$ to coincide on the annulus $\Lambda_{L'-2}\cap \Lambda_{(L+2)^c}$.
		Therefore, the same argument as above shows $A \psi = k \psi$.
		
		For a general $\psi \in\calG_{L',L}^{\chi,c}$, $\psi$ is a linear combination of the simple vectors $ F_\sigma^{\chi,e} F_{\sigma'}^{\iota,c}\Omega$.
		Thus, by linearity $A\psi = k \psi$ for all $\psi \in\calG_{L',L}^{\chi,c}$.
		Therefore, if $\psi$ is normalized 
		\begin{equation*}
		\langle \psi, A \psi \rangle = k = \omega_s^{\chi,c}(A).
		\end{equation*}
		Note that we already established this equation for $A$ an open ribbon operator.
		
		Since ribbon operators span the algebra $\calA_{L'-2} \cap \calA_{(L+2)^c}$, 
		we extend the above argument by linearity 
		so that 
		\begin{equation*}
		\langle \psi, A \psi \rangle  = \omega_s^{\chi,c}(A)\quad \text{ for all } \quad A \in \calA_{L'-2} \cap \calA_{(L+2)^c}.
		\end{equation*} 
		
		A general mixed state supported on $\calG_{L',L}^{\chi,c}$ is of the form 
		$$ 
		\phi = \frac{ \omega_{L'}(D_{L'}^{\chi,c} D_{L}^{\chi,c} \ \cdot \ D_{L}^{\chi,c}D_{L'}^{\chi,c})}{\omega_{L'}(D_{L}^{\chi,c}D_{L'}^{\chi,c})} =
		\sum c_{\psi} \langle \psi, A \psi\rangle,
		$$ 
		where $ \omega_{L'} \in K_{L'}$ and each $\psi$ is a linear combination of vectors of the form $F_{\sigma}^{\chi,e}F_\xi^{\iota,c}\Omega$.
		Since the $c_\psi$ add up to one, it follows that $ \phi(A) = \omega_s^{\chi,c}(A)$ for all $A \in \calA_{L'-2} \cap \calA_{(L+2)^c}$.
		
		For the ground state $\omega$, Lemma \ref{lem:gshambound} gives $\omega|_{\calA_{L'}} \in K_{L'}$.
		Therefore,
		\begin{equation*}
		\frac{ \omega(D_{L'}^{\chi,c}D_{L}^{\chi,c}A D_{L}^{\chi,c}D_{L'}^{\chi,c})}{\omega(D_{L}^{\chi,c}D_{L'}^{\chi,c})} = \omega_s^{\chi,c}(A) \quad \quad \text{ for all } A \in \calA_{L'-2} \cap \calA_{(L+2)^c}.
		\end{equation*}
		
		Since $L'$ was chosen such that $L'>L$, and otherwise arbitrary, the estimate in \eqref{eqn:eqstates1} becomes
		\begin{equation*}
		\abs{ \omega^{\chi,c}(A) - \omega_s^{\chi,c}(A) } \leq  \|A\| \left(\epsilon +2\frac{(\epsilon + \sqrt{\epsilon})}{(\lambda_{\chi,c} - \epsilon)^2}\right) \qquad \text{ for all } \quad A \in \calA_{loc}\cap \calA_{L^c}.
		\end{equation*}
		
		Now suppose further that $\omega^{\chi,c}$ is a pure state.
		Proposition \ref{prop:singleexc} also gives that the states $\omega_s^{\chi,c}$ are pure states.
		Therefore, applying the criterion for equivalence of pure states (Corollary 2.6.11, \cite{BratteliR1})
		we have 
		\begin{equation*}
		\omega^{\chi,c} \approx \omega_s^{\chi,c} \text{ for all } (\chi,c) \in \widehat{G}\times G.
		\end{equation*}
		This completes the proof.
	\end{proof}
	
	The above two theorems give a complete characterization of the ground states of the quantum double model.
	The sets of states $K^{\chi,c}$ played an important role in the analysis.
	We end our discussion by finding the weak$^*$ closure of these sets in the set of all states.
	Recall that each state in $K^{\chi,c}$ has a charge $(\chi,c)$.
	Now consider a sequence of states where the $\chi$ charge is gradually moved off to infinity.
	The resulting weak$^*$ limit will be a state with only a charge $c$,
	so the weak closures of the sets $K^{\chi,c}$ are strictly larger than $K^{\chi,c}$ (unless $\chi = \iota$ and $c = e$).
	Finally, we see that this procedure suffices to give the  weak$^*$ closures of the sets of charged ground states.
	
	\begin{thm}\label{thm:weakclosure}
		The closures in the weak$^*$ topology are given by
		\begin{equation*}
		\overline{K^{\chi,c}}^w = \conv{K^0\cup K^{\chi,e} \cup K^{\iota,c} \cup K^{\chi,c}},
		\end{equation*}
		where $\operatorname{Conv}$ denotes the convex hull.
	\end{thm}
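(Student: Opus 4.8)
The plan is to prove the two inclusions separately, writing $K^{0}=K^{\iota,e}$ throughout and using two soft facts: the set $K$ of all infinite volume ground states is weak$^*$-closed, and $\overline{K^{\chi,c}}^{w}$ is a weak$^*$-closed convex set, being the closure of the convex set $K^{\chi,c}$ of Definition~\ref{def:chargedgs}. The heuristic behind the statement is that the only way to leave a box of ground states of charge $(\chi,c)$ under weak$^*$ limits is to let the electric part $\chi$, the magnetic part $c$, or both be transported off to infinity, producing states of charge $(\iota,c)$, $(\chi,e)$, or $(\iota,e)$.

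For $\overline{K^{\chi,c}}^{w}\subseteq\conv{K^{0}\cup K^{\chi,e}\cup K^{\iota,c}\cup K^{\chi,c}}$, take $\phi=\wslim_{\beta}\phi_{\beta}$ with $\phi_{\beta}\in K^{\chi,c}$. Then $\phi\in K$, so Theorem~\ref{thm:qdoubgs1} gives $\phi=\sum_{\sigma,d}\lambda_{\sigma,d}(\phi)\,\phi^{\sigma,d}$ with $\phi^{\sigma,d}\in K^{\sigma,d}$ and $\lambda_{\sigma,d}(\phi)=\lim_{L}\phi(D_{L}^{\sigma,d})\geq 0$. It suffices to show $\lambda_{\sigma,d}(\phi)=0$ unless $\sigma\in\{\iota,\chi\}$ and $d\in\{e,c\}$. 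Fix a character $\sigma\notin\{\iota,\chi\}$. For each $\beta$, Lemma~\ref{lem:asymptoticcoef} shows $L\mapsto\phi_{\beta}(D_{L}^{\sigma})$ is nonnegative and nondecreasing with limit $\sum_{d}\lambda_{\sigma,d}(\phi_{\beta})=\sum_{d}\delta_{(\sigma,d),(\chi,c)}=0$, hence $\phi_{\beta}(D_{L}^{\sigma})=0$ for every $L$; since $D_{L}^{\sigma}\in\calA$ is fixed, $\phi(D_{L}^{\sigma})=\lim_{\beta}\phi_{\beta}(D_{L}^{\sigma})=0$ for every $L$, so $\lim_{L}\phi(D_{L}^{\sigma})=0$. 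The same argument on magnetic projectors gives $\lim_{L}\phi(D_{L}^{d})=0$ for $d\notin\{e,c\}$. As the charge projectors commute and $0\leq D_{L}^{\sigma,d}\leq D_{L}^{\sigma}$ and $0\leq D_{L}^{\sigma,d}\leq D_{L}^{d}$, we get $\lambda_{\sigma,d}(\phi)\leq\min\{\lim_{L}\phi(D_{L}^{\sigma}),\lim_{L}\phi(D_{L}^{d})\}$, which vanishes outside the four sectors. Thus $\phi$ is a convex combination of elements of $K^{0},K^{\chi,e},K^{\iota,c},K^{\chi,c}$.

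For the reverse inclusion, since $\overline{K^{\chi,c}}^{w}$ is closed and convex it is enough to show each of the four sets lies in it; $K^{\chi,c}$ does trivially. For the others, transport an extra charge to infinity. Given $\psi\in K^{\chi,e}$, choose sites $s_{n}$ with $d(s_{n},\mathbf 0)\to\infty$ and semi-infinite ribbons $\rho_{n}$ with $\partial_{0}\rho_{n}=s_{n}$, $\partial_{1}\rho_{n}=\infty$, routed to stay outside $B_{\mathbf 0}(d(s_{n},\mathbf 0)/2)$, and set $\psi_{n}:=\psi\circ\alpha_{\rho_{n}}^{\iota,c}$ with $\alpha_{\rho}^{\iota,c}$ the automorphism of~\eqref{eqn:chargemorp} (recall the $F_{\rho}^{\chi,c}$ are unitary by~\eqref{eqn:ribprop1}, with $F_{\rho}^{\iota,e}=I$, so each $\alpha_{\rho}^{\iota,c}$ is a pointwise norm limit of inner automorphisms). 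Three checks are needed. (i) $\psi_{n}\in K$: for every $L$, $\psi_{n}(H_{L}^{\epsilon,\mu})=\lim_{L'}\psi\bigl((F_{\rho_{n,L'}}^{\iota,c})^{*}H_{L}^{\epsilon,\mu}F_{\rho_{n,L'}}^{\iota,c}\bigr)=0$, because $\psi\lvert_{\calA_{L'}}$ has density matrix supported on $\calG_{L'}^{\epsilon,\mu}=\ker H_{L'}^{\epsilon,\mu}\subseteq\ker H_{L}^{\epsilon,\mu}$ (Lemma~\ref{lem:gshambound} and the observation after Lemma~\ref{lem:gsspan}) and $F_{\rho_{n,L'}}^{\iota,c}$ maps $\calG_{L'}^{\epsilon,\mu}$ into itself by the spanning description of Lemma~\ref{lem:gsspan}; then $\psi_{n}\in K$ by Lemma~\ref{lem:gshambound} (this is the argument of Lemma~\ref{lem:finitegslim}). (ii) $\psi_{n}\in K^{\chi,c}$: conjugating the total charge projector $D_{L}^{\chi',c'}$ by $F_{\rho_{n,L}}^{\iota,c}$ shifts the magnetic total charge by (the appropriate power of) $c$ and leaves the electric one fixed, because for large $L$ one endpoint $s_{n}$ lies in $\mathcal{S}_{L}$ while the other lies outside, so only one face projector is affected, via~\eqref{eqn:localprojribbonrelation1}--\eqref{eqn:localprojribbonrelation2}; choosing the ribbon's magnetic label so that the shift lands on $(\chi,c)$, one gets $\lambda_{\chi',c'}(\psi_{n})=\delta_{(\chi',c'),(\chi,c)}$, so $\psi_{n}\in K^{\chi,c}$ by Theorem~\ref{thm:qdoubgs1}. (iii) $\psi_{n}\to\psi$ weak$^{*}$: for fixed $A\in\calA_{loc}$, once $n$ is large enough that $\rho_{n}$ is disjoint from the support of $A$, $F_{\rho_{n,L}}^{\iota,c}$ commutes with $A$ for all $L$, so $\alpha_{\rho_{n}}^{\iota,c}(A)=A$ and $\psi_{n}(A)=\psi(A)$. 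Hence $\psi\in\overline{K^{\chi,c}}^{w}$. The cases $\psi\in K^{\iota,c}$ and $\psi\in K^{0}$ are identical, using $\alpha_{\rho_{n}}^{\chi,e}$ and $\alpha_{\rho_{n}}^{\chi,c}$ respectively.

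The main obstacle is step (ii): one must carefully verify that conjugation by a \emph{semi-infinite} ribbon operator changes the \emph{global} charge registered in each box $\Lambda_{L}$ by exactly the label of the ribbon, exploiting that one endpoint sits in $\mathcal{S}_{L}$ while the other has been pushed past the boundary (so its local charge projectors drop out of the products defining $D_{L}^{\chi',c'}$) --- and that the ribbons $\rho_{n}$ can simultaneously be based arbitrarily far out and keep $F_{\rho_{n,L'}}^{\iota,c}$ preserving $\calG_{L'}^{\epsilon,\mu}$. This is the same mechanism already used in Section~\ref{sec:supersel} (Lemmas~\ref{lem:ribchargeinv} and~\ref{lem:globprojboundaryop} and the single-excitation construction), so it should go through, but it is where the geometry of ribbons relative to the box boundaries must be handled with care; the interchange-of-limits in the $\subseteq$ direction, by contrast, is controlled cleanly by the monotonicity in Lemma~\ref{lem:asymptoticcoef}.
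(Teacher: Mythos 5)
Your overall strategy is the same as the paper's: the inclusion $\overline{K^{\chi,c}}^w\subseteq\conv{K^0\cup K^{\chi,e}\cup K^{\iota,c}\cup K^{\chi,c}}$ via vanishing of the charge coefficients $\lambda_{\sigma,d}$, and the reverse inclusion by composing with $\alpha_{\rho_n}^{\iota,c}$ (and its analogues) based at sites $s_n\to\infty$. Your forward direction is correct, and your use of the monotonicity from Lemma~\ref{lem:asymptoticcoef} to get the exact vanishing $\phi_\beta(D_L^{\sigma})=0$ at every finite $L$, which then survives the weak$^*$ limit, is a clean variant of the paper's computation (the paper instead uses $D_L^{\sigma,d}D_{L'}^{\chi,c}\upharpoonright\calG_{L'}^{\epsilon,\mu}=0$ inside the defining limit of each $\omega_\lambda^{\chi,c}$).

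There is, however, a genuine flaw in step (i) of your reverse inclusion: the claim that $F_{\rho_{n,L'}}^{\iota,c}$ maps $\calG_{L'}^{\epsilon,\mu}$ into itself is false. Take a spanning vector $F_{\sigma}^{\iota,c'}\Omega$ with $c'\neq e$ whose interior endpoint sits on a face different from that of $s_n$; then $F_{\rho_{n,L'}}^{\iota,c}F_{\sigma}^{\iota,c'}\Omega$ carries two bulk magnetic excitations, each strung to the boundary, and its $H_{L'}^{\epsilon,\mu}$-eigenvalue is $2-(1-\delta_{cc',e})\geq 1>0$, because the boundary term $V_{L'}^\mu$ only registers the \emph{total} magnetic charge and can compensate at most one unit of bulk energy. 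So ``density matrix supported on $\calG_{L'}^{\epsilon,\mu}$'' together with your invariance claim does not justify $\psi_n(H_L^{\epsilon,\mu})=0$. The missing ingredient is precisely the one the paper supplies by inserting charge projectors: since $\psi\in K^{\chi,e}$, the monotonicity argument (which you already use in the forward direction) gives $\psi(D_{L'}^{c'})=0$ for every $c'\neq e$ and every $L'$, so the density matrix of $\psi\upharpoonright\calA_{L'}$ is supported on the magnetically neutral subspace $D_{L'}^{e}\calG_{L'}^{\epsilon,\mu}$, spanned by purely electric strings $F_{\rho}^{\chi',e}\Omega$; the single magnetic string $F_{\rho_{n,L'}}^{\iota,c}$ maps \emph{that} subspace into $\calG_{L'}^{\epsilon,\mu}\subseteq\ker H_L^{\epsilon,\mu}$, and the computation then closes. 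In the paper this appears as the evaluation of $\omega^{\chi,e}(\alpha_{\rho}^{\iota,c}(H_L^{\epsilon,\mu}))$ with the projectors $D_{L'}^{\chi,e}$ wrapped around $F_{\rho_{L'}}^{\iota,c*}H_L^{\epsilon,\mu}F_{\rho_{L'}}^{\iota,c}$. Note also that your step (ii), which you flag as the main obstacle, is in fact unproblematic: it is exactly the paper's use of \eqref{eqn:localprojribbonrelation1} giving $(\alpha_\rho^{\iota,c})^{-1}(D_L^{\chi,e})=D_L^{\chi,c}$ for $L$ large enough that $s_n\in\mathcal{S}_L$; the delicate point is (i).
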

	\begin{proof}
		
		First, we show that $K^{\chi,e} \subset \overline{K^{\chi,c}}^w$.
		
		Let $\omega^{\chi,e} \in K^{\chi,e}$ and $\rho$ be a path extending to infinity based 
		at site $s$. Consider the automorphism $\alpha_\rho^{\iota, c}$ (see \eqref{eqn:chargemorp}) 
		that generates a charge of type $(\iota,c)$ located at the site $s$.
		We claim that the state $\omega^{\chi,e} \circ \alpha_\rho^{\iota,c} \in K^{\chi,c}$.
		To see this, from Theorem \ref{thm:qdoubgs1} and Lemma \ref{lem:asymptoticcoef},
		write $ \omega^{\chi,e}(A) = \lim_{L\ra \infty} \omega^{\chi,e}( A D_{L}^{\chi,e} )/\omega^{\chi,e}(D_{L}^{\chi,e})$
		for $A \in \calA$.
		Notice that for $L'>L$ large enough such that $s \subset \Lambda_L$, equation \eqref{eqn:localprojribbonrelation1} gives
		$F_{\rho_{L'}}^{\iota,c} D_L^{\chi,e} = D_L^{\chi,c}F_{\rho_{L'}}^{\iota,c}$
		and $(\alpha_{\rho}^{\iota,c })^{-1}(D_L^{\chi,e}) = D_L^{\chi,c}$.
		Thus,
		\begin{equation*}
		\omega^{\chi,e} \circ \alpha_\rho^{\iota, c} (A) 
		= \lim_{L\ra \infty} \frac{ \omega^{\chi,e}(  \alpha_\rho^{\iota, c}(A) D_{L}^{\chi,e} )}{\omega^{\chi,e}(D_{L}^{\chi,e})} 
		= \lim_{L\ra \infty} \frac{ \omega^{\chi,e}\circ \alpha_\rho^{\iota, c}(A D_{L}^{\chi,c}) }{\omega^{\chi,e}\circ \alpha_\rho^{\iota, c} (D_{L}^{\chi,c})}.
		\end{equation*}
		To finish the claim, we need to show $\omega^{\chi,e} \circ \alpha_\rho^{\iota, c} $ is a ground state.
		Recall that $ D_{L'}^{\chi,e}\calG_{L'}^{\epsilon, \mu}$ is spanned by simple vectors of the form
		$F_\sigma^{\chi,e} \Omega$ 
		where $ \Omega \in \calG_{L'}$ and 
		$\sigma$ is a ribbon connecting a site $s \in \mathcal{S}_{L'}$ to the 
		boundary of $\Lambda_{L'}$.
		Let $ \psi \in D_{L'}^{\chi,e}\calG_{L'}^{\epsilon, \mu}$ and write $ \psi = \sum_j b_j F_{\sigma_j}^{\chi,e}\Omega_j$.
		Since $F_{\rho_{L'}}^{\iota,c} \psi = \sum_j b_j F_{\rho_{L'}}^{\iota,c} F_{\sigma_j}^{\chi,e}\Omega_j \in D_{L'}^{\chi,c}\calG_L^{\epsilon,\mu}$, it follows that 
		$H_L^{\epsilon,\mu} F_{\rho_{L'}}^{\iota,c} \psi = 0$.
		Indeed, we compute $\omega^{\chi,e}\circ\alpha_{\rho}^{\iota, c}$ is an infinite volume ground state:
		\begin{align*}
		\omega^{\chi,e}(\alpha_{\rho}^{\iota, c}( H_L^{\epsilon,\mu}))
		&= \lim_{L'\ra\infty} \frac{ \omega^{\chi,e}(\alpha_{\rho}^{\epsilon,\mu}( H_L^{\epsilon,\mu} D_{L'}^{\chi,c}))}
		{\omega^{\chi,e}\circ\alpha_{\rho}^{\iota, c}(D_{L'}^{\chi,c})}
		= \lim_{L'\ra\infty} \frac{ \omega^{\chi,e}(\alpha_{\rho}^{\epsilon,\mu}( D_{L'}^{\chi,c} H_L^{\epsilon,\mu} D_{L'}^{\chi,c}))}
		{\omega^{\chi,e}\circ\alpha_{\rho}^{\iota, c}(D_{L'}^{\chi,c})}\\
		&= \lim_{L'\ra\infty} \frac{  \omega^{\chi,e}( D_{L'}^{\chi, e} F_{\rho_{L'}}^{\iota,c *} H_{L}^{\epsilon,\mu}F_{\rho_{L'}}^{\iota,c} D_{L'}^{\chi,e} )}{\omega^{\chi,e}( D_{L'}^{\chi, e})}\\
		& = 0,
		\end{align*}
		where the last equality is true since the state 
		$$
		\frac{  \omega^{\chi,e}( D_{L'}^{\chi, e} F_{\rho_{L'}}^{\iota,c *} \ \cdot \ F_{\rho_{L'}}^{\iota,c} D_{L'}^{\chi,e} )}{\omega^{\chi,e}( D_{L'}^{\chi, e})}
		$$
		is a mixed state supported on $D_{L'}^{\chi,c}\calG_L^{\epsilon,\mu}$.
		
		Now consider a sequence $s_n$ of sites such that $s_1 = s$ and $ s_n \ra \infty$ as $n\ra \infty$.
		Let $\rho_n$ be a ribbon extending to infinity based at the site $s_n$ 
		and then define the sequence of states 
		\[  \omega_{n}=  \omega^{\chi,e}\circ\alpha_{\rho_n}^{\iota,c} \in K^{\chi,c}.\]
		For $A \in \calA_{loc}$, choose $n$ large enough so that
		$ \alpha_{\rho_n}^{\iota, c}(A) = A$.
		It follows that,
		\begin{align*}
		\omega_{n}( A ) & = \omega^{\chi,e}\circ\alpha_{\rho_n}^{\iota,c}(A)= \omega^{\chi, e}(A).
		\end{align*}
		Therefore, $\wslim \omega_n = \omega^{\chi, e} \in K^{\chi,e}$.
		By similar arguments one can show the inclusion,
		\[\conv{K^0\cup K^{\chi,e} \cup K^{\iota,c} \cup K^{\chi,c}} \subset  \overline{K^{\chi,c}}^w.\]
		
		Now, to show the reverse inclusion, suppose $\widehat{\omega}^{\chi,c} \in \overline{K^{\chi,c}}^w$ 
		and let $\omega^{\chi,c}_\lambda \in K^{\chi,c}$ be a net in $K^{\chi,c}$ such that $ \wslim_{\lambda}\omega_\lambda^{\chi,c} = \widehat{\omega}^{\chi,c}$.   
		For each $ \lambda$, we can write
		$$
		\omega_\lambda^{\chi,c} = \lim_{L \ra \infty } \frac{\omega_\lambda( \ \cdot \ D_L^{\chi,c})}{\omega_\lambda(D_L^{\chi,c})} .
		$$
		The proof of Lemma \ref{lem:asymptoticcoef}
		gives that $ D^{\sigma, d}_L D^{\chi,c}_{L'} |_{\calG_{L'}^{\epsilon,\mu}} = 0$ if $(\sigma, d)$ is not in the set $\{ (\chi,c), (\chi,e), (\iota,c), (\iota, e) \}  $.
		Thus, in that case,
		\begin{align}\label{eqn:gswslim}
		\widehat{\omega}^{\chi,c}( D_L^{\sigma, d}) &= \lim_{\lambda } \omega_{\lambda}^{\chi,c}( D_L^{\sigma, d}) = \lim_{\lambda} \lim_{L'\ra\infty }\frac{\omega_{\lambda}^{\chi,c}(D_L^{\sigma, d} D_{L'}^{\chi,c} )}{\omega_\lambda^{\chi,c}(D_{L'}^{\chi,c})} = 0.  
		\end{align}
		This holds for all $L$, hence $\lambda_{\sigma,d} = 0$.
		Since the set of infinite volume ground states is closed in the weak$^*$ topology, 
		we apply equation~\eqref{eqn:gswslim} to the ground state decomposition~\eqref{eqn:gsdecomp4} of $\widehat{\omega}^{\chi,c}$,
		\begin{equation*}
		\widehat{\omega}^{\chi,c} = \lambda_0 \omega^0 +\wslim_{L\ra\infty} \bigg( \lambda_{\chi, e} \frac{\widehat{\omega}^{\chi,c}( \ \cdot \ D_L^{\chi, e})}{\widehat{\omega}^{\chi,c}(D_L^{\chi, e})} 
		+c_{\iota, c}  \frac{\widehat{\omega}^{\chi,c}( \ \cdot \ D_L^{\iota, c})}{\widehat{\omega}^{\chi,c}(D_L^{\iota, c})}
		+c_{\chi, c} \frac{\widehat{\omega}^{\chi,c}( \ \cdot \ D_L^{\chi, c})}{\widehat{\omega}^{\chi,c}(D_L^{\chi, c})}\bigg).
		\end{equation*}
		Therefore, $\widehat{\omega}^{\chi,c} \in \conv{K^0\cup K^{\chi,e} \cup K^{\iota,c} \cup K^{\chi,c}}$.
	\end{proof}
   
	\chapter[%
	Stability of charges
	]{%
		On the stability of charges in infinite quantum spin systems
	}%
	\label{ch:stability}

We consider a family of charge superselection sectors of infinite quantum spin systems
corresponding to almost localized endomorphisms.  
If the vacuum state is a pure state and satisfies certain locality conditions, we show how to recover the charge statistics. 
In particular, the superselection structure is that of a braided tensor category.
Further, we introduce a superselection criterion selecting excitations with energy below a threshold.
When the threshold energy falls in a gap of the spectrum of the vacuum state,
we prove stability of the entire superselection structure.
We apply our results to prove stability of anyons for Kitaev's quantum double.

The results in this chapter are joint work with Pieter Naaijkens and Bruno Nachtergaele \cite{ChaNN2} and currently being prepared for publication.

\section{Superselection sectors}

Let $\overline{\Lambda}_\alpha$ be the infinite cone in $\RR^\nu$
based at the origin pointing in the positive $x_1$ direction with angle  $0<\alpha< \pi$, 
\begin{equation*}
\overline{\Lambda}_\alpha \equiv  \left\{ (x_1, x_2, \ldots, x_\nu): x_1 \tan(\alpha) > \left( \sum_{i=2}^\nu x_i^2\right)^{1/2} \right\}.
\end{equation*}
Consider the corresponding cone in $\ZZ^\nu$ and  translations of the cone given by
\begin{equation*}
\Lambda_\alpha  \equiv \overline{\Lambda}_\alpha  \cap \ZZ^\nu \qquad \text{ and } \qquad
\Lambda_\alpha+ x \equiv \left(\overline{ \Lambda}_\alpha + x \right) \cap \ZZ^\nu\quad \text{ for } \quad x \in \ZZ^\nu,
\end{equation*}
respectively.
We will abuse notation to denote integer shifts of the cone as 
\begin{equation*}
\Lambda_\alpha+ n = \left(\overline{\Lambda}_\alpha + (n, 0, 0 \ldots, 0) \right) \cap \ZZ^\nu\quad \text{ for } \quad n \in \ZZ,
\end{equation*}
Denote the set of all cones in $\ZZ^\nu$ with angle less than $\alpha$ as

\begin{equation*}
\mathcal{C}_\alpha \equiv \{ \Lambda_{\beta} \subset \Gamma: 0\leq \beta<\alpha; \  \Lambda_\beta = R_\theta( \overline{ \Lambda}_\beta + x) \cap \ZZ^\nu \ \mbox{for some} \ \theta \in SO(n) \mbox{ and} \  x \in \ZZ^\nu \}
\end{equation*} 
where $R_\theta $ is the $ SO(\nu)$  action on $\RR^\nu$
and the set of all cones as $\mathcal{C}= \bigcup_{ 0\leq \alpha \leq 2\pi}  C_\alpha$, see Figure~\ref{fig:cone} for a sample cone.
For each cone $\Lambda_\alpha \in \mathcal{C}$ we can define translations by $ \Lambda_\alpha + x$ for all $x \in \RR^\nu$ 
and integer shifts by $\Lambda_\alpha + n \equiv \Lambda_{ \alpha } + n \hat{\lambda}$ where $\hat{\lambda} \in \RR^\nu$ 
is the unit vector pointing in the positive direction of the cone.

\begin{figure}
	\includegraphics[width=.5\textwidth]{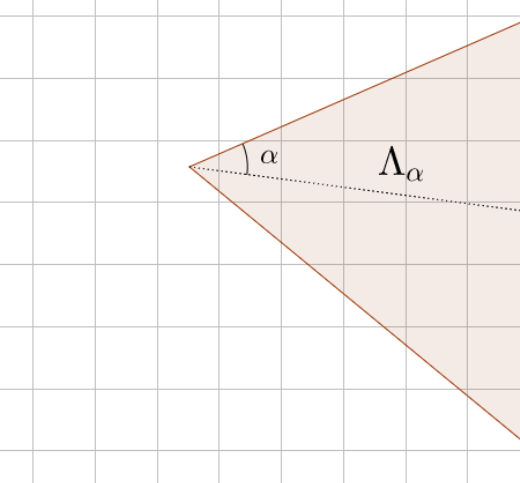}
	\caption{The convex cone $\Lambda_{ \alpha  }$ includes all points  in $\ZZ^\nu$ contained in the shaded region.}
	\label{fig:cone}
\end{figure}

\subsection{Vacuum state}
Let $\omega_0$ be a state of the quasi-local algebra $\calA$ and $(\pi_0, \Omega_0, \calH_0)$ its GNS representation.
Since $\calA$ is simple and $\pi_0$ is a faithful, we will typically abuse notation to  write $ \pi_0(A) =A $ for $ A \in \calA$.
The state $\omega_0$ will play the role of the `vacuum state' in that we use it the build elementary quasi-particle states.
In the quantum spin setting, $\omega_0$ will typically be a pure and translation-invariant ground state
for a dynamical system $(\calA, \tau_t = e^{it \delta}),$
that is,  $\omega_0 \circ T_x = \omega_0$ for all $x\in \ZZ^\nu$ where $T_x$ is the natural action of $Z^\nu$,
and $ \omega_0 (A^* \delta(A)) \geq 0$ for $A \in D(\delta)$.

Let $\mathcal{R}(\Lambda) \equiv \pi_0(\calA_\Lambda)''$ denote the von Neumann closure in $\calB(\calH_0)$ for any subset $\Lambda \subset \Gamma$.

\begin{defn}
	$\omega_0$ is said to satisfy \emph{Haag duality} for cones if   for each cone $\Lambda \in \mathcal{C}$ we have that
	\begin{equation*}
	\mathcal{R}(\Lambda) = \mathcal{R}(\Lambda^c)'.
	\end{equation*}
	
	$\omega_0$ is said to have an \emph{approximate split property} for cones
	if there is a relation $\ll$ on the set of cones $\mathcal{C}$ such that 
	if $\Lambda_1  \ll \Lambda_2$  then $ \Lambda_1$ is strictly contained in $\Lambda_2$ 
	and $\Lambda_2^c \ll \Lambda_1^c$, 
	and there is a type I factor $\mathcal{N}$ such that
	\begin{equation}\label{eqn:split}
	\mathcal{R}(\Lambda_1) \subset \mathcal{N} \subset \mathcal{R}(\Lambda_2).
	\end{equation}
\end{defn}

We will make the following assumptions on the state $\omega_0$:

\begin{assumption}\label{ass:hdsp}
	$\omega_0$ is a pure state and satisfies Haag duality and the approximate split property for cones. 
\end{assumption}

Let $\Lambda_1, \Lambda_2 \in \mathcal{C}$ be such that  $\Lambda_1 \ll \Lambda_2 $. 
By the approximate split property, there exists a type I factor $\mathcal{N}$ such that \eqref{eqn:split} holds
and $\calB(\calH_0) \cong \mathcal{N} \otimes \mathcal{N}'$. 
Taking commutants in \eqref{eqn:split} and applying Haag duality gives the relation
\begin{align*}
\mathcal{R}(\Lambda_2)' & \subset \mathcal{N}' \subset \mathcal{R}(\Lambda_1)'\\
& \quad \Downarrow \quad\quad\quad\quad\quad\quad\quad\quad\mbox{(Haag duality)} \\
\mathcal{R}(\Lambda_2^c) & \subset \mathcal{N}' \subset \mathcal{R}(\Lambda_1^c).
\end{align*}
Thus, $\mathcal{N}'$ plays the role of the type I factor in the approximate split property for the cones $ \Lambda_2^c \ll \Lambda_1^c$.

A state $\omega_0$ has the (non-approximate) split property \cite{DoplicherL,Matsui01}  if for all cones $\Lambda \in \mathcal{C}$
there exists a type I factor $\mathcal{N}$ such that 
\begin{equation*}
\mathcal{R}(\Lambda) \subset \mathcal{N} \subset \mathcal{R}(\Lambda^c)'.
\end{equation*}
Now suppose $\nu =1$ and consider the decomposition $\calA = \calA_L \otimes \calA_R$ where $\calA_L = \calA_{(-\infty,x]} $ and 
$\calA_R = \calA_{[x+1, \infty)}$.
Then, a pure state  $\omega_0$ has the split property if and only if $\omega_0$ is quasi-equivalent to a product state $\omega_L \otimes \omega_R$ 
where $\omega_L = \omega_0|_{\calA_L}$ and $\omega_R = \omega_0|_{\calA_R}$ for all $x \in \ZZ$.
\cite{Matsui01}.
Thus, the split property can be interpreted as a weak correlation in $\omega_0$  across the left and right regions.
In particular, if $\omega_0$ satisfies a uniform exponential decay of correlations then $\omega_0$ satisfies the split property \cite{Matsui13}.
If $\omega_0$ is a pure state then $\omega_L$ and $ \omega_R$ will be type I \cite{Matsui01} and thus, $\pi_0(\calA_L)'' \cong \pi_L(\calA_L)''$ 
will be a type I factor where $ \pi_L$ is a GNS-representation of $\omega_L$.
If $\nu \geq 2$ then we cannot generally expect  $\mathcal{R}(\Lambda)$ to be of type I \cite{Naaijkens11}. 
Thus, if $ \omega_0$ is a pure state then we cannot expect that $\omega_0$ will satisfy the (non-approximate) split property in higher dimensions.
However, the approximately split property can still be interpreted as  
weak correlation across the regions $\Lambda_1$ and $\Lambda_2$.

\subsection{Almost localized endomorphisms}

In the standard Doplicher-Haag-Roberts analysis in local quantum physics, 
representations that are equivalent to the vacuum representation outside a double cone are considered as charged sectors \cite{DHR1,DHR2}.
Here we consider representations obtained by application of some $*$-endomorphisms.  
Generally, any cyclic representation of the quasi-local algebra can be obtained, up to equivalence, by application of an asymptotically inner $*$-endomorphism \cite{Kishimoto}.

We define a notion of almost localized endomorphisms for a cone.
Let $ \mathcal{F}_\infty$ denote the family of continuous functions $f: \RR^+\times \RR^+ \ra \RR^+$ such that 
$f_\epsilon(n)$ is non-increasing in both variables and 
\begin{equation}\label{eqn:Oinf}
\lim_{n \ra \infty } n^{k} f_\epsilon(n)  = 0  \quad \mbox{ for all } \quad  k \in \NN.
\end{equation}
Generally, we say that $ f \in O(n^{-\infty})$ if $f$ satisfies \eqref{eqn:Oinf}.
Notice that $\mathcal{F}_\infty$ is closed under addition.
The family $\mathcal{F}_\infty$ will be used to measure locality of operators and endomorphisms of $\calA$.

\begin{defn}\label{defn:alend}
	A $*$-endomorphism $\rho$ is said to be \emph{almost localized} in a cone $\Lambda_\alpha \in \mathcal{C}$ if 
	there exists a function  $f \in \mathcal{F}_\infty$  such that 
	\begin{equation*}
	\sup_{A \in \calA(\Lambda^c_{\alpha+\epsilon}-n)} \frac{\| \rho(A) - A\|}{\|A\|} \leq f_\epsilon(n)  \quad \mbox{for all  \quad } 0< \epsilon< \pi - \alpha.
	\end{equation*}
	The function $f$ is called the \emph{decay function} for $\rho$ in $\Lambda_\alpha$.
\end{defn}

It follows that if $\rho$ is almost localized in $\Lambda_\alpha$ with decay function $f$ 
then $ \rho$  is almost localized in $ \Lambda_{ \alpha + \delta}$ for all $ \delta < \pi - \alpha $ with the same decay function.
On the other hand, $\rho$ is not guaranteed to be almost localized in cones with smaller opening angle.
In particular,  as $\epsilon \ra 0$ the function $f_\epsilon$ could diverge.

\emph{Example.}
Let  $U_n \in \calA_{ \Lambda_\alpha}$ be a sequence of unitary operators each supported on the cone $\Lambda_{ \alpha }$
and suppose the limit 
$\rho(A) \equiv \lim_{n\ra \infty} U_n^* A U_n$ exists.
Then, $\rho$ is a $*$-endomorphism with $\rho(B) = B$ for all $ B \in \calA_{ \Lambda_{\alpha}^c}$
and it follows that $\rho$ is almost localized in $\Lambda_\alpha$ for all decay functions in $\mathcal{F}_\infty$.
In this case, $\rho$ is said the be exactly localized in $\Lambda_{ \alpha }$.
The superselection structure for localized endomorphisms for the toric code model was first studied in \cite{Naaijkens11}.

In the definition of almost localized the class of functions defining the decay rate $\mathcal{F}_\infty$ could in principle be weakened, 
however it is crucial that composition of endomorphisms preserve the almost localized property.

\begin{prop}
	Let $\rho$ and $\sigma$ be almost localized in $\Lambda_\alpha$ with decay functions $f$ and $ g$, respectively. 
	Then, $\rho \circ \sigma$ is almost localized in $\Lambda_\alpha$ with decay function $ f+ g$.
\end{prop}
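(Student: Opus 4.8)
The plan is to estimate $\|\rho(\sigma(A)) - A\|$ for a test observable $A$ localized in $\calA(\Lambda_{\alpha+\epsilon}^c - n)$ by inserting the intermediate term $\rho(A)$ and using the triangle inequality,
\begin{equation*}
\|\rho(\sigma(A)) - A\| \leq \|\rho(\sigma(A)) - \rho(A)\| + \|\rho(A) - A\|.
\end{equation*}
The second summand is bounded by $g_\epsilon(n)\|A\|$ directly from the hypothesis that $\sigma$ is almost localized with decay function $g$ --- wait, I have the roles reversed; let me be careful. We want $\|\rho(\sigma(A))-A\|$. Write it as $\|\rho(\sigma(A)) - \rho(A)\| + \|\rho(A)-A\|$. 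The second term is at most $f_\epsilon(n)\|A\|$ since $\rho$ is almost localized with decay function $f$ and $A \in \calA(\Lambda_{\alpha+\epsilon}^c-n)$. For the first term, I would use that $\rho$ is a $*$-endomorphism and hence norm-contracting: $\|\rho(\sigma(A)) - \rho(A)\| = \|\rho(\sigma(A) - A)\| \leq \|\sigma(A) - A\| \leq g_\epsilon(n)\|A\|$, again because $\sigma$ is almost localized with decay function $g$ and $A$ lies in the same cone complement. Adding these gives $\|\rho(\sigma(A)) - A\| \leq (f_\epsilon(n) + g_\epsilon(n))\|A\|$.

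The one point requiring a brief justification is that a $*$-endomorphism of a $C^*$-algebra is norm non-increasing; this is standard ($*$-homomorphisms between $C^*$-algebras are automatically contractive), and I would simply cite it. Dividing through by $\|A\|$ and taking the supremum over all nonzero $A \in \calA(\Lambda_{\alpha+\epsilon}^c - n)$ yields
\begin{equation*}
\sup_{A \in \calA(\Lambda_{\alpha+\epsilon}^c - n)} \frac{\|(\rho\circ\sigma)(A) - A\|}{\|A\|} \leq f_\epsilon(n) + g_\epsilon(n)
\end{equation*}
for all $0 < \epsilon < \pi - \alpha$. Since $\mathcal{F}_\infty$ is closed under addition (as noted right after the definition of $\mathcal{F}_\infty$), the function $f + g$ again lies in $\mathcal{F}_\infty$, so $\rho \circ \sigma$ is almost localized in $\Lambda_\alpha$ with decay function $f + g$, as claimed.

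There is essentially no obstacle here; the content of the statement is precisely that the class $\mathcal{F}_\infty$ was chosen to be stable under addition, which is exactly what makes the bookkeeping close. The only thing to be slightly careful about is confirming that $\sigma(A)$ and $\rho(A)$ may legitimately be compared against $A$ inside the same cone complement: $A$ is fixed in $\calA(\Lambda_{\alpha+\epsilon}^c - n)$ for the given $\epsilon$ and $n$, and both decay estimates apply verbatim to that region, so no shrinking of cones or shifting of $n$ is needed. I would present the argument in the two-term telescoping form above, invoke contractivity of $\rho$, and close with the closure-under-addition remark.
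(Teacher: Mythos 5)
Your proof is correct and matches the paper's own argument: the same telescoping decomposition through $\rho(A)$, the triangle inequality, and the norm bound on $\rho$ (the paper notes $\|\rho\|=1$ since $\calA$ is simple, while you invoke contractivity of $*$-homomorphisms; either justification suffices). Nothing further is needed.
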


\begin{proof}
	This follows from the triangle inequality 
	\begin{align*}
	\sup_{A \in \calA(\Lambda^c_{\alpha+\epsilon}-n)}\frac{\| \rho\circ \sigma (A) - A \| }{\|A\|} 
	& \leq \sup_{A \in \calA(\Lambda^c_{\alpha+\epsilon}-n)} \frac{\|\rho\| \| ( \sigma(A) - A) \|}{\|A\|} + \frac{\| \rho(A) - A \|}{\|A\|}  \\
	&\leq f_\epsilon(n) + \| \rho\| g_\epsilon(n),
	\end{align*} 
	where $\|\rho\| = 1$ since $\calA$ is simple.
\end{proof}

Let $\rho$ be almost localized in an infinite cone $\Lambda_\alpha$ with decay function $f$.
Then, for any fixed $n \in \ZZ$ 
\[  \sup_{A \in \calA_{\Lambda_{\epsilon}^c-(n+n') } }\frac{\| \rho(A) - A\| }{\| A\| } \leq f_\epsilon(n+n'). \]
Thus, $\rho$ is almost localized in $\Lambda_\alpha + n$ with the decay function $f_\epsilon(n+n')$.
If $f$ is submultiplicative in the second variable then $f_\epsilon(n + n') \leq f_\epsilon(n) f_\epsilon(n')$.

Consider translations of the cone $\Lambda_\alpha+x$
and the corresponding translated endomorphism $\rho_x \equiv T_{-x} \circ \rho \circ T_{x}$.
Then,
\begin{align*}
\sup_{A \in \calA_{\Lambda_{\epsilon}^c +(x-n)}}\frac{\| \rho_x(A) - A\| }{\| A\| } 
& =  \sup_{A \in \calA_{\Lambda_{\epsilon}^c-n } }\frac{\|\rho_x(T_{x}(A)) - T_{x}(A)\| }{\| T_{x}(A)\| } \\
& = \sup_{A \in \calA_{\Lambda_{\epsilon}^c-n } }\frac{\|T_{x}( \rho(A) - A)\| }{\| T_{x}(A)\| } 
\qquad \leq f_\epsilon(n)
\end{align*}
Thus, the shifted endomorphism $\rho_x$ is almost localized in $\Lambda_\alpha+{x}$ with the same decay function.
In particular, if $x$ does not  the shifted endomorphisms converge pointwise to the identity,
\begin{equation*}
\lim_{\substack{ \abs{x} \ra \infty \\ d(\Lambda_{ \alpha  }+x, 0) \ra \infty}} \| \rho_x(A) - A \| = 0  \quad \text{ for all } \quad A \in \calA.
\end{equation*}

For a $*$-endomorphisms $ \rho$, we consider the cyclic representation of the form $(\pi_0\circ \rho, \Omega_0, \calH_0)$.
Let $ \rho \cong \rho'$ denote unitary equivalence of the corresponding representations $ \pi_0\circ \rho \cong \pi_0 \circ \rho'$.
Since $\pi_0$ is faithful we will often abuse notation to write $ \pi_0 \circ \rho$ as simply $ \rho$.

\begin{defn}\label{defn:transportable}
	Let $\rho$ be almost localized in $\Lambda_\alpha$ with a decay function $f$.
	We say that $\rho$ is \emph{transportable} with respect to $\omega_0$ if
	\begin{itemize}
		\item for each cone $\Lambda'_\beta \in \mathcal{C}$ there exists a $*$-endomorphism $\rho'_\beta$ almost localized in $\Lambda'_\beta$ 
		such that $\rho\cong \rho'_\beta$,
		\item further, if $\beta \geq \alpha$ then $\rho'_\beta$ can be chosen to have decay function $f$ in $\Lambda'_\beta$.
	\end{itemize} 
\end{defn}

When the context is clear, we simply say that $ \rho$ is transportable.
In translation invariant models, we often have that translations are covariant in charged sectors, that is, 
$ T_{-x} \circ \rho\circ T_x \cong \rho$ 	  for all $x \in \ZZ^\nu$.  
However, in our results we do not need to assume this.

Notice that almost locality of $\rho$ is a property on the level of the algebra of observables
and does not depend on the vacuum state, whereas transportability has clear dependence to the vacuum state.
In models, like the toric code model, transportability can typically be proven from the properties of the ground state such as
path independence of string-like operators and translation invariance.

Generally, any cyclic representation of the quasi-local algebra $\calA$ can be obtained, up to equivalence, by 
composing $\pi_0$ with an asymptotically inner endomorphism \cite{Kishimoto}, but not all are transportable.
Here we give an explicit construction of a sequence of unitaries implementing the asymptotically inner property for 
cyclic representations of the form $\pi_0 \circ \rho$ where $\rho$  is almost localized and transportable.
Similar constructions have been considered in models for electromagnetic charge \cite{BuchholzEM}.

\begin{lemma}\label{lem:asympinner}
	Let $\rho$ be almost localized and transportable $*$-endomorphism on $\calA$.
	Then, $\rho$ is asymptotically inner in $\calB(\calH_0)$, that is,
	there exists a sequence of unitary operators $U_n \in \calB(\calH_0)$ such that 
	\begin{equation*}
	\rho(A) = \lim_{n \ra \infty } U_n^* A U_n \quad \text{ for all } \quad A \in \calA.
	\end{equation*}
\end{lemma}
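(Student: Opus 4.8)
The plan is to build the sequence $U_n$ explicitly by ``transporting $\rho$ to infinity'' along a sequence of cones (or translates of one cone) that recedes from the origin, and then using the purity of $\omega_0$ together with Haag duality to produce a single unitary for each step. First I would fix the cone $\Lambda_\alpha$ in which $\rho$ is almost localized with decay function $f$, and choose a sequence of cones $\Lambda^{(n)} \in \mathcal{C}$ with $\Lambda^{(n)} \ll \Lambda^{(n+1)}$ (or simply translates $\Lambda_\alpha + x_n$ with $|x_n| \to \infty$ and $d(\Lambda_\alpha + x_n, 0) \to \infty$) all with opening angle $\geq \alpha$, so that by transportability there are $*$-endomorphisms $\rho_n$ almost localized in $\Lambda^{(n)}$ with the \emph{same} decay function $f$ and $\rho_n \cong \rho$. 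Because $\pi_0 \circ \rho_n \cong \pi_0 \circ \rho$ and these are cyclic representations on $\calH_0$ with the same cyclic vector class, unitary equivalence gives unitaries $V_n \in \calB(\calH_0)$ with $\rho_n(A) = V_n \rho(A) V_n^*$ for all $A \in \calA$. The key observation is that as $n \to \infty$, $\rho_n$ converges pointwise in norm to the identity on $\calA$: for fixed $A \in \calA_{loc}$ supported in some finite region $X$, once $n$ is large enough that $X \subset \Lambda^{(n),c}_{\alpha+\epsilon}$ (equivalently $\Lambda^{(n)}$ has receded past $X$), one has $\|\rho_n(A) - A\| \leq f_\epsilon(\cdot)\,\|A\| \to 0$; density of $\calA_{loc}$ in $\calA$ and the uniform bound $\|\rho_n\| = 1$ (since $\calA$ is simple) extend this to all of $\calA$.

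Next I would combine these facts. Set $U_n := V_n$. Then for $A \in \calA$,
\begin{equation*}
\| V_n^* A V_n - \rho(A) \| = \| A V_n - V_n \rho(A) \| = \| V_n (V_n^* A V_n) - V_n \rho(A)\|,
\end{equation*}
and since $V_n \rho(A) V_n^* = \rho_n(A)$, this equals $\| A - \rho_n(A) \| \to 0$. Wait — one must be careful about the direction of conjugation: with the equivalence written as $\rho_n(A) = V_n \rho(A) V_n^*$, we get $\rho(A) = V_n^* \rho_n(A) V_n$, so
\begin{equation*}
\| V_n^* A V_n - \rho(A) \| = \| V_n^* A V_n - V_n^* \rho_n(A) V_n \| = \| A - \rho_n(A) \| \longrightarrow 0.
\end{equation*}
Thus $\rho(A) = \lim_{n\to\infty} U_n^* A U_n$ for all $A \in \calA$, which is exactly the asymptotic innerness claim. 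The role of purity of $\omega_0$ (Assumption \ref{ass:hdsp}) is to guarantee that the representation $\pi_0$ is irreducible, so that the intertwiners between equivalent representations $\pi_0\circ\rho_n$ and $\pi_0\circ\rho$ are honest \emph{unitaries} in $\calB(\calH_0)$ rather than partial isometries or elements of a larger von Neumann algebra; Haag duality and the approximate split property are what make transportability (Definition \ref{defn:transportable}) a usable hypothesis, ensuring the transported $\rho_n$ exist with controlled decay functions.

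The main obstacle I anticipate is bookkeeping around the transportability hypothesis: I need the transported endomorphisms $\rho_n$ to retain the \emph{same} decay function $f$ (which Definition \ref{defn:transportable} grants only when the target cone has opening angle $\geq \alpha$), and I need the cones $\Lambda^{(n)}$ to genuinely recede so that pointwise convergence $\rho_n \to \mathrm{id}$ holds; this forces the auxiliary angle-loss parameter $\epsilon$ to be handled uniformly. A secondary subtlety is that a priori the intertwiner between $\pi_0\circ\rho$ and $\pi_0\circ\rho_n$ need only be a unitary from $\calH_0$ to $\calH_0$ — which is fine here since all representations act on the same space $\calH_0$ and purity gives irreducibility of $\pi_0$, so that $\pi_0\circ\rho$ and $\pi_0\circ\rho_n$ are quasi-equivalent iff unitarily equivalent; I would spell this out via the standard fact that two representations weakly contained in each other's quasi-equivalence class, with the target algebra a factor, are unitarily equivalent. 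Modulo these points, the argument is a direct assembly of the definitions already in place.
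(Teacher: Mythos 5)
Your proposal is correct and follows essentially the same route as the paper: transport $\rho$ to endomorphisms $\rho_n$ almost localized in receding translates of the cone with the uniform decay function $f$, take the unitary intertwiners $U_n\in(\rho,\rho_n)$, and observe that $\|U_n^*AU_n-\rho(A)\|=\|\rho_n(A)-A\|\to 0$ for local $A$, extending by density. The only stylistic difference is your detour through purity and quasi-equivalence to justify that the intertwiners are unitaries, which is unnecessary here since Definition \ref{defn:transportable} already defines $\rho\cong\rho_n$ as unitary equivalence of $\pi_0\circ\rho$ and $\pi_0\circ\rho_n$ on the single space $\calH_0$.
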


\begin{proof}
	By transportability, there exists a sequence of $*$-endomorphisms $ \rho_n$ almost localized in $\Lambda_{ \alpha  } + n$ with decay function $f$ and 	unitary operators $U_n \in (\rho, \rho_n)$ such that 
	\begin{equation*}
	U_n \rho(A) = \rho_n(A)U_n \quad \text{ for all } \quad A \in \calA.
	\end{equation*}
	
	Let $\epsilon >0$ be given.
	For  $A \in \calA_{loc}$ there exists $N>0$ be such that $ A \in \calA_{ \Lambda_{ \alpha  +\epsilon}^c +N}$. 
	It follows that 
	\begin{align*}
	\| \rho(A) - U_n^* A U_n\| &=  \| U_n^* ( \rho_n(A) - A ) U_n \| 
	= \| \rho_n(A) - A \| \\
	& \leq f(n - N) \| A\| \qquad  \ra 0 \quad \mbox{ as } \quad n \ra \infty.
	\end{align*}
	In the last inequality, transportability implies that each $ \rho_n$ can be assigned the same decay function $f$.
	Since $\calA_{loc}$ is dense in $\calA$, we have $\rho(A) = \lim_{n \ra \infty } U_n^* A U_n$ for all $A \in \calA$.
\end{proof}

\subsection{Locality structure for intertwiners}

Let $\rho$ and $ \sigma$ be $*$-endomorphisms.
Define the space of interwiners as 
\begin{equation*}
(\rho, \sigma)_{\pi_0}\equiv \{ T \in \calB(\calH) : T \pi_0( \rho (A) )= \pi_0(\sigma(A)) T, A \in \calA\}.
\end{equation*}
When the context is clear, we drop the subscript $\pi_0$ and write $(\rho, \sigma) = (\rho,\sigma)_{\pi_0}$.
Notice that if $R \in (\rho,\sigma)$ and $S \in (\sigma, \tau)$ then
\begin{equation*}
SR \rho(A) = S \sigma(A) R = \tau (A) SR,
\end{equation*} 
so that $ SR \in (\rho, \tau)$ intertwines $\rho$ and $\tau$.
We analyze the locality structure for intertwiners between almost localized endomorphisms.

In the context of local quantum physics, almost local observables were first introduced in \cite{ArakiH} and further studied 
in the context of quantum spin systems in \cite{Schmitz,BachmannDN}.
Here, we study almost local observables of $\overline{\pi_0(\calA)}^w = \calB(\calH_0)$  with respect to a cone region.

\begin{defn}\label{def:alocoperator}
	An operator $A \in \calB(\calH)$ is said to be \emph{almost localized} in a cone $\Lambda_{\alpha} \in \mathcal{C}$ if 
	there exists a function $f\in \mathcal{F}_\infty$ such that 
	\begin{equation}\label{eqn:alocobs}
	\sup_{B\in \mathcal{R}(\Lambda^c_{\alpha+\epsilon}-n)} \frac{ \| AB-BA \| }{\| B \| } \leq f_\epsilon(n) = O(n^{-\infty})
	\quad \mbox{ for all } \quad 0 < \epsilon < \pi - \alpha.
	\end{equation}
	The function $f$ is called the \emph{decay function} for $A$ in $\Lambda_{\alpha}$.
\end{defn}

The following lemma shows that the supremum in \eqref{eqn:alocobs} can be taken over the $C^*$-subalgebra of observables $\calA_{\Lambda^c_{\alpha+\epsilon}-n} \subset  \mathcal{R}(\Lambda^c_{\alpha+\epsilon}-n)$.
\begin{lemma}
	$A \in \calB(\calH)$ is almost localized in a cone $\Lambda_{\alpha}$ if and only if
	there exists a  function $f\in \mathcal{F}_\infty$ such that 
	\begin{equation*}
	\sup_{B\in \calA_{\Lambda^c_{\alpha+\epsilon}-n}} \frac{ \| AB-BA \| }{\| B \| } \leq f_\epsilon(n) = O(n^{-\infty})
	\quad \mbox{ for all } \quad 0 < \epsilon < \pi - \alpha.
	\end{equation*}
\end{lemma}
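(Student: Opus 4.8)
The plan is to prove both directions of the equivalence, with the forward direction being essentially trivial and the reverse direction requiring a density/approximation argument combined with the module property of the conditional expectation $\mathbb{E}$ from the earlier lemma (or, alternatively, Kaplansky's density theorem together with the weak-operator continuity of commutators on bounded sets).

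First I would observe the easy direction: if the supremum over all $B \in \mathcal{R}(\Lambda^c_{\alpha+\epsilon}-n)$ is bounded by $f_\epsilon(n) \in O(n^{-\infty})$, then since $\calA_{\Lambda^c_{\alpha+\epsilon}-n} \subset \pi_0(\calA_{\Lambda^c_{\alpha+\epsilon}-n}) \subset \mathcal{R}(\Lambda^c_{\alpha+\epsilon}-n) = \pi_0(\calA_{\Lambda^c_{\alpha+\epsilon}-n})''$ (recall we abuse notation and write $\pi_0(A)=A$), the supremum over the smaller set $\calA_{\Lambda^c_{\alpha+\epsilon}-n}$ is bounded by the same function with the same class membership. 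So $f$ works verbatim and this direction needs no new function.

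For the reverse direction, suppose we have $f \in \mathcal{F}_\infty$ controlling commutators with elements of the $C^*$-algebra $\calA_{\Lambda^c_{\alpha+\epsilon}-n}$; I want to upgrade this to control of commutators with elements of its weak closure $\mathcal{R}(\Lambda^c_{\alpha+\epsilon}-n)$. Fix $n$ and $\epsilon$, and let $B \in \mathcal{R}(\Lambda^c_{\alpha+\epsilon}-n)$ with $\|B\| = 1$. By Kaplansky's density theorem there is a net $B_\lambda \in \calA_{\Lambda^c_{\alpha+\epsilon}-n}$ with $\|B_\lambda\| \le 1$ and $B_\lambda \to B$ in the strong-$*$ operator topology. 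For each $\lambda$ we have $\|AB_\lambda - B_\lambda A\| \le f_\epsilon(n)\|B_\lambda\| \le f_\epsilon(n)$. Now for any fixed vector $\xi \in \calH_0$, the vectors $AB_\lambda\xi \to AB\xi$ and $B_\lambda A \xi \to BA\xi$ (strong convergence, using boundedness and that $A$ is a fixed bounded operator), so $(AB_\lambda - B_\lambda A)\xi \to (AB - BA)\xi$ in norm; hence $\|(AB-BA)\xi\| \le f_\epsilon(n)\|\xi\|$ for every $\xi$, giving $\|AB - BA\| \le f_\epsilon(n)$. Taking the supremum over such $B$ yields the stated bound with the \emph{same} decay function $f$, so $A$ is almost localized in the sense of Definition~\ref{def:alocoperator}.

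The only subtle point — and the step I expect to be the main obstacle to state cleanly — is the passage to the limit: strong-$*$ convergence $B_\lambda \to B$ does not immediately give $\|AB_\lambda - B_\lambda A\| \to \|AB-BA\|$ in norm, only the pointwise (strong) estimate just described, which is nonetheless enough since a uniform bound on $\|(AB_\lambda - B_\lambda A)\xi\|/\|\xi\|$ passes to the strong limit. One should also remark that the identification $\calA_{\Lambda^c_{\alpha+\epsilon}-n} \subset \mathcal{R}(\Lambda^c_{\alpha+\epsilon}-n)$ relies on the fact that the cone $\Lambda^c_{\alpha+\epsilon}-n$, while infinite, still has its quasi-local $C^*$-algebra mapped into the von Neumann algebra $\pi_0(\calA_{\Lambda^c_{\alpha+\epsilon}-n})''$ by construction of $\mathcal{R}$; since $\Lambda^c$ of a cone is again (a complement region that contains) cones, this is consistent with the framework set up above. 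Alternatively, one can avoid Kaplansky entirely by invoking the completely positive map $\mathbb{E}$ from the Bachmann–Nachtergaele–Sims lemma to realize $A' = \mathbb{E}(A) \otimes I$ approximations, but the Kaplansky route is the most direct and I would present that.
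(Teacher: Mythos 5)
Your proof is correct and follows essentially the same route as the paper: the forward direction by the inclusion $\calA_{\Lambda^c_{\alpha+\epsilon}-n}\subset\mathcal{R}(\Lambda^c_{\alpha+\epsilon}-n)$, and the reverse direction by Kaplansky's density theorem for the unit ball, transferring the commutator bound to the weak closure with the same decay function $f$. The only (cosmetic) difference is in how the bound passes to the limit: you use the strong-$*$ convergence Kaplansky provides to get vectorwise norm convergence of $[A,B_\lambda]\xi\to[A,B]\xi$, while the paper works with weak-operator convergence and a Cauchy--Schwarz estimate to reach the same conclusion.
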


\begin{proof}
	The forward direction of the lemma  follows directly for the fact $\calA_{\Lambda^c_{\alpha+\epsilon}-n} \subset \mathcal{R}(\Lambda^c_{\alpha+\epsilon}-n)$.
	
	Suppose there exists a function $f\in \mathcal{F}_\infty$ such that 
	\begin{equation*}
	\sup_{B\in \calA_{\Lambda^c_{\alpha+\epsilon}-n}} \frac{ \| AB-BA \| }{\| B \| } \leq f_\epsilon(n) = O(n^{-\infty})
	\quad \mbox{ for all } \quad 0 < \epsilon < \pi - \alpha.
	\end{equation*}
	Let $ B \in \mathcal{R}(\Lambda^c_{\alpha+\epsilon}-n)$ be such that $\|B\| =1$.
	By Kaplansky's density theorem the unit ball of $\calA_{\Lambda^c_{\alpha+\epsilon}-n}$ is dense (in the weak operator topology) in the unit ball of $\mathcal{R}(\Lambda^c_{\alpha+\epsilon}-n)$.
	Hence there is a net $B_\lambda \in \calA_{\Lambda^c_{\alpha+\varepsilon}-n}$ such that $\|B_\lambda\| \leq 1$ and $ \wlim B_\lambda = B$.
	By weak operator continuity of $B \mapsto A B$ we have that $AB - BA = \wlim_\lambda A B_\lambda - B_\lambda A$.
	Let $\xi \in \calH$.
	Then by weak operator continuity and the Cauchy-Schwarz inequality we have
	\begin{align*}
	\| [A,B] \xi \|^2 &= \lim_\lambda \langle [A,B] \xi, [A, B_\lambda] \xi \rangle  \\
	&\leq \| [A,B] \xi\| \liminf_\lambda \| [A, B_\lambda] \xi \| \\ 
	&\leq \| [A,B] \xi \| \| \xi \| \liminf_{\lambda} \| [A, B_\lambda] \|.
	\end{align*}
	It follows that
	\begin{align*}
	\| A B - BA \| &\leq \liminf_\lambda \| AB_\lambda - B_\lambda A\| \\
	& \leq  f_\epsilon(n) \liminf_\lambda \| B_\lambda\| \\
	& = \| B \| f_\epsilon(n),
	\end{align*}
	completing the proof.
\end{proof}

As a consequence, we get the following locality property for intertwiners. 
\begin{cor}\label{cor:intertwineraloc}
	Let $\rho$ and $\sigma$ be almost localized in a cone $\Lambda_\alpha$
	with decay functions $f$ and $g$, respectively.
	If $R \in (\rho, \sigma)$ then $R$ is almost localized in $\Lambda_{\alpha}$ with decay function $\|R\|(f + g)$.
\end{cor}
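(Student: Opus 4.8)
The plan is to reduce the statement to the $C^*$-algebraic formulation provided by the preceding lemma and then exploit the intertwining relation directly. First I would fix $0 < \epsilon < \pi - \alpha$ and $n \in \ZZ$, and, invoking the lemma immediately above, it suffices to bound $\|[R,B]\|$ for $B$ ranging over the local algebra $\calA_{\Lambda^c_{\alpha+\epsilon}-n}$. As throughout the chapter I identify $B$ with $\pi_0(B) \in \mathcal{R}(\Lambda^c_{\alpha+\epsilon}-n)$, using that $\pi_0$ is faithful, and I read the intertwiner condition as $R\,\rho(B) = \sigma(B)\,R$ for all $B \in \calA$.

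The key step is the algebraic identity obtained by inserting the intertwining relation:
\begin{equation*}
RB - BR = R\big(B - \rho(B)\big) + R\rho(B) - BR = R\big(B - \rho(B)\big) - \big(B - \sigma(B)\big)R .
\end{equation*}
Taking norms and estimating each term by $\|R\|$ gives $\|RB - BR\| \leq \|R\|\big(\|\rho(B) - B\| + \|\sigma(B) - B\|\big)$. Now I apply the almost-localization hypotheses: since $B \in \calA_{\Lambda^c_{\alpha+\epsilon}-n}$ and $\rho$, $\sigma$ are almost localized in $\Lambda_\alpha$ with decay functions $f$ and $g$, we have $\|\rho(B) - B\| \leq f_\epsilon(n)\|B\|$ and $\|\sigma(B) - B\| \leq g_\epsilon(n)\|B\|$, whence
\begin{equation*}
\sup_{B \in \calA_{\Lambda^c_{\alpha+\epsilon}-n}} \frac{\|RB - BR\|}{\|B\|} \leq \|R\|\big(f_\epsilon(n) + g_\epsilon(n)\big).
\end{equation*}

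Finally I would observe that $f + g \in \mathcal{F}_\infty$, since $\mathcal{F}_\infty$ is closed under addition, and that multiplying by the constant $\|R\|$ preserves monotonicity in both variables together with the $O(n^{-\infty})$ decay (the case $R = 0$ being trivial), so that $\|R\|(f+g) \in \mathcal{F}_\infty$. By the preceding lemma this is precisely the assertion that $R$ is almost localized in $\Lambda_\alpha$ with decay function $\|R\|(f+g)$. I do not expect a genuine obstacle here; the only point requiring care is that the supremum in the definition of an almost localized operator may be taken over the local $C^*$-algebra rather than its weak-operator closure, and that is exactly what the lemma just above supplies — the remainder is the triangle inequality applied to the inserted intertwiner.
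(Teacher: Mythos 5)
Your proposal is correct and follows essentially the same route as the paper: both insert the intertwining relation $R\rho(B)=\sigma(B)R$ via the triangle inequality to obtain $\|RB-BR\|\leq \|R\|\left(\|\rho(B)-B\|+\|\sigma(B)-B\|\right)$ and then apply the decay hypotheses for $\rho$ and $\sigma$. Your extra remarks—that the supremum may be taken over the local $C^*$-algebra by the preceding lemma, and that $\|R\|(f+g)\in\mathcal{F}_\infty$—are exactly the points the paper leaves implicit, so nothing is missing.
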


\begin{proof}
	For all $B \in \calA$ we have that
	\begin{align*}
	\| R B - B R \| 
	& \leq  \| R B- R\rho(B)  \| + \| R\rho(B) - \sigma(B)R\| + \|  \sigma(B) R - B R\| \\
	&\leq  \| R\| ( \| \rho(B) - B \| + \|\sigma(B) -B \| ).
	\end{align*}
	Therefore,
	\begin{equation*} 
	\sup_{B\in \calA_{\Lambda^c_{\alpha+\epsilon}-n}} \frac{ \| RB-BR\| }{\| B \| }  \leq \|R\|( f_\epsilon(n) + g_\epsilon(n)).
	\end{equation*}
\end{proof}

The space of intertwiners $ (\rho,\sigma)$ naturally resides in the von Neumann algebra generated by $\calA$,
which by purity of the state is the entire $\calB(\calH_0)$. 
The almost local property for operators $A \in \calB(\calH_0)$ suggest that 
if $\omega_0$ satisfies certain locality conditions then $A$ may be well approximated in some cone algebra $\mathcal{R}(\Lambda)$.
Following \cite{Naaijkens11, BuchholzF}, we introduce a cone  $K$ with an arbitrary but small opening angle as a forbidden region
and define an \emph{auxiliary algebra}, 
\begin{equation*}
\calB_K\equiv  \overline{ \bigcup_{x\in \ZZ^\nu} \mathcal{R}((K+x)^c)}^{\| \cdot\|} =  \overline{ \bigcup_{n \in \NN } \mathcal{R}((K+n)^c)}^{\| \cdot\|},
\end{equation*}
where the second equality follows from the fact that for every $ x\in \ZZ^\nu$ there is an $n \in \NN$ such that $ K+n \subset K+x$
and $ (K+x)^c \subset (K+n)^c$.

\begin{lemma}\label{lem:alocauxalg}
	Suppose $\omega_0$ satisfies Assumption \ref{ass:hdsp} and
	let $\Lambda_\alpha$ be a cone such that   $\Lambda_{\alpha+\epsilon} \ll (K+x)^c$ for some $\epsilon>0$ and $ x\in \ZZ^\nu$.
	If $A$ is almost localized in $\Lambda_\alpha$  then $A \in \calB_{K}$.
\end{lemma}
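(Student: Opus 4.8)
The plan is to show that $A$ almost localized in $\Lambda_\alpha$ can be approximated in norm by elements of $\mathcal{R}((K+n)^c)$ for suitable $n$, so that $A$ lies in the norm closure $\calB_K$. First I would invoke the approximate split property: since $\Lambda_{\alpha+\epsilon} \ll (K+x)^c$, Assumption~\ref{ass:hdsp} supplies a type~I factor $\mathcal{N}$ with $\mathcal{R}(\Lambda_{\alpha+\epsilon}) \subset \mathcal{N} \subset \mathcal{R}((K+x)^c)$. The key object to build is a conditional-expectation-like map onto $\mathcal{N}$: because $\mathcal{N}$ is a type~I factor, $\calB(\calH_0) \cong \mathcal{N} \otimes \mathcal{N}'$, and one has a normal projection $\mathbb{E}_{\mathcal{N}} : \calB(\calH_0) \to \mathcal{N}$ of the form $X \mapsto (\mathrm{id}_{\mathcal{N}} \otimes \phi)(X)$ for a normal state $\phi$ on $\mathcal{N}'$ (or, in the spin-system spirit, an average over unitaries of $\mathcal{N}'$, as in the Lemma of \cite{NachSW} quoted earlier). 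Then $A' := \mathbb{E}_{\mathcal{N}}(A) \in \mathcal{N} \subset \mathcal{R}((K+x)^c) \subset \calB_K$, and it remains to estimate $\|A - A'\|$.

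**Estimating the error via almost locality.**
The error bound should come from the commutator version of the $\mathbb{E}$-lemma: if $A$ nearly commutes with all of $\mathcal{N}' \subset \mathcal{R}(\Lambda_{\alpha+\epsilon}^c)$, then $\|\mathbb{E}_{\mathcal{N}}(A) - A\|$ is small. More precisely, since $A$ is almost localized in $\Lambda_\alpha$ with decay function $f$, for every $B \in \mathcal{R}(\Lambda_{\alpha+\epsilon}^c - n) \supset \mathcal{R}((\text{a suitably shifted forbidden-cone complement}))$ we have $\|[A,B]\| \le f_\epsilon(n)\|B\|$. To make the forbidden region recede to infinity, I would use the freedom in the definition of $\calB_K$ to replace $K+x$ by $K+n$ with $n \to \infty$: by the split relation $\Lambda_{\alpha+\epsilon} \ll (K+x)^c$ and monotonicity of $\ll$, one gets a family of type~I factors $\mathcal{N}_n$ with $\mathcal{R}(\Lambda_{\alpha+\epsilon}) \subset \mathcal{N}_n \subset \mathcal{R}((K+n)^c)$ and $\mathcal{N}_n' \subset \mathcal{R}((K+n)^c{}^c) = \mathcal{R}(K+n)$, which sits inside $\mathcal{R}(\Lambda_{\alpha+\epsilon'}^c - m_n)$ for some $m_n \to \infty$ as $n \to \infty$ (because $K$ has small opening angle and recedes). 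Then $\|A - \mathbb{E}_{\mathcal{N}_n}(A)\| \le 2 f_{\epsilon'}(m_n) \|A\| \to 0$ since $f \in \mathcal{F}_\infty$, which gives $A \in \overline{\bigcup_n \mathcal{R}((K+n)^c)}^{\|\cdot\|} = \calB_K$.

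**The main obstacle.**
The delicate point is the geometric bookkeeping: I need to verify that $\mathcal{N}_n' \subset \mathcal{R}(K+n)$ is genuinely contained in a cone-complement algebra $\mathcal{R}(\Lambda_{\alpha+\epsilon'}^c - m_n)$ with $m_n \to \infty$, i.e. that the forbidden cone $K+n$, once shifted far enough, is eventually disjoint from (and far from) the shifted localization cone $\Lambda_\alpha$ up to the tolerance $\epsilon'$. This requires choosing the opening angle of $K$ small relative to $\pi - \alpha$ and tracking how far one must shift; it is exactly the kind of estimate done in \cite{Naaijkens11} and \cite{BuchholzF}, so I expect it to be routine but fiddly. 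A secondary subtlety is justifying the normal projection $\mathbb{E}_{\mathcal{N}}$ and its commutator estimate in the von Neumann algebra setting rather than the finite-dimensional one — but Haag duality plus the type~I factorization $\calB(\calH_0) = \mathcal{N} \otimes \mathcal{N}'$ reduces this to a standard fact about slice maps, so I would cite it rather than reprove it. Everything else (density of $\bigcup_n \mathcal{R}((K+n)^c)$, closure under norm limits) is immediate from the definition of $\calB_K$.
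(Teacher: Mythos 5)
Your overall strategy (project $A$ into a type I factor sandwiched between cone algebras via a slice map, and control the error with the commutator lemma of \cite{NachSW}) is exactly the paper's, but the split inclusion is applied to the wrong pair of regions, and this leaves a genuine gap. From $\mathcal{R}(\Lambda_{\alpha+\epsilon}) \subset \mathcal{N}_n \subset \mathcal{R}((K+n)^c)$ you assert $\mathcal{N}_n' \subset \mathcal{R}(K+n)$; but taking commutants \emph{reverses} inclusions, so together with Haag duality you only get $\mathcal{R}(K+n) \subset \mathcal{N}_n' \subset \mathcal{R}(\Lambda_{\alpha+\epsilon}^c)$. The upper bound on $\mathcal{N}_n'$ does not recede as $n \to \infty$: nothing prevents $\mathcal{N}_n'$ from containing operators localized arbitrarily close to $\Lambda_\alpha$, so the only uniform commutator estimate against $\mathcal{N}_n'$ that almost locality provides is $f_\epsilon(0)\|B\|$, a fixed constant. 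Hence $\|A - \mathbb{E}_{\mathcal{N}_n}(A)\|$ is not forced to vanish and the approximation step collapses. (The appeal to ``monotonicity of $\ll$'' is also not something the definition grants, but that is secondary.)

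The repair, which is the paper's proof, is to split between two nested shifted copies of the \emph{forbidden} cone rather than between $\Lambda_{\alpha+\epsilon}$ and $(K+n)^c$: choose $x'$ with $\Lambda_{\alpha+\epsilon} \ll (K+x+n)^c \ll (K+x'+n)^c$, so the approximate split property (applied to the cones $K+x'+n \ll K+x+n$ and read through Haag duality) gives a type I factor $\mathcal{N}$ with $\mathcal{R}((K+x+n)^c) \subset \mathcal{N} \subset \mathcal{R}((K+x'+n)^c)$ and consequently $\mathcal{R}(K+x'+n) \subset \mathcal{N}' \subset \mathcal{R}(K+x+n)$. Now $\mathcal{N}'$ sits inside the algebra of the region $K+x+n$, which for large $n$ lies inside $\Lambda^c_{\alpha+\epsilon} - m_n$ with $m_n \to \infty$ (this is the geometric bookkeeping you correctly flagged, and it is exactly where the hypothesis $\Lambda_{\alpha+\epsilon} \ll (K+x)^c$ and the fixed tolerance $\epsilon$ enter), so almost locality gives $\|[A,B]\| \leq \delta\|A\|\|B\|$ for all $B \in \mathcal{N}'$ once $n$ is large. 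The lemma of \cite{NachSW} then produces $A' \in \mathcal{N} \subset \mathcal{R}((K+x'+n)^c) \subset \calB_K$ with $\|A - A'\| \leq \delta\|A\|$, and since $\delta > 0$ was arbitrary, $A \in \calB_K$. In short: to make $\mathcal{N}'$ small (contained in a far-away cone algebra) you must make $\mathcal{N}$ large (containing a cone-complement algebra), which forces the split pair to be the two copies of $K$, not $\Lambda_{\alpha+\epsilon}$ versus $(K+n)^c$.
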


\begin{proof}
	Let $\delta >0$ be given and $A \in \calB(\calH)$ be almost localized in $\Lambda_\alpha$.
	Then, by the almost localized condition, there exists an $N>0$ such that if $n>N$ then 
	\begin{equation}\label{eqn:alo}
	\|A B - B A\|  < \delta \|B\| \|A\| \quad \text{for all} \quad B \in \mathcal{R}(K+x+n).
	\end{equation}
	
	Let $x'$ be such that $ \Lambda_{\alpha+\epsilon} \ll (K+x+n)^c \ll (K+x'+n)^c$.
	By Haag duality and the approximate split property, there is a type I factor $\mathcal{N}$ such that 
	$\calB(\calH) \cong \mathcal{N} \otimes \mathcal{N}'$ and 
	\begin{align}
	\mathcal{R}((K+x+n)^c) &\subset \mathcal{N} \subset \mathcal{R}((K+x'+n)^c)\\
	\mathcal{R}((K+x'+n)) &\subset \mathcal{N}' \subset \mathcal{R}(K+x+n). \label{eqn:split2}
	\end{align}
	From \eqref{eqn:alo} and \eqref{eqn:split2} it follows that  
	\begin{equation*}
	\| A B - B A \| \leq \delta \|A\| \|B\| \quad \text{for all} \quad B \in \mathcal{N}'.
	\end{equation*}
	
	Applying Lemma 2.1 of \cite{NachSW}, there exists an operator $A' \in \mathcal{N}$ such that 
	\begin{equation*}
	\| A - A' \| \leq \delta \| A\|.
	\end{equation*}
	In other words, $A$ is arbitrarily well approximated  in norm by an operator $A'$ in the cone algebra $\mathcal{R}((K+y)^c)$ for some $y \in \ZZ^\nu$.
	Therefore, $A \in \calB_K$.
\end{proof}

\begin{cor}\label{cor:interwinerauxalg}
	Suppose $\omega_0$ satisfies Assumption \ref{ass:hdsp} and
	let $\Lambda_\alpha$ be a cone such that   $\Lambda_{\alpha+\epsilon} \ll (K+x)^c$ for some $\epsilon>0$ and $ x\in \ZZ^\nu$.
	For $\rho$ and $\sigma$  almost localized in a cone $\Lambda_\alpha$, if  $ R \in (\rho, \sigma)$  then $R \in \calB_K$.
\end{cor}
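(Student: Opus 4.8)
The plan is to obtain this corollary by chaining the two results immediately preceding it: Corollary~\ref{cor:intertwineraloc}, which upgrades membership in $(\rho,\sigma)$ to an almost-localization statement for operators in $\calB(\calH_0)$, and Lemma~\ref{lem:alocauxalg}, which converts almost-localization in a suitable cone into membership in the auxiliary algebra $\calB_K$. There is no new idea required; the work has already been done, and the proof amounts to verifying that the hypotheses line up.

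First I would invoke Corollary~\ref{cor:intertwineraloc} with the decay functions $f$ and $g$ of $\rho$ and $\sigma$: since $R \in (\rho,\sigma)$, it is almost localized in $\Lambda_\alpha$ with decay function $\|R\|(f+g)$. I would pause to confirm that $\|R\|(f+g)$ is a legitimate decay function, i.e.\ that it lies in $\mathcal{F}_\infty$: the sum $f+g$ belongs to $\mathcal{F}_\infty$ because that class is closed under addition, and multiplication by the nonnegative constant $\|R\|$ preserves both the monotonicity in each variable and the $O(n^{-\infty})$ decay, since $n^k \|R\|(f+g)_\epsilon(n) = \|R\|\, n^k (f+g)_\epsilon(n) \to 0$ for every $k \in \NN$. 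If $R=0$ the conclusion $R \in \calB_K$ is immediate (the algebra contains $0$), so one may assume $R \neq 0$ and this rescaling is harmless.

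Then I would apply Lemma~\ref{lem:alocauxalg} verbatim with $A = R$. The hypotheses of that lemma are precisely: $\omega_0$ satisfies Assumption~\ref{ass:hdsp}; there exist $\epsilon>0$ and $x \in \ZZ^\nu$ with $\Lambda_{\alpha+\epsilon} \ll (K+x)^c$; and $R$ is almost localized in $\Lambda_\alpha$. The first two are the standing hypotheses of the corollary, and the third is what we established in the previous step. The lemma then gives $R \in \calB_K$, which completes the argument.

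I do not expect any genuine obstacle: the substantive content is entirely contained in Corollary~\ref{cor:intertwineraloc} and Lemma~\ref{lem:alocauxalg}. The only points needing care are the bookkeeping that the rescaled sum of decay functions still belongs to $\mathcal{F}_\infty$, and checking that the cone in which $R$ is almost localized (by Corollary~\ref{cor:intertwineraloc}) is the same cone $\Lambda_\alpha$ appearing in the hypothesis $\Lambda_{\alpha+\epsilon} \ll (K+x)^c$. The latter is automatic, since $\rho$, $\sigma$, and therefore $R$ are all associated with the single cone $\Lambda_\alpha$ fixed in the statement.
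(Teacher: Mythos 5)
Your proposal is correct and follows exactly the paper's own two-line argument: apply Corollary~\ref{cor:intertwineraloc} to conclude $R$ is almost localized in $\Lambda_\alpha$, then apply Lemma~\ref{lem:alocauxalg}. The extra bookkeeping you include (that $\|R\|(f+g)\in\mathcal{F}_\infty$ and that the cone matches) is a harmless elaboration of the same proof.
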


\begin{proof}
	By Corollary \ref{cor:intertwineraloc}, $R$ is almost localized in $\Lambda_{ \alpha  }$.
	The result is obtained by applying Lemma \ref{lem:alocauxalg}.
\end{proof}

\subsection{Superselection structure}

We assume throughout this section that the vacuum state $\omega_0$ satisfies 
Assumption \ref{ass:hdsp}.

Let $\Delta$ be a semi-group of $*$-endomorphisms on $\calA$
and $\mathcal{T}_\Delta$ denote the set of all interwiners $(\rho,\sigma)$ such that $\rho,\sigma \in \Delta$.
We form a $C^*$-category that we again call $\Delta$ whose objects are $\rho \in \Delta$ and arrows are $\mathcal{T}_\Delta$.
We introduce a \emph{superselection criterion} on $\Delta$ that requires objects be almost localized and transportable 
with respect to $\omega_0$.

\begin{defn}\label{def:superselection}(Superselection criterion)
	$\Delta$ is said to satisfy the almost localized and transportable superselection criterion for $\omega_0$ if
	\begin{enumerate}
		\item for all $\rho \in \Delta$ there exists a cone $\Lambda_{ \alpha  } \in \mathcal{C}$ such that $\rho$ is almost localized in $\Lambda_{ \alpha  }$, and
		\item  $\rho$ is transportable with respect to the state $\omega_0$.
	\end{enumerate}
\end{defn}

%

The sequence $U_n$ of unitary operators implementing the asymptotically inner property as in Lemma \ref{lem:asympinner} is not unique.
Motivated by the following formal calculation,
\begin{align*}
\| R \rho(R')  -  \sigma(R') R  \| & = \lim_{m,n \ra \infty} \| R U_m^* R' U_m - V_n^* R' V_n R \| = \lim_{m,n \ra \infty}  \|V_n R U_m^* R' -  R' V_n RU_m^* \|
\end{align*}
\cite{BuchholzAA} introduced the notions of \emph{asymptopia} and {bi-asymptopia}, to be defined later.
The calculation above is formal in the sense that $\rho(R')$ may not be well defined. 
Although, the notion of asymptopia was originally intended to side step the existence of Haag duality in the vacuum state, 
our proofs will always assume Haag duality.
We consider maps $\mathcal{U}: \Delta \ra \mathcal{U}_\rho$ where each $\mathcal{U}_\rho$ is a family of unitary sequences
implementing the asymptotically inner property for $\rho$.

\begin{defn}
	A family of sequences of operators $\mathcal{S}\ni \{ A_n\}_{n=1}^\infty$ in $\calB(\calH)$  is called \emph{stable} if it is closed under taking subsequences.
\end{defn}

\begin{defn}[\cite{BuchholzAA}]\label{defn:asymptopia}
	An \emph{asymptopia} for $\Delta$ is a mapping $\mathcal{U} : \rho \mapsto \mathcal{U}_\rho$ 
	where each $\mathcal{U}_\rho$ is a stable family of sequences of unitary operators in $\calB(\calH_0)$
	such that for each $\{U_m\} \in \mathcal{U}_\rho$ we have
	\begin{equation*}
	\rho(A) = \lim_{n \ra \infty} U_m^* A U_m \quad \mbox{ for all } \quad A \in \calA,
	\end{equation*}
	and given $R \in (\rho, \rho')$ and $R' \in (\sigma,\sigma')$, and $ \{U_m\} \in \mathcal{U}_\rho$ and $ \{V_n\} \in \mathcal{U}_{\rho'}$,
	\begin{equation*}
	\lim_{m,n\ra \infty} \| [ V_n R U_m^*, R'] \| = 0.
	\end{equation*}
\end{defn}

Let $\mathcal{A}_\Delta$ be the $C^*$-subalgebra of $\calB(\calH_0)$ generated by $\calA$ and $\mathcal{T}_\Delta$.
\begin{thm}\label{thm:asymptopia}
	Let $\Delta$ be a $C^*$-category satisfying the almost localized and transportable superselection criterion \ref{def:superselection}.
	Suppose that there exists a cone $K \in \mathcal{C}$ such that for each $\rho \in \Delta$ almost localized in $\Lambda_{ \alpha }$
	there exist $\epsilon >0$ and $x \in \ZZ^\nu$ such that $\Lambda_{ \alpha  +\epsilon} \ll (K+x)^c$.
	Then, there exists an asymptopia $\mathcal{K}: \rho \mapsto \mathcal{K}_\rho$ for $\Delta$.
	Furthermore, each $\rho \in \Delta$ has a unique extension to a $*$-endomorphism $\rho_{\mathcal{K}}$ to $\calA_{ \Delta}$ such that
	for each $\{ U_m\} \in \mathcal{K}_\rho$ we have 
	\begin{equation*}
	\rho_{\mathcal{K}}(A) = \lim_{m \ra \infty} U_m^* A U_m \quad \mbox{ for all } \quad A \in \calA_\Delta,
	\end{equation*}
	and $(\rho, \sigma) = (\rho_{\mathcal{K}}, \sigma_{\mathcal{K}} )$ for all $\rho, \sigma \in \Delta$.
	$\Delta$ is a tensor $C^*$-category.
\end{thm}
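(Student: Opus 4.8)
The plan is to build the asymptopia $\mathcal{K}$ explicitly from the transportability data, then bootstrap everything else from Corollary~\ref{cor:interwinerauxalg}. First I would fix the forbidden cone $K$ and, for each $\rho\in\Delta$ almost localized in some $\Lambda_\alpha$ with $\Lambda_{\alpha+\epsilon}\ll(K+x)^c$, use transportability to produce a sequence $\rho_n$ almost localized in the shifted cones $\Lambda_\alpha+n$ (with the \emph{same} decay function $f$) together with unitaries $U_n\in(\rho,\rho_n)$. The proof of Lemma~\ref{lem:asympinner} already shows $\rho(A)=\lim_n U_n^* A U_n$ for $A\in\calA$. I would then let $\mathcal{K}_\rho$ be the stable family generated by all such sequences (closing under subsequences). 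The essential new input is the asymptopia condition $\lim_{m,n}\|[V_nRU_m^*,R']\|=0$ for $R\in(\rho,\rho')$, $R'\in(\sigma,\sigma')$: here I would use Corollary~\ref{cor:intertwineraloc} to see that $R$ is almost localized in $\Lambda_\alpha$, and then the key observation is that $V_nRU_m^*$ is an intertwiner $(\rho_m,\rho'_n)$ (since $U_m\in(\rho,\rho_m)$, $V_n\in(\rho',\rho'_n)$), hence almost localized in a cone that is pushed out to infinity along the axis as $m,n\to\infty$. Since $R'$ is a fixed element of $\calA_\Delta$ which itself is norm-approximable within $\calB_K$ by Corollary~\ref{cor:interwinerauxalg}, and the localization region of $V_nRU_m^*$ eventually sits in $\Lambda_\beta+\min(m,n)$ for a cone disjoint at large distance from any fixed finite region, the commutator norm is bounded by $\|R\|\|R'\|(f_\epsilon+g_\epsilon)(\min(m,n)-N)\to0$. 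Writing this estimate carefully, reducing $R'$ first to a local observable and then invoking the decay function uniformly in $m,n$, is the step I expect to be the main obstacle — it is where almost-locality, the uniform decay function from transportability, and the auxiliary-algebra approximation all have to be combined.

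Once the asymptopia exists, I would extend each $\rho$ to $\calA_\Delta$ by $\rho_{\mathcal{K}}(A)=\lim_m U_m^*AU_m$; this limit exists for $A\in\mathcal{T}_\Delta$ by exactly the asymptopia estimate applied with $R'=A$ and $\{V_n\}$ the trivial sequence (or by Cauchy-ness of $U_m^*AU_m$ in norm, which follows from $\|U_m^*AU_m-U_n^*AU_n\|=\|[U_nU_m^*,A]\|\to0$), and it is independent of the chosen sequence in $\mathcal{K}_\rho$ by the same argument comparing two sequences. Multiplicativity and the $*$-property pass to the limit, so $\rho_{\mathcal{K}}$ is a $*$-endomorphism of $\calA_\Delta$; uniqueness is forced by norm-density of the algebra generated by $\calA$ and $\mathcal{T}_\Delta$ together with the prescribed action. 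The identity $(\rho,\sigma)=(\rho_{\mathcal{K}},\sigma_{\mathcal{K}})$ is then immediate: any $T$ intertwining $\rho,\sigma$ on $\calA$ automatically intertwines the extensions on $\mathcal{T}_\Delta$ because the extensions are defined by conjugation limits that $T$ respects, and conversely restriction to $\calA$ is trivial.

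Finally, for the tensor $C^*$-category structure I would define $\rho\otimes\sigma:=\rho_{\mathcal{K}}\circ\sigma$ on $\calA_\Delta$ (noting $\sigma(\calA)\subset\calA\subset\calA_\Delta$, so the composition makes sense) and on arrows $R\otimes S:=R\,\rho_{\mathcal{K}}(S)=\sigma'_{\mathcal{K}}(S)\,R$ for $R\in(\rho,\rho')$, $S\in(\sigma,\sigma')$, the two expressions agreeing precisely by the intertwining relation for $R$ extended to $\calA_\Delta$. Associativity of $\otimes$ on objects follows from associativity of composition; bifunctoriality and the interchange law are the standard DHR computations using $\rho_{\mathcal{K}}(S)\in(\rho_{\mathcal{K}}\sigma,\rho_{\mathcal{K}}\sigma')$, which holds because $\rho_{\mathcal{K}}$ is an endomorphism of $\calA_\Delta$ and $S\in\mathcal{T}_\Delta\subset\calA_\Delta$. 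The monoidal unit is the identity endomorphism $\iota$, which is trivially almost localized and transportable, hence in $\Delta$. I would close by remarking that $C^*$-category axioms (the $*$-operation on arrows being the adjoint in $\calB(\calH_0)$, positivity, existence of subobjects and direct sums) either hold automatically in $\calB(\calH_0)$ or are deferred to the subsequent discussion where the braiding is constructed; for the present theorem only the tensor structure is asserted, and that reduces entirely to the well-definedness of $\rho_{\mathcal{K}}$ and the coincidence of intertwiner spaces established above.
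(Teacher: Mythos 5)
Your overall architecture (build $\mathcal{K}_\rho$ from transporters, verify the commutator condition, extend to $\calA_\Delta$, then set $\rho\otimes\sigma=\rho\circ\sigma$ and $R\times S=R\,\rho_{\mathcal{K}}(S)$) matches the paper, but there is a genuine gap at the central step, namely your choice of transport direction. You take the sequences from Lemma~\ref{lem:asympinner}, i.e.\ auxiliary endomorphisms $\rho_n$ almost localized in $\Lambda_\alpha+n$, translates of the cone in which $\rho$ itself sits. With that choice the asymptopia estimate does not follow from the available lemmas: $V_nRU_m^*$ is almost localized deep in translates of $\Lambda_\alpha$, while the fixed intertwiner $R'$ is only approximable, via Corollary~\ref{cor:interwinerauxalg}, by some $S_N\in\mathcal{R}((K+N)^c)$ --- not by an operator of finite support, so your phrase ``disjoint at large distance from any fixed finite region'' does not describe the actual situation. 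Because $\Lambda_{\alpha+\epsilon}\ll(K+x)^c$, the shifted cones $\Lambda_\alpha+\min(m,n)$ stay away from the translates of $K$ and hence remain inside the very region $(K+N)^c$ supporting $S_N$; the two regions never separate as $m,n\to\infty$, so neither the almost locality of $V_nRU_m^*$ nor that of $R'$ produces any decay, and the claimed bound $\|R\|\|R'\|(f_\epsilon+g_\epsilon)(\min(m,n)-N)$ is unjustified. (This is not merely technical: in an anyonic model, transporters confined to a fixed cone have no reason to asymptotically commute with arbitrary intertwiners of the category.) The same problem would defeat your subsequent Cauchy argument for $\rho_{\mathcal{K}}$ on $\mathcal{T}_\Delta$, since it invokes exactly this estimate with $U_nU_m^*$.

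The paper's proof avoids this by transporting in the opposite direction: $\mathcal{K}_\rho$ consists of sequences $U_m\in(\rho,\rho_m)$ where $\rho_m$ is almost localized in $K+k_m$, deep inside translates of the \emph{forbidden} cone, with a decay function uniform in $m$. Then $V_{m'}RU_m^*\in(\rho_m,\rho'_{m'})$ is almost localized in $K+\min\{k_m,k_{m'}\}$ by Corollary~\ref{cor:intertwineraloc}, while $S\in\calB_K$ is approximated by $S_M\in\mathcal{R}((K+M)^c)$; now the localization region of the transported intertwiner does recede from the support of $S_M$, giving $\|[V_{m'}RU_m^*,S]\|\le 2\|R\|\|S\|\,f(\min\{k_m,k_{m'}\}-M)+\epsilon\|R\|\|S\|\to 0$. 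This is precisely what the hypothesis on $K$ is for: it guarantees simultaneously that every intertwiner of $\Delta$ lies in $\calB_K$ and that, by transportability, the charges can be moved into $K$. Your remaining steps --- the direct convergence argument for the extension (the paper instead invokes Theorem~4 of \cite{BuchholzAA}), the equality $(\rho,\sigma)=(\rho_{\mathcal{K}},\sigma_{\mathcal{K}})$, and the tensor structure --- are fine in outline, but they all presuppose the asymptopia property, so the proof stands or falls with correcting the transport direction.
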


\begin{proof}
	We construct an asymptopia $\mathcal{K}$ for $\Delta$ based on the forbidden region $K$.
	Let $\rho \in \Delta$ be given.
	Define $\mathcal{K}_\rho$ as the set of unitary sequences $\{ U_m\}$ with the following properties,
	\begin{enumerate}
		\item $\rho(A) = \lim_{m \ra \infty } U_m^* A U_m$ for all $A \in \calA$,
		\item there is an increasing sequence $k_m \in \NN$ and a sequence of $*$-endomorphisms $\rho_m$ 
		such that $\rho_m$ is almost localized in $K + k_m$ with decay function $g$, chosen uniformly in $m$.
		\item $U_m$ intertwines $\rho$ and $\rho_{m}$, that is, $U_m \in (\rho, \rho_m)$.
	\end{enumerate} 
	A similar argument as the proof in Lemma \ref{lem:asympinner} shows that $\mathcal{K}_\rho$ is non-empty.
	The sequence $\rho_m$ will converge pointwise to the identity,
	\begin{equation}\label{eqn:pointwiseidentity}
	\lim_{ m\ra \infty} \| \rho_m(A) - A \| = \lim_{m \ra \infty } \| \rho(A) -U_m^* A U_m \| = 0.
	\end{equation}
	By construction, the family $\mathcal{K}_\rho$ is stable, that is, any subsequence of $\{ U_m\}$ will also belong to $\mathcal{K}_\rho$.
	
	Let $\epsilon >0$ and $\rho, \rho', \sigma, \sigma' \in \Delta$ be given.
	Consider  $R \in (\rho, \rho')$ and $S \in (\sigma, \sigma')$, and $\{ U_m\} \in \mathcal{K}_\rho$ and $ \{V_{m'}\} \in \mathcal{K}_{\rho'}$.
	It follows that  	$V_{m'}R U_m^* \in ( \rho_m , \rho'_{m'})$ is an intertwiner, and thus
	by Corollary \ref{cor:intertwineraloc},  $V_{m'}R U_m^* $ is almost localized in $K + \min\{ k_m,k_m'\}$ with decay function $2\|R\| f$.
	By Corollary \ref{cor:interwinerauxalg}, $S \in \mathcal{B}_K$. 
	Therefore,  there exists $M>0$ such that 
	$\| S - S_{M} \| < \epsilon \| S\|$ for some $S_{M} \in \mathcal{R}((K + M)^c)$.
	It follows that 
	\begin{align*}
	\| [ V_{m'} R U_m^*, S] \| & \leq   \| [ V_{m'} R U_m^*, S_M] \| + \| R\| \| S- S_M\|\\
	&\leq  2 \|R \| \| S\|  f( \min\{k_m,k_m'\} - M) + \epsilon \| R \| \|S\|.
	\end{align*}
	Taking the limit as $m,m' \ra \infty$ and since $\epsilon >0$ was arbitrary we have that 
	\begin{equation*}
	\lim_{m,m' \ra \infty}  \| [ V_{m'} R U_m^*, S] \| = 0.
	\end{equation*}
	Therefore, $\mathcal{K}$ is an asymptopia for $\Delta.$
	Applying Theorem 4 of \cite{BuchholzAA} we recover that $\rho$ extends uniquely to a $*$-endomorphism $\rho_{\mathcal{K}}$ on  $\calA_{\Delta}$ 
	and satisfies the properties as listed in the theorem.
	
	The tensor product for objects is composition of maps, 
	\begin{equation*} 
	\rho \otimes \sigma \equiv \rho \circ\sigma,
	\end{equation*}
	and the tensor product for arrows is defined for $R \in (\rho, \rho')$ and $S \in (\sigma, \sigma')$ as 
	\begin{equation*} 
	R \times S \equiv R \rho_{\mathcal{K}}(S)  .
	\end{equation*}
	A calculation shows that for all $ A \in \calA$,
	\begin{align*}
	(R \times S )\rho \circ \sigma (A) & = R \rho_{\mathcal{K}}(S) \rho ( \sigma(A))  
	= R \rho_{\mathcal{K}}(S \sigma(A) ) 
	= \rho'(\sigma' (A)) R \rho_{\mathcal{K}}(S) \\
	&= \rho' \circ \sigma (A) (R \times S),
	\end{align*}
	and thus $R \times S \in ( \rho\circ \sigma, \rho' \circ \sigma')$.
\end{proof}

If every object $\rho\in \Delta$ can be almost localized in a fixed convex cone $\Lambda_\alpha$ with $\alpha < \pi/2$
then the forbidden region can be chosen as $ \Lambda_{ \alpha  }^c -1$.

To define a braiding on $\Delta$, we must to consider two objects simultaneously.  
Intuitively, if $\rho$ and $\sigma$ are almost localized in $\Lambda_{ \alpha  }$ 
a braiding could be performed by first transporting 
$\rho$ to $\rho'$ almost localized in the cone $\Lambda_{ \alpha  }^\mathcal{U}$ 
and $\sigma$ to $\sigma'$ almost localized in the cone $\Lambda_{ \beta}^\mathcal{V}$, see Figure \ref{fig:cones}.
By the almost localized property, it would follows that $ \rho'\circ \sigma' \approx \sigma' \circ \rho'$. 
Transporting the $\sigma'$ and $\rho'$ back would complete the braid.
To make this precise, we again follow the notations introduced in \cite{BuchholzAA}.

\begin{defn}[\cite{BuchholzAA}]\label{defn:biasymptopia}
	Suppose  $\Delta$ is a tensor $C^*$-category of endomorphisms on $\calA$.
	A \emph{bi-asymptopia} for $\Delta$ is a pair of mappings  
	$\mathcal{U} : \rho \mapsto \mathcal{U}_\rho$ and $ \mathcal{V}: \rho \mapsto \mathcal{V}_\rho$ 
	where $\mathcal{U}_\rho$ and $\mathcal{V}_\rho$ are stable families of unitary sequences
	such that for each $ \{ U_m\} \in \mathcal{U}_\rho$ and $\{ V_n\} \in \mathcal{V}_\rho$ we have 
	\begin{equation*}
	\rho(A) = \lim_{m \ra \infty} U_m^* A U_m = \lim_{n \ra\infty} V_n^* A V_n \quad \mbox{ for all } \quad A \in \calA,
	\end{equation*}
	and if given $ \rho, \rho' \in \Delta$ we can find $\{U_m\} \in \mathcal{U}_{\rho}$, $ \{U'_m\} \in \mathcal{U}_{\rho'}$, $\{V_n\} \in \mathcal{V}_{\rho}$, and $ \{V'_n\} \in \mathcal{V}_{\rho'}$ such that $\{U_m \times U'_m\} \in \mathcal{U}_{\rho \circ \rho'}$ and $ \{V_n \times V'_n\} \in \mathcal{V}_{\rho\circ \rho'}$, and
	if given intertwiners  $R \in (\rho,\rho')$ and $ S \in ( \sigma, \sigma')$ then
	\begin{equation*}
	\lim_{m,m',n,n' \ra \infty} \| U'_{m'} R U_{m}^* \times V'_{n'} S V_{n}^* - V'_{n'} S V_{n}^* \times U'_{m'} R U_{m}^* \| =  0
	\end{equation*}
	for all $U_m \in \mathcal{U}_\rho$, $U_{m'} \in \mathcal{U}_{\rho'}$, $ V_m \in \mathcal{V}_\sigma$, and $ V_{m'} \in \mathcal{V}_{\sigma'}$.
\end{defn}

\begin{figure}
	\includegraphics[scale=.7]{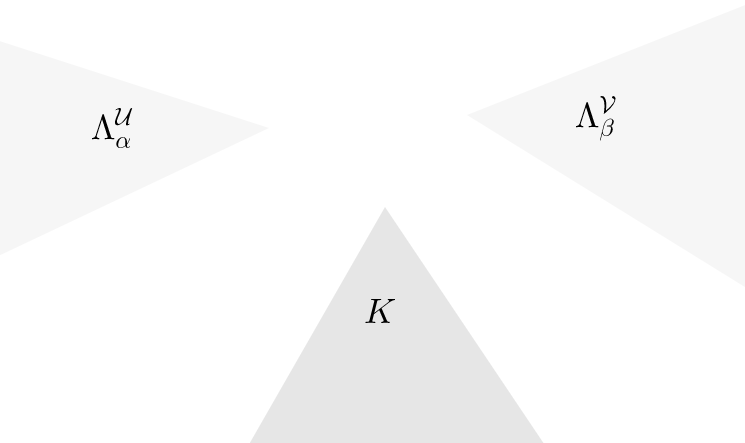}
	\caption{A sample configuration of cones $\Lambda_{ \alpha}^{\mathcal{U}}$, 
		$\Lambda_\beta^{\mathcal{V}}$ and $K$ satisfying the conditions set for the bi-asymptopia $\{\mathcal{U}, \mathcal{V}\}$ of $\Delta.$}
	\label{fig:cones}
\end{figure}

\begin{thm}\label{thm:biasymptopia}
	Suppose  $\Delta$  satisfies the assumptions of Theorem \ref{thm:asymptopia}.
	Then, there exists a bi-asymptopia $\{ \mathcal{U}, \mathcal{V}\}$ for $\Delta$
	and given $\rho, \sigma \in \Delta$ the following limit exists
	\begin{equation}\label{eqn:braid}
	\epsilon_{\rho,\sigma} \equiv \lim_{m, n \ra \infty } ( V_{n}^* \times U_{m}^*) (  U_{m} \times  V_n ),
	\end{equation}
	is independent of choice of $ U_m \in \mathcal{U}_\rho$ and $ V_n \in \mathcal{V}_\sigma$, and $ \epsilon_{\rho,\sigma} \in ( \rho \circ \sigma, \sigma \circ \rho).$
	
	Furthermore, if $ R \in (\rho, \rho')$ and $ S \in (\sigma, \sigma')$ then 
	\begin{equation}\label{eqn:braid1}
	\epsilon_{\rho',\sigma'} R\times S = S \times R \  \epsilon_{\rho, \sigma}
	\end{equation}
	and if $\tau \in \Delta$  then 
	\begin{align}
	\epsilon_{ \rho \circ \sigma, \tau}  &= ( \epsilon_{ \rho, \tau} \times 1_\sigma )( 1_\rho \times \epsilon_{\sigma, \tau}) \label{eqn:braid2}\\
	\epsilon_{ \rho, \sigma \circ \tau} &= ( 1_\sigma \times \epsilon_{ \rho, \tau}) ( \epsilon_{ \rho, \sigma} \times 1_\tau) \label{eqn:braid3}.
	\end{align}
	That is, $\Delta$ is a braided tensor $C^*$-category.
\end{thm}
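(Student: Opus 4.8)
The strategy is to construct the bi-asymptopia $\{\mathcal{U},\mathcal{V}\}$ by reusing the machinery of Theorem~\ref{thm:asymptopia} twice, with two \emph{different} forbidden-region cones, and then invoke Theorem~5 of \cite{BuchholzAA} to extract the braiding and verify its properties. Concretely, fix two auxiliary cones $K^{\mathcal{U}}$ and $K^{\mathcal{V}}$ that are spatially separated: we want $K^{\mathcal{U}}$ and $K^{\mathcal{V}}$ to be disjoint (say, pointing in nearly opposite directions as in Figure~\ref{fig:cones}) and such that for each $\rho \in \Delta$ almost localized in some $\Lambda_\alpha$ there exist $\epsilon>0$ and $x,y\in\ZZ^\nu$ with $\Lambda_{\alpha+\epsilon} \ll (K^{\mathcal{U}}+x)^c$ \emph{and} $\Lambda_{\alpha+\epsilon} \ll (K^{\mathcal{V}}+y)^c$. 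Since the superselection criterion only requires \emph{some} cone localization and transportability lets us move charges into cones of any prescribed (sufficiently small) opening angle and direction, such $K^{\mathcal{U}}, K^{\mathcal{V}}$ can be chosen once and for all; this is where the hypothesis ``$\Delta$ satisfies the assumptions of Theorem~\ref{thm:asymptopia}'' is used. Then $\mathcal{U}$ is the asymptopia built from $K^{\mathcal{U}}$ exactly as in the proof of Theorem~\ref{thm:asymptopia}, and $\mathcal{V}$ the one built from $K^{\mathcal{V}}$; each is stable and implements the asymptotically inner property by Lemma~\ref{lem:asympinner}.

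The key analytic step is the commutation estimate in Definition~\ref{defn:biasymptopia}: given $R\in(\rho,\rho')$, $S\in(\sigma,\sigma')$ and sequences $\{U_m\}\in\mathcal{U}_\rho$, $\{U'_{m'}\}\in\mathcal{U}_{\rho'}$, $\{V_n\}\in\mathcal{V}_\sigma$, $\{V'_{n'}\}\in\mathcal{V}_{\sigma'}$, one must show
\begin{equation*}
\lim_{m,m',n,n'\ra\infty} \big\| (U'_{m'} R U_m^*) \times (V'_{n'} S V_n^*) - (V'_{n'} S V_n^*) \times (U'_{m'} R U_m^*) \big\| = 0.
\end{equation*}
Expanding the tensor product of arrows via $A\times B = A\,\rho_{\mathcal{K}}(B)$ (from Theorem~\ref{thm:asymptopia}) and using that the extended endomorphisms act as limits of conjugations, this reduces to showing that $U'_{m'}RU_m^*$ and $V'_{n'}SV_n^*$ asymptotically commute. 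But by Corollary~\ref{cor:intertwineraloc}, $U'_{m'}RU_m^*\in(\rho_m,\rho'_{m'})$ is almost localized in $K^{\mathcal{U}}+\min\{k_m,k'_{m'}\}$ with decay function $2\|R\|f$, and likewise $V'_{n'}SV_n^*$ is almost localized in $K^{\mathcal{V}}+\min\{l_n,l'_{n'}\}$ with decay $2\|S\|g$. Because $K^{\mathcal{U}}$ and $K^{\mathcal{V}}$ are separated, for large indices the two localization cones lie at distance going to infinity; combining Corollary~\ref{cor:interwinerauxalg} (so that $V'_{n'}SV_n^* \in \calB_{K^{\mathcal{U}}}$, i.e.\ is norm-approximated within $\epsilon$ by an element of $\mathcal{R}((K^{\mathcal{U}}+M)^c)$ for some $M$) with the almost-localized bound on $U'_{m'}RU_m^*$ against that cone algebra yields
\begin{equation*}
\|[\,U'_{m'}RU_m^*,\; V'_{n'}SV_n^*\,]\| \le 2\|R\|\|S\| f(\min\{k_m,k'_{m'}\} - M) + \epsilon\|R\|\|S\|,
\end{equation*}
which $\to \epsilon\|R\|\|S\|$ and then $\to 0$ as $\epsilon\to 0$ — precisely the same pattern as in the proof of Theorem~\ref{thm:asymptopia}. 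This also gives the compatibility $\{U_m\times U'_m\}\in\mathcal{U}_{\rho\circ\rho'}$ required of a bi-asymptopia, since composition preserves the almost-localized property (Proposition before Lemma~\ref{lem:asympinner}) and products of the defining intertwiner relations persist.

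With the bi-asymptopia in hand, the remaining conclusions are citations rather than new work: Theorem~5 of \cite{BuchholzAA} gives that the limit $\epsilon_{\rho,\sigma} = \lim_{m,n}(V_n^*\times U_m^*)(U_m\times V_n)$ in \eqref{eqn:braid} exists, is independent of the choice of sequences in $\mathcal{U}_\rho$ and $\mathcal{V}_\sigma$, lies in $(\rho\circ\sigma,\sigma\circ\rho)$, and satisfies naturality \eqref{eqn:braid1} and the hexagon identities \eqref{eqn:braid2}--\eqref{eqn:braid3}; together with the tensor $C^*$-category structure already established in Theorem~\ref{thm:asymptopia}, this says $\Delta$ is a braided tensor $C^*$-category. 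I expect the main obstacle to be not the abstract invocation of \cite{BuchholzAA} but the geometric bookkeeping: verifying that two forbidden cones $K^{\mathcal{U}},K^{\mathcal{V}}$ can be chosen simultaneously compatibly with \emph{every} object's localization cone (using transportability to shrink and rotate), and checking that the relation $\ll$ and the approximate split property still deliver, for these two cones at once, the type~I factor sandwich needed in Lemma~\ref{lem:alocauxalg}/Corollary~\ref{cor:interwinerauxalg}. One must also confirm the bi-asymptopia axiom $\{U_m\times U'_m\}\in\mathcal{U}_{\rho\circ\rho'}$ holds with a \emph{uniform} decay function, which is where closure of $\mathcal{F}_\infty$ under addition is used.
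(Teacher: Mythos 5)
Your proposal is correct and follows essentially the same route as the paper: the paper also builds $\mathcal{U}$ and $\mathcal{V}$ by rerunning the construction of Theorem~\ref{thm:asymptopia} with two mutually separated cones ($\Lambda_\alpha^{\mathcal{U}}$, $\Lambda_\beta^{\mathcal{V}}$, your $K^{\mathcal{U}},K^{\mathcal{V}}$), verifies the bi-asymptopia axioms via Corollary~\ref{cor:intertwineraloc}, the auxiliary-algebra approximation, and the uniform decay functions, and then cites the abstract result of \cite{BuchholzAA} (Theorem~8 there, not 5) for existence, naturality and the hexagon identities. The only substantive nuance is that the defining estimate is not a bare commutator: the arrow tensor produces the extra terms $\| \rho_m(S_N)-S_N\|$ and $\|\sigma_n(R_M)-R_M\|$, which the paper bounds by the same cone-separation argument you invoke, so your reduction goes through with that bookkeeping added.
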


\begin{proof}
	Let $\Lambda_\alpha^{\mathcal{U}}$ and $\Lambda_\beta^{\mathcal{V}}$ be cones in $\mathcal{C}$ 
	such that for some $\epsilon >0$ and $x \in \ZZ^\nu$ we have
	$ \Lambda_{\alpha+\epsilon}^{ \mathcal{U}}\ll  (K+x)^c$, 
	$ \Lambda_{\beta+\epsilon}^{ \mathcal{V}}\ll  (K+x)^c$
	and $\Lambda_{ \alpha + \epsilon}^{\mathcal{U}} \ll (\Lambda_{\beta}^{ \mathcal{V}})^c$.
	For instance, see Figure \ref{fig:cones} for a sample configuration on $\ZZ^2$
	
	Define the family of unitary sequences $\mathcal{U}_\rho$  and $\mathcal{V}_\rho$ 
	in the same way $\mathcal{K}_\rho$ was defined in Theorem \ref{thm:asymptopia}, 
	except in the construction replace the role of the cone $K$ for the cones $  \Lambda_{\alpha}^{ \mathcal{U}}$ and 
	$ \Lambda_{\beta}^{ \mathcal{V}}$, respectively.
	If $ \{ U_m\} \in \mathcal{U}_\rho $ then  
	$\rho(A) = \lim_{m \ra \infty} U_m^* A U_m$ for all $ A \in \calA$
	and  there exists an increasing sequence $k_m \in \NN$ and a sequence of $*$-endomorphisms $\rho_m$  such that 
	$\rho_m$ is almost localized in $\Lambda^{\mathcal{U}}_\alpha+k_m$  with decay function $f$, chosen independent of $m$, 
	and $U_m \in (\rho, \rho_m)$.
	By construction $\mathcal{K}_\rho$ is stable, that is, any subsequence of $\{ U_m\}$ will also belong to $\mathcal{K}_\rho$.
	Lemma \ref{lem:asympinner}, shows that $\mathcal{K}_\rho$ is non-empty
	and the sequence $ \rho_m$ will converge pointwise to the identity, see \eqref{eqn:pointwiseidentity}.
	We define $ \mathcal{V}_\rho$ similarly where  $ \Lambda_\beta^\mathcal{V}$ will play the analogous role for $\Lambda_\alpha^\mathcal{U}.$
	
	Let $\rho,\rho' \in \Delta$ and $\{ U_m\} \in \mathcal{U}_\rho$ and $ \{U_m'\} \in \mathcal{U}_{\rho'}$.
	For all $A\in \calA$ we have
	\begin{align*}
	\lim_{m \ra \infty} (U_m \rho( U_m') )^* A U_m \rho( U_m') & = 	\lim_{m \ra \infty} U_m^* \rho_m( U_m'^*) A  \rho_m( U_m') U_m = \rho \circ \rho' (A)
	\end{align*}
	where for the second equality we used \eqref{eqn:pointwiseidentity}.
	Therefore, $ \{ U_m \times U_m'\} \in \mathcal{U}_{\rho \circ \rho'}$.
	
	Let $\epsilon>0$ be given.
	For $\rho, \rho', \sigma, \sigma' \in \Delta$ and $ \{U_m\}\in \mathcal{U}_\rho$, $\{U_{m'}'\} \in \mathcal{U}_{\rho'}$, $  \{ V_n\} \in \mathcal{V}_\sigma$ and $\{ V_{n'}'\} \in \mathcal{V}_{\sigma'}$ 
	we have that 
	$ U'_{m'} R U_{m}^* \in ( \rho_m, \rho'_{m'} )$ and $ V'_{n'} S V_{n}^* \in (\sigma_n, \sigma_{n'}')$.
	By Corollary \ref{cor:intertwineraloc} and the construction of $\mathcal{U}$ and $ \mathcal{V}$,
	the intertwiner $U'_{m'} R U_{m}^*$ is almost localized in $\Lambda_{ \alpha} + \min\{k_m,k_{m'}\}$ with decay function $2 \| R\| f$
	and $ V'_{n'} S V_{n}^* $ is almost localized in $\Lambda'_{\beta} + \min\{k_n, k_{n'} \}$ with decay function $2 \| S\| g$.
	Thus, there exists $M, N>0$ such that for all $m,m',n,n'$ sufficiently large we have that 
	$\| U'_{m'} R U_{m}^* - R_M\| < \epsilon \|R\|$ and $\| V'_{n'} S V_{n}^* - S_N\| < \epsilon \|S\|$ 
	for some $ R_M \in \mathcal{R}( (\Lambda_{ \alpha }^{\mathcal{U}} + M)^c$ and $S_N \in \mathcal{R}( (\Lambda_{ \beta }^{\mathcal{V}} + N)^c)$.
	It follows that
	\begin{align*}
	\| U'_{m'} R U_{m}^* \times V'_{n'} S V_{n}^* - V'_{n'} S V_{n}^* \times U'_{m'} R U_{m}^* \| 
	&  = \| U'_{m'} R U_{m}^* \rho_m( V'_{n'} S V_{n}^* )- V'_{n'} S V_{n}^* \sigma_n(  U'_{m'} R U_{m}^*) \|  \\
	&  \leq  3 \| R\| \| V'_{n'} S V_{n}^*  -  S_N\| 
	+ 3\| S \| \|  U'_{m'} R U_{m}^* - R_M \| \\
	&  \qquad +\| R\| \| \rho_m( S_N ) - S_N\| 
	+ \| S \| \| \sigma_n( R_M) -  R_M\|  + \|R_M S_N - S_NR_M \|\\
	&  < 6 \epsilon \| R\| \|S\| + 2 \|R\| \|S\|(  f(N + m) + g(M +n)) . 
	\end{align*}
	Taking the limit as $m,m',n,n' \ra \infty$ and since $\epsilon>0$ was arbitrary, 
	gives that $\mathcal{U}$ and $ \mathcal{V}$ form a bi-asymptopia for $\Delta.$
	The result follows from Theorem 8 of \cite{BuchholzAA}.
	
	Here we include a calculation showing that $\epsilon_{ \rho,\sigma}$ intertwines $ \rho \otimes \sigma$ with $\sigma \otimes \rho$.
	A crucial point is that  $\rho_m $ and $\sigma_n$, being almost localized in far removed disjoint cones, commute asymptotically:
	\begin{align*}
	\epsilon_{ \rho, \sigma} \rho \otimes \sigma (A) & = \lim_{m,n \ra \infty} ( V_{n}^* \times U_{m}^*) (  U_{m} \times  V_n ) \rho (\sigma(A))
	& = \lim_{m,n \ra \infty} V_n^* \sigma_n(U_m^*) U_m \rho( V_n) \rho(\sigma(A))\\
	& = \lim_{m,n \ra \infty} V_n^* \sigma_n(U_m^*) U_m \rho( \sigma_n(A)V_n)
	& = \lim_{m,n \ra \infty} V_n^* \sigma_n(U_m^*)  \rho_m( \sigma_n(A)V_n) U_m\\
	& = \lim_{m,n \ra \infty} V_n^* \sigma_n(U_m^*)  \sigma_n( \rho_m(A))\rho_m(V_n) U_m
	& = \lim_{m,n \ra \infty} V_n^* \sigma_n(U_m^* \rho_m(A))\rho_m(V_n) U_m\\
	& = \lim_{m,n \ra \infty} V_n^* \sigma_n( \rho(A) U_m^*)\rho_m(V_n) U_m 
	& = \lim_{m,n \ra \infty}  \sigma(\rho(A))V_n^* \sigma_n(U_m^*) U_m \rho( V_n)\\
	& = \lim_{m,n \ra \infty} \sigma (\rho(A)) ( V_{n}^* \times U_{m}^*) (  U_{m} \times  V_n )\\
	& = \sigma\otimes \rho(A) \epsilon_{ \rho,\sigma}
	\end{align*}
	Similar arguments hold to show properties \eqref{eqn:braid1}, \eqref{eqn:braid2}, and \eqref{eqn:braid3}.
\end{proof}

\section{Stability of the superselection structure}
Let $\omega_0$ and $ \omega_1$ be gapped ground states for a quantum spin system with quasi-local algebra $\calA$.
Then, it is commonly said that $\omega_0$ and $\omega_1$ are in the same \emph{gapped ground state phase}
if there exists a continuous family of Hamiltonians $H(s)$ with finite range interactions such that 
$H(s)$ has a non-vanishing spectral gap above the ground state uniform in $s$ and 
$ \omega_0$ and $ \omega_1$ are ground states of $H(0)$ and $H(1)$ \cite{ChenGW,BachmannMNS}.
A current challenge in mathematical physics is the classification of gapped ground state
phases \cite{BachmannMNS,BachmannO,Ogata1,Ogata2}.
One approach to classifying a phase is to construct a complete set of invariants.
By definition, an invariant is a quantity that is constant within a phase.
Consequently, if an invariant is computed for two systems and is found to take different values,
the systems must be in different phases. 
The construction of invariants can be expected to rely on the existence of a spectral gap. 

The superselection structure of quasi-particle excitations is expected to be an invariant in a gapped ground state phase \cite{KitaevQD, BravyiHM, Haah}.
In particular, the stability of anyons is expected to play a crucial role in the classification of two dimensional topological phases \cite{KitaevHC}.
In the previous section, we showed that starting from a vacuum state satisfying Assumption \ref{ass:hdsp},
the statistics for a family of almost localized and transportable endomorphisms $\Delta$ is described by a braided tensor $C^*$-category.
In this section, we show that a the superselection structure of $\Delta$ is stable against any deformation of $\Delta$ corresponding to a quasi-local dynamics.
In the next section, we will apply our results to study certain gapped phases where there exists a quasi-local dynamics connecting the relevant states in each phases.


\subsection{Lieb-Robinson bound for cones}

The main technical tool in our proof will be the Lieb-Robinson bounds, discussed in Section \ref{sec:dynamics}.
In this section, we freely use the language introduced in Section \ref{sec:dynamics}.

Recall that in the definition of almost localized endomorphisms, we consider observables supported on two cone regions.
However, if $X, Y \subset \Gamma$ are infinite regions 
then the Lieb-Robinson bound \eqref{eqn:LRbound}  may not be better than the trivial bound $2\|A\| \|B\|$.
For $X\in \mathcal{C}$  and  $Y\subset X^c$ two cone regions,
we show that the Lieb-Robinson bound recovers a good approximation when the distance between the cones is large and the difference of opening angles is strictly positive.
Results of this type were first discussed and proved in the thesis of Schmitz \cite{Schmitz}. 
For completeness, and because access to \cite{Schmitz} is not readily available, we present the results here.

Let $F$ be an $\mathcal{F}$-function, as defined in \ref{def:Ffunc}.
\begin{assumption}\label{asp1}
	Let $g:\RR^{\geq 0} \ra \RR^{\geq 0}$ be uniformly continuous, non-decreasing and sub-additive.
	We assume there exists  $t_0$ such that for all $t > t_0$, $b>0$ and $k \in \NN$ 
	\begin{equation*}
	\int_{t}^\infty r^k e^{-b g(r)} dr \leq K(b,k)t^{l(k)} e^{-b g(t)}.
	\end{equation*}
	for some positive function $ K(b,k)>0$ and affine function $l(k)$.
\end{assumption}
By uniform continuity, if $g$ satisfies Assumption \ref{asp1} then $\lim_{r\ra \infty} r^k e^{ - b g(r)}  = 0$ for all $k\in \NN$. 
From the following 	inequalities 
\begin{align*}
\int_{t}^\infty r^k e^{-b r} dr &\leq \frac{k+1}{b} t^{k} e^{- b t} \quad \mbox{ for } t>0\\
\int_{t}^\infty r^k e^{-b \frac{r}{\ln^2 r}} dr &\leq \frac{2k+3}{b} t^{2k+2} e^{- b \frac{r}{\ln^2 r}} \quad \mbox{ for } t>(e/2)^2.
\end{align*}
we see that the functions
\begin{align*}
g(r) = r \qquad\mbox{and}\qquad g(r) = \frac{r}{\ln^2r}
\end{align*}
satisfy Assumption \ref{asp1}.

\begin{lemma}( Satz II.8, \cite{Schmitz})\label{lem:doubImpInt}
	Suppose $g$ satisfies Assumption \ref{asp1} and let $F$ be an $\mathcal{F}$-function.
	Define the following sets as $X = \Lambda_\alpha \in \mathcal{C}$ and $ Y_{\epsilon,n} = \left( \Lambda_{\alpha+ \epsilon} - n\right)^c$.
	Then, there exists an affine function $\tilde{l}$ such that 
	for all $0\leq \alpha < \pi $, $ 0< \epsilon< \pi - \alpha$ and $b>0$ 
	\begin{equation}\label{eqn:LRdoubsum}
	\sum_{x\in X}\sum_{ y\in Y_{\epsilon,n}} F_{bg} (d(x,y)) \leq C_\epsilon d(X,Y_{\epsilon,n})^{\tilde{l}(\nu)} e^{ - b g(d(X,Y_{\epsilon,n})\sin\epsilon )}
	\end{equation}
	where 
	\begin{equation*}
	n \sin(\alpha + \epsilon) \leq d(X,Y_{\epsilon,n}) \leq n \sin(\alpha + \epsilon)+2
	\end{equation*}
	and  $C_\epsilon$ is non-increasing in $\epsilon$ and only depends on $\nu$, $b$ and $\alpha$.
\end{lemma}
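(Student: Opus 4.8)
The statement is a purely geometric–combinatorial estimate on the double sum $\sum_{x\in X}\sum_{y\in Y_{\epsilon,n}} F_{bg}(d(x,y))$, where $X=\Lambda_\alpha$ is a cone and $Y_{\epsilon,n}=(\Lambda_{\alpha+\epsilon}-n)^c$ is the complement of a slightly wider, shifted cone. The key geometric observation is that any $x\in X$ and $y\in Y_{\epsilon,n}$ are separated by a buffer region: the shifted wider cone $\Lambda_{\alpha+\epsilon}-n$ strictly contains $X$, and its boundary slabs have width growing linearly with the relevant coordinate. Concretely, if $x\in\Lambda_\alpha$ lies at ``axial depth'' roughly $r$ from the apex and $y\notin\Lambda_{\alpha+\epsilon}-n$, then $d(x,y)\geq (\,\text{axial depth of }x + n\,)\sin\epsilon$ — this is the elementary trigonometric fact that the gap between the boundary ray of $\Lambda_\alpha$ and the boundary ray of $\Lambda_{\alpha+\epsilon}$, measured at axial parameter $t$, is $t\sin\epsilon$ (up to the shift by $n$). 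So the plan is: first pin down $d(X,Y_{\epsilon,n})$, establishing the sandwich $n\sin(\alpha+\epsilon)\le d(X,Y_{\epsilon,n})\le n\sin(\alpha+\epsilon)+2$ by looking at the point of the apex region of $X$ closest to $\partial(\Lambda_{\alpha+\epsilon}-n)$ and using that lattice points are within distance $\leq 2$ (in fact $\leq\sqrt\nu$, but $2$ suffices for $\nu=2,3$; for general $\nu$ one adjusts the constant) of the continuum realization.

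\textbf{Main estimate.} Next I would stratify the sum over $x\in X$ by ``axial shells''. For each integer $j\ge 0$ let $S_j$ be the set of $x\in\Lambda_\alpha$ whose distance to the apex lies in $[j,j+1)$; then $|S_j|\le C_{\alpha}(1+j)^{\nu-1}$ since a cone of fixed opening angle has polynomially growing cross-sections. For $x\in S_j$ and $y\in Y_{\epsilon,n}$ we have $d(x,y)\ge \max\{\,d(X,Y_{\epsilon,n}),\ (j+n)\sin\epsilon\,\}$ from the trigonometric buffer bound above, and — crucially — as $y$ ranges over $Y_{\epsilon,n}$ with $d(x,y)$ in a shell of radius $\approx m$, the number of such $y$ is at most $\|F\|_0^{-1}$-controlled; more precisely I would use monotonicity of $F$ together with the uniform integrability $\|F\|_0<\infty$ to bound $\sum_{y} F_{bg}(d(x,y))$ directly. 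Writing $\rho(x)=d(x,\partial(\Lambda_{\alpha+\epsilon}-n))\ge (j+n)\sin\epsilon$ for $x\in S_j$, a standard argument (sum over $y$ outside a half-space at distance $\ge\rho(x)$, using $F_{bg}=e^{-bg}F$ and the monotone decay) gives
\begin{equation*}
\sum_{y\in Y_{\epsilon,n}} F_{bg}(d(x,y)) \le C\, \rho(x)^{\nu-1} e^{-b g(\rho(x))}\,\|F\|_0,
\end{equation*}
where I have pushed the exponential weight out at the worst-case distance $\rho(x)$ and absorbed the polynomial slack from summing $F$ over the remaining directions. Then summing over $j$:
\begin{equation*}
\sum_{x\in X}\sum_{y\in Y_{\epsilon,n}} F_{bg}(d(x,y)) \le C_\alpha\|F\|_0 \sum_{j\ge 0} (1+j)^{\nu-1}\big((j+n)\sin\epsilon\big)^{\nu-1} e^{-b g((j+n)\sin\epsilon)}.
\end{equation*}
Setting $D=d(X,Y_{\epsilon,n})$ and recalling $(j+n)\sin\epsilon\ge D\sin\epsilon/\sin(\alpha+\epsilon)\ge \tfrac12 D\sin\epsilon$-type comparisons, I would shift the summation variable and compare the sum to the integral $\int_{D\sin\epsilon}^\infty r^{k} e^{-b g(r)}\,dr$ (with $k=2(\nu-1)$), which by Assumption~\ref{asp1} is bounded by $K(b,k)(D\sin\epsilon)^{l(k)}e^{-bg(D\sin\epsilon)}$. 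Collecting the polynomial prefactors into a single affine $\tilde l(\nu)$ and the $\epsilon$-dependence into $C_\epsilon$ (non-increasing in $\epsilon$ because widening $\epsilon$ only increases the buffer and shrinks $C_F$-type constants) yields \eqref{eqn:LRdoubsum}.

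\textbf{Anticipated obstacle.} The routine parts are the cross-section volume bound and the integral comparison via Assumption~\ref{asp1}. The genuinely delicate step is the uniform control of the $\epsilon$-dependence and the sub-additivity bookkeeping: one needs the buffer width $t\sin\epsilon$ and the constant $C_\epsilon$ to behave monotonically as $\epsilon\downarrow 0$ — in particular the prefactor must not blow up faster than a fixed power of $D$ even as $\epsilon$ shrinks, which forces a careful choice of which coordinate plays the role of the ``axial depth'' and a clean trigonometric lemma relating $d(x,\partial(\Lambda_{\alpha+\epsilon}-n))$ to $\epsilon$ and the apex-distance. I would isolate that as a standalone geometric lemma (cone-separation lemma) proved first in the continuum and then transferred to $\ZZ^\nu$ with the $+2$ loss, and then the rest of the argument is the summation/integration bookkeeping sketched above.
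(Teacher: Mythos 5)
Your overall strategy (stratify $X$ into shells, bound the inner sum over $y$ by a radial tail, and finish with the integral estimate of Assumption \ref{asp1}) is a plausible alternative to the paper's route, which instead splits $Y_{\epsilon,n}$ into the part lying in the polar cone $\Lambda^*_\alpha$ and the rest, and integrates in spherical coordinates. However, there is a genuine gap precisely at the step you yourself flag as delicate and defer to an unproved ``cone-separation lemma'': the bound $d(x,y)\geq(\mathrm{depth}(x)+n)\sin\epsilon$ is asserted, not proved, and in the generality the lemma requires it is false. The statement covers all $0\leq\alpha<\pi$, hence non-convex cones, while your trigonometric picture (the gap between the boundary rays of $\Lambda_\alpha$ and $\Lambda_{\alpha+\epsilon}$ at axial parameter $t$ equals $t\sin\epsilon$) is a convex-cone statement. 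For $\alpha$ close to $\pi$, take $x$ on the boundary ray of $\Lambda_\alpha$ at radius $n$ from the apex and $\epsilon$ close to $\pi-\alpha$: then $x$ lies at distance of order $n(\pi-\alpha)$ from the apex of the shifted cone, the nearest point of $Y_{\epsilon,n}$ is essentially that apex, and $d(x,Y_{\epsilon,n})\approx 2n\sin\big(\tfrac{\pi-\alpha}{2}\big)$, which is strictly smaller than the claimed lower bound $2n\sin\epsilon$. So the inequality on which your entire shell sum rests does not hold as stated, and nothing in the proposal addresses the range $\alpha\geq\pi/2$. The paper sidesteps this: it proves the estimate only for convex cones, where the separation $d(y,X)=r\sin(\phi-\alpha)$ can be computed exactly in spherical coordinates (this is the content of the $Y^{II}$ term), and then obtains the remaining range by exchanging the roles of $x$ and $y$ in the double sum, using that $Y_{\epsilon,n}$ is itself, up to the shift, a cone of opening angle $\pi-(\alpha+\epsilon)$. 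Your proposal contains neither a proof of the convex-case separation estimate nor any analogue of this duality step.

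Two smaller points would also need repair. From $\rho(x)\geq(j+n)\sin\epsilon$ you cannot conclude $\rho(x)^{\nu-1}e^{-bg(\rho(x))}\leq\big((j+n)\sin\epsilon\big)^{\nu-1}e^{-bg((j+n)\sin\epsilon)}$, since $r\mapsto r^{k}e^{-bg(r)}$ is not assumed monotone; the obvious fix of splitting the exponential costs you $e^{-\frac{b}{2}g(\cdot)}$ and hence a weaker statement than \eqref{eqn:LRdoubsum}, whereas the paper keeps the monotone factor $F$ inside the sums and applies Assumption \ref{asp1} directly to tails over complements of balls, pulling out only $F$ at the boundary radius. Likewise the polynomial powers produced by Assumption \ref{asp1} are the iterated affine values $l(\nu-1)$, $l(l(\nu-1)+\nu-1)$, not $\nu-1$; this is harmless but is exactly where the function $\tilde{l}$ and the $\epsilon$-divergent prefactor $C_\epsilon$ come from, so the bookkeeping you wave at is where the stated form of the constant must actually be produced.
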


\begin{figure}
	\includegraphics[width=.5\textwidth]{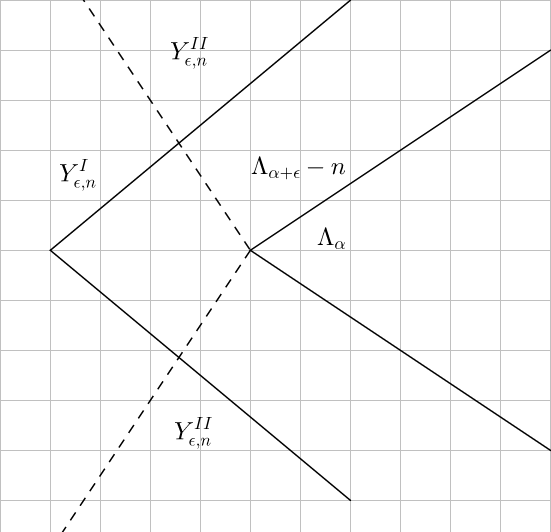}
	\caption{ The dashed lines are perpendicular to the surface of $\Lambda_{ \alpha  }$. 
		The region $\Lambda_{ \alpha  +\epsilon}^c - n$ is composed of two disjoint regions $Y_{\epsilon,n}^I$ and $Y_{\epsilon,n}^{II}$.}
	\label{fig:LRcones}
\end{figure}

\begin{proof}
	Without loss of generality, suppose $\Lambda_{ \alpha  }$ is a cone based at the origin.
	Suppose that  $0\leq \alpha < \pi/2$, that is, $\Lambda_{ \alpha  }$ is a convex cone.
	Let $\Lambda^*_\alpha = - \Lambda_{\pi/2	 - \alpha}$ be the polar cone for $\Lambda_\alpha$.
	Let $Y^I_{\epsilon,n} = Y_{\epsilon,n} \cap \Lambda^*_\alpha$ and $Y^{II}_{\epsilon,n} = Y_{\epsilon,n} \setminus Y_{\epsilon,n}^I$, see Figure \ref{fig:LRcones}.
	It follows that
	\begin{equation*}
	\sum_{x\in X}\sum_{ y\in Y_{\epsilon, n}} F_{bg}(d(x,y)) =   \sum_{y \in Y_{\epsilon,n}^I} \sum_{x\in X}F_{bg} (d(x,y)) + \sum_{y \in Y_{\epsilon,n}^{II}}  \sum_{x\in X}F_{bg} (d(x,y)).
	\end{equation*}
	
	We proceed by bounding the first sum.
	By geometry, if $y \in C^*_\alpha$ and $ x \in C_\alpha$ then $ d(y,0) \leq d(x,y)$.
	It follows that
	\begin{align*}
	\sum_{y \in Y^I_{\epsilon,n}} \sum_{x\in X}  F_{bg} (d(x,y)) 
	& \leq  \sum_{ y \in Y^I_{\epsilon,n}} \sum_{x\in B_{y}^c(\abs{y} ) }  F_{bg}( d(x,y)))\\
	& = \sum_{ y \in Y^I_{\epsilon,n}} \sum_{x\in B_{0}^c(\abs{y}) ) }  F_{bg}( \abs{x}) .
	\end{align*}
	Bounding the sums by integration and using the explicit form $F_{bg}(r) = e^{-bg(r)} F(r)$ along with Assumption \ref{asp1} gives,
	\begin{align*}
	\sum_{ y \in Y^I_{\epsilon,n}} \sum_{x\in B_{0}^c(\abs{y}) ) }  F_{bg}( \abs{x}) 
	& \leq V  \sum_{ y \in Y^I_{\epsilon,n}} F(\abs{y}) \int_{\abs{y}}^\infty  d r \  r^{\nu-1} e^{- b g(r)} \\
	& \leq  V K_{\nu - 1} \sum_{ y \in B_0^c( d(X,Y_{\epsilon,n}))} F(\abs{y}) \abs{y}^{l(\nu -1)} e^{ - b g(\abs{y}) }  \\
	& \leq V^2 K_{\nu-1} F(d(X,Y_{\epsilon,n})) \int_{d(X,Y_{\epsilon,n})}^\infty dr  r^{l(\nu - 1) +\nu - 1} e^{-b g(r)}  \\
	&\leq C_1  F(d(X,Y_{\epsilon,n}))   d(X,Y_{\epsilon,n})^{\tilde{l}(\nu)} e^{-b g(d(X,Y_{\epsilon,n}))} 
	\end{align*}
	where the constant $V$ is proportional to the volume of the unit sphere in $\RR^\nu$,
	$C_1 = V^2 K_{\nu-1}K_{l(\nu-1)+nu-1}$, and $\tilde{l}(\nu) = l(l(\nu - 1) + \nu-1)$.
	One can compute that $ n \sin( \alpha + \epsilon) \leq  d(X,Y) \leq  n \sin( \alpha + \epsilon)+2 $.  
	Thus,
	\begin{equation}\label{eqn:boundI}
	\sum_{y \in Y^I_{\epsilon,n}} \sum_{x\in X}  F_{bg} (d(x,y))  \leq C_1  F(d(X,Y_{\epsilon,n}))   d(X,Y_{\epsilon,n})^{\tilde{l}(\nu)} e^{-b g(d(X,Y_{\epsilon,n}))} 
	\end{equation}
	
	Next, we bound the second term in the sum.
	For each $y \in Y_{\epsilon,n}^{II}$, the inclusion $X \subset B_y^c(d(y,X))$ holds.
	It follows that
	\begin{align*}
	\sum_{y \in Y_{\epsilon,n}^{II}} \sum_{x \in X} F_{bg}( d(x,y)) & \leq \sum_{y \in Y_{\epsilon,n}^{II}} \sum_{x \in B^c_y(d(y,X) )} F_{bg}( d(x,y))\\
	& =  \sum_{y \in Y_{\epsilon,n}^{II}} \sum_{x \in B^c_0(d(y,X) )} F_{bg}( \abs{x}).
	\end{align*}
	
	In spherical coordinates, label $y = (r, \phi_1, \phi_2 \ldots, \phi_{\nu -1}) \in \RR^\nu$
	where $\phi_1$ labels the angle from the positive $x_1$-axis, $\phi_i \in [0,2 \pi)$ for $ 1 \leq i \leq \nu -2$ and $ \phi_{\nu-1} \in [0,\pi).$
	For convenience we will denote $\phi \equiv \phi_1$.
	If $y = (r, \phi, \ldots, \phi_{\nu -1}) \in Y_{\epsilon,n}^{II}$ then $\alpha+\epsilon < \phi \leq \pi/2 + \alpha$ 
	and for a fixed $\phi$ the values of the radius $r$ range from $ r_{\phi} < r < \infty$, 
	where 
	\[  r_{\phi} \equiv \frac{ n \sin(\alpha + \epsilon)}{\sin(\phi - \alpha - \epsilon)}.\] 
	For $y \in Y_{II}$ a simple calculation gives that $d(y, X) = r \sin(\phi - \alpha) $.
	Bounding the sum by integration gives,
	\begin{align*}
	& \sum_{y \in Y_{\epsilon,n}^{II}} \sum_{x \in B^c_0(d(y,X) )} F_{bg}( d(x,0)) \\
	& \qquad \qquad\leq V \sum_{y \in Y_{\epsilon,n}^{II}} F(d(y,X)) \int_{d(y,X)}^\infty d r  \  r^{\nu -1} e^{-b g(r)} \\
	& \qquad \qquad\leq  V K_{\nu-1}\sum_{y \in Y_{\epsilon,n}^{II}} F(d(y,X)) r^{l(\nu-1)} e^{- b g(d(y,X))} \\
	& \qquad \qquad\leq  V^2 K_{\nu-1} F(d(X,Y_{\epsilon,n})) \int_{\alpha+ \epsilon}^{\pi/2 + \alpha} d\phi \sin \phi
	\int_{r_{\phi}}^\infty  dr \ r^{l(\nu -1) +\nu -1} e^{-b g( r \sin(\phi - \alpha))} \\
	&\qquad \qquad\leq C_1 F(d(X,Y_{\epsilon,n})) \int_{\alpha+ \epsilon}^{\pi/2 + \alpha} d\phi \sin \phi \sin(\phi - \alpha)^{l(\nu-1) - \nu}
	r_\phi^{l(l(\nu -1) +\nu -1)} e^{-b g( r_\phi \sin(\phi - \alpha))}.
	\end{align*}
	For all $ \phi \in (\alpha+ \epsilon, \pi/2 + \alpha)$, the bound $ d(X,Y_{\epsilon,n})\geq  r_{\phi} \geq (n-1)\sin(\alpha + \epsilon) + \frac{ \sin(\alpha+\epsilon)}{\sin(\phi  - \alpha - \epsilon )}$ holds. 
	It follows from the properties of $g$ that
	\begin{equation*}
	e^{-b g(r_{\phi}\sin(\phi - \alpha))} \leq e^{ -b g( (n-1) \sin(\alpha + \epsilon) \sin(\phi - \alpha))}e^{- b g\left(\frac{ \sin(\alpha+\epsilon) \sin(\phi-\alpha)}{\sin(\phi  - \alpha - \epsilon )}\right) }.
	\end{equation*}
	Substituting back into the bound for $Y_{\epsilon,n}^{II}$ and simplifying gives
	\begin{align}\label{eqn:boundII}
	& \sum_{y \in Y_{\epsilon,n}^{II}} \sum_{x \in X} F_{bg}( d(x,y)) \\
	& \qquad \qquad \leq 	C_1 F(d(X,Y_{\epsilon,n})) d(X,Y_{\epsilon,n})^{\tilde{l}(\nu)} e^{ -b g(d(X,Y_{\epsilon,n}) \sin\epsilon)}
	\int_{0}^{\pi/2 - \epsilon} d\phi  \sin(\phi)^{-\nu} e^{- a \frac{ \sin(\alpha+\epsilon) \sin\epsilon}{\sin\phi} }
	\end{align}
	The $\phi$-integral is bounded for fixed $ 0< \epsilon < \pi/2$, is non-increasing in $\epsilon$, and only depends on $\nu$, $a$ and  $\alpha$.
	Combining the bounds \eqref{eqn:boundI} and \eqref{eqn:boundII} gives the result.
	
	The double sum in \eqref{eqn:LRdoubsum} is  symmetric in the interchange of $x$ and $y$
	and the opening angle for $Y_{\epsilon,n}$ is $ \beta = \pi - (\alpha+ \epsilon)$.
	Since $ 0\leq  \alpha \leq \pi/2 $ and $ 0<\epsilon< \pi/2 -\alpha $ this implies that $  \pi/2 < \beta < \pi$.
	Thus, exchanging the roles of $ X$ and $ Y_{\epsilon, n}$  gives the result.
\end{proof}

\begin{rem}
	Lemma \ref{lem:doubImpInt} can be extended to hold for more general graphs embedded in $\RR^\nu$
	which satisfy certain regularity conditions.
	For instance, 
	if $\sum_{x \in B_0^c(r)} F_{bg}(d(x,0)) \leq C \int_{r}^\infty dr r^{\nu-1} F_{bg}(r)$,
	we expect a similar bound to hold.
\end{rem}

\begin{rem}
	The double sum $\sum_{x\in X} \sum_{y\in Y_{\epsilon,n}} F_{bg} (d(x,y))$ is indeed divergent as $\epsilon \ra 0$ for all $n$ and $b>0$. 
	By comparing the sets $Y_{\epsilon, n}$ and $ Y_{0, n+2}$ one achieves the lower bound
	\begin{equation}\label{eqn:LRlowerbound}
	\left\lfloor\frac{3 \tan \alpha \tan (\alpha+ \epsilon)}{\tan(\alpha+\epsilon) - \tan\alpha} \right\rfloor \sum_{j= 0 }^\infty F_{bg}( j+ n+3)
	\leq \sum_{x\in X} \sum_{y\in Y_{\epsilon,n}} F_{bg} (d(x,y)),
	\end{equation}
	where the term $\left\lfloor\frac{3 \tan \alpha \tan (\alpha+ \epsilon)}{\tan(\alpha+\epsilon) - \tan\alpha} \right\rfloor   = O\left(\frac{1}{\epsilon}\right)$ is a lower bound for the number of lattice points in $ Y_{\epsilon,n}\setminus Y_{0,n+2}$.	
	This agrees with the upper bound found in Theorem \ref{thm:LRcone} as the $\phi$-integral in \eqref{eqn:boundII} diverges as $\epsilon \ra 0$.
	The lower bound in \eqref{eqn:LRlowerbound} and upper bound in Lemma \ref{lem:doubImpInt} are not expected to be tight.
\end{rem}

\begin{cor}\label{cor:fsatpolydecay}
	Suppose $g$ satisfies Assumption \ref{asp1}.  
	Let $X$ and $ Y_{\epsilon,n}$ be defined as in Lemma \ref{lem:doubImpInt}.
	Then, for all $0\leq \alpha < \pi/2$, $ 0< \epsilon< \pi/2 - \alpha$, $b>0$ and $k \in \NN$ we have that
	\begin{equation*}
	\lim_{n \ra \infty } \bigg[ n^{k} \sum_{x\in X} \sum_{y\in Y_{\epsilon,n}} F_{bg} (d(x,y)) \bigg] \ra 0.
	\end{equation*}
\end{cor}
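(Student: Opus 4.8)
The plan is to deduce the corollary directly from Lemma \ref{lem:doubImpInt}. Write $d_n := d(X, Y_{\epsilon,n})$ for brevity. That lemma supplies both the two-sided estimate
\[
n\sin(\alpha+\epsilon) \le d_n \le n\sin(\alpha+\epsilon) + 2
\]
and the bound $\sum_{x\in X}\sum_{y\in Y_{\epsilon,n}} F_{bg}(d(x,y)) \le C_\epsilon\, d_n^{\tilde l(\nu)}\, e^{-b g(d_n \sin\epsilon)}$. Since $0 < \alpha+\epsilon < \pi/2$ we have $\sin(\alpha+\epsilon) > 0$, hence $d_n \to \infty$, and for every $n \ge 2$ the quantity $d_n$ is comparable to $n$: there are constants $0 < c_1 \le c_2$ depending only on $\alpha+\epsilon$ with $c_1 n \le d_n \le c_2 n$. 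First I would use the lower inequality to trade the factor $n^k$ for a power of $d_n$, obtaining
\[
n^k \sum_{x\in X}\sum_{y\in Y_{\epsilon,n}} F_{bg}(d(x,y)) \le C_\epsilon\, c_1^{-k}\, d_n^{\,k + \tilde l(\nu)}\, e^{-b g(d_n\sin\epsilon)} .
\]

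It then remains to show that $\lim_{r\to\infty} r^{m} e^{-b g(r\sin\epsilon)} = 0$ for any fixed real exponent $m$ (here $m = k + \tilde l(\nu)$); applying this with $r = d_n$, together with $d_n \to \infty$, finishes the argument. For this I would invoke the observation recorded immediately after Assumption \ref{asp1}: uniform continuity of $g$ together with Assumption \ref{asp1} gives $\lim_{s\to\infty} s^{j} e^{-b g(s)} = 0$ for all $j \in \NN$, and the version for an arbitrary real exponent $m$ follows by dominating $s^m$ by $1 + s^{\lceil m\rceil}$ for $s \ge 1$. Substituting $s = r\sin\epsilon$, which is legitimate because $\sin\epsilon > 0$, then yields
\[
r^{m} e^{-b g(r\sin\epsilon)} = (\sin\epsilon)^{-m}\, (r\sin\epsilon)^{m}\, e^{-b g(r\sin\epsilon)} \to 0 \quad \text{as } r\to\infty .
\]

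There is no genuine obstacle here: the entire content of this corollary sits in Lemma \ref{lem:doubImpInt}, and the only bookkeeping is to absorb the two separate polynomial factors $n^k$ and $d_n^{\tilde l(\nu)}$ into a single power of $d_n$ — which is exactly where the comparability $c_1 n \le d_n \le c_2 n$ is used. The one point worth a line of care is that $\tilde l(\nu)$ is a fixed real number (the affine function $\tilde l$ evaluated at the integer $\nu$), so $k + \tilde l(\nu)$ need not be an integer and the decay estimate above must be applied with a real exponent, hence the remark about dominating $s^m$ by $1 + s^{\lceil m\rceil}$.
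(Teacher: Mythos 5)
Your proposal is correct and follows essentially the same route as the paper: the paper's proof is the one-line observation that the bound of Lemma \ref{lem:doubImpInt} combined with $\lim_{n \ra \infty} n^{k} e^{-b g(n)} = 0$ gives the result, and your argument simply fills in the bookkeeping (comparability $d(X,Y_{\epsilon,n}) \asymp n$, the factor $\sin\epsilon$, and the harmless passage to a real exponent). No gaps.
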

\begin{proof}
	
	Comparing the bound in Lemma \ref{lem:doubImpInt} with the limit $\lim_{n \ra \infty} n^{k} e^{-b g(n)} = 0$
	for all $b>0$ and $k \in \NN$ gives the result.
\end{proof}

The main result in this section is the following Lieb-Robinson bound for cones.
Let $\Phi$ be an interaction map, that is, $\Phi: \mathcal{P}_0(\Gamma) \ra \calA_{loc}$ such that
$\Phi(X) \in \calA_X $ and $ \Phi(X)^* = \Phi(X)$ .
\begin{thm}\label{thm:LRcone}
	Suppose $g$ satisfies Assumption \ref{asp1} and 
	$\Phi$ satisfies a finite $F_{bg}$-norm for some $b>0$.  
	If $X$ and $ Y_{\epsilon,n}$ are  defined as in Lemma \ref{lem:doubImpInt}
	then there exists an affine function $\tilde{l}$ and $v_{bg}>0$ such that
	for all $0\leq \alpha < \pi$, $ 0< \epsilon< \pi - \alpha$,
	and observables $A\in \calA_X$ and $B\in \calA_{Y_{\epsilon,n}}$, 
	\begin{equation*}
	\| [ \tau_t(A),B] \|\leq 2 \|A \| \| B\| C_\epsilon d(X,Y_{\epsilon,n})^{\tilde{l}(\nu)} e^{v_{bg} \abs{t} - b g( d(X,Y)\sin\epsilon)},
	\end{equation*}
	where 
	\begin{equation*}
	n \sin(\alpha + \epsilon) \leq d(X,Y_{\epsilon,n}) \leq n \sin(\alpha + \epsilon)+2
	\end{equation*}
	and $C_\epsilon$ is non-increasing in $\epsilon$ and only depends on $\nu$, $b$ and $\alpha$.
\end{thm}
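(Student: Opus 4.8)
The strategy is to feed the geometric estimate of Lemma~\ref{lem:doubImpInt} into the general Lieb--Robinson bound \eqref{eqn:LRbound}. First I would recall that, by Theorem~\ref{thm:infvoldyn}, the infinite-volume dynamics $\tau_t$ exists and is independent of the exhausting sequence, and that \eqref{eqn:LRbound} holds for \emph{arbitrary} subsets $X,Y\subset\Gamma$ with $d(X,Y)>0$ --- in particular for the infinite cone regions $X=\Lambda_\alpha$ and $Y=Y_{\epsilon,n}=(\Lambda_{\alpha+\epsilon}-n)^c$, whose separation satisfies $d(X,Y_{\epsilon,n})>0$ precisely because $\epsilon>0$. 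Since $\Phi$ has a finite $F_{bg}$-norm it also has a finite $F$-norm (as $F_{bg}\le F$), so \eqref{eqn:LRbound} applies with the $\mathcal{F}$-function $F_{bg}$ in place of $F$: one obtains
\begin{equation*}
\| [\tau_t(A),B]\| \le \frac{2\|A\|\|B\|}{C_{F_{bg}}}\big(e^{v_{bg}|t|}-1\big)\sum_{x\in X}\sum_{y\in Y_{\epsilon,n}} F_{bg}(d(x,y)),
\end{equation*}
where $v_{bg}=2\|\Phi\|_{F_{bg}}C_{F_{bg}}$, exactly the velocity appearing in \eqref{eqn:fvLRb}.

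Next I would invoke Lemma~\ref{lem:doubImpInt}: there is an affine function $\tilde l$ and a constant $C_\epsilon$ (non-increasing in $\epsilon$, depending only on $\nu$, $b$, $\alpha$) such that the double sum is bounded by $C_\epsilon\, d(X,Y_{\epsilon,n})^{\tilde l(\nu)} e^{-b g(d(X,Y_{\epsilon,n})\sin\epsilon)}$, together with the two-sided estimate $n\sin(\alpha+\epsilon)\le d(X,Y_{\epsilon,n})\le n\sin(\alpha+\epsilon)+2$ on the separation. Substituting this in, absorbing $1/C_{F_{bg}}$ and the factor $(e^{v_{bg}|t|}-1)\le e^{v_{bg}|t|}$ into the constant (renaming $C_\epsilon$, which is still non-increasing in $\epsilon$ and depends only on $\nu,b,\alpha$), yields
\begin{equation*}
\|[\tau_t(A),B]\| \le 2\|A\|\|B\| C_\epsilon\, d(X,Y_{\epsilon,n})^{\tilde l(\nu)} e^{v_{bg}|t| - b g(d(X,Y_{\epsilon,n})\sin\epsilon)},
\end{equation*}
which is the claimed inequality. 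The only point requiring care here is that Lemma~\ref{lem:doubImpInt} is stated for convex cones $0\le\alpha<\pi/2$, whereas the theorem allows $0\le\alpha<\pi$; but the double sum is symmetric under exchange of $x$ and $y$, and $Y_{\epsilon,n}$ has opening angle $\pi-(\alpha+\epsilon)$, so swapping the roles of $X$ and $Y_{\epsilon,n}$ reduces the remaining range of $\alpha$ to the convex-cone case already handled --- this is the last paragraph of the proof of Lemma~\ref{lem:doubImpInt} and I would simply cite it.

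I do not expect a serious obstacle: the theorem is essentially a repackaging of Lemma~\ref{lem:doubImpInt} through \eqref{eqn:LRbound}. The one thing to be careful about is bookkeeping --- making sure the exponential rate $v_{bg}$ is the one coming from the $F_{bg}$-norm (not the $F$-norm), that $g$'s sub-additivity and Assumption~\ref{asp1} are exactly what Lemma~\ref{lem:doubImpInt} needs, and that all constants genuinely depend only on $\nu,b,\alpha$ and are monotone in $\epsilon$ as asserted. A remark worth including is that, via Corollary~\ref{cor:fsatpolydecay}, the right-hand side is $O(n^{-\infty})$ for fixed $t$, so this bound is precisely the estimate that makes the almost-localized framework of Definition~\ref{defn:alend} stable under the dynamics $\tau_t$, which is how it will be used in the stability arguments of the next section.
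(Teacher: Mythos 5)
Your proposal is correct and follows exactly the paper's own argument: the proof of Theorem~\ref{thm:LRcone} consists precisely of applying the infinite-volume Lieb--Robinson bound \eqref{eqn:LRbound} with the $\mathcal{F}$-function $F_{bg}$ and substituting the double-sum estimate of Lemma~\ref{lem:doubImpInt}, with the same velocity $v_{bg}=2\|\Phi\|_{F_{bg}}C_{F_{bg}}$ and the same reduction of the range $0\leq\alpha<\pi$ to the convex case via the symmetry remark at the end of that lemma's proof. Your write-up simply makes explicit the bookkeeping that the paper's one-line proof leaves implicit.
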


\begin{proof}
	Substituting the bound found in Lemma \ref{lem:doubImpInt} into the Lieb-Robinson bound \eqref{eqn:LRbound} gives the result.
\end{proof}

The quasi-locality of the dynamics $\tau_t$ can be interpreted by measuring the growth of the support of a time-evolved observable.
Let $\Lambda \in \mathcal{P}_0(\Gamma)$ be a finite  subset and take $X\subset \Lambda$.
For any observable $A \in \calA_\Lambda$ consider the conditional expectation
\begin{equation*}
\langle A \rangle_{X^c}  \equiv \int_{\mathcal{U}(X^c\cap \Lambda)} U^* A U \mu(dU),
\end{equation*}
where $\mu$ is the normalized Haar measure on the family of unitary operators $\mathcal{U}(X^c\cap\Lambda) \subset \calA_\Lambda$, that is,
$\int_{\mathcal{U}(X^c\cap\Lambda)}  \mu(dU)= 1$.
If $ A \in \calA_X$ then $A$ commutes with $\mathcal{U}(X^c)$ so that $\langle A\rangle_{X^c} = A$.
In particular, $\langle \langle A \rangle_{X^c} \rangle_{X^c}=\langle A \rangle_{X^c} $.

Let $ n>0$ and denote
$B_t(X,n) = \{ y \in \Gamma :  d(y,X) \leq v_{bg} \abs{t} + n \}.$
Suppose $ A \in \calA_X$.
Applying the Lieb-Robinson bound \eqref{eqn:LRbound}, it follows that 
\begin{align}
\| \tau_t(A) - \langle \tau_t(A) \rangle_{B^c_t(X,n)} \ \| 
&= \Big\| \int_{\mathcal{U}(B_t^c(X,n))} [ \tau_t(A), U] \  \mu(dU) \Big\| \\
&\leq  \sup_{U \in\mathcal{U}(B_t^c(X,n)) } \| [ \tau_t(A), U] \| \\
& \leq K_b \| A \| \abs{X} (1 - e^{ - v_b \abs{t}}) e^{- b n }. \label{eqn:fvQLdyn}
\end{align}

Now let $X\subset \ZZ^\nu$ be a potentially infinite subset, e.g., $X = \Lambda_\alpha$ an infinite cone.
Denote the following subsets of $\ZZ^\nu$ by 
\[ \Lambda_L \equiv [-L,L]^\nu \cap \ZZ^\nu, \quad  X_L  \equiv X \cap \Lambda_L, \quad \text{ and } \quad X_L^c \equiv \Lambda_L \setminus X_L. \]
\begin{defn}
	We define $ \langle \ \cdot \  \rangle_{X^c} : \calA \ra \calA_{X} \cong \calA_X \otimes I \subset \calA$ by
	\begin{equation}\label{eqn:conePT}
	\langle A \rangle_{X^c} \equiv \lim_{L \ra \infty} \langle A \rangle_{X_L^c}
	= \lim_{L\ra \infty} \int_{\mathcal{U}(X_L^c)} U^* A U \mu(dU) \quad \text{ for all } \quad A \in \calA,
	\end{equation}
	where the limit is in the norm sense.
\end{defn}

\begin{lemma}\label{lem:coneProj}
	The operator $\langle A \rangle_{X^c}$ is well defined; the limit \eqref{eqn:conePT} exists and is unique, and $\langle A \rangle_{X^c} \in \calA_{X}$.
\end{lemma}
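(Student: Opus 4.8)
The plan is to establish the existence of the limit in \eqref{eqn:conePT} by showing that the sequence $\langle A \rangle_{X_L^c}$ is Cauchy in norm, and then to verify that the limit lies in $\calA_X$. First I would reduce to the case $A \in \calA_{loc}$: since each conditional expectation $\langle \cdot \rangle_{X_L^c}$ is norm-one (it is an average of unitary conjugations, hence completely positive and unital), the maps $A \mapsto \langle A \rangle_{X_L^c}$ are uniformly bounded, so it suffices to prove convergence on the dense subalgebra $\calA_{loc}$ and then extend by a standard $\varepsilon/3$ argument. So fix $A \in \calA_Y$ for some finite $Y \in \mathcal{P}_0(\Gamma)$.

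The key estimate comes from the tower property of the conditional expectations together with the bound \eqref{eqn:fvQLdyn} in the special case $t = 0$, which is just Lemma~2.?? (the \cite{NachSW} lemma) applied directly: if $A \in \calA_Y$ and $Z$ is a region with $d(Y \setminus Z, \cdot)$ controlled, then $\|A - \langle A\rangle_{Z^c}\|$ is bounded by the commutator of $A$ with unitaries supported far from $Y$. Concretely, for $L' > L$ large enough that $\Lambda_L \supset Y$, I would write
\begin{equation*}
\langle A \rangle_{X_{L'}^c} - \langle A \rangle_{X_L^c} = \langle A \rangle_{X_{L'}^c} - \big\langle \langle A \rangle_{X_{L'}^c} \big\rangle_{X_L^c} + \big\langle \langle A\rangle_{X_{L'}^c} - A \big\rangle_{X_L^c},
\end{equation*}
using $\langle\langle A\rangle_{X_L^c}\rangle_{X_L^c} = \langle A\rangle_{X_L^c}$ and the commuting-averages structure (averaging over $\mathcal{U}(X_{L'}^c)$ and over $\mathcal{U}(X_L^c)$ can be interleaved since for $L' > L$ and $X$ a cone one has $X_L^c \subset X_{L'}^c$ up to the boundary shell $\Lambda_{L'}\setminus\Lambda_L$, which is disjoint from $Y$). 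Each term on the right is controlled by $\sup_{U} \|[A, U]\|$ over unitaries $U$ supported in $X_{L'}^c \cap (\text{shell})$ or similar, and since $A$ is strictly local this commutator is \emph{exactly zero} once $L$ is large enough that the shell and the relevant complement regions lie outside $Y$. Hence for local $A$ the sequence is eventually constant, which trivially gives Cauchy and convergence; the general case follows by density.

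For membership in $\calA_X$: each $\langle A \rangle_{X_L^c}$ commutes with every unitary in $\mathcal{U}(X_L^c)$, hence $\langle A\rangle_{X_L^c} \in \calA_{X_L} \otimes I$; since $\calA_{X_L} \subset \calA_X$ for all $L$ and $\calA_X$ is norm-closed, the norm limit $\langle A\rangle_{X^c}$ lies in $\calA_X$. Uniqueness is immediate from uniqueness of norm limits. I would also note in passing that $\langle \cdot\rangle_{X^c}$ inherits the projection property $\langle\langle A\rangle_{X^c}\rangle_{X^c} = \langle A\rangle_{X^c}$ and is completely positive and unital, which is what makes it a genuine ``partial trace onto $X$''. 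The main obstacle — and it is a mild bookkeeping one rather than a conceptual one — is handling the interplay of the two averaging regions $X_L^c$ and $X_{L'}^c$ carefully when $X$ is infinite: one must check that the boundary shells $\Lambda_{L'}\setminus\Lambda_L$ genuinely decouple from a fixed finite $Y$, so that the commutators vanish exactly rather than merely being small. Once that is set up, everything reduces to the finite-volume statement already recorded in the text.
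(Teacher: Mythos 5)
Your argument is correct and is essentially the paper's proof: approximate $A$ by a strictly local observable, observe that for a local observable the averages $\langle \,\cdot\, \rangle_{X_L^c}$ are eventually constant in $L$ because the additional averaging region lies outside its support, and combine contractivity of the averaging maps with norm-closedness of $\calA_X$ to get the Cauchy property, the limit, and membership in $\calA_X$. The detour through \eqref{eqn:fvQLdyn} and the lemma of \cite{NachSW} is unnecessary here --- as you note yourself, the relevant commutators vanish exactly for local $A$, which is precisely what the paper's proof uses.
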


\begin{proof}
	Let $\epsilon>0$ be given.  
	By density of $\calA_{loc}$ in $\calA$ there exists $L>0$ and an operator $A_L \in \calA_{\Lambda_L}$ such that 
	\[ \| A - A_L \| < \epsilon. \]
	For any $n,m > L$ we have that 
	\begin{align*}
	\| \langle A \rangle_{X_m^c} - \langle A \rangle_{X_n^c} \| 
	& \leq \| \langle A \rangle_{X_m^c} - \langle A_L \rangle_{X_m^c} \| 
	+  \| \langle A \rangle_{X_n^c} - \langle A_L \rangle_{X_n^c} \| 
	+ \| \langle A_L \rangle_{X_m^c} - \langle A_L \rangle_{X_n^c}\|\\
	& = \left\| \int_{\mathcal{U}(X_m^c)} U^* (A - A_L) U \mu (dU) \right\|
	+ \left\| \int_{\mathcal{U}(X_n^c)} U^* (A - A_L) U \mu (dU) \right\|\\
	& \leq 2 \epsilon,
	\end{align*}
	where we use that $\langle A_L \rangle_{X_m^c} = \langle A_L \rangle_{X_n^c} =  A_L$ for the last term. 
	Thus, the sequence $ \langle A \rangle_{X_L^c}$ is Cauchy in $\calA_X$ and its limit is defined  to be the operator $\langle A \rangle_{X^c} \in \calA_X$.
\end{proof}

\begin{cor}\label{cor:quasiloc}
	Suppose $g$ satisfies Assumption \ref{asp1} and $\Phi$ has a finite $F_{bg}$-norm for $b>0$.  
	Let  $X$ and $ Y_{\epsilon,n}$ as from Lemma \ref{lem:doubImpInt}.
	Denote $ Y_{\epsilon,n}(t) =  \Lambda_{\alpha+ \epsilon}^c(\lceil \frac{v_a \abs{t}}{a} +  n \rceil)$.
	Then, there exists an affine function $\tilde{l}$ such that 
	for all $0\leq \alpha < \pi$, $ 0< \epsilon< \pi - \alpha$,
	and $ A \in \calA_{X}$,
	\begin{equation*}
	\| \langle \tau_t(A) \rangle_{Y_{\epsilon,n }(t)} - \tau_t(A) \| \leq C_\epsilon \| A\|  d(X,Y_{\epsilon,n}(t))^{\tilde{l}(\nu)} e^{- b g( d(X,Y_{\epsilon,n}(t)) \sin\epsilon )},
	\end{equation*}
	where 
	\[ \Big\lceil v_{bg} \abs{t} +  n \Big\rceil \sin(\alpha + \epsilon)
	\leq d(X,Y_{\epsilon,n}(t)) 
	\leq \Big\lceil v_{bg} \abs{t}+  n \Big\rceil  \sin(\alpha + \epsilon) + 2, \]
	and $C_\epsilon$ is non-increasing in $\epsilon$ and only depends on $\nu$, $b$ and $\alpha$.
	In particular, 
	\begin{equation}\label{eqn:coneql}
	\lim_{n\ra \infty}  n^{k} \| \langle \tau_t(A) \rangle_{Y_{\epsilon,n}(t)} - \tau_t(A) \| = 0 \quad \text{ for all } \quad k\in \NN.
	\end{equation}
\end{cor}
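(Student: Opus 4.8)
The plan is to combine the averaging identity underlying the cone conditional expectation $\langle\,\cdot\,\rangle_{X^c}$ with the Lieb--Robinson bound for cones, Theorem~\ref{thm:LRcone}.

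First I would argue as in the derivation of \eqref{eqn:fvQLdyn}. Writing $W=Y_{\epsilon,n}(t)$ and using $U^*\tau_t(A)U-\tau_t(A)=U^*[\tau_t(A),U]$ for a unitary $U$, one has for each $L$
\[
\langle\tau_t(A)\rangle_{W\cap\Lambda_L}-\tau_t(A)=\int_{\mathcal{U}(W\cap\Lambda_L)}U^*[\tau_t(A),U]\,\mu(dU),
\]
hence $\|\langle\tau_t(A)\rangle_{W\cap\Lambda_L}-\tau_t(A)\|\le\sup\{\|[\tau_t(A),U]\|:U\in\mathcal{U}(W\cap\Lambda_L)\}$. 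By Lemma~\ref{lem:coneProj} the left-hand side converges in norm to $\|\langle\tau_t(A)\rangle_{W}-\tau_t(A)\|$, so it suffices to bound $\|[\tau_t(A),B]\|$ uniformly over all local, norm-one $B\in\calA_{W}$; since the Lieb--Robinson estimate used below is insensitive to whether $B$ has finite support, one may even bound it over all $B\in\calA_{W}$ directly.

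Next I would observe that $X=\Lambda_\alpha$ together with $W=Y_{\epsilon,n}(t)=(\Lambda_{\alpha+\epsilon}-m)^c$, where $m=\lceil v_a|t|/a+n\rceil$, is exactly the configuration of Lemma~\ref{lem:doubImpInt} and Theorem~\ref{thm:LRcone}, the integer parameter there being $m$. Theorem~\ref{thm:LRcone} then gives, for all $A\in\calA_X$ and $B\in\calA_{W}$,
\[
\|[\tau_t(A),B]\|\le 2\|A\|\,\|B\|\,C_\epsilon\,d^{\,\tilde l(\nu)}\,e^{\,v_{bg}|t|-b\,g(d\sin\epsilon)},\qquad d:=d(X,W),
\]
with $m\sin(\alpha+\epsilon)\le d\le m\sin(\alpha+\epsilon)+2$, which is precisely the displayed distance estimate. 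Combined with the previous paragraph, this already yields the asserted inequality up to the light-cone factor $e^{v_{bg}|t|}$.

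The main point, and the step I expect to require the most care, is to absorb $e^{v_{bg}|t|}$ into the $g$-decay so that the surviving constant depends only on $\nu,b,\alpha$. This is exactly why $W$ was fattened by the extra amount $v_a|t|/a$ (with $v_a/a=v_{bg}$ the Lieb--Robinson velocity): since $d\ge m\sin(\alpha+\epsilon)\ge(v_{bg}|t|+n)\sin(\alpha+\epsilon)$, the monotonicity and sub-additivity of $g$ from Assumption~\ref{asp1} allow $v_{bg}|t|$ to be dominated by the gain $b\,g(d\sin\epsilon)-b\,g(d'\sin\epsilon)$ produced by this enlargement, where $d'$ is the distance (of order $n\sin(\alpha+\epsilon)$) associated to the undisplaced parameter $n$; this is the cone analogue of the cancellation turning \eqref{eqn:LRbound} into \eqref{eqn:fvQLdyn}. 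Finally, the ``in particular'' assertion needs none of this bookkeeping: at fixed $t$ the factor $e^{v_{bg}|t|}$ is a constant, $m=m(n)\to\infty$ linearly in $n$, and $W\subseteq Y_{\epsilon,n}$, so Corollary~\ref{cor:fsatpolydecay} (equivalently the property $\lim_{r\to\infty}r^{j}e^{-bg(r)}=0$ of $g$, together with the polynomial prefactor $d^{\tilde l(\nu)}$) forces $n^{k}\|\langle\tau_t(A)\rangle_{Y_{\epsilon,n}(t)}-\tau_t(A)\|\to 0$ for every $k\in\NN$.
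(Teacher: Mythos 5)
Your overall route is the same as the paper's (the proof there is only a short sketch): write $\langle\tau_t(A)\rangle_{Y_{\epsilon,n}(t)}-\tau_t(A)$ as an average of commutators $U^*[\tau_t(A),U]$, pass to the infinite averaging region via Lemma~\ref{lem:coneProj}, and invoke the cone Lieb--Robinson bound of Theorem~\ref{thm:LRcone} so that no factor $\abs{X}$ appears; your treatment of \eqref{eqn:coneql} at fixed $t$ by the mechanism of Corollary~\ref{cor:fsatpolydecay} is also exactly what is intended.

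The step you yourself single out as the crux, however, does not work as you justify it. Sub-additivity gives $g(x+y)\le g(x)+g(y)$, i.e.\ an \emph{upper} bound $g(d\sin\epsilon)-g(d'\sin\epsilon)\le g((d-d')\sin\epsilon)$ on the gain produced by enlarging the region, not a lower bound, so it cannot be used to dominate $v_{bg}\abs{t}$. Worse, Assumption~\ref{asp1} explicitly allows sub-linear $g$ such as $g(r)=r/\ln^2 r$, for which $b\,g(c\,v_{bg}\abs{t})=o(\abs{t})$, so no fattening by a fixed multiple of $\abs{t}$ can cancel $e^{v_{bg}\abs{t}}$ uniformly in $t$; even for $g(r)=r$ the geometric factors $\sin(\alpha+\epsilon)\sin\epsilon<1$ spoil the cancellation with the stated enlargement $\lceil v_a\abs{t}/a+n\rceil$. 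Note also that any part of the exponent spent absorbing the time factor is no longer available in the conclusion, so this route cannot yield the inequality with the full fattened distance $d(X,Y_{\epsilon,n}(t))$ inside $g$ \emph{and} a constant depending only on $\nu,b,\alpha$, which is what the statement literally asserts. To be fair, this defect is largely inherited from the statement itself: the paper's sketch does not address it either, and in all later uses (e.g.\ \eqref{eqn:quasiloc} and Lemma~\ref{lem:stabaloc}) the bound is needed only at fixed $t$, where $e^{v_{bg}\abs{t}}$ is a harmless constant and your argument, including \eqref{eqn:coneql}, is complete. The honest fix is either to retain the factor $e^{v_{bg}\abs{t}}$ (equivalently, a $t$-dependent constant) in the conclusion, or to strengthen the hypotheses on $g$ and rescale the fattening by the sine factors before claiming the cancellation.
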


\begin{proof}
	The proof is similar to the argument given for the finite volume case \eqref{eqn:fvQLdyn} where we take the limit as in Lemma \ref{lem:coneProj}
	and use the improved bound from Theorem \ref{thm:LRcone} to avoid the dependence on the size of the support of $A$.
	Equation \eqref{eqn:coneql} comes from a  similar argument to the proof of Corollary \ref{cor:fsatpolydecay}.
\end{proof}

\subsection{Stability}

Let $\tau_t$ be the quasi-local dynamics corresponding the an interaction $\Phi$ with a finite $F_{bg}$-norm for some $b>0$.
We will consider endomorphisms of the form $\tau_t^{-1}\circ \rho \circ \tau_t$, and in a later section, 
argue why these correspond to charge generators in a perturbed system.
The main result of this section is that the evolution by a quasi-local dynamics will preserve the defining property of almost-localized endomorphism. 
The main tool in the proof will be the Lieb-Robinson bound for cones that was established in the previous section.

Suppose $g$ satisfies Assumption \ref{asp1}.
Then, there exist an $n_0 >$ such that for constants $r, b, a >0$ and $C_\epsilon$ non-increasing in $\epsilon$, the function
\begin{equation*}
h_\epsilon(n) \equiv 
\Bigg\{ \begin{array}{ll}
C_\epsilon n_0^r e^{ -b g( a n_0 ) \sin\epsilon} &\mbox{ if } n \leq n_0\\
C_\epsilon n^r e^{ -b g( a n ) \sin\epsilon}  & \mbox{ if } n > n_0
\end{array}  
\end{equation*}
is a member of the class of decay functions $\mathcal{F}_\infty$. 
By Corollary \ref{cor:quasiloc}, there are constants $r,b,a>0$  such that 
\begin{equation}\label{eqn:quasiloc}
\| \langle \tau_t(A) \rangle_{Y_{\epsilon,n }(t)} - \tau_t(A) \| \leq h_\epsilon(n) \|A\|.
\end{equation}

\begin{lemma}\label{lem:stabaloc}
	Suppose $g$ satisfies Assumption \ref{asp1} and $\Phi$ satisfies a finite  $F_{bg}$-norm for some $b>0$.
	If $ \rho$  is an almost-localized endomorphism in $\Lambda_\alpha$
	with decay function $f$ 
	then for all $ t \in \RR$, $ \tau_t^{-1} \circ \rho \circ \tau_t$ is an
	almost-localized endomorphism in $\Lambda_{\alpha}$
	with decay function $f_{\epsilon/2}(n/2) + 2 h_{\epsilon/2}(v_{bg} \abs{t}+ n/2)$.
\end{lemma}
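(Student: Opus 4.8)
The plan is to estimate $\|(\tau_t^{-1}\circ\rho\circ\tau_t)(A) - A\|$ for an observable $A$ supported on $\Lambda_{\alpha+\epsilon}^c - n$, and show it is bounded by the claimed decay function. First I would introduce an intermediate scale: split the ``margin'' $n$ into two halves, using one half to localize the time-evolved observable $\tau_t(A)$ back into a cone region via the quasi-local estimate \eqref{eqn:quasiloc}, and the other half as the genuine localization margin that the almost-localized property of $\rho$ will consume. Concretely, set $A' \equiv \langle \tau_t(A)\rangle_{Y_{\epsilon/2, n/2}(t)}$, which by Corollary~\ref{cor:quasiloc} (in the form \eqref{eqn:quasiloc}) satisfies $\|\tau_t(A) - A'\| \leq h_{\epsilon/2}(v_{bg}\abs{t} + n/2)\|A\|$, and $A'$ is supported in the cone complement $Y_{\epsilon/2,n/2}(t) = \Lambda_{\alpha+\epsilon/2}^c - \lceil \tfrac{v_{bg}\abs t}{?} + n/2\rceil$ — I would have to be careful to track that for $n$ large the effective support of $A'$ sits inside $\Lambda_{\alpha+\epsilon/2}^c - (n/2)$, so that $\rho$ being almost localized in $\Lambda_\alpha$ with decay function $f$ gives $\|\rho(A') - A'\| \leq f_{\epsilon/2}(n/2)\|A'\| \leq f_{\epsilon/2}(n/2)\|A\|$.

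The second step is a three-term triangle inequality. Since $\tau_t$ and $\tau_t^{-1}$ are automorphisms (norm-isometric), we have
\begin{align*}
\|(\tau_t^{-1}\circ\rho\circ\tau_t)(A) - A\| &= \|\rho(\tau_t(A)) - \tau_t(A)\| \\
&\leq \|\rho(\tau_t(A)) - \rho(A')\| + \|\rho(A') - A'\| + \|A' - \tau_t(A)\| \\
&\leq \|\tau_t(A) - A'\| + \|\rho(A') - A'\| + \|A' - \tau_t(A)\|,
\end{align*}
where in the last line I used that $\|\rho\| = 1$ (since $\calA$ is simple). Combining the bound $\|\tau_t(A) - A'\| \leq h_{\epsilon/2}(v_{bg}\abs t + n/2)\|A\|$ for the first and third terms with $\|\rho(A') - A'\| \leq f_{\epsilon/2}(n/2)\|A\|$ for the middle term yields the estimate $\|(\tau_t^{-1}\circ\rho\circ\tau_t)(A) - A\| \leq \big(f_{\epsilon/2}(n/2) + 2h_{\epsilon/2}(v_{bg}\abs t + n/2)\big)\|A\|$, which is exactly the asserted decay function. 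One must also check that this function lies in $\mathcal{F}_\infty$: it is continuous and non-increasing in both variables (both $f$ and $h$ are, and precomposition with the affine maps $n\mapsto n/2$ and $(\epsilon,n)\mapsto(\epsilon/2, v_{bg}\abs t + n/2)$ preserves monotonicity), and it decays faster than any polynomial because $f\in\mathcal{F}_\infty$ and $h\in\mathcal{F}_\infty$ (the latter by the remark just before the lemma, built from Corollary~\ref{cor:quasiloc}), and $\mathcal{F}_\infty$ is closed under addition.

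Finally I would verify that $\tau_t^{-1}\circ\rho\circ\tau_t$ is genuinely a $*$-endomorphism of $\calA$ — this is immediate since it is a composition of the $*$-automorphisms $\tau_t^{\pm1}$ with the $*$-endomorphism $\rho$ — and that the inequality holds for \emph{all} $0 < \epsilon < \pi - \alpha$, which follows because the construction works for each such $\epsilon$ with the substitution $\epsilon \rightsquigarrow \epsilon/2$ (and $\epsilon/2 < \pi/2 - \alpha/2$, so the cone geometry in Lemma~\ref{lem:doubImpInt} and Corollary~\ref{cor:quasiloc} applies — here I should double-check the hypothesis $\alpha < \pi/2$ needed for the Lieb-Robinson cone bound, restricting to convex cones or invoking the symmetrization argument at the end of Lemma~\ref{lem:doubImpInt}'s proof). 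The main obstacle I anticipate is bookkeeping the cone geometry precisely: making sure the support of the conditional expectation $A'$ really lands inside a cone complement whose opening angle and shift are compatible with applying the almost-localized bound for $\rho$ — in particular matching $\Lambda_{\alpha+\epsilon/2}^c - (n/2)$ against $\Lambda_{(\alpha)+\epsilon'}^c - n'$ in Definition~\ref{defn:alend} — and confirming the constant $v_{bg}$ appearing in the fattening $B_t(X,n)$ of Corollary~\ref{cor:quasiloc} is the same one quoted in the statement.
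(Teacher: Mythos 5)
Your proposal is correct and follows essentially the same route as the paper: insert the conditional expectation of $\tau_t(A)$ localized at the half scale $(\epsilon/2, n/2)$, apply the triangle inequality with $\|\rho\|=1$, and combine the almost-localized bound $f_{\epsilon/2}(n/2)$ for $\rho$ with the quasi-locality estimate $2h_{\epsilon/2}(v_{bg}\abs{t}+n/2)$ from Corollary~\ref{cor:quasiloc}. Your extra checks (membership of the sum in $\mathcal{F}_\infty$ and the endomorphism property of $\tau_t^{-1}\circ\rho\circ\tau_t$) are details the paper leaves implicit but do not change the argument.
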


\begin{proof}
	Let $\epsilon >0$ be given.
	By Corollary \ref{cor:quasiloc} and \eqref{eqn:quasiloc} we have that there exists constants $C, r >0$ such that  if $n$ is even then
	\begin{align*}
	\sup_{A \in \calA(\Lambda^c_{\alpha+\epsilon}-n)} \frac{  \| \tau_t^{-1} \circ \rho \circ \tau_t (A) - A \| }{\|A\|}
	&\leq  \sup_{A \in \calA(\Lambda^c_{\alpha+\epsilon}-n)}  \frac{ \| \rho( \langle \tau_t(A) \rangle) -  \langle \tau_t(A) \rangle \|}{\|A\|}
	+ 2 \frac{\|  \langle \tau_t(A) \rangle - \tau_t(A)\| }{\|A\|}\\
	& \leq f_{\epsilon/2}(n/2) +2 h_{\epsilon/2}(v_{bg} \abs{t}+ n/2).
	\end{align*}
	where $\langle \ \cdot \  \rangle = \langle \ \cdot \ \rangle_{ \Lambda_{ \alpha + \epsilon/2} + n/2 }$
\end{proof}

\begin{thm}\label{thm:stabsectorstructure}
	Let  $\Delta$ is a semi-group of almost localized and transportable endomorphisms satisfying the superselection criterion \ref{def:superselection} for a vacuum state $\omega_0$ satisfying Assumption \ref{ass:hdsp}.
	Suppose $g$ satisfies Assumption \ref{asp1} and $\Phi$ satisfies a finite  $F_{bg}$-norm for some $b>0$.
	Then, for all $ t\in[0,1]$ the semi-group $\tau_t^{-1} \circ \Delta\circ \tau_t $ satisfies the superselection criterion
	for the vacuum state $\omega_0 \circ \tau_t$.
	Furthermore, for all $\rho, \sigma \in \Delta$ we have that
	\begin{equation} \label{eqn:stabintertwiner}
	( \rho, \sigma)_{\pi} = (\tau_t^{-1}\circ \rho \circ \tau_t, \tau_t^{-1} \circ \sigma \circ \tau_t)_{\pi\circ \tau_t}.
	\end{equation}
	
	If in addition, $\Delta$ satisfies the assumptions of Theorem \ref{thm:biasymptopia}, 
	then  $\tau_t^{-1} \circ \Delta \circ \tau_t$ with vacuum state $\omega_0 \circ \tau_t$ is a braided tensor $C^*$-category
	and is braided equivalent to $\Delta$.
\end{thm}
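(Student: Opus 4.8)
The plan is to verify the three claims in order: (1) $\tau_t^{-1}\circ\Delta\circ\tau_t$ satisfies the superselection criterion for the vacuum $\omega_0\circ\tau_t$, (2) the intertwiner identity \eqref{eqn:stabintertwiner}, and (3) the braided equivalence. For (1), I would first check that $\tau_t^{-1}\circ\Delta\circ\tau_t$ is again a semi-group: this is immediate since $\tau_t^{-1}\circ\rho\circ\tau_t\circ\tau_t^{-1}\circ\sigma\circ\tau_t = \tau_t^{-1}\circ(\rho\circ\sigma)\circ\tau_t$. The almost-localized property of each $\tau_t^{-1}\circ\rho\circ\tau_t$ is precisely Lemma~\ref{lem:stabaloc}, which gives an explicit decay function in $\mathcal{F}_\infty$ (using that $\mathcal{F}_\infty$ is closed under addition and that $h_\epsilon\in\mathcal{F}_\infty$). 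Transportability with respect to $\omega_0\circ\tau_t$ requires a small argument: if $\rho\cong\rho'_\beta$ via a unitary $U$ in the representation $\pi_0$, i.e.\ $U\pi_0(\rho(A)) = \pi_0(\rho'_\beta(A))U$, then applying this with $\tau_t(A)$ in place of $A$ shows that the same $U$ intertwines $\pi_0\circ\tau_t\circ(\tau_t^{-1}\circ\rho\circ\tau_t)$ and $\pi_0\circ\tau_t\circ(\tau_t^{-1}\circ\rho'_\beta\circ\tau_t)$; combined with Lemma~\ref{lem:stabaloc} applied to $\rho'_\beta$, the transported endomorphism is almost localized in $\Lambda'_\beta$ with the deformed decay function, which can be taken uniform once $\beta\geq\alpha$. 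Note $\omega_0\circ\tau_t$ still satisfies Assumption~\ref{ass:hdsp}: purity is preserved since $\tau_t$ is an automorphism, and Haag duality plus the approximate split property are statements about the von Neumann algebras $\mathcal{R}(\Lambda) = \pi_0(\calA_\Lambda)''$, which are unchanged because $\pi_0\circ\tau_t$ is unitarily equivalent to $\pi_0$ (as $\omega_0\circ\tau_t$ and $\omega_0$ generate equivalent GNS data up to the unitary implementing time evolution, and in any case $\calA_\Lambda$ is sent to itself only asymptotically — here one uses that $\tau_t$ is quasi-local, so $\overline{\pi_0(\tau_t(\calA_\Lambda))}'' $ and $\mathcal{R}(\Lambda)$ have the same commutant structure after the angle is slightly enlarged; this is where the cone Lieb--Robinson bound of Theorem~\ref{thm:LRcone} enters to control the spreading).

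For (2), the intertwiner identity, the inclusion $(\rho,\sigma)_\pi\subseteq(\tau_t^{-1}\circ\rho\circ\tau_t,\tau_t^{-1}\circ\sigma\circ\tau_t)_{\pi\circ\tau_t}$ is the same substitution trick as above: if $T\pi_0(\rho(A)) = \pi_0(\sigma(A))T$ for all $A$, replace $A$ by $\tau_t(A)$ to get $T\,\pi_0(\tau_t(\tau_t^{-1}\rho\tau_t(A))) = \pi_0(\tau_t(\tau_t^{-1}\sigma\tau_t(A)))\,T$, i.e.\ $T\in(\tau_t^{-1}\circ\rho\circ\tau_t,\tau_t^{-1}\circ\sigma\circ\tau_t)_{\pi\circ\tau_t}$. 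The reverse inclusion is symmetric, applying the same argument with $\tau_t^{-1}$ in place of $\tau_t$ and noting $\tau_t\circ(\tau_t^{-1}\circ\rho\circ\tau_t)\circ\tau_t^{-1} = \rho$. So \eqref{eqn:stabintertwiner} holds as an identity of subspaces of $\calB(\calH_0)$ — crucially the underlying Hilbert space does not change, only the representation label.

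For (3), the braided equivalence, I would invoke Theorem~\ref{thm:biasymptopia} applied to the deformed category $\Delta_t := \tau_t^{-1}\circ\Delta\circ\tau_t$: once we know (from part (1)) that $\Delta_t$ satisfies the superselection criterion for $\omega_0\circ\tau_t$ and (checking the hypothesis) that the forbidden-cone condition still holds — which it does, since the cones and the relation $\ll$ are geometric data independent of the state, and Lemma~\ref{lem:stabaloc} keeps each object almost localized in the same family of cones — Theorem~\ref{thm:biasymptopia} produces a bi-asymptopia and hence a braiding $\epsilon^{(t)}_{\rho_t,\sigma_t}$ on $\Delta_t$. The functor $F_t:\Delta\to\Delta_t$ sending $\rho\mapsto\tau_t^{-1}\circ\rho\circ\tau_t$ on objects and $T\mapsto T$ on arrows (well-defined by \eqref{eqn:stabintertwiner}) is then a bijective-on-objects, fully faithful $*$-functor; it is monoidal because $\tau_t^{-1}\circ(\rho\circ\sigma)\circ\tau_t = (\tau_t^{-1}\circ\rho\circ\tau_t)\circ(\tau_t^{-1}\circ\sigma\circ\tau_t)$, and one checks $F_t$ sends the tensor product of arrows $R\times S = R\,\rho_{\mathcal{K}}(S)$ to the corresponding one in $\Delta_t$ (here one needs the extension $\rho_{\mathcal{K}}$ to agree appropriately, using that the asymptopia construction is the same up to relabelling the representation). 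To see $F_t$ is braided, the key identity is $F_t(\epsilon_{\rho,\sigma}) = \epsilon^{(t)}_{F_t(\rho),F_t(\sigma)}$; I would prove this by showing that a bi-asymptopia $\{\mathcal{U},\mathcal{V}\}$ for $\Delta$ (with respect to $\omega_0$) transforms, under the substitution $U_m\mapsto U_m$ viewed in the representation $\pi_0\circ\tau_t$, into a bi-asymptopia for $\Delta_t$ — the unitaries $U_m$ are the same operators in $\calB(\calH_0)$, only the endomorphisms they approximate change by conjugation with $\tau_t$, and the asymptotic commutation estimates defining a bi-asymptopia (Definition~\ref{defn:biasymptopia}) are norm statements in $\calB(\calH_0)$ unaffected by this relabelling. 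Then the formula \eqref{eqn:braid} for $\epsilon$ is literally the same limit of the same operators, giving $F_t(\epsilon_{\rho,\sigma}) = \epsilon^{(t)}_{F_t(\rho),F_t(\sigma)}$.

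The main obstacle I expect is the careful verification that $\omega_0\circ\tau_t$ genuinely inherits Haag duality and the approximate split property, and more subtly that the asymptopia/bi-asymptopia constructed for $\Delta_t$ can be taken to be the $\tau_t$-transported version of the one for $\Delta$ rather than an a priori unrelated one — without this matching, we would get that $\Delta_t$ is \emph{some} braided tensor $C^*$-category but not that it is braided \emph{equivalent} to $\Delta$ via the obvious functor. Resolving this requires keeping track of which forbidden cone $K$ is used and checking that the quasi-locality estimate \eqref{eqn:quasiloc} (Corollary~\ref{cor:quasiloc}, itself resting on the cone Lieb--Robinson bound Theorem~\ref{thm:LRcone}) is strong enough that the decay function $h_\epsilon$ lies in $\mathcal{F}_\infty$, so that all the approximation-in-a-cone-algebra arguments (Lemma~\ref{lem:alocauxalg}, Corollary~\ref{cor:interwinerauxalg}) go through verbatim for the deformed objects. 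Everything else is bookkeeping with the substitution $A\leftrightarrow\tau_t(A)$.
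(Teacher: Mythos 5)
Your parts (1) and (2) are essentially the paper's argument: almost-localization of $\tau_t^{-1}\circ\rho\circ\tau_t$ comes from Lemma \ref{lem:stabaloc}, the intertwiner identity \eqref{eqn:stabintertwiner} is the substitution trick $A\mapsto\tau_t(A)$ (with the reverse inclusion by applying $\tau_t^{-1}$), and transportability of the deformed objects is obtained by deforming the transported copies $\rho'_\beta$ and reusing \eqref{eqn:stabintertwiner}. Note, however, that your worry in part (1) about whether $\omega_0\circ\tau_t$ inherits Assumption \ref{ass:hdsp} is beside the point for the first claim: Definition \ref{def:superselection} only asks for almost localization and transportability with respect to the vacuum state, not for Haag duality of that state.

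The genuine gap is in your part (3). You propose to re-run Theorem \ref{thm:biasymptopia} for the deformed category with vacuum $\omega_0\circ\tau_t$, but that theorem (through Theorem \ref{thm:asymptopia}, Lemma \ref{lem:alocauxalg} and Corollary \ref{cor:interwinerauxalg}) needs the vacuum to satisfy Haag duality and the approximate split property for cones, and this is not established for $\omega_0\circ\tau_t$. Your sketch of why it would hold does not work: the cone algebras in the GNS data of $\omega_0\circ\tau_t$ are $\pi_0(\tau_t(\calA_\Lambda))''$, and since $\tau_t$ spreads supports these are not $\mathcal{R}(\Lambda)$; the cone Lieb--Robinson bound only gives almost-localization up to rapidly decaying tails, which is far from the exact equality $\mathcal{R}(\Lambda)=\mathcal{R}(\Lambda^c)'$ demanded by Haag duality, and stability of Haag duality under quasi-local evolution is not proved anywhere in the paper. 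The paper's proof avoids this entirely: because the intertwiner spaces are literally the same subspaces of $\calB(\calH_0)$ by \eqref{eqn:stabintertwiner}, one simply \emph{defines} the braiding on $\tau_t^{-1}\circ\Delta\circ\tau_t$ by transport, $\epsilon_{\tau_t^{-1}\circ\rho\circ\tau_t,\,\tau_t^{-1}\circ\sigma\circ\tau_t}:=\epsilon_{\rho,\sigma}$, and checks that the functor $F(\rho)=\tau_t^{-1}\circ\rho\circ\tau_t$, acting as the identity on arrows, is monoidal (since $\tau_t^{-1}\circ(\rho\circ\sigma)\circ\tau_t=F(\rho)\circ F(\sigma)$) and hence a braided equivalence by construction. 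Your stronger claim --- that an intrinsically constructed bi-asymptopia for the deformed theory reproduces the transported braiding --- is exactly the step you would not be able to complete without the missing Haag duality input, so you should either adopt the transport-of-structure argument or supply that input separately.
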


\begin{proof}
	Let $\rho\in \Delta$ be almost localized in a cone $\Lambda_{ \alpha }$. 
	Then, by Lemma \ref{lem:stabaloc}, $\tau_t\circ \rho \circ \tau_t$ is almost localized in $\Lambda_{ \alpha }$.
	Thus, the first part of the superselection criterion is satisfied.
	
	Let $T \in (\rho,\sigma)_\pi$.
	It follows that $T\in (\rho \circ \tau_t, \sigma\circ \tau_t)_\pi$ and 
	\begin{align*}
	T \pi\circ \alpha \circ  \alpha^{-1}  \rho  \sigma \circ \tau_t (A)  = T  \pi \circ  \sigma \rho \tau_t (A) 
	=  \pi \circ \sigma\circ \tau_t(A) T
	= \pi\circ \alpha \circ  \alpha^{-1}  \sigma  \sigma \circ \tau_t (A) T.
	\end{align*}
	Thus, $T \in (\tau_t^{-1}\circ \rho \circ \tau_t, \tau_t^{-1}\sigma\circ \tau_t)_{\pi \circ \tau_t}$.
	A similar argument shows that, $(\rho,\sigma)_\pi = (\tau_t^{-1}\circ \rho \circ \tau_t, \tau_t^{-1}\sigma\circ \tau_t)_{\pi \circ \tau_t}$.
	
	Let $\rho_t \in \tau_t^{-1} \circ \Delta\circ \tau_t$ and write $ \rho_t = \tau_t^{-1} \circ \rho \circ \tau_t$ for some $ \rho \in \Delta$.
	Let $\Lambda_{\beta}' \in \mathcal{C}$ be a cone. 
	Since $ \rho$ is transportable there is a $ \rho'$ almost localized in $\Lambda_{ \beta}'$ and a unitary $U$ such that $U  \pi\circ \rho(A) = \pi\circ\rho'(A) U$ for all $A\in\calA$.
	Let $\rho_t' \equiv \tau_t^{-1} \circ \rho' \circ \tau_t$. 
	By Lemma \ref{lem:stabaloc} $\rho_t'$ is almost localized in $\Lambda_{ \beta}'$
	and by \eqref{eqn:stabintertwiner} we have that $( \rho, \rho')_\pi = (\rho_t, \rho_t')_{\pi\circ \tau_t}$.
	Therefore, $\sigma$ is transportable for $\omega\circ \tau_t$.	
	
	If in addition, $\Delta$ satisfies the assumptions of Theorem \ref{thm:biasymptopia} then $\Delta$ is a braided tensor $C^*$-category.
	Thus, $\tau_t^{-1} \circ \Delta \tau_t$ is a braided tensor $C^*$-category with braiding defined by $\epsilon_{\tau_t^{-1} \circ \rho \circ \tau_t , \tau_t^{-1}\circ \sigma \circ \tau_t} \equiv \epsilon_{ \rho, \sigma}$.
	Define the functor $F(\rho) = \tau^{-1}_t \circ \rho \tau_t$.  It follows that 
	\begin{align*}
	F (\rho \otimes \sigma ) &= \tau_t^{-1} \circ \rho \circ \sigma \circ \tau_t 
	= \tau_t^{-1} \circ \rho \circ \tau_t \circ \tau_t^{-1} \circ \sigma \circ \tau_t\\
	&= F(\rho) \otimes F(\sigma).
	\end{align*}
	Thus, $F$ gives a braided equivalence of tensor $C^*$-categories.
\end{proof}

%
%
%

\section{Stability of anyons in Kitaev's abelian quantum double models}

We consider a family of quantum double models for abelian groups as defined by Kitaev \cite{KitaevQD},
see Section \ref{sec:qdoub}.
The statistics of the quasi-particle excitations of the model are encoded in the modular tensor category $\operatorname{Rep}(\mathcal{D}(G))$
and thus, are anyons.
There is a spectral gap above the ground state that is stable under local perturbations \cite{AlickiFH,BravyiHM}.
In \cite{KitaevQD}, the ground state space of the model was proposed to serve as a fault-tolerant quantum code,
where quantum computation is implemented by the braiding, fusion and measurement of anyon excitations. 
A premise of this approach to fault-tolerant quantum computation 
is the stability of the spectral gap and anyon structure against local perturbations. 
In this section, we combine the superselection criterion and stability results of the previous section with the techniques of spectral flow \cite{BachmannMNS} to prove the stability of anyons in the abelian quantum double models.
Our proof relies heavily on the complete classification of low energy states in terms of single excitation states for the infinite system \cite{ChaNN}.
Although the planar quantum double models cannot store quantum information at finite temperature \cite{AlickiFH2}, 
we believe thorough analysis of the exactly solvable topological models may give insights on further possibilities for topological quantum computation.

The elementary excitations have an energy bounded by four as shown in \eqref{eqn:ribenergy}. 
We consider the set of all states in the infinite systems that have the same energy threshold as follows.
Let $\mathcal{S}_L^{qd}$ be the set of mixtures of eigenstates of $H_L^{per}$  with energy in $[0, 4]$
and $\mathcal{S}^{qd}$ be the set of all weak$^*$ limit points of the sets $S^{qd}_L$.
The following classification theorem follows directly from Lemma \ref{lem:gshambound} and Theorem \ref{thm:eqstates}.

\begin{thm}(\cite{ChaNN})\label{thm:aloctoloc}
	The state $\omega_s^{\chi,c} \in \mathcal{S}^{qd}$ for all $(\chi,c) \in \widehat{G}\times G$ and sites $s$.
	Furthermore, if $\omega \in \mathcal{S}^{qd}$ is a pure state then $\omega \cong \omega_s^{\chi,c}$ for some $(\chi,c) \in \widehat{G} \times G$.
\end{thm}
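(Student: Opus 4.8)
The plan is to treat the two assertions separately. For the inclusion $\omega_s^{\chi,c}\in\mathcal{S}^{qd}$, fix $(\chi,c)\in\widehat{G}\times G$ and a site $s$, and for each $L$ with $s\in\mathcal{S}_L$ choose a ribbon $\rho_L$ in the periodic volume running from $s$ to a site on the far side of $\Lambda_L^{per}$; set $\Psi_L := F_{\rho_L}^{\chi,c}\Omega_L$ with $\Omega_L$ a ground state of $H_L^{per}$. By the energy formula \eqref{eqn:ribenergy} (in its periodic analogue) $\Psi_L$ is an $H_L^{per}$-eigenvector of eigenvalue $2(2-\delta_{\chi,\iota}-\delta_{c,e})\le 4$, so the vector state it defines, extended to $\calA$, lies in $\mathcal{S}_L^{qd}$. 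Using concatenation and unitarity of ribbon operators (\eqref{eqn:ribbonconcatenation}, \eqref{eqn:ribprop1}), path-independence on the ground space, and local indistinguishability (Corollary \ref{cor:qdLTQO}), I expect to show that $\langle\Psi_L,A\,\Psi_L\rangle=\omega_s^{\chi,c}(A)$ for every local $A$ and every sufficiently large $L$; hence $\langle\Psi_L,\cdot\,\Psi_L\rangle\to\omega_s^{\chi,c}$ weak$^*$ and $\omega_s^{\chi,c}\in\mathcal{S}^{qd}$.

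For the converse, let $\omega\in\mathcal{S}^{qd}$ be pure. The first step is a reduction to vector states: let $\mathcal{T}$ be the set of weak$^*$-limit points of sequences $\langle\psi_L,\cdot\,\psi_L\rangle$ with $\psi_L$ an $H_L^{per}$-eigenvector of energy $\le 4$. Then $\mathcal{T}$ is weak$^*$-closed and $\mathcal{S}^{qd}$ lies in its weak$^*$-closed convex hull, so Milman's theorem together with extremality of $\omega$ in the state space gives $\omega\in\mathcal{T}$. Since each energy eigenspace of $H_L^{per}$ decomposes further into joint eigenspaces of the local charge projectors $D_v^\chi,D_f^c$, the same Milman argument at finite volume lets me assume that each $\psi_L$ carries a definite configuration of at most four units of excitation, located at fixed sites with fixed charges; on the torus these occur in conjugate pairs, so $\psi_L=\big(\prod_{j=1}^m F_{\rho_j}^{\chi_j,c_j}\big)\Omega_L$ with $m\le 2$, where $\rho_j$ joins the $j$-th pair. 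By path-independence I would route the $\rho_j$ so as to avoid unnecessary mutual crossings and to avoid the support of any fixed test observable.

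The second step is to identify the limit. Passing to a subsequence, the endpoints of each $\rho_j$ stabilize into one of three regimes — both eventually at fixed finite sites, one fixed and one receding to infinity, or both receding — and correspondingly, against a fixed local observable, $\langle\psi_L,\cdot\,\psi_L\rangle$ converges to $\omega^0$ precomposed with conjugation by a fixed unitary of $\calA_{loc}$, with the charge automorphism $\alpha_{\rho_j}^{\chi_j,c_j}$ of \eqref{eqn:chargemorp}, or with the identity, respectively; the residual ribbon-crossing phases are global scalars and cancel in the vector-state expectation. Moving the inner parts past the charge automorphisms (legitimate once the ribbons are routed away from the relevant supports) then yields $\omega=\omega_s^{\chi,c}\circ\mathrm{Ad}(U)$ for some unitary $U\in\calA_{loc}$, where $(\chi,c)=\big(\prod\chi_j,\prod c_j\big)$ is the net charge carried to infinity and the product of the surviving $\alpha_{\rho_j}^{\chi_j,c_j}$ is recognized as $\omega^0\circ\alpha_\rho^{\chi,c}=\omega_s^{\chi,c}$ via the product rule \eqref{eqn:ribprop1} along a common path and Proposition \ref{prop:singleexc}. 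Thus $\omega$ is unitarily equivalent to $\omega_s^{\chi,c}$, which by Lemma \ref{lem:finitegslim} and Lemma \ref{lem:gsspan} is an infinite-volume ground state lying, by Lemma \ref{lem:gshambound} and Lemma \ref{lem:asymptoticcoef}, in the sector $K^{\chi,c}$; since $\omega$ is pure, Theorem \ref{thm:qdoubgs1} and Corollary \ref{cor:face} confine it to a single sector and Theorem \ref{thm:eqstates} identifies that sector's pure states with $\omega_s^{\chi,c}$, so $\omega\cong\omega_s^{\chi,c}$.

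The main obstacle will be the second step: tracking, as $L\to\infty$, the at most four ribbons constituting $\psi_L$ — their crossings with one another and with the support of a test observable — carefully enough to identify the weak$^*$ limit exactly as $\omega^0$ composed with explicit inner and charge automorphisms. This is the place where the ribbon-operator commutation relations must be used in earnest rather than soft functional-analytic arguments. By contrast, the reduction to vector states of definite charge type and the subsequent fusion of the charge automorphisms should be routine given the material already assembled in the chapter.
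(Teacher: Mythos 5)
Your first part (the inclusion $\omega_s^{\chi,c}\in\mathcal{S}^{qd}$ via ribbon eigenvectors of $H_L^{per}$ with energy $\le 4$ and eventual constancy on local observables) is fine, and your first extreme-point reduction is salvageable: since $\omega$ is pure, Milman's theorem does place $\omega$ in the weak$^*$ closure of the vector eigenstates, provided you first replace the approximants by product extensions so that the convex-hull containment actually holds. The genuine gap is the second reduction. You cannot invoke ``the same Milman argument at finite volume'' to assume that each $\psi_L$ carries a definite configuration of charges: the vector states of the energy-$\le 4$ eigenspace of $H_L^{per}$ are \emph{not} the convex hull of the definite-configuration vector states. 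A superposition $\psi_L=\sum_k a_k\psi_L^{(k)}$ over different joint eigenspaces of the $D_v^\chi$, $D_f^c$ is itself a pure state of $\calA_{\Lambda_L}$, its expectation values contain interference terms detectable by local ribbon operators, and it is not a mixture of the $\langle\psi_L^{(k)},\,\cdot\,\psi_L^{(k)}\rangle$; no extreme-point theorem discards it, and purity of the limiting $\omega$ does not let you choose the approximating eigenvectors in a configuration basis. These superpositions are precisely the states the theorem has to control (their weak$^*$ limits can be pure without arising from any single configuration sequence), so the step you labelled ``routine'' is where the real work lies: one must either show that every weak$^*$ limit point of such vector states is quasi-contained in $\bigoplus_{(\chi,c)}\pi_s^{\chi,c}$ and then use irreducibility, or decompose the limiting pure state with the global charge projectors $D_L^{\chi,c}$ and run estimates of the type in Lemma~\ref{lem:asymptoticcoef} and Theorem~\ref{thm:eqstates}. (A smaller slip: energy $\le 4$ forces only trivial total charge, not conjugate pairs, so $m\le 2$ is wrong in general --- e.g.\ four nontrivial electric charges with trivial product --- though this only changes bookkeeping.)

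This is also where your route diverges from the paper's. The paper does not redo a finite-volume eigenvector analysis at all: it obtains the statement from the machinery already assembled for the ground-state classification, namely Lemma~\ref{lem:gshambound} together with Theorem~\ref{thm:eqstates}, whose proofs work with the boundary/global charge projectors $D_L^{\chi,c}$, annulus observables, and Cauchy--Schwarz estimates, and therefore never need the approximating finite-volume states to have a definite excitation configuration --- superpositions and mixtures are treated on the same footing. If you repair your second step along the quasi-containment or charge-projector lines indicated above, you will essentially be reconstructing that argument; as written, the proposal has a genuine gap at exactly the point the existing sector analysis was designed to handle.
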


The energy threshold for $\mathcal{S}^{qd}$ was set to guarantee that each single excitation state is represented as a state in the  $\mathcal{S}^{qd}$.
In principle, this energy threshold could be increased without changing the sector theory.

\begin{defn}
	Define the $C^*$-category of endomorphisms $\Delta^{qd}$ with vacuum $\omega_0$ 
	whose objects are $*$-endomorphism $\rho \in \Delta^{qd}$  satisfying
	\begin{enumerate}
		\item $ \omega_0 \circ \rho \cong \omega$ for some $ \omega \in \mathcal{S}^{qd}$,
		\item $\rho$ is almost localized and transportable (see~\ref{def:superselection}) and further satisfies the assumptions of Theorem \ref{thm:biasymptopia},
	\end{enumerate}
	and arrows are the intertwiner space $(\rho,\sigma)$ for each $ \rho,\sigma \in \Delta^{qd}$.
\end{defn}

\begin{thm}\label{thm:stabqd}
	The category $\Delta^{qd}$ is a braided tensor $C^*$-category
	and is braided tensor equivalent to the category of finite dimensional representations of the quantum double of $G$, $\operatorname{Rep}(\mathcal{D}(G))$.
\end{thm}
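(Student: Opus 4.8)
The plan is to deduce Theorem~\ref{thm:stabqd} from the general categorical machinery developed above (Theorems~\ref{thm:asymptopia} and~\ref{thm:biasymptopia}), the Doplicher--Haag--Roberts-type analysis of the abelian quantum double in \cite{FiedlerN,Naaijkens11}, and the ground state classification recorded in Theorem~\ref{thm:aloctoloc}. First I would verify that the frustration-free ground state $\omega_0$ of Section~\ref{sec:qdoub} is an admissible vacuum: it is pure by Proposition~\ref{prop:qdffgs}, and it satisfies Haag duality for cones and the approximate split property for cones by \cite{Naaijkens11,NaaijkensKL}, so Assumption~\ref{ass:hdsp} holds. Next I would check the geometric hypotheses of Theorems~\ref{thm:asymptopia} and~\ref{thm:biasymptopia}: each quasi-particle charge of the quantum double is carried by a single-excitation automorphism $\alpha^{\chi,c}_\rho$ of \eqref{eqn:chargemorp} which is exactly localized in a cone about the ribbon $\rho$, so every charge can be localized in an arbitrarily thin convex cone. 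Hence we may fix a single thin forbidden cone $K$ and auxiliary cones $\Lambda^{\mathcal U}_\alpha,\Lambda^{\mathcal V}_\beta$ as in Figure~\ref{fig:cones}, and by transportability move every object of $\Delta^{qd}$ into a cone compatible with this data. Theorem~\ref{thm:biasymptopia} then applies and yields that $\Delta^{qd}$ is a braided tensor $C^*$-category with braiding $\epsilon_{\rho,\sigma}$ as in \eqref{eqn:braid}; this is the first assertion of the theorem.

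For the equivalence, recall that \cite{FiedlerN,Naaijkens11} construct a braided tensor $C^*$-category $\Delta_{\mathrm{loc}}$ of cone-localized transportable $*$-endomorphisms of $\calA$ (localized with respect to $\omega_0$, satisfying the cone criterion \eqref{eqn:conecrit}) together with a braided tensor equivalence $\mathcal F\colon\Delta_{\mathrm{loc}}\to\operatorname{Rep}(\mathcal D(G))$ that is full, faithful and essentially surjective. The idea is to show that $\Delta_{\mathrm{loc}}$ and $\Delta^{qd}$ are braided tensor equivalent via the inclusion functor $\iota\colon\Delta_{\mathrm{loc}}\hookrightarrow\Delta^{qd}$. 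This inclusion makes sense because every cone-localized endomorphism is almost localized in the sense of Definition~\ref{defn:alend} (the Example following it), and the charged state of a cone-localized transportable endomorphism lies in $\mathcal S^{qd}$ by Theorem~\ref{thm:aloctoloc}; conversely one checks, using Theorem~\ref{thm:aloctoloc} and Proposition~\ref{prop:singleexc}, that the defining criterion of $\Delta^{qd}$ forces $\pi_0\circ\rho$ to decompose as a finite direct sum of the irreducible single-excitation representations $\pi_s^{\chi,c}$. Thus $\iota$ is fully faithful (in both categories the arrows are the intertwiner spaces $(\rho,\sigma)_{\pi_0}\subset\calB(\calH_0)$, which by Corollary~\ref{cor:interwinerauxalg} lie in the auxiliary algebra $\calB_K$), and it is monoidal since composition of endomorphisms is the tensor product in both and composition of cone-localized endomorphisms is again cone-localized. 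Essential surjectivity is where Theorem~\ref{thm:aloctoloc} is used directly: given $\rho\in\Delta^{qd}$, the state $\omega_0\circ\rho$ lies in $\mathcal S^{qd}$, hence $\pi_0\circ\rho$ is unitarily equivalent to $\pi_0\circ\sigma$ for $\sigma=\bigoplus_i\alpha^{\chi_i,c_i}_{\rho_i}\in\Delta_{\mathrm{loc}}$ (using that $\Delta_{\mathrm{loc}}$, as the DHR category over the properly infinite algebra $\calB_K$, admits finite direct sums), and a unitary $U\in(\rho,\sigma)_{\pi_0}$ realizing this equivalence witnesses $\rho\cong\iota(\sigma)$ inside $\Delta^{qd}$.

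The main obstacle is to show that $\iota$ is a \emph{braided} functor, i.e.\ that the bi-asymptopia braiding $\epsilon_{\rho,\sigma}$ of \eqref{eqn:braid} restricts on $\Delta_{\mathrm{loc}}$ to the statistics operator computed in \cite{FiedlerN}. For exactly cone-localized $\rho,\sigma$ I would choose the bi-asymptopia sequences $\{U_m\}\in\mathcal U_\rho$, $\{V_n\}\in\mathcal V_\sigma$ to consist of genuine charge transporters $U_m,V_n\in\calA$ carrying $\rho,\sigma$ into the far-separated cones $\Lambda^{\mathcal U}_\alpha+k_m$ and $\Lambda^{\mathcal V}_\beta+k_n$; once these two cones are spacelike separated the expression $(V_n^*\times U_m^*)(U_m\times V_n)$ becomes constant and equals precisely the statistics operator obtained by transporting one charge into a cone disjoint from the other, which is the standard definition of the braiding on $\Delta_{\mathrm{loc}}$. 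Having matched the braidings, $\iota$ is a braided tensor equivalence $\Delta_{\mathrm{loc}}\simeq\Delta^{qd}$, and composing with $\mathcal F$ gives the desired braided tensor equivalence $\Delta^{qd}\simeq\operatorname{Rep}(\mathcal D(G))$. I expect the bookkeeping needed to confirm that the $\sigma$ produced in the essential-surjectivity step genuinely satisfies the hypotheses of Theorem~\ref{thm:biasymptopia} (a uniform choice of decay functions and a common forbidden region across all single-excitation charges), and the verification that the transporters above can be chosen inside $\calA$, to be the only points requiring real care; neither introduces an idea beyond those already available in the preceding sections.
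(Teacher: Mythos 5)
Your proposal follows essentially the same route as the paper's proof: verify Assumption \ref{ass:hdsp} for $\omega_0$ (Proposition \ref{prop:qdffgs} together with the Haag duality/split results of \cite{FiedlerN,NaaijkensKL}), apply Theorem \ref{thm:biasymptopia} to get the braided tensor $C^*$-structure, identify the objects of $\Delta^{qd}$ with the cone-localized sectors of \cite{FiedlerN,Naaijkens11} via Theorem \ref{thm:aloctoloc} and Proposition \ref{prop:singleexc}, show the bi-asymptopia braiding collapses to the standard statistics operator $V_k^*\rho(V_k)$ for localized objects, and conclude with the known equivalence to $\operatorname{Rep}(\mathcal{D}(G))$. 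The only cosmetic differences are that you package the identification as an inclusion functor with explicit essential surjectivity (handling reducible objects by direct sums) and propose choosing transporters inside $\calA$, whereas the paper argues only at the level of irreducible objects and places the intertwiners in cone von Neumann algebras (via Lemma 3.2 of \cite{NaaijkensKL} and Lemma 4.8 of \cite{Naaijkens11}); neither changes the substance.
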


\begin{proof}
	From Proposition \ref{prop:qdffgs}, the vacuum state $\omega_0$ is a pure state satisfying Assumption \ref{ass:hdsp}.
	By construction, we may apply Theorem \ref{thm:biasymptopia} directly to $\Delta^{qd}$ to achieve a braided tensor $C^*$-category
	structure.
	
	We now show that $\Delta^{qd}$ is identical to the category constructed in \cite{FiedlerN, Naaijkens11}, that is, 
	transportable endomorphisms exactly localized in a fixed cone $\Lambda$ and  satisfying \eqref{eqn:conecrit} for $\pi = \pi_0 \circ \rho$.
	Let $ \rho \in \Delta^{qd}$ be such that $\pi_0 \circ \rho$ is irreducible. 
	Then, by Theorem \ref{thm:aloctoloc} we have that $\pi_0 \circ \rho \cong \pi_s^{\chi,c}$ for some site $s$ and $(\chi,c) \in \widehat{G}\times G$.
	By Proposition \ref{prop:singleexc}, $\pi_s^{\chi,c}$ satisfies the criterion \eqref{eqn:conecrit} from which it follows that 
	$\pi_0 \upharpoonright \calA_{\Lambda^c} \cong \pi_s^{\chi,c}  \upharpoonright \calA_{ \Lambda^c}\cong  \pi_0 \circ \rho \upharpoonright \calA_{ \Lambda^c}.$
	For each site $s$ of $\Lambda$ and $(\chi,c) \in \widehat{G} \times G$, consider the representation $\pi_s^{\chi,c} = \pi_0 \circ \rho^{\chi,c}_s$.
	By path independence \ref{eqn:ribbonpathind}, the state $ \omega_0 \circ \rho^{\chi,c}_s$ can be constructed for an infinite ribbon supported entirely in $\Lambda$.
	It follows that $ \rho^{\chi,c}_s \in \Delta^{qd}$.
	Therefore, the irreducible objects of $\Delta^{qd}$ and the irreducible representations satisfying \eqref{eqn:conecrit} are exactly equal.
	
	Let $\rho^{(\chi,c)}, \sigma^{\psi,d} \in \Delta^{qd}$ be irreducible objects localized in a cone $\Lambda$ 
	and for shorthand write $\rho = \rho^{(\chi,c)}$ and $ \sigma=\sigma^{\psi,d} $.
	The bi-asymptopia as defined in Theorem \ref{thm:biasymptopia}
	can be chosen such that $\Lambda^{\mathcal{U}}_{\alpha + \epsilon} \ll \Lambda$
	and  $ \Lambda^{\mathcal{V}}_{\beta + \epsilon} \ll \Lambda$.
	Recall the braiding $\epsilon_{\rho,\sigma} \in (\rho\circ \sigma, \sigma\circ \rho) $
	is defined by \eqref{eqn:braid} where for some $*$-endomorphisms $\rho_m$ and $\sigma_n$ exactly localized in cones $ \Lambda^{\mathcal{U}}_{\alpha + \epsilon}+ m$ and $\Lambda^{\mathcal{V}}_{\beta + \epsilon} \ll \Lambda + n$, respectively, we have $U_m \in (\rho, \rho_m)$ and $V_n \in (\rho, \rho_n)$.
	By an argument similar to Lemma 3.2 of \cite{NaaijkensKL}, the intertwiners 
	$U_m \in \mathcal{R}(\Lambda) \otimes \mathcal{R}(\Lambda^{\mathcal{U}}_{\alpha+\epsilon}+m)$.
	It follows that $\sigma_n(U_m) = U_m$.
	Therefore, for any fixed $k$ an extension of Lemma 4.8 of \cite{Naaijkens11} gives
	\begin{align*}
	\epsilon_{\rho,\sigma}  &= \lim_{m,n\ra \infty} V_n^* \sigma_n(U_m^*) U_m \rho(V_n) 
	= \lim_{n \ra \infty} V_n^* \rho(V_n) \\
	& = V_k^* \rho(V_k).
	\end{align*}
	Therefore, by Theorem 6.3 of \cite{Naaijkens11} and its generalization to finite abelian groups in \cite{FiedlerN}, we have a braided tensor equivalence of categories $\Delta^{qd} \ra \operatorname{Rep}(\mathcal{D}(G))$.
\end{proof}

Consider perturbations of the quantum double models with periodic boundary conditions of the form \ref{eqn:perturbationlinear}.
Let $I_0 = [0,4]$ and $ I_1=[5,\infty)$.
By Proposition \ref{prop:singleexc}, $\operatorname{spec} H_L^{per} \subset I_0 \cup I_1$.
Combining the Local Topological Quantum Order property \ref{cor:qdLTQO} and the stability of frustration-free Hamiltonians Theorem \ref{thm:stablegap} \cite{BravyiHM}, for some $\epsilon>0$  we have that for  all $ 0\leq s <\epsilon$ 
there are intervals $I_0(s)$ and $ I_1(S)$ with endpoint depending continuously on $s$ 
such that $I_0(0) = I_0$ and $ I_1(0) = I_1$, 
there is a $\gamma>0$ such that $ d( I_0(s), I_1(s) ) >\gamma$ and 
the spectrum of $H_L(s)$ splits into two disjoint sets $spec(H_L(s)) = \Sigma_L^0(s) \cup \Sigma_L^1(s)$ with 
$ \Sigma_L^0(s) \subset I_0(s)$ and $\Sigma_L^1(s) \subset I_L^1(s)$ for all $L$.
By Theorem \ref{thm:infvoldyn}, for each $s$ the dynamics defined by $H_L(s)$ will satisfy a Lieb-Robinson bound.

Consider the set of elementary excitations for the perturbed quantum double model.
Naturally, these correspond to states with energy supported in the interval $I_0(s)$.
Let $\mathcal{S}_L^{qd}(s)$ be the set of mixtures of eigenstates with energy in $I_0(s)$
and $\mathcal{S}^{qd}(s)$ be the set of all weak$^*$ limit points of the sets $S^{qd}_L(s)$.
Recall that by Theorem \ref{thm:autoeq}, the spectral flow dynamics $\alpha_s$ satisfies $\mathcal{S}^{qd}(s) = \mathcal{S}^{qd} \circ \alpha_s$.

Define the $C^*$-category $\Delta^{qd}(s)$ of $*$-endomorphisms for vacuum state $\omega_0 \circ \alpha_s$ 
to have objects $\rho$ satisfying
\begin{enumerate}
	\item $\omega_0 \circ \alpha_s\circ   \rho  \cong \omega$ for some $ \omega \in \mathcal{S}(s)$, 
	\item $\rho$ is almost localized and transportable with respect to $\omega_0 \circ \alpha_s$ and further satisfies assumptions of Theorem \ref{thm:biasymptopia}.
\end{enumerate}

\begin{thm}\label{thm:stabqds}
	The category $\Delta^{qd}(s)$ is a braided tensor $C^*$-category
	and is braided tensor equivalent to the category of finite dimensional representations of the quantum double of $G$, $\operatorname{Rep}(\mathcal{D}(G))$.
\end{thm}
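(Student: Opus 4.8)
The plan is to realize $\Delta^{qd}(s)$ as the image of $\Delta^{qd}$ under conjugation by the spectral flow cocycle $\alpha_s$ and then compose the braided tensor equivalences that are already available. First I would record that for $|s|$ small enough the perturbed family $H_{\Lambda_L}^{per}(\epsilon)$ of \eqref{eqn:perturbationlinear} satisfies Assumptions~1--3 of the spectral flow construction: Assumption~1 holds because $\Psi$ has a finite $F$-norm with $F(r)=e^{ar}$ and the family is affine in $s$; Assumption~2, i.e.\ the uniform splitting $\operatorname{spec}(H_L(s))=\Sigma_L^0(s)\cup\Sigma_L^1(s)$ with $\Sigma_L^i(s)\subset I_i(s)$ and a gap $>\gamma$, is exactly the stability statement of Theorem~\ref{thm:stablegap}, whose hypotheses (frustration-freeness and LTQO) are verified by Corollary~\ref{cor:qdLTQO}; and Assumption~3 is the exponential Lieb--Robinson bound for finite-range interactions of Theorem~\ref{thm:infvoldyn}. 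Hence the spectral flow $\alpha_s$ exists as a strongly continuous cocycle of automorphisms of $\calA$, generated by the volume-dependent interaction $\Psi_\Lambda(s)$, with finite $F_\Psi$-norm uniformly along the exhausting sequence of boxes, since boxes satisfy $d(\Lambda_m,\Lambda_n^c)\geq b_1(n-m)$ and $\abs{\Lambda_n}\leq b_2 n^p$; the decay function $F_\Psi$ is built from $g(r)=r/\ln^2 r$, which was checked above to satisfy Assumption~\ref{asp1}.

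Next I would run the stability machinery with $\tau_t$ replaced by $\alpha_s$. The point is that the proofs of Corollary~\ref{cor:quasiloc}, Lemma~\ref{lem:stabaloc} and Theorem~\ref{thm:stabsectorstructure} use only the Lieb--Robinson bound for cones (Theorem~\ref{thm:LRcone}) and the conditional-expectation estimate, never the one-parameter-group law, so they apply verbatim to the cocycle $\alpha_s$. This gives, for $s$ in the gap-persistence interval, that $\alpha_s^{-1}\circ\Delta^{qd}\circ\alpha_s$ is a braided tensor $C^*$-category with vacuum $\omega_0\circ\alpha_s$, that the intertwiner identity $(\rho,\sigma)_{\pi_0}=(\alpha_s^{-1}\circ\rho\circ\alpha_s,\alpha_s^{-1}\circ\sigma\circ\alpha_s)_{\pi_0\circ\alpha_s}$ of \eqref{eqn:stabintertwiner} holds, and that the functor $F_s(\rho)=\alpha_s^{-1}\circ\rho\circ\alpha_s$ is a braided tensor equivalence $\Delta^{qd}\to\alpha_s^{-1}\circ\Delta^{qd}\circ\alpha_s$.

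The third step is the identification $\Delta^{qd}(s)=\alpha_s^{-1}\circ\Delta^{qd}\circ\alpha_s$. Given $\rho\in\Delta^{qd}$ set $\rho_s=\alpha_s^{-1}\circ\rho\circ\alpha_s$. If $\omega_0\circ\rho\cong\omega$ with $\omega\in\mathcal{S}^{qd}$, then $\omega_0\circ\alpha_s\circ\rho_s=\omega_0\circ\rho\circ\alpha_s\cong\omega\circ\alpha_s\in\mathcal{S}^{qd}\circ\alpha_s=\mathcal{S}^{qd}(s)$ by Theorem~\ref{thm:autoeq}; almost locality of $\rho_s$ in the same cone is Lemma~\ref{lem:stabaloc}; transportability with respect to $\omega_0\circ\alpha_s$, as well as the assumptions of Theorem~\ref{thm:biasymptopia} (the localization cones and the forbidden cone $K$ are unchanged), are supplied by Theorem~\ref{thm:stabsectorstructure}. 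Thus $\rho_s\in\Delta^{qd}(s)$, and the reverse inclusion follows symmetrically by conjugating with $\alpha_s^{-1}$ (itself a quasi-local cocycle) and using $\mathcal{S}^{qd}=\mathcal{S}^{qd}(s)\circ\alpha_s^{-1}$; on arrows the two categories coincide by the intertwiner identity. Composing with Theorem~\ref{thm:stabqd} one obtains braided tensor equivalences $\Delta^{qd}(s)=\alpha_s^{-1}\circ\Delta^{qd}\circ\alpha_s\simeq\Delta^{qd}\simeq\operatorname{Rep}(\mathcal{D}(G))$, and a composite of braided tensor equivalences is again one, which is the claim.

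The main obstacle I expect is not a single estimate but the bookkeeping in the second step: one must make sure that every ingredient stated for a one-parameter group $\tau_t=e^{it\delta}$ with an honest $\mathcal{F}$-function truly carries over to the time-ordered spectral-flow cocycle $\alpha_s$, whose generator $\Psi_\Lambda(s)$ is volume- and $s$-dependent and whose decay is only subexponential. Concretely this entails confirming Assumption~\ref{asp1} for $g(r)=r/\ln^2 r$ (done above), uniformity of the $F_\Psi$-norm bound along the boxes, and applying Lemma~\ref{lem:stabaloc} with the correct composite decay function $f_{\epsilon/2}(n/2)+2h_{\epsilon/2}(v_{bg}\abs{s}+n/2)\in\mathcal{F}_\infty$. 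A secondary point deserving a remark is that $\Delta^{qd}(s)$ is genuinely nonempty with its transportability clause satisfied; this rests on the existence of $\alpha_s$-deformed ribbon operators and on Haag duality and the approximate split property passing to $\omega_0\circ\alpha_s$, which one gets because $\alpha_s$ maps each cone algebra into arbitrarily good norm approximations of cone algebras by Corollary~\ref{cor:quasiloc}.
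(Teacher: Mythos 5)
Your overall route is the same as the paper's: identify $\Delta^{qd}(s)$ with $\alpha_s^{-1}\circ\Delta^{qd}\circ\alpha_s$, invoke Theorem~\ref{thm:stabsectorstructure} together with Theorem~\ref{thm:autoeq}, and compose with Theorem~\ref{thm:stabqd}. Your preliminary bookkeeping (checking Assumption~\ref{asp1} for $g(r)=r/\ln^2 r$, the uniform $F_\Psi$-norm along boxes, and that the stability lemmas only need the cone Lieb--Robinson bound and not the group law) is in fact more careful than the paper, which simply asserts that $\alpha_s$ satisfies a Lieb--Robinson bound and applies the stability theorem. The genuine divergence is in the reverse inclusion $\Delta^{qd}(s)\subset\alpha_s^{-1}\circ\Delta^{qd}\circ\alpha_s$: the paper takes an irreducible $\rho\in\Delta^{qd}(s)$, uses Theorem~\ref{thm:autoeq} to write the associated state as $\omega\circ\alpha_s$ with $\omega$ pure in $\mathcal{S}^{qd}$, and then uses the classification of that set (Proposition~\ref{prop:singleexc} / Theorem~\ref{thm:aloctoloc}) to conclude $\rho\cong\alpha_s^{-1}\circ\rho^{\chi,c}_x\circ\alpha_s$; you instead conjugate by $\alpha_s^{-1}$ and argue symmetrically, avoiding a second appeal to the classification.

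The one step in your version that does not hold as stated is the closing claim that Haag duality and the approximate split property ``pass to $\omega_0\circ\alpha_s$'' because $\alpha_s$ maps cone algebras into good norm approximations of cone algebras. Approximate norm localization of $\alpha_s(\calA_\Lambda)$ does not give $\mathcal{R}_{\omega_0\circ\alpha_s}(\Lambda)=\mathcal{R}_{\omega_0\circ\alpha_s}(\Lambda^c)'$; stability of Haag duality under quasi-local automorphisms is not proven anywhere in the paper, and the almost-localized framework exists precisely to avoid needing Assumption~\ref{ass:hdsp} for the deformed vacuum. Consequently you cannot apply Theorem~\ref{thm:stabsectorstructure} verbatim with $\omega_0\circ\alpha_s$ as vacuum and $\alpha_s^{-1}$ as dynamics, since its hypothesis requires the vacuum to satisfy Assumption~\ref{ass:hdsp}. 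The gap is repairable in two ways: either follow the paper and use purity plus the classification of $\mathcal{S}^{qd}$ for the reverse inclusion, or note explicitly that the ingredients your symmetric argument actually needs --- Lemma~\ref{lem:stabaloc} for the inverse flow, the intertwiner identity \eqref{eqn:stabintertwiner}, and the transfer of transportability --- are proved without any duality assumption on the deformed vacuum (only the braiding portion of Theorem~\ref{thm:stabsectorstructure} uses Theorem~\ref{thm:biasymptopia}, and there the braiding on the deformed category is defined by transport from $\Delta^{qd}$). With that adjustment, and a brief remark that the inverse spectral flow is again generated by a quasi-local, finite $F_\Psi$-norm interaction, your argument closes and yields the same chain of braided equivalences $\Delta^{qd}(s)\simeq\Delta^{qd}\simeq\operatorname{Rep}(\mathcal{D}(G))$.
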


\begin{proof}
	By construction, $\Delta^{qd}(0) = \Delta^{qd}$ and thus, is braided equivalent to $\operatorname{Rep}(\mathcal{D}(G))$.
	
	Now consider the $C^*$-category $\alpha_{s}^{-1} \circ \Delta^{qd} \circ  \alpha_s$ with vacuum state $\omega_0 \circ \alpha_s$.
	Theorems \ref{thm:stabsectorstructure} and \ref{thm:autoeq}, give $\alpha_{s}^{-1} \circ \Delta^{qd} \circ  \alpha_s \subset \Delta^{qd}(s)$.
	We claim that $\Delta^{qd}(s) = \alpha_{s}^{-1} \circ \Delta^{qd}\circ  \alpha_s$.
	For each irreducible $ \rho \in \Delta^{qd}(s)$ there is a pure state $\omega(s) \in \mathcal{S}(s)$ 
	such that $ \omega_0 \circ \alpha_s \circ \rho \cong \omega(s)$.
	By Theorem \ref{thm:autoeq}, $\omega(s) = \omega \circ \alpha_s$ for $ \omega \in \mathcal{S}$.
	The state $\omega$ must be a pure state since $\alpha_s$ is an automorphism.
	By purity and Theorem \ref{prop:singleexc}, $\omega \cong \omega^{\chi,c}_x$ for some single excitation state.
	It follows that 
	\begin{align*}
	\omega_0 \circ \alpha_s \circ \rho & \cong \omega(s)  = \omega \circ \alpha_s \cong \omega^{\chi,c}_x \circ \alpha_s \\
	& = \omega_0 \circ \alpha_s \circ \alpha_s^{-1} \circ \rho^{\chi,c}_x \circ \alpha_s,
	\end{align*}
	leading to the equivalence $\rho \cong \alpha_s^{-1} \circ \rho^{\chi,c}_x \circ \alpha_s$
	from which the claim follows.
	
	By Theorem \ref{thm:stabsectorstructure}, $\alpha_{s}^{-1} \circ \Delta^{qd}\circ  \alpha_s $ is braided equivalent to $\Delta^{qd}(0)$.
	The result follows from Theorem \ref{thm:stabqd}.
\end{proof}

   \backmatter
   
\newcommand{\etalchar}[1]{$^{#1}$}
\providecommand{\bysame}{\leavevmode\hbox to3em{\hrulefill}\thinspace}
\providecommand{\MR}{\relax\ifhmode\unskip\space\fi MR }
\providecommand{\MRhref}[2]{%
	\href{http://www.ams.org/mathscinet-getitem?mr=#1}{#2}
}
\providecommand{\href}[2]{#2}

\end{document}